\PassOptionsToPackage{prologue,usenames,dvipsnames,table}{xcolor}
\PassOptionsToPackage{bookmarks,unicode,colorlinks=true}{hyperref}
\pdfoutput=1
\newif\ifanonymous

\anonymousfalse

\ifanonymous
\documentclass[acmsmall,review,anonymous]{acmart}\settopmatter{printfolios=true,printccs=true,printacmref=false}
\else
\documentclass[acmsmall,screen]{acmart}
\fi




\usepackage{fontawesome}
\usepackage{amsmath}
\usepackage{stmaryrd}
\usepackage{mathtools}
\usepackage{csquotes}
\usepackage{multirow}
\usepackage{xfrac}
\usepackage{nicefrac}
\renewcommand{\sfrac}[2]{\nicefrac{#1}{#2}}
\usepackage{xspace}
\usepackage{etoolbox}
\usepackage{scalerel}
\usepackage[inline]{enumitem}
\usepackage{newfloat}
\usepackage[skip=4pt]{caption}
\usepackage{subcaption}
\usepackage{blindtext}
\graphicspath{{./Images/}}



\usepackage{adjustbox}
\usepackage{makecell}
\usepackage{tikz}
\usetikzlibrary{patterns,calc,arrows,shapes,snakes,automata,backgrounds,petri,positioning,decorations.pathmorphing,intersections}
\usepackage{tkz-euclide}
\tikzset{mynode/.style={draw,solid,circle,inner sep=1pt}}
\usepackage[customcolors]{hf-tikz}
\hfsetfillcolor{gray!10}
\hfsetbordercolor{none}
\usepackage{pgfplots}
\usepackage{wrapfig}
\usepackage{bussproofs}
\usepackage{picinpar} 
\usepackage{pifont}

\usepackage[capitalize,nameinlink]{cleveref}
\Crefname{equation}{Eq.}{Eqs.}

\usepackage{footnote}
\makesavenoteenv{figure}
\makesavenoteenv{table}

\include{header}
\include{macros}


\setcopyright{rightsretained}
\acmDOI{10.1145/3649844}
\acmYear{2024}
\copyrightyear{2024}
\acmSubmissionID{oopslaa24main-p124-p}
\acmJournal{PACMPL}
\acmVolume{8}
\acmNumber{OOPSLA1}
\acmArticle{127}
\acmMonth{4}
\received{21-OCT-2023}
\received[accepted]{2024-02-24}


\bibliographystyle{ACM-Reference-Format}
\citestyle{acmauthoryear}   


\makeatletter
\def\@fnsymbol#1{\ensuremath{\ifcase#1\or \dagger\or \ddagger\or
		\mathsection\or \mathparagraph\or \|\or **\or \dagger\dagger
		\or \ddagger\ddagger \else\@ctrerr\fi}}
\makeatother

\begin{document}
	
	\title[Exact Bayesian Inference for Loopy Probabilistic Programs using Generating Functions]{Exact Bayesian Inference for Loopy Probabilistic Programs using Generating Functions}
	
	
	\author{Lutz Klinkenberg}
	\authornotemark[1]
	\email{lutz.klinkenberg@cs.rwth-aachen.de}
	\orcid{0000-0002-3812-0572}
	\affiliation{%
		\institution{RWTH Aachen University}
		\city{Aachen}
		\country{Germany}
	}
	
	\author{Christian Blumenthal}
	\email{christian.blumenthal@rwth-aachen.de}
	\orcid{0009-0003-6427-0229}
	\affiliation{%
		\institution{RWTH Aachen University}
		\city{Aachen}
		\country{Germany}
	}

	\author{Mingshuai Chen}
	\authornote{The corresponding authors} 
	\email{m.chen@zju.edu.cn}
	\orcid{0000-0001-9663-7441}
	\affiliation{%
		\institution{Zhejiang University}
		\city{Hangzhou}
		\country{China}
	}

	\author{Darion Haase}
	\email{darion.haase@cs.rwth-aachen.de}
	\orcid{0000-0001-5664-6773}
	\affiliation{%
		\institution{RWTH Aachen University}
		\city{Aachen}
		\country{Germany}
	}

	\author{Joost-Pieter Katoen}
	\email{katoen@cs.rwth-aachen.de}
	\orcid{0000-0002-6143-1926}
	\affiliation{%
		\institution{RWTH Aachen University}
		\city{Aachen}
		\country{Germany}
	}

	\renewcommand{\shortauthors}{L.~Klinkenberg, C.~Blumenthal, M.~Chen, D.~Haase and J.-P.~Katoen}
	
	\begin{abstract}
	We present an exact Bayesian inference method for inferring posterior distributions encoded by probabilistic programs featuring possibly \emph{unbounded loops}. Our method is built on a denotational semantics represented by \emph{probability generating functions}, which resolves semantic intricacies induced by intertwining discrete probabilistic loops with \emph{conditioning} (for encoding posterior observations). We implement our method in a tool called \toolname{Prodigy}; it augments existing computer algebra systems with the theory of generating functions for the (semi-)automatic inference and quantitative verification of conditioned probabilistic programs. Experimental results show that \toolname{Prodigy} can handle various infinite-state loopy programs and exhibits comparable performance to state-of-the-art exact inference tools over loop-free benchmarks.
\end{abstract}
\begin{CCSXML}
	<ccs2012>
	<concept>
	<concept_id>10003752.10010124.10010138</concept_id>
	<concept_desc>Theory of computation~Program reasoning</concept_desc>
	<concept_significance>500</concept_significance>
	</concept>
	<concept>
	<concept_id>10003752.10010124.10010131</concept_id>
	<concept_desc>Theory of computation~Program semantics</concept_desc>
	<concept_significance>500</concept_significance>
	</concept>
	<concept>
	<concept_id>10002950.10003648.10003662</concept_id>
	<concept_desc>Mathematics of computing~Probabilistic inference problems</concept_desc>
	<concept_significance>500</concept_significance>
	</concept>
	</ccs2012>
\end{CCSXML}

\ccsdesc[500]{Theory of computation~Program reasoning}
\ccsdesc[500]{Theory of computation~Program semantics}
\ccsdesc[500]{Mathematics of computing~Probabilistic inference problems}

\keywords{probabilistic programs, quantitative verification, conditioning, Bayesian inference, denotational semantics, generating functions, non-termination}
	
	\setlength{\floatsep}{1\baselineskip}
	\setlength{\textfloatsep}{1\baselineskip}
	\setlength{\intextsep}{1\baselineskip}
	
	\maketitle

	\section{Introduction}
\label{sec:intro}
Probabilistic programming is used to describe stochastic models in the form of executable computer programs. It enables fast and natural ways of designing statistical models without ever resorting to random variables in the mathematical sense. The so-obtained probabilistic programs \cite{Kozen,ACM:conf/fose/Gordon14,DBLP:journals/corr/abs-1809-10756,barthe_katoen_silva_2020,DBLP:journals/pacmpl/HoltzenBM20} are typically normal-looking programs describing posterior probability distributions. They intrinsically code up randomized algorithms \cite{DBLP:books/daglib/0012859} and are at the heart of approximate computing \cite{DBLP:journals/cacm/CarbinMR16} as well as probabilistic machine learning \cite[Chapter 8]{DBLP:journals/corr/abs-1809-10756}. 
%
One prominent example is \textsc{Scenic} \cite{scenic-mlj22} -- a domain-specific probabilistic programming language to describe and generate scenarios for, e.g., robotic systems, that can be used to train convolutional neural networks; 
\textsc{Scenic} features the ability to declaratively impose (hard and soft) constraints over the generated models by means of \emph{conditioning} via posterior observations. Moreover, a large volume of literature has been devoted to combining the strength of probabilistic and differentiable programming in a mutually beneficial manner; see \cite[Chapter 8]{DBLP:journals/corr/abs-1809-10756} for recent advancements in deep probabilistic programming.
%
%
%

Reasoning about probabilistic programs amounts to addressing various \emph{quantities} like assertion-violation probabilities \cite{DBLP:conf/pldi/WangS0CG21}, preexpectations \cite{DBLP:conf/cav/CKKMS20,DBLP:journals/pacmpl/HarkKGK20,FENG-OOPSLA2023}, moments \cite{DBLP:conf/pldi/Wang0R21,DBLP:journals/pacmpl/MoosbruggerSBK22}, expected runtimes \cite{DBLP:journals/jacm/KaminskiKMO18}, and concentrations \cite{DBLP:conf/cav/ChakarovS13,DBLP:conf/cav/ChatterjeeFG16}.
Probabilistic inference is one of the most important tasks in quantitative reasoning which aims to derive a program's posterior distribution.
In contrast to sampling-based approximate inference, 
inferring the \emph{exact} distribution has several benefits \cite{DBLP:conf/pldi/GehrSV20}, e.g., no loss of precision, natural support for symbolic parameters, and efficiency on models with certain structures.

Exact probabilistic inference, however, is a notoriously difficult task \cite{DBLP:journals/ai/Cooper90,DBLP:journals/acta/KaminskiKM19,DBLP:journals/toplas/OlmedoGJKKM18,ROTH1996273,Ackermann2019}; even for Bayesian networks, it is already \texttt{PP}-complete \cite{Kwisthout2009Computational,DBLP:journals/jair/LittmanGM98}.
The challenges mainly arise from three program constructs:
\begin{enumerate*}[label=(\roman*)]
	\item\label{issue:loop}unbounded \codify{while}-{\allowbreak}loops and/or recursion,
	\item\label{issue:support}infinite-support distributions, and 
	\item\label{issue:conditioning}conditioning.
\end{enumerate*}
Specifically, reasoning about probabilistic loops amounts to computing quantitative fixed points (see \cite{dahlqvist_silva_kozen_2020}) that are highly intractable in practice; admitting infinite-support distributions requires closed-form (i.e., finite) representations of program semantics; and conditioning \enquote{reshapes} the posterior distribution as per observed events thus yielding another layer of semantic intricacies (see \cite{DBLP:journals/toplas/OlmedoGJKKM18,Ackermann2019,DBLP:conf/esop/BichselGV18}).

This paper proposes to use \emph{probability generating functions} (PGFs) -- a subclass of \emph{generating functions} (GFs) \cite{generatingfunctionology} -- to do exact inference for \emph{discrete}, \emph{loopy}, \emph{infinite-state} probabilistic programs \emph{with conditioning}, thus addressing challenges \ref{issue:loop}, \ref{issue:support}, and \ref{issue:conditioning}, whilst aiming to push the limits of automation as far as possible by leveraging the strength of existing computer algebra systems like \sympy \cite{SymPy} and \ginac \cite{DBLP:journals/jsc/BauerFK02,ginac}. We extend the PGF-based semantics by \citet{LOPSTR}, which enables exact quantitative reasoning for, e.g., deciding probabilistic equivalence \cite{CAV22} and proving non-almost-sure termination \cite{LOPSTR} for certain programs \emph{without conditioning}.
Orthogonally, \citet{zaiser2023exact} recently employed PGFs to conduct exact Bayesian inference for conditioned probabilistic programs with infinite-support distributions yet \emph{no loops}.
Note that \emph{having loops and conditioning intertwined} incurs semantic intricacies; see \cite{DBLP:journals/toplas/OlmedoGJKKM18,DBLP:conf/esop/BichselGV18}. Let us illustrate our inference method and how it addresses such semantic intricacies by means of a number of examples of increasing complexity.

\begin{figure}[t]
	\begin{minipage}[b]{.365\linewidth}
		\begin{align*}
			& \pchoice{\assign{\progvar{w}}{0}}{\sfrac{5}{7}}{\assign{\progvar{w}}{1}}\,\fatsemi \\
			& \IFINLINE{\progvar{w} = 0}{\assign{c}{\poisson{6}}}\\
			& \ELSEINLINE{\assign{c}{\poisson{2}}}\,\fatsemi \\
			& \observe{\progvar{c} = 5}
		\end{align*}%
		\captionof{program}{The telephone operator.}
		\label{prog:telephone}
	\end{minipage}
	\hspace*{.5cm}
	\begin{minipage}[b]{.4\linewidth}
		\centering
		\pgfplotsset{width=3.6cm,compat=1.8}
		\begin{subfigure}[b]{0.42\linewidth}
			\centering
			\resizebox{\linewidth}{!}{
				\begin{tikzpicture}
					\begin{axis}[
						unit vector ratio*={2 1 1},
						ybar stacked,
						bar width=15pt,
						x=1cm,
						enlarge x limits={abs=.4cm},
						enlarge y limits={abs=0.05},
						legend style={at={(0.5,-0.20)},
							anchor=north,legend columns=-1},
						xlabel={$n$},
						ylabel={{$\textup{Pr}\left(\progvar{w} = n\right)$}},
						symbolic x coords={0, 1},
						xtick style={
							/pgfplots/major tick length=1.5pt,
						},
						ytick style={
							/pgfplots/major tick length=2pt,
						},
						ytick pos=left, 
						xtick pos=bottom,
						xtick=data,
						ymax=1.0,
						ymin=0,
						ytick={0.0,0.25,0.50,0.75,1.00},
						]
						\addplot+[ybar] plot coordinates {(0,0.71428571428) (1,0.28571428571)};
					\end{axis}
				\end{tikzpicture}
			}
			\vspace*{-.6cm}
			\caption{initial belief}
			\label{fig:telephone-before}
		\end{subfigure}
		\hspace*{.2cm}
		\begin{subfigure}[b]{0.42\linewidth}
			\centering
			\resizebox{\linewidth}{!}{
				\begin{tikzpicture}
					\begin{axis}[
						unit vector ratio*={2 1 1},
						ybar stacked,
						legend style={
							nodes={scale=0.75, transform shape},
							legend columns= 1,
							at={(.72,.97)},
							anchor=north,
							draw=none,
							fill=none
						},
						legend cell align={left}, 
						bar width=15pt,
						x=1cm,
						enlarge x limits={abs=0.4cm},
						enlarge y limits={abs=0.05},
						xlabel={$n$},
						ylabel={{$\textup{Pr}\left(\progvar{w} = n\right)$}},
						symbolic x coords={0, 1},
						xtick style={
							/pgfplots/major tick length=1.5pt,
						},
						ytick style={
							/pgfplots/major tick length=2pt,
						},
						ytick pos=left, 
						xtick pos=bottom,
						xtick=data,
						ymax=1.0,
						ymin=0,
						ytick={0.0,0.25,0.50,0.75,1.00},
						]			
						\addplot+[ybar] plot coordinates {(0,0.71428571428) (1,0.08246232078)};
						\addplot+[ybar] plot coordinates {(0,0.20325196494) (1,0)};
						\addplot+[ybar,gray!80,fill=gray!20,postaction={pattern=north east lines,pattern color=gray!80}] plot coordinates {(0,0) (1,0.20325196493)};
						
					\end{axis}
					\coordinate (a) at (1.19,.5);
					\coordinate (b) at (.7,1.2);
					\draw[->,>=stealth'] (a) [out=90, in=0] to (b);
				\end{tikzpicture}
			}
			\vspace*{-.6cm}
			\caption{updated belief}
			\label{fig:telephone-after}
		\end{subfigure}
		\caption{The distribution of $\progvar{w}$ in Prog.~\ref{prog:telephone}.}
		\label{fig:telephone}
	\end{minipage}
\end{figure}

\paragraph{Conditioning in loop-free programs.}
Consider the loop-free program Prog.~\ref{prog:telephone} producing an \emph{infinite-support} distribution. It describes a telephone operator who is unaware of whether today is a weekday or weekend.
The operator's initial belief is that with probability $\sfrac{5}{7}$ it is a weekday ($\progvar{w}=0$) and thus with probability $\sfrac{2}{7}$ weekend ($\progvar{w}=1$); see \cref{fig:telephone-before}.
Usually, on weekdays there are 6 incoming calls per hour on average; on weekends this rate decreases to 2 calls -- both rates are subject to a Poisson distribution.
The operator observes 5 calls in the last hour, and the inference task is to compute the posterior distribution in which the initial belief is updated based on the observation.
Our approach can automatically infer the updated belief (see \cref{fig:telephone-after}) with $\textup{Pr}(\progvar{w}=0) = \frac{1215}{1215 + 2 \cdot e^4} \approx 0.9175$.
(Detailed calculations of the PGF semantics for Prog.~\ref{prog:telephone} are given in \cref{ex:telephone} on page \pageref{ex:telephone}.)

\paragraph{Conditioning outside loops.}
Prog.~\ref{prog:odd_geo} describes an iterative algorithm that repeatedly flips a fair coin -- while counting the number of trials ($\progvar{t}$) -- until seeing tails ($\progvar{h} = 0$), and observes that this number is odd. In fact, the \codify{while}-loop produces a geometric distribution in $\progvar{t}$ (cf.\ Fig.~\ref{fig:odd-geo-before}), after which the \codify{observe} statement \enquote{blocks} all program runs where $\progvar{t}$ is even and normalizes the probabilities of the remaining runs (cf.\ Fig.~\ref{fig:odd-geo-after}). Note that Prog.~\ref{prog:odd_geo} features an unbounded looping behavior (inducing an infinite-support distribution) whose exact output distribution thus cannot be inferred by state-of-the-art inference engines, e.g., neither by ($\lambda$)\textsc{PSI} \cite{DBLP:conf/cav/GehrMV16,DBLP:conf/pldi/GehrSV20}, nor by the PGF-based approach in \cite{zaiser2023exact}.
%
%
	However, given a suitable loop invariant, our tool is able to derive 
	the posterior distribution of Prog.~\ref{prog:odd_geo} in an automated fashion: for any input with $\progvar{t}=0$, the posterior is represented as the closed-form PGF
	\begin{align*}
		\frac{3 \cdot T}{4-T^2} \eeq \sum_{n=0}^{\infty}\, \underbrace{-3 \cdot 2^{-2-n} \cdot \left(-1 + (-1)^n\right)}_{\textup{Pr}\left(\progvar{t} \,=\, n \,\wedge\, \progvar{h} \,=\, 0\right)}\, \cdot~T^n H^0~,
	\end{align*}%
	where $T,H$ are formal \emph{indeterminates} corresponding to the program variables $\progvar{t}$ and $\progvar{h}$, respectively. From this closed-form PGF, we can extract various quantitative properties of interest, e.g., the expected value of $\progvar{t}$ is $\mathbb{E}[\progvar{t}] = \left(\frac{\partial}{\partial T}\frac{3 \cdot T}{4-T^2}\right)[H/0, T/1] = \frac{5}{3}$, or compute concentration bounds (aka tail bounds) such as $\textup{Pr}(\progvar{t} > 100) \leq \frac{5}{3\cdot 100} = \frac{1}{60}$ \`{a} la Markov's inequality \cite{dubhashi2009concentration}.
%

\begin{figure}[t]
	\centering
	\begin{minipage}[b]{.365\linewidth}
		\centering
		\begin{align*}
			& \assign{\progvar{h}}{1}\,\fatsemi\\
			& \WHILE{\progvar{h} = 1}\\
			& \quad \PCHOICE{\assign{\progvar{t}}{\progvar{t} + 1}}{\sfrac{1}{2}}{\assign{\progvar{h}}{0}}\\
			& \}\,\fatsemi\\
			& \observe{\progvar{t} \equiv 1 \!\!\!\!\pmod{2}}
		\end{align*}
		\captionof{program}{The odd geometric distribution.}
		\label{prog:odd_geo}
	\end{minipage}
	\hfil
	\begin{minipage}[b]{.625\linewidth}
		\centering
		\pgfplotsset{width=7cm,compat=1.8}
		\begin{subfigure}[b]{0.46\linewidth}
			\centering
			\resizebox{\linewidth}{!}{
				\begin{tikzpicture}
					\begin{axis}[
						unit vector ratio*={1 6 1},
						ybar stacked,
						bar width=15pt,
						enlarge y limits={abs=0.05},
						legend style={at={(0.5,-0.20)},
							anchor=north,legend columns=-1},
						xlabel={$n$},
						ylabel={{$\textup{Pr}\left(\progvar{t} = n\right)$}},
						symbolic x coords={0, 1, 2, 3, 4, 5, 6, 7, 8},
						xtick style={
							/pgfplots/major tick length=3pt,
						},
						ytick style={
							/pgfplots/major tick length=3pt,
						},
						ytick pos=left, 
						xtick pos=bottom,
						xtick=data,
						ymax=0.8,
						ymin=0,
						ytick={0.0,0.2,0.4,0.6,0.8},
						]
						\addplot+[ybar] plot coordinates {(0,0.5) (1,0.25) (2,0.125) (3,0.0625) (4,0.03125) (5,0.015625) (6,0.0078125) (7,0.00390625) (8,0.001953125)};
					\end{axis}
				\end{tikzpicture}
			}
			\vspace*{-.6cm}
			\caption{before conditioning}
			\label{fig:odd-geo-before}
		\end{subfigure}
		\hfil
		\begin{subfigure}[b]{0.46\linewidth}
			\centering
			\resizebox{\linewidth}{!}{
				\begin{tikzpicture}
					\begin{axis}[
						unit vector ratio*={1 6 1},
						ybar stacked,
						legend style={
							nodes={scale=0.75, transform shape},
							legend columns= 1,
							at={(.72,.97)},
							anchor=north,
							draw=none,
							fill=none
						},
						legend cell align={left}, 
						bar width=15pt,
						enlarge y limits={abs=0.05},
						xlabel={$n$},
						ylabel={{$\textup{Pr}\left(\progvar{t} = n\right)$}},
						symbolic x coords={0, 1, 2, 3, 4, 5, 6, 7, 8},
						xtick style={
							/pgfplots/major tick length=3pt,
						},
						ytick style={
							/pgfplots/major tick length=3pt,
						},
						ytick pos=left, 
						xtick pos=bottom,
						xtick=data,
						ymax=0.8,
						ymin=0,
						ytick={0.0,0.2,0.4,0.6,0.8},
						]
						\addplot+[ybar,gray!80,fill=gray!20,postaction={pattern=north east lines,pattern color=gray!80}] plot coordinates {(0,0.5) (1,0) (2,0.125) (3,0) (4,0.03125) (5,0) (6,0.0078125) (7,0) (8,0.001953125)};
						\pgfplotsset{cycle list shift=-1} 
						\addplot+[ybar] plot coordinates {(0,0) (1,0.25) (2,0) (3,0.0625) (4,0) (5,0.015625) (6,0) (7,0.00390625) (8,0)};
						\addplot+[ybar] plot coordinates {(0,0) (1,0.5) (2,0) (3,0.125) (4,0) (5,0.03125) (6,0) (7,0.0078125) (8,0)};
						
						\legend{blocked probability, original probability, normalized probability}
					\end{axis}
				\end{tikzpicture}
			}
			\vspace*{-.6cm}
			\caption{after conditioning}
			\label{fig:odd-geo-after}
		\end{subfigure}
		\caption{Snippets of the distribution of $\progvar{t}$ in Prog.~\ref{prog:odd_geo}.}
		\label{fig:odd-geo}
	\end{minipage}
\end{figure}

\paragraph{Conditioning inside i.i.d.\ loops.}
As argued by \citet{DBLP:journals/toplas/OlmedoGJKKM18} and \citet{DBLP:conf/esop/BichselGV18}, having
{\makeatletter
	\let\par\@@par
	\par\parshape0
	\everypar{}
	\begin{wrapfigure}{r}{0.36\textwidth}
		\vspace*{-6mm}
		\begin{minipage}{1\linewidth} 
		\begin{align*}
			& \assign{\progvar{h}}{1}\,\fatsemi\\
			& \WHILE{\progvar{h} = 1}\\
			& \quad \PCHOICE{\assign{\progvar{t}}{\progvar{t} + 1}}{\sfrac{1}{2}}{\assign{\progvar{h}}{0}}\,\fatsemi\\
			& \quad \observe{\progvar{h} = 1}\\
			& \}
		\end{align*}
	\end{minipage}
		\captionof{program}{\codify{observe} inside loop.}
		\label{prog:div_and_obsfail}
	\end{wrapfigure}
	\noindent
	loops and conditioning intertwined incurs semantic intricacies: Consider Prog.~\ref{prog:div_and_obsfail} -- a variant of Prog.~\ref{prog:odd_geo} where instead we observe $\progvar{h}=1$ \emph{inside} the \codify{while}-loop.
	Prog.~\ref{prog:div_and_obsfail} features an \emph{i.i.d.\ loop}, i.e., the set of states reached upon the end of different loop iterations are \emph{independent and identically distributed}. This program is interesting since it conditions to a \emph{zero-probability event}, i.e., the probability of infinitely often ignoring $\assign{\progvar{h}}{0}$ is zero, which is important yet non-trivial to
	\par}%
\noindent detect in general. Assigning a meaningful semantics to Prog.~\ref{prog:div_and_obsfail} is delicate: Intuitively, the \codify{observe} statement prevents the \codify{while}-loop from terminating since we always observe that we have taken the left branch, and therefore never set the termination flag $\progvar{h}=0$.
As a consequence, all runs that eventually would terminate are invalid as they violate the observation criterion.
The single run that does satisfy the criterion in turn is never able to exit the loop (cf.\ \cref{subsec:conditioned-semantics}).
In previous work on using PGFs (without conditioning) \cite{LOPSTR,CAV22}, the semantics of non-termination is represented as subprobability distributions where the \enquote{missing} probability mass captures the probability of divergence. \citet{zaiser2023exact} circumvent such semantic intricacies by syntactically imposing certainly terminating programs (due to the absence of loops and recursion). In our work, we distinguish non-termination behaviors from \codify{observe} violations \cite{DBLP:journals/toplas/OlmedoGJKKM18,DBLP:conf/esop/BichselGV18}, which allows us to show that the \codify{while}-loop in Prog.~\ref{prog:div_and_obsfail} is in fact equivalent to
\begin{align*}
	\ITE{\progvar{h} = 1}{\observe{\FALSE}}{\pskip}
\end{align*}%
which in turn reduces to $ \observe{h\neq1}$.

For certain programs, conditioning inside loops can be treated differently. 
These approaches 
include
\begin{enumerate*}[label=(\arabic*)]
	\item \emph{hoisting} \cite{DBLP:journals/toplas/OlmedoGJKKM18} that removes observations completely from conditioned probabilistic programs, which however relies on  intractable fixed point computations to hoist \codify{observe} statements inside loops;
	\item the \emph{pre-image transformation} \cite{DBLP:conf/aaai/NoriHRS14} that propagates observations backward through the program, which however cannot hoist the \codify{observe} statement through probabilistic choices, as in Prog.~\ref{prog:div_and_obsfail};
	\item the \emph{ad hoc solution} that simply pulls the \codify{observe} statement outside the loop, which however works only for special i.i.d.\ loops like Prog.~\ref{prog:div_and_obsfail}: The  \codify{observe} statement in Prog.~\ref{prog:div_and_obsfail} can be equivalently moved downward to the outside of the loop, but such transformation does not generalize to non-i.i.d.\ loops (which may have data flow across different loop iterations) as exemplified below.
\end{enumerate*}

\paragraph{Conditioning inside non-i.i.d.\ loops.}
The probabilistic loop in Prog.~\ref{prog:bit-flipping} models a discrete sampler which keeps tossing two fair coins ($\progvar{h}_1$ and $\progvar{h}_2$) until they both turn tails. The \codify{observe} statement in this program conditions to the event that at least one of the coins yields the same outcome as in the
{\makeatletter
	\let\par\@@par
	\par\parshape0
	\everypar{}
	\begin{wrapfigure}{r}{0.37\textwidth}
		\vspace*{-7.6mm}
		\begin{minipage}{1\linewidth} 
			\begin{align*}
				& \assign{\progvar{n}}{0}\,\fatsemi\\
				& \assign{\progvar{h}_1}{1}\,\fatsemi \assign{\progvar{h}_2}{1}\,\fatsemi \assign{\progvar{h}'_1}{1}\,\fatsemi \assign{\progvar{h}'_2}{1}\,\fatsemi\\
				& \WHILE{\neg\left(\progvar{h}_1 = 0 \wedge \progvar{h}_2 = 0\right)}\\
				& \quad \assign{\progvar{h}_1}{\bernoulli{\sfrac{1}{2}}}\,\fatsemi\\
				& \quad \assign{\progvar{h}_2}{\bernoulli{\sfrac{1}{2}}}\,\fatsemi\\
				& \quad \observe{\progvar{h}_1 = \progvar{h}'_1 \vee \progvar{h}_2 = \progvar{h}'_2}\,\fatsemi\\
				& \quad \assign{\progvar{h}'_1}{\progvar{h}_1}\,\fatsemi\\
				& \quad \assign{\progvar{h}'_2}{\progvar{h}_2}\,\fatsemi\\
				& \quad \assign{\progvar{n}}{\progvar{n}+1}\\
				& \}
			\end{align*}
		\end{minipage}
		\captionof{program}{The non-i.i.d.\ discrete sampler.}
		\label{prog:bit-flipping}
	\end{wrapfigure}
	\noindent
	 previous iteration, thereby imposing the global effect to \enquote{reset} the counter $\progvar{n}$ and restart the program upon observation violations. This way of conditioning -- that induces data dependencies across consecutive loop iterations -- renders the loop non-i.i.d.\ and, as a consequence, no known tactic can be employed to pull the observation outside the loop. However, given a suitable invariant -- in the form of a conditioned \emph{loop-free} program that is equivalent to the loop -- our method automatically infers that the posterior distribution is $-\frac{7 \cdot N^2}{N^2+8 \cdot N - 16}$, where $N$ is the formal indeterminate of the counter $\progvar{n}$ (note that $\progvar{h}_1 = \progvar{h}_2 = \progvar{h}'_1 = \progvar{h}'_2 = 0$ on termination). Furthermore, our inference framework admits \emph{parameters} in both programs and invariants for, e.g., encoding distributions with unknown probabilities like $\bernoulli{p}$ with 
	\par}%
\noindent 
  $p \in (0,1)$; it is capable of determining possible valuations of these parameters such that the given invariant is equivalent to the loop in question. The support of parameters in our approach enables template-based invariant synthesis (see, e.g., \cite{DBLP:conf/tacas/BatzCJKKM23}) and model repair (cf.\ \cite{DBLP:conf/birthday/CeskaD0JK19}), as detailed in \Cref{sec:synthesis}.

\newsavebox{\priorDist}
\sbox{\priorDist}{%
	\pgfplotsset{width=7cm,compat=1.8}
	\begin{tikzpicture}
		\hspace*{-.1cm}
		\begin{axis}[
			unit vector ratio*={1 6 1},
			ybar stacked,
			bar width=15pt,
			xticklabel=\empty,
			yticklabel=\empty,
			enlarge y limits={abs=0.05},
			legend style={at={(0.5,-0.20)},
				anchor=north,legend columns=-1},
			xlabel={},
			ylabel={},
			xtick style={
				/pgfplots/major tick length=3pt,
			},
			ytick style={
				/pgfplots/major tick length=3pt,
			},
			ytick pos=left, 
			xtick pos=bottom,
			xtick=data,
			ymax=1,
			ymin=0,
			ytick={0.0,0.2,0.4,0.6,0.8, 1},
			]
			\addplot+[ybar] plot coordinates {(0,1) (1,0) (2,0) (3,0) (4,0) (5,0) (6,0) (7,0) (8,0)};
		\end{axis}
	\end{tikzpicture}
}

\newsavebox{\postDist}
\sbox{\postDist}{%
	\pgfplotsset{width=7cm,compat=1.8}
	\begin{tikzpicture}
		\hspace*{-.1cm}
		\begin{axis}[
			unit vector ratio*={1 6 1},
			ybar stacked,
			xticklabel=\empty,
			yticklabel=\empty,
			legend cell align={left}, 
			bar width=15pt,
			enlarge y limits={abs=0.05},
			xlabel={},
			ylabel={},
			xtick style={
				/pgfplots/major tick length=3pt,
			},
			ytick style={
				/pgfplots/major tick length=3pt,
			},
			ytick pos=left, 
			xtick pos=bottom,
			xtick=data,
			ymax=1,
			ymin=0,
			ytick={0.0,0.2,0.4,0.6,0.8,1},
			]
			\addplot+[ybar] plot coordinates {(0,0) (1,0.75) (2,0) (3,0.1875) (4,0) (5,0.046875) (6,0) (7,0.01171875) (8,0)};
			
		\end{axis}
	\end{tikzpicture}
}

\newsavebox{\lightningDist}
\sbox{\lightningDist}{%
	\pgfplotsset{width=7cm,compat=1.8}
	\begin{tikzpicture}
		\hspace*{-.1cm}
		\begin{axis}[
			unit vector ratio*={1 6 1},
			ybar stacked,
			xticklabel=\empty,
			yticklabel=\empty,
			legend cell align={left}, 
			bar width=15pt,
			enlarge y limits={abs=0.05},
			xlabel={},
			ylabel={},
			xtick style={
				/pgfplots/major tick length=3pt,
			},
			ytick style={
				/pgfplots/major tick length=3pt,
			},
			ytick pos=left, 
			xtick pos=bottom,
			xtick=data,
			ymax=1,
			ymin=0,
			ytick={0.0,0.2,0.4,0.6,0.8,1},
			]
			\addplot+[ybar,gray!80,fill=gray!20] plot coordinates {(0,0.5) (1,0) (2,0.125) (3,0) (4,0.03125) (5,0) (6,0.0078125) (7,0) (8,0.001953125)};
			\pgfplotsset{cycle list shift=-1} 
			\addplot+[ybar] plot coordinates {(0,0) (1,0.25) (2,0) (3,0.0625) (4,0) (5,0.015625) (6,0) (7,0.00390625) (8,0)};
			
		\end{axis}
	\end{tikzpicture}
}

\newsavebox{\normDist}
\sbox{\normDist}{%
	\pgfplotsset{width=7cm,compat=1.8}
	\begin{tikzpicture}
		\hspace*{-.1cm}
		\begin{axis}[
			unit vector ratio*={1 6 1},
			ybar stacked,
			xticklabel=\empty,
			yticklabel=\empty,
			legend cell align={left}, 
			bar width=15pt,
			enlarge y limits={abs=0.05},
			xlabel={},
			ylabel={},
			xtick style={
				/pgfplots/major tick length=3pt,
			},
			ytick style={
				/pgfplots/major tick length=3pt,
			},
			ytick pos=left, 
			xtick pos=bottom,
			xtick=data,
			ymax=1,
			ymin=0,
			ytick={0.0,0.2,0.4,0.6,0.8,1},
			]
			\addplot+[ybar,gray!80,fill=gray!20,postaction={pattern=north east lines,pattern color=gray!80}] plot coordinates {(0,0.5) (1,0) (2,0.125) (3,0) (4,0.03125) (5,0) (6,0.0078125) (7,0) (8,0.001953125)};
			\pgfplotsset{cycle list shift=-1} 
			\addplot+[ybar] plot coordinates {(0,0) (1,0.25) (2,0) (3,0.0625) (4,0) (5,0.015625) (6,0) (7,0.00390625) (8,0)};
			\addplot+[ybar] plot coordinates {(0,0) (1,0.5) (2,0) (3,0.125) (4,0) (5,0.03125) (6,0) (7,0.0078125) (8,0)};
			
		\end{axis}
	\end{tikzpicture}
}

\begin{figure}[b]
	\centering
	\def\radius{.6mm} 
	\resizebox{1\linewidth}{!}{
		\begin{tikzpicture}[
			box/.style args = {#1/#2}{draw, align=flush center, rounded corners, text width=#1, minimum height=#2},
			box/.default = 1.4cm/10mm,
			unframedbox/.style args = {#1/#2}{align=flush center, text width=#1, minimum height=#2},
			unframedbox/.default = 1.4cm/10mm,
			foo/.style={%
				->,
				>=stealth',
				shorten >=1pt,
				shorten <=1pt,
				decorate,
				decoration={%
					snake,
					segment length=1.64mm,
					amplitude=0.2mm,
					pre length=2pt,
					post length=2pt,
				}
			},
			foodash/.style={%
				->,
				>=stealth',
				dash pattern=on 2pt off .75pt,
				shorten >=1pt,
				shorten <=1pt,
				decorate,
				decoration={%
					snake,
					segment length=1.64mm,
					amplitude=0.2mm,
					pre length=2pt,
					post length=2pt,
				}
			}
			]
			\tikzstyle{sarrow} = [draw, ->,>=stealth'],
			\tikzstyle{darrow} = [draw, <->,>=stealth']
			
			\begin{scope}
				\node[box=5cm/2.8cm] (program) at (0,0) {loopy program $P$ with conditioning\\[-.2cm]
					\begin{align*}
						& \WHILE{\progvar{h} = 1}\\
						& \quad \PCHOICE{\assign{\progvar{t}}{\progvar{t} + 1}}{\sfrac{1}{2}}{\assign{\progvar{h}}{0}}\\
						& \}\,\fatsemi\\
						& \observe{\progvar{t} \equiv 1 \!\!\!\!\pmod{2}}
					\end{align*}
				};
				\node[below right=.4cm and -2.8cm of program] (equivsymbol) {$\Big \Updownarrow$};
				\node[text width=5cm,align=left,right=.1cm of equivsymbol] (equiv) {$\textnormal{found}~\textcolor{blue}{p = \sfrac{1}{2}}~\textnormal{such that}$\\[-.1mm]$\forall F \in \pgf\colon \sem{I(p)}(F) = \sem{P}(F)$};
				\node[box=9cm/5.8cm,below=1.6cm of program] (invariant) {loop-free invariant $I(p)$ parametrized by $p$\\[-.2cm]
					\begin{align*}
						& \annotate{~1 \cdot T^0H^1}     \\
						& \IF{\progvar{h} = 1} \incrasgn{\progvar{t}}{\geometric{p}}\fatsemi\,\assign{\progvar{h}}{0} \ELSE \pskip\}\,\fatsemi\\
						& \annotate{~\tfrac{p}{1-(1-p) T}}\\
						& \observe{\progvar{t} \equiv 1 \!\!\!\!\pmod{2}}\\
						& \annotate{~\tikzmarkin[left offset={-0.08}, right offset={0.08}, above offset={0.35}]{vioprob}\tfrac{1}{2-p}\tikzmarkend{vioprob} \cdot X_\lightning + \tfrac{p(1-p)T}{1-((1-p) T)^2}} ~\quad\textcolor{gray}{\scaleto{\rightsquigarrow}{6pt}}~\quad \annocolor{\tfrac{(2-p)pT}{1-((2-p)p+1)T^2}}
					\end{align*}%
				
					\resizebox{1\linewidth}{!}{
						\begin{minipage}[t]{1\linewidth}
							\centering
							\usebox{\lightningDist}
							
							\LARGE{encoding observation-violation by $X_\lightning$}
						\end{minipage}
						\begin{minipage}[t]{1\linewidth}
							\centering
							\usebox{\normDist}
							
							\LARGE{normalizing the (sub-)distribution}
						\end{minipage}
					}
				};
				\node[box=3.3cm/3.01cm,left=1cm of program,text centered,text width=.25\textwidth,text depth=2.55cm] (priorDist) {prior distribution\\[.2cm]
					\resizebox{1\linewidth}{!}{
						 \usebox{\priorDist}
					}
				};
				\node[box=3.3cm/3.01cm,right=1cm of program,text centered,text width=.25\textwidth,text depth=2.55cm] (postDist) {post.~(sub-)distribution\\[.2cm]
					\resizebox{1\linewidth}{!}{
						\usebox{\postDist}
					}
				};
			
				\node[text width=2cm,align=center,below right=1.6cm and -1.8cm of postDist] (query) {queries to\\$\sem{P}(G)$};
				\draw[->,stealth'-] (query.north) to (query.north |- postDist.south);
			
				\node (ghostLightningDist) at (-.7,-8.12) {};
				\node (ghostNormDist) at (.75,-8.12) {};
				\draw[foo] (ghostLightningDist) to node[above,midway] {\scriptsize normalization} (ghostNormDist);
				\draw[foodash] (priorDist) to (program);
				\draw[foodash] (program) to (postDist);
				\path[draw,foo,stealth'-] (invariant.west)++(0.6,2.18) -| (priorDist.south);
				\path[draw,foo] (invariant.east)++(-0.6,-1.65) -| (postDist.south) node [right,pos=0.62] {\textcolor{blue}{$p = \sfrac{1}{2}$}};
				
				\node at (-4.9,0.2) (priorText) {\annocolor{\small$G = \grayunderbrace{1\cdot T^0 H^1}{\textnormal{PGF}}$}};
				\node at (5.92,0.1) (priorText) {\annocolor{\small$\sem{P}(G) = \grayunderbrace{\tfrac{3 \cdot T}{4-T^2}}{\textnormal{PGF}}$}};
			\end{scope}
%
%
		\end{tikzpicture}
		\begin{tikzpicture}[remember picture, overlay]
			\coordinate (viop) at (-10.80,2.56);
			\coordinate (t0prob) at (-11.23,1.6);
			\path[stealth-,gray] (t0prob) edge [bend right]  (viop);
			
			\coordinate (t1prob) at (-10.68,0.96);
			\path[stealth-,gray] (t1prob) edge [bend right]  (viop);
			
			\coordinate (t2prob) at (-10.12,0.80);
			\path[stealth-,gray] (t2prob) edge [bend right]  (viop);
			
			\node at (-9.5,1.56) {\textcolor{gray}{$\cdots$}};
		\end{tikzpicture}
	}
	\caption{A bird's-eye perspective of our approach.}
	\label{fig:overview}
\end{figure}


\paragraph{\bf Approach.}%
	\cref{fig:overview} sketches an overview of our inference approach:
	Given a prior distribution $G$ and a loopy probabilistic program $P$ with conditioning (at any place), our primary goal is to infer the posterior (sub-)distribution $\sem{P}(G)$ as depicted by the upper row. Here, we interpret $P$ as a distribution transformer $\sem{P}(\cdot)$ that transforms $G$ into $\sem{P}(G)$, both represented as PGFs to encode possibly infinite-support distributions. To deal with the unbounded \codify{while}-loop in $P$, we provide an invariant $I(p)$ in the form of a loop-free program parametrized by $p \in \R$, and aim to synthesize parameter values under which $I(p)$ is semantically equivalent to $P$, i.e., they transform every possible prior distribution into the same posterior (sub-)distribution. We show that checking the program equivalence $\sem{I(p)} = \sem{P}$ (together with parameter synthesis) is decidable -- via an extended technique of second-order PGFs -- when $I(p)$ and $P$ are restricted to a syntactic class of 
	programs called \credip preserving closed-form PGFs. Once the equivalence is concluded, we can simply push the prior distribution $G$ through the \emph{loop-free} program $I(p)$ -- as illustrated by the downward path in \Cref{fig:overview} -- and obtain the posterior (sub-)distribution $\sem{P}(G)$, from which various quantitative queries can be addressed. To tackle conditioning, a key technical ingredient in our approach is to extend PGFs with an extra term $X_\lightning$ keeping track of observation violations, which will eventually be normalized off to achieve the final, normalized (sub-)distribution.
	%

\paragraph{\bf Contributions.}
The main results of this paper are as follows.
%
\begin{itemize}
	\item
	We present a PGF-based denotational semantics for discrete probabilistic \WHILESYMBOL-programs where conditioning can occur at any place in the program. The basic technical ingredient is to extend PGFs with an extra term encoding the probability of violating observations as proposed by \cite{DBLP:conf/esop/BichselGV18}. The semantics can treat conditioning in the presence of possibly diverging loops and captures conditioning on zero-probability events.
	\item
	This semantics extends the PGF-based semantics of \cite{CAV22,LOPSTR} for unconditioned programs and is shown to coincide with the Markov chain semantics in \cite{DBLP:journals/toplas/OlmedoGJKKM18}. These correspondences indicate the adequacy of our semantics.
	\item
	Our PGF-based semantics readily enables exact inference for loop-free programs. We identify a syntactic class of almost-surely terminating programs for which exact inference for a \WHILESYMBOL-loop coincides with inference for a straight-line program. Technically this is based on proving program equivalence.
	\item
	We show that, for this class of programs, our approach can be generalized towards parameter synthesis: Are a \WHILESYMBOL-loop and a loop-free program that (both may) contain some parametric probability terms 
	equivalent for some values of these unknown probabilities?
	\item We implement our method in a tool called \toolname{Prodigy}; it augments existing computer algebra systems with GFs for (semi-)automatic inference and quantitative verification of conditioned probabilistic programs. We show that \toolname{Prodigy} can handle many infinite-state loopy programs and exhibits comparable performance to state-of-the-art exact inference tools over benchmarks of loop-free programs.
\end{itemize}
%

%
\paragraph{Paper structure.}
\Cref{sec:preliminaries} presents preliminaries on generating functions. 
\Cref{sec:semantics} presents our extended PGF-based denotational semantics that allows for exact quantitative reasoning about probabilistic programs with conditioning.
We dedicate \Cref{sec:loops} to the exact Bayesian inference for conditioned programs with loops leveraging the notions of invariants and equivalence checking.
In \Cref{sec:synthesis}, we identify the class of parametrized programs and invariants for which the problem of parameter synthesis is shown decidable.
We report the empirical evaluation 
of \toolname{Prodigy} in \Cref{sec:prodigy} and discuss the limitations of our approach in \Cref{sec:bottlenecks}. An extensive review of related work in probabilistic inference is given in \Cref{sec:related_work}. The paper is concluded in \Cref{sec:conclusion}. Additional background materials, elaborated proofs, and details on the examples can be found in the appendix.
	\section{Preliminaries on Generating Functions}
\label{sec:preliminaries}
Generating functions (GFs) constitute a versatile mathematical tool with extensive applications across various fields of mathematics and beyond \cite{generatingfunctionology}. They provide a systematic and elegant means of representing and manipulating sequences of numbers, rendering them essential for solving a diverse spectrum of mathematical problems in, e.g., enumerative combinatorics \cite{DBLP:books/daglib/0023751} and (discrete) probability theory \cite{johnson2005univariate}.

\paragraph{Formal power series.}
Generating functions, at their core, are \emph{formal power series} (FPSs), which encode essential information about possibly infinite sequences of numerical values (of any type).
The underlying principle is to represent the sequence as terms within an FPS (amenable to algebraic operations). Generating functions are classified as uni- or multivariate based upon the number of indeterminates. 
A \emph{univariate generating function} takes the form
\begin{equation}\label{eq:univariategf}
	F \eeq \sum\nolimits_{n\in\N} a_n X^n
\end{equation}
where $a_n$ is the $n$-th number within the sequence and $X$ is a \emph{formal indeterminate}.
The \enquote{monomials} $X^n$ are merely \emph{position-holders} for the coefficients $a_n$ and do not have any particular meaning.
However, \`{a} la \citet{LOPSTR} and \citet{Zaiser23}, we interpret the indeterminate $X$ with the corresponding program variable $\progvar{x}$ and the exponent $n$ with values of $\progvar{x}$; in this case, $a_n$ is the probability of $\progvar{x} = n$.

\begin{example}[Geometric Distribution as an FPS]
	\label{ex:univariategf}
	Consider a discrete random (program) variable $\progvar{t}$ which is geometrically distributed over $\N$ with parameter $\sfrac 1 2$.
	The probability mass function of $\progvar{t}$ is given by $P_\progvar{t}(\progvar{t}=n) = \sfrac 1 2^{n+1}$.
	We tabulate $P_\progvar{t}$ using a sequence $(a_n)_{n\in\N} = (P_\progvar{t}(\progvar{t}=n))_n = \nicefrac{1}{2}, \nicefrac{1}{4}, \nicefrac{1}{8}, \ldots$. Encoding this sequence as a generating function in terms of FPSs via formal indeterminate $T$ yields
	\begin{equation}
	\label{eqn:unvariate_series}
	\frac{1}{2} ~+~ \frac{1}{4} T ~+~ \frac{1}{8} T^2 ~+~ \frac{1}{16} T^3 ~+~ \frac{1}{32} T^4 ~+~ \frac{1}{64} T^5 ~+~ \frac{1}{128} T^6 ~+~ \frac{1}{256} T^7 ~+~ \cdots
	\end{equation}%
	where we uniquely associate terms of the power series to values of the sequence, e.g., the term $\tfrac 1 8 T^2$ encodes the information that the probability of $\progvar{t} = 2$ is $\sfrac 1 8$.
\end{example}

In order to deal with multiple program variables $\progvar{x}_1,\ldots,\progvar{x}_k$, the form in \cref{eq:univariategf} is generalized to a \emph{multivariate} generating function of dimension $k \in \N$ as $F = \sum\nolimits_{\bvec{n}\in\N^k} a_{\bvec{n}} {\bvec{X}}^{\bvec{n}}$, where $\bvec{X} = (X_1, X_2, \ldots, X_k)$ is a vector of indeterminates and ${\bvec{X}}^{\bvec{n}}$ is the monomial $X_1^{n_1} X_2^{n_2} \cdots X_k^{n_k}$. Here, the term $a_{\bvec{n}} {\bvec{X}}^{\bvec{n}}$ encodes that $(x_1, x_2, \ldots, x_k) = (n_1, n_2, \ldots, n_k)$ with probability $a_{\bvec{n}}$. A $k$-dimensional GF $F$ is called a \emph{probability generating function} (PGF) if $\sum_{\bvec{n} \in \N^k} a_{\bvec{n}} \leq 1$ and $a_{\bvec{n}} \geq 0$ for all $\bvec{n} \in \N^k$ (cf.\ \cref{eqn:unvariate_series}).
A PGF with $\sum_{\bvec{n} \in \N^k} a_{\bvec{n}} < 1$ represents a subprobability distribution and is called a \emph{sub-PGF}.

\paragraph{Closed forms.}
The encoding as in \cref{ex:univariategf} enables us to compress the infinite power series into a \emph{closed form} using Taylor's theorem, that is, a finitely-represented function whose Taylor series developed at zero coincides with the GF. For instance, the closed form of \Cref{eqn:unvariate_series} is given by $T \mapsto \sfrac{1}{(2-T)}$ for all $\abs{T} < 2$, as the Taylor series of $\sfrac{1}{(2-T)}$ is precisely $\tfrac{1}{2} + \tfrac{1}{4} T + \tfrac{1}{8} T^2 +\cdots$. Many important operations on infinite sequences of numbers -- and their corresponding GF series -- can be simulated by manipulating the closed-form expression instead. Using algebraic operations, this allows for computing, e.g., expected values, variances, higher-order moments, point probabilities, and tail bounds. For instance, the formal derivative $\frac{\mathrm{d}}{\mathrm{d}T}\tfrac{1}{2-T} = \tfrac{1}{(2-T)^2}$ evaluated at $T=1$ yields the expected value $\mathbb{E}(\progvar{t})=\frac{1}{(2-1)^2}=1$. \cref{fig:cheatSheet} summarizes some basic operations on GFs and their corresponding effects on the infinite sequences.

To effectively manipulate closed forms, we embed them in an algebraic structure -- the (commutative) \emph{ring of FPSs} $(\R[[\bvec{X}]], +, \cdot, 0, 1)$. Here, $\R[[\bvec{X}]]$ is the set of FPSs (of fixed dimension $k$):
\[
	F \eeq \sum\nolimits_{\bvec{n}\in\N^k} \coef{\bvec{n}}{F} {\bvec{X}}^{\bvec{n}}
\]
with $\coef{\cdot}{F}\colon \N^k \! \to \R$, \enquote{$+$} (addition) and \enquote{$\cdot$} (multiplication) are binary operations defined as
\begin{align*}
	F+G \ddefeq \sum\nolimits_{\bvec{n}\in \N^k} \left(\coef{\bvec{n}}{F} + \coef{\bvec{n}}{G}\right)\bvec{X}^\bvec{n} \quad \text{and}\quad
	F \cdot G \ddefeq \sum\nolimits_{\bvec{n}_1, \bvec{n}_2\in \N^k} \left(\coef{\bvec{n}_1}{F} \cdot \coef{\bvec{n}_2}{G}\right)\bvec{X}^{\bvec{n}_1 + \bvec{n}_2}~,
\end{align*}%
and $0, 1 \in \R[[\bvec{X}]]$ are neutral elements w.r.t.\ addition and multiplication, respectively. 
The multiplication $F \cdot G$ is in fact the discrete convolution of the two sequences $F$ and $G$ (aka, the \emph{Cauchy product} of power series). Note that $F \cdot G$ is always well-defined because for all $\bvec{n} \in \N^k$ there are \emph{finitely} many $\bvec{n}_1 + \bvec{n}_2 = \bvec{n}$ in $\N^k$. Moreover, every $F \in \R[[\bvec{X}]]$ has an \emph{additive inverse} $-F \in \R[[\bvec{X}]]$ yet \emph{multiplicative inverses} $F^{-1} = 1/F$ need not always exist. 

\begin{table}[t]
	\centering
	\caption{
		GF cheat sheet.
		$f, g$ and $X, Y$ are arbitrary GFs and indeterminates, resp.~\cite{CAV22}.
	}
	\label{fig:cheatSheet}
	\begin{adjustbox}{max width=1.0\textwidth}
		\renewcommand{\arraystretch}{1.6}
		\setlength{\tabcolsep}{5pt}
		\hspace*{-.28cm}
		\begin{tabular}{l l l}
			\toprule
			Operation & Effect& Example\\ \midrule
			$f^{-1} = 1/f$ & \makecell[l]{multiplicative inverse of $f$ \\[-2pt] (if it exists)} & \makecell[l]{$\frac{1}{1 - XY} = 1 + XY + X^2Y^2 + \cdots$ \\[-1pt] because $(1- XY)(1 + XY + X^2Y^2 + \cdots) = 1$} \\ 
			$f\cdot X$ & shift in dimension $X$ & $\frac{X}{1 - XY} = X + X^2Y + X^3Y^2 + \cdots$ \\ 
			$\subsFPSVarFor{f}{X}{0}$ & drop terms containing $X$ & $\frac{1}{1 - 0Y} = 1$\\ 
			$\subsFPSVarFor{f}{X}{1}$ & projection\footnote{%
				Projection is not always well-defined, e.g., $\subsFPSVarFor{\frac{1}{1-X+Y}}{X}{1} = \frac{1}{Y}$ is ill-defined, as $Y$ is not invertible. It is, however,
				well-defined whenever used in this paper; in particular, projection is well-defined for (fully simplified) rational closed forms of PGFs.
			} on $Y$ & $\frac{1}{1 - 1Y} = 1 + Y + Y^2 + \cdots$ \\ 
			$f\cdot g$ & \makecell[l]{discrete convolution \\[-2pt] (or Cauchy product)}& $\frac{1}{(1 - XY)^2} = 1 + 2XY + 3X^2Y^2 + \cdots$ \\ 
			$\partial_X f$ & formal derivative in $X$ & $\partial_X \frac{1}{1 - XY} = \frac{Y}{(1-XY)^2} = Y +2XY^2 + 3X^2Y^3 + \cdots$ \\ 
			$f + g$ & coefficient-wise sum & $\frac{1}{1 - XY} + \frac{1}{(1 - XY)^2} = \frac{2-XY}{(1-XY)^2} = 2 + 3XY + 4 X^2Y^2 + \cdots$ \\ 
			$a\cdot f$ & \makecell[l]{coefficient-wise scaling\\[-2pt] (by scalar $a$)} & $\frac{7}{(1 - XY)^2} = 7 + 14XY + 21 X^2Y^2 + \cdots$ \\
			%
			%
			\bottomrule
		\end{tabular}
	\end{adjustbox}%
\end{table}

\begin{remark}
		Treating the closed form as a \emph{function}, say $T \mapsto \tfrac{1}{2-T}$, and computing its Taylor series imposes -- for the sake of well-definedness -- the radius of convergence of the resulting series, i.e., $\abs{T} < 2$. However, due to the underlying algebraic structure, we can safely write $\tfrac{1}{2-T} = \sum_{n \in \N} \tfrac{1}{2^{n+1}}T^n$ regardless of the fact whether $\abs{T} < 2$: the sequences $2 - 1T + 0 T^2 + \cdots$ and $\tfrac{1}{2} + \tfrac{1}{4} T + \tfrac{1}{8} T^2 +\cdots$ are \emph{multiplicative inverse elements} to each other in $\R[[T]]$, i.e., their product is $1$. We refer interested readers to \cite[Appendix D]{DBLP:journals/corr/abs-2205-01449} for more details on convergence-related issues.
	\qedT
\end{remark}

In this paper, we are mainly concerned with \emph{rational closed forms}, i.e., FPSs of the form $F = GH^{-1} = G/H$ where $G, H$ are \emph{polynomials} in $\R[[\bvec{X}]]$ (i.e., $G,H$ have finitely many non-zero coefficients).

%
%
%
%

	\section{Generating Function Semantics}
\label{sec:semantics}
\subsection{Semantics without Conditioning}
\label{ssec:nocond_semantics}

Given a fixed input, the semantics of a probabilistic program is captured by its (posterior) probability distribution over the final (terminating) program states. 
In \cite{LOPSTR}, the domain of discrete distributions is represented in terms of PGFs -- elements from $\R[[\bvec{X}]]$ 
-- and a (conditioning-free) program is interpreted denotationally as a \emph{distribution transformer} \`{a} la Kozen~\cite{Kozen}.
We recap this semantics for programs without conditioning by means of an example:

\begin{example}[PGF Semantics without Conditioning]\label{ex:pgcl}%
		Consider Prog.~\ref{prog:pgcl_intro_annotated} with input $G=1\cdot X^0C^0$, representing the joint prior distribution $\textup{Pr}\left(\progvar{x} = 0 \wedge \progvar{c} = 0\right) = 1$.
		The%
	{\makeatletter
		\let\par\@@par
		\par\parshape0
		\everypar{}
	\begin{wrapfigure}{r}{4cm}
		\centering
		\begin{minipage}{\linewidth}
			\vspace{-1.1cm}
			\begin{align*}
			& \annotate{1 \qquad\quad\quad\ \ \, \left(= 1 \cdot X^0C^0\right)}                                                                         \\
			& \ASSIGN{\progvar{x}}{1}                                                              \\
			& \annotate{X}                                                                         \\
			& \PCHOICE{\ASSIGN{\progvar{c}}{\progvar{c} + 5}}{\sfrac 1 3}{\ASSIGN{\progvar{c}}{3}} \\
			& \annotate{\sfrac{1}{3} \cdot XC^5 + \sfrac{2}{3} \cdot XC^3}                                       \\
			& \IF{\progvar{c} > 4}                                                                 \\
			& \quad \annotate{\sfrac{1}{3} \cdot XC^5}                                                    \\
			& \quad \ASSIGN{\progvar{x}}{\progvar{x} + \progvar{c}}                                \\
			& \quad \annotate{\sfrac{1}{3} \cdot X^6C^5}                                                  \\
			& \ELSE                                                                                \\
			& \quad \annotate{\sfrac{2}{3} \cdot XC^3}                                                    \\
			& \quad \decr{\progvar{x}}                                                             \\
			& \quad \annotate{\sfrac{2}{3} \cdot C^3}                                                     \\
			& \}                                                                                   \\
			& \annotate{\sfrac{1}{3} \cdot X^6 C^5 + \sfrac{2}{3} \cdot C^3} 
			\end{align*}
		\end{minipage}
			\captionof{program}{PGF semantics for an observation-free program.}
		\label{prog:pgcl_intro_annotated}
		\vspace*{-6mm}
	\end{wrapfigure}
	\noindent%
	 denotational PGF semantics of this program is computed in a \emph{forward} manner per the annotation style in \cite{kaminski2019advanced}. Below, we show step-by-step how the prior distribution $G$ is transformed into the joint posterior distribution $G' = \sfrac{1}{3} \cdot X^6 C^5 + \sfrac{2}{3} \cdot C^3$, indicating that $\textup{Pr}\left(\progvar{x} = 6 \wedge \progvar{c} = 5\right) = \sfrac{1}{3}$ and $\textup{Pr}\left(\progvar{x} = 0 \wedge \progvar{c} = 3\right) = \sfrac{2}{3}$.
	We start by interpreting the first instruction of the program, i.e., the assignment of 1 to variable $\progvar{x}$, which results in the intermediate distribution $1\cdot X^1$.
	Then, we descend into the left and right branches of the probabilistic choice statement.
	For the left branch, we interpret the semantics of $\ASSIGN{\progvar{c}}{\progvar{c} + 5}$ by multiplying the previous distribution $1\cdot X^1C^0$ with $C^5$ which encodes the effect of increasing $\progvar{c}$ by $5$.
	The right branch is handled analogously by setting $\progvar{c}$ to $3$;
	this is done by first marginalizing the distribution $1\cdot X^1C^0$ by substituting $1$ for $C$ and then multiplying the result with $C^3$.
	Now, we can combine the semantics for the two branches (left: $XC^5$; right: $XC^3$) via a weighted sum $\sfrac 1 3 \cdot  XC^5 + \sfrac 2 3 \cdot XC^3$ to represent the distribution after executing the probabilistic choice.
	Subsequently, we evaluate the conditional branching by recursively descending into the satisfying branch ($\ASSIGN{\progvar{x}}{\progvar{x} + \progvar{c}}$) and non-satisfying branch ($\decr{\progvar{x}}$) with their respective filtered inputs ($\progvar{c}>4$: $\sfrac{1}{3}\cdot XC^5$; $\progvar{c}\leq 4$: $\sfrac{2}{3}\cdot XC^3$).
	Finally, we combine the two sub-results of the conditional branches and thus
	\par}
\noindent
obtain $\sfrac{1}{3}\cdot  X^6 C^5 + \sfrac{2}{3}\cdot C^3$. See \cite{LOPSTR} for semantics of more program constructs.
\qedT
\end{example}

The representation of a posterior distribution in terms of a generating function comes with several benefits:
\begin{enumerate*}[label=(\arabic*)]
	\item it naturally encodes common, \emph{infinite-support} distributions 
	like the geometric or Poisson distribution in compact, \emph{closed-form} representations;
	\item it allows for compositional reasoning and, in particular, 
	in contrast to representations in terms of density or mass functions, the effective computation of (high-order) moments;
	\item tail bounds, concentration bounds, and other properties of interest can be extracted with relative ease from a PGF; and
	\item expressions containing parameters 
	are naturally supported.
\end{enumerate*}

\subsection{Semantics with Conditioning}
\label{ssec:conditioned_semantics}
We lift the approach to discrete, loopy probabilistic programs \emph{with conditioning} by extending the PGF semantics of \citet{LOPSTR} to cope with posterior observations. To define such a semantic model, we fix $k$ $\N$-valued program variables $\progvar{x}_1, \progvar{x}_2, \ldots , \progvar{x}_k$. The set of program state valuations is $\N^k$; for each \mbox{$\sigma = (\sigma_1, \ldots, \sigma_k) \in \N^k$}, $\sigma_i$ indicates the value of $\progvar{x}_i$.
%
We consider the \pgcl programming language \cite{DBLP:series/mcs/McIverM05} with the extended ability to specify posterior observations via the \codify{observe} statements \cite{ACM:conf/fose/Gordon14,DBLP:journals/toplas/OlmedoGJKKM18,DBLP:conf/aaai/NoriHRS14}:
\begin{definition}[\cpgcl]
	\label{def:cpgcl}%
	A program $P$ in the \emph{conditional probabilistic guarded command language (\cpgcl)} adheres to the grammar
	\begin{align*}
		P~~\!\Coloneqq~~& \pskip \mmid \assign{\progvar{x}}{E} \mmid \COMPOSE{P}{P} \mmid \!{\pchoice{P}{p}{P}}\! \mmid \observe{B} \!\mmid \\
		  &\ITE{B}{P}{P} \!\mmid \WHILEDO{B}{P}
	\end{align*}%
	where 
	$E\colon \N^k \to \N$ is an arithmetic expression, 
	$B \subseteq \N^k$ is a predicate, and $p \in [0,1]$.\footnote{We do not give an explicit syntax for $E$ and $B$ as it is irrelevant at this point. When dealing with \emph{automation}, we present a concrete syntax, cf.\ \cref{tab:redip} on page \pageref{tab:redip}.}
\end{definition}
The meaning of most \cpgcl program constructs is standard. The \emph{probabilistic choice} $\{P\}\,[\,p\,]\,\{Q\}$ executes $P$ with probability $p \in [0,1]$ and $Q$ with probability $1-p$. The \emph{conditioning statement} $\codify{observe}(B)$ \enquote{blocks} all program runs that violate the guard $B$ and normalizes the probabilities of the remaining runs. For example, in Prog.\ \ref{prog:telephone} on page \pageref{prog:telephone}, the telephone operator observes 5 calls in the last hour as indicated by $\observe{\progvar{c} = 5}$. To reflect this, all program states where $\progvar{c} \neq 5$ are assigned probability zero. The program's distribution is adjusted by normalizing the probability of runs satisfying $\progvar{c} =5$ by the total probability mass of all runs violating this condition.

To identify program runs violating the observations, we extend the domain of FPSs -- and thus the domain of PGFs -- with a dedicated indeterminate $X_\lightning$\! aggregating observation-violation probability:
\begin{definition}[\efps and \epgf]\label{def:efps}%
	Let $\bvec{X}$ and $X_\lightning$\! be indeterminates. For any program state valuations $\sigma \in \N^k$, 
	an \emph{extended} formal power series (eFPS) is of the form\footnote{The coefficients $\coef{\cdot}{F}$ range over $\R_{\geq 0}^\infty$ to enforce a complete lattice structure over $\efps$; see details in \cref{apx:domaintheory,apx:semantics}.}
	\[
		F \eeq \coef{\lightning}{F} X_\lightning + \sum\nolimits_{\sigma \in \N^k} \coef{\sigma}{F}\bvec{X}^\sigma \quad \textnormal{with} \quad \coef{\cdot}{F}\colon\, \N^k \cup \{\lightning\} \to \R_{\geq 0}~.
	\]
	We refer to $\coef{\lightning}{F} X_\lightning$\! as the \emph{observation-violation term} and call the set of all extended formal power series \efps.
	Let $\abs{F} \defeq \sum_{\sigma \in \N^k} \coef{\sigma}{F}$ denote the \emph{mass} of $F$. $F \in \efps$ is an \emph{extended PGF (ePGF)} iff $\abs{F} \leq 1$; in this case, $F$ encodes a (sub)probability distribution. Let $\epgf$ be the set of all \textnormal{ePGFs}.
	An \emph{ePGF transformer} is a function $\epgf \to \epgf$.
\end{definition}
We emphasize that $\abs{F}$ does not take the \codify{observe}-violation probability $\coef{\lightning}{F}$ into account. Another way to obtain $\abs{F}$ is through the substitution of indeterminates $\bvec{X}$ representing program variables by $\bvec{1}$ and the indeterminate $X_\lightning$ for the observation-violation by $0$.
Addition and scalar multiplication in \efps are to be understood coefficient-wise, that is, for any $F, G \in \efps$, 
\begin{align*}
	F+G &\ddefeq \left(\coef{\lightning}{F} + \coef{\lightning}{G}\right)X_\lightning + \sum\nolimits_{\sigma \in \N^k} \left(\coef{\sigma}{F} + \coef{\sigma}{G}\right)\bvec{X}^\sigma~,\\
	a \cdot F &\ddefeq \left(a \coef{\lightning}{F} X_\lightning\right) + \sum\nolimits_{\sigma \in \N^k}\left(a \coef{\sigma}{F}\right)\bvec{X}^\sigma \quad \textnormal{for} \quad a \in \R_{\geq 0}~.
\end{align*}

\begin{remark}
\efps is not closed under multiplication: $X_\lightning\! \cdot X_\lightning\! = X_\lightning^2 \not\in \efps$.
This is intended, as such monomial combinations do not have a valid interpretation in terms of probability distributions.
\qedT
\end{remark}

We endow ePGFs with the following ordering relations.
\begin{definition}[Orders over \textup{\epgf}]
	For all $F,G \in \epgf$, let
	\begin{align*}
	F \ppreceq G \qquad\ \textnormal{iff}\ \qquad \forall \sigma \in \N^k \cup \{\lightning\}.\ \  \coef{\sigma}{F} \lleq \coef{\sigma}{G}~.
	\end{align*}
	This order can be lifted to ePGF transformers, that is, for all $\phi, \psi \in (\epgf \to \epgf)$,
	\begin{align*}
	\phi \ssqsubseteq \psi \qquad\ \textnormal{iff}\ \qquad \forall F \in \efps.\ \  \phi(F) \ppreceq \psi(F)~.
	\end{align*}%
\end{definition}%
\noindent
In fact, $(\epgf, \preceq)$ and $(\epgf \to \epgf, \sqsubseteq)$ are $\omega$-complete partial orders (cf.\ \Cref{apx:semantics}).
To evaluate Boolean guards, we use the so-called \emph{filtering} function for eFPSs. The filtering of $F\in \efps$ by predicate $B$ is
	\begin{align*}
		\constrain{F}{B} \ddefeq \sum\nolimits_{\sigma \models B} \coef{\sigma}{F} \bvec{X}^\sigma~,
	\end{align*}%
i.e., $\constrain{F}{B}$ is the eFPS derived from $F$ by setting $\coef{\lightning}{F}$ and all $\coef{\sigma}{F}$ with $\sigma \not\models B$ to $0$.
In contrast to \cite{LOPSTR}, we cannot decompose $F$ into $\constrain{F}{B} + \constrain{F}{\neg B}$, but rather have to include the observation-violation term separately, yielding $F = \constrain{F}{B} + \constrain{F}{\neg B} + \coef{\lightning}{F}X_\lightning$.
Further properties of the \epgf domain are found in \Cref{apx:semantics}.

\subsection{Non-Normalized Semantics for \cpgcl}
Let $\sem{P}\colon \epgf \to \epgf$ be a (non-normalized) distribution transformer for \cpgcl program $P$. 
We define the \emph{non-normalized} semantics of $P$ by transforming an input ePGF $G$ to an output ePGF $\sem{P}(G)$ while \emph{explicitly} keeping track of the probability of violating the observations; see \Cref{tab:semanics}.

\renewcommand{\arraystretch}{1.2}
\begin{table}[t]
	\caption{The \emph{non-normalized} semantics for \cpgcl programs.
	}
	\label{tab:semanics}
	\centering
	\begin{tabular}{lcl}
		\toprule
		$P$ & &$\sem{P}(G)$ \\
		\midrule
		$\pskip$ & &$G$\\
		$\assign{\progvar{x}_i}{E}$ & &$\coef{\lightning}{G}X_{\lightning} + \sum_{\sigma} \coef{\sigma}{G}X_1^{\sigma_1}\cdots X_i^{E(\sigma)}\cdots X_k^{\sigma_k}$ \\
		\observe{B} & &$\left(\coef{\lightning}{G} + \abs{\constrain{G}{\neg B}}\right) X_\lightning + \constrain{G}{B}$\\[.1cm]
		$\pchoice{P_1}{p}{P_2}$ & &$p \cdot \sem{P_1}\left(G\right) + \left(1-p\right) \cdot \sem{P_2}\left(G\right)$\\
		$\ITE{B}{P_1}{P_2}$ & &$\coef{\lightning}{G} X_\lightning + \sem{P_1}\left(\constrain{G}{B}\right) + \sem{P_2}\left(\constrain{G}{\neg B}\right)$\\
		$\COMPOSE{P_1}{P_2}$ & &$\sem{P_2}\left(\sem{P_1}\left(G\right)\right)$\\[.1cm]
		$\WHILEDO{B}{P_1}$ & &$[\lfp~ \Phi_{B,P_1}]\left(G\right),$ where \\
		& &$\Phi_{B,P_1}(f) \eeq \lambda G.~ \coef{\lightning}{G}X_{\lightning} + \constrain{G}{\neg B} + f\left(\sem{P_1}\left(\constrain{G}{B}\right)\right)$\\
		\bottomrule
	\end{tabular}
\end{table}
\renewcommand{\arraystretch}{1}

The $\pskip$ statement
leaves the initial distribution $G$ unchanged, i.e., it \emph{skips} an instruction. 
The assignment $\ASSIGN{\progvar{x}_i}{E}$ 
updates the exponent of the corresponding indeterminate $X_i$ in every term of the ePGF by $\eval{\sigma}{}$ and the observation-violation term remains unchanged.
For instance, given $E=2\cdot \progvar{x}\progvar{y}^3 + 23$ and state valuation $\sigma = (x,y) = (1, 10)$, 
$\ASSIGN{\progvar{x}_i}{E}$ updatess the term $a X Y^{10}$ to $a X^{2023} Y^{10}$.
The semantics for $\codify{observe}(B)$
is defined in line with \cite{DBLP:conf/aaai/NoriHRS14,DBLP:journals/toplas/OlmedoGJKKM18,DBLP:journals/pacmpl/Jacobs21,DBLP:conf/esop/BichselGV18} as \emph{rejection sampling}, i.e., if the current program run satisfies $B$, it behaves like a $\pskip$ statement and the posterior distribution is unchanged; If the current run, however, violates the condition $B$, the run is rejected and the program restarts from the top in a reinitialized state. Hence, observing a certain guard $B$ just filters the prior distribution and accumulates the probability mass that violates the guard. For example, observing an even dice roll $\observe{\progvar{x} \equiv_2 0}$ out of a six-sided die $\frac 1 6 \left(X + X^2 + X^3 + X^4 + X^5 + X^6\right)$ yields $\frac 1 6 \left(X^2 + X^4 + X^6\right) + \frac 1 2 X_\lightning$.
The probabilistic branching statement $\PCHOICE{P_1}{p}{P_2}$
is interpreted as the convex $p$-weighted combination of the two subprograms $P_1$ and $P_2$.
The semantics of conditional branching $\ITE{B}{P_1}{P_2}$ combines the semantics of $P_1$ and $P_2$ conditionally based on $B$.
Sequential composition $P_1\fatsemi P_2$
composes programs in a \emph{forward} manner, i.e., we first evaluate $P_1$ and take the intermediate result as new input for $P_2$.
The semantics of a loop $\WHILEDO{B}{P_1}$ is defined as the \emph{least fixed point} ($\lfp$) of $\Phi_{B,P_1}$ (see domain theory in \cref{apx:domaintheory}). Here, $\Phi_{B,P_1}$ is known as the \emph{characteristic function} -- a monotonic operator mimicking the effect of unfolding the loop. Concretely, $\Phi_{B,P_1}$ guarantees the equivalence of $\WHILEDO{B}{P_1}$ and $\ITE{B}{P_1\fatsemi \WHILEDO{B}{P_1}}{\pskip}$. 

Note that the \codify{observe}-violation term $\coef{\lightning}{G}X_{\lightning}$ \enquote{passes through} all instructions but $\observe{B}$:
\begin{lemma}[Error Term Pass-Through] \label{lem:errpassthru}
	For every program $P$ and every $F\in\epgf$,
	\[
	\sem{P}(F) \eeq \sem{P}\left(\sum\nolimits_{\sigma\in\N^k} \coef{\sigma}{F} \mathbf{X}^\sigma + \coef{\lightning}{F} X_\lightning\right)
	\eeq
	\sem{P}\left(\sum\nolimits_{\sigma\in\N^k} \coef{\sigma}{F} \mathbf{X}^\sigma\right) + \coef{\lightning}{F} X_\lightning~.
	\]
\end{lemma}
\noindent
This renders the semantics as a conservative extension to \cite{CAV22}, as for \codify{observe}-free programs on initial distributions without $\coef{\lightning}{G}X_{\lightning}$, both semantics coincide.

Recall that in Prog.~\ref{prog:div_and_obsfail} on page \pageref{prog:div_and_obsfail}, all program runs which eventually would terminate violate the observation.
Since the (unnormalized) probability of non-termination is zero (as there is only a single infinite run), the final \emph{non-normalized, conditioned} ePGF semantics of this program is $1\cdot X_\lightning$.

\subsection{Normalized Semantics for \cpgcl}\label{subsec:conditioned-semantics}

The non-normalized semantics serves as an intermediate result to achieve our \emph{normalized} semantics, which further addresses \emph{normalization} of distributions. 
\begin{definition}[Normalization]\label{def:normalization}%
	The \emph{normalization operator} $\normalize$ is a partial function defined as\footnote{$\normalize$ in fact maps an ePGF to a PGF, i.e., $\coef{\lightning}{F} X_\lightning$ is pruned away by normalization.}
	\[
		\normalize\colon~ \epgf \ppto \epgf, \qquad F \mmapsto
		\begin{cases}
			\frac{\constrain{F}{\TRUE}}{1-\coef{\lightning}{F}}& \textnormal{if } \coef{\lightning}{F} < 1\,,\\
			\textnormal{undefined}& \textnormal{otherwise}\,.
		\end{cases}
	\]
\end{definition}%
\noindent Intuitively, normalizing an ePGF amounts to \enquote{distributing} the probability mass $\coef{\lightning}{F}$ pertaining to observation violations 
over its remaining (valid) program runs. We lift the operator and denote the \emph{normalized} semantics of $P$ by
\begin{align*}
	\normalize\left(\sem{P}\right) \ddefeq \lambda G.~ \normalize\left(\sem{P}(G)\right) \eeq  \lambda G.~ \frac{\constrain{\sem{P}(G)}{\TRUE}}{1-\coef{\lightning}{\sem{P}(G)}}~, \qquad \text{provided}~ \coef{\lightning}{\sem{P}_G} < 1.
\end{align*}%

\begin{remark}
	In contrast to the non-normalized semantics, the normalized semantics might not always be defined: Reconsider Prog.~\ref{prog:div_and_obsfail} for which the non-normalized semantics is $1 \cdot X_\lightning$; normalizing the semantics is not possible as it would lead to $\frac{\constrain{1\cdot X_\lightning}{\TRUE}}{1-\coef{\lightning}{F}} = \frac{0}{0}$, i.e., an undefined expression.
	This phenomenon can only be caused by \codify{observe}-violations but never by non-terminating behaviors. The following two programs reveal the difference between non-termination and \codify{observe} violation: $\PCHOICE{\ASSIGN{\progvar{x}}{1}}{\sfrac 1 2}{\observe{\FALSE}}$ has a normalized semantics of $1\cdot X^1$, whereas the normalized semantics for $\PCHOICE{\ASSIGN{\progvar{x}}{1}}{\sfrac 1 2}{\DIVERGE}$\footnote{$\DIVERGE$ is syntactic sugar for $\WHILEDO{\TRUE}{\pskip}$.} is $\frac{1}{2}\cdot X^1$.
	\qedT
\end{remark}


\begin{example}[Telephone Operator]%
	\label{ex:telephone}%
	Reconsider Prog.\ \ref{prog:telephone}, the  loop-free program generating an infinite-support distribution. It describes a telephone operator who lacks knowledge about whether today is a weekday or weekend.
	The operator's initial belief is that there is a probability of $\sfrac 5 7$ of it
	{\makeatletter
	\let\par\@@par
	\par\parshape0
	\everypar{}
	\begin{wrapfigure}{r}{0.36\linewidth}
		\vspace*{-4mm}
		\begin{minipage}{1\linewidth} 
		\begin{align*}
			& \annotate{1\qquad\quad\quad~~\,\left(=1 \cdot W^0C^0 + 0\cdot X_\lightning\right)}\\
			& \pchoice{\assign{\progvar{w}}{0}}{\sfrac{5}{7}}{\assign{\progvar{w}}{1}}\,\fatsemi \\
			& \annotate{\tfrac{5}{7}W^0 + \tfrac{2}{7}W^1}\\
			& \IF{\progvar{w} = 0}\\
			& \quad \annotate{\tfrac{5}{7}}\\
			& \quad \assign{c}{\poisson{6}}\\
			& \quad \annotate{\tfrac{5}{7}e^{-6 (1-C)} }\\
			& \ELSE\\
			& \quad \annotate{\tfrac{2}{7}W}\\
			& \quad \assign{c}{\poisson{2}}\\
			& \quad \annotate{\tfrac{2}{7} e^{-2 (1-C)}W}\\
			& \}\,\fatsemi \\
			& \annotate{\tfrac{5}{7}e^{-6 (1-C)} + \tfrac{2}{7} e^{-2 (1-C)}W}\\
			& \observe{\progvar{c} = 5}\\
			& \annotate{\tfrac{(4860 + 8e^4W)}{105e^6} C^5 + (1-\tfrac{4860 + 8e^4}{105e^6})X_\lightning} 
		\end{align*}%
		\end{minipage}
		\captionof{program}{Semantics for the tel.\ operator.}
		\label{prog:telephone-PGF}
	\end{wrapfigure}
	\noindent
	being a weekday ($\progvar{w}=0$) and a $\sfrac 2 7$ probability of it being a weekend ($\progvar{w}=1$). Typically, on weekdays, there are an average of 6 incoming calls per hour, while on weekends, this rate decreases to 2 calls.
	 Both rates are governed by a Poisson distribution.
	 The operator has observed 5 calls in the past hour, and the objective is to determine the updated distribution of the initial belief based on this posterior observation.
	We start the computation with prior distribution (ePGF) $1$, which initializes every program variable to 0 with probability 1. For the assignments to $c$ we use the closed-form PGF for a Poisson distribution with parameter $\lambda$, which is given by
		$\sum_{k \in \N_0}{\frac{\lambda^{k} e^{-\lambda}}{k!} C^k} = e^{-\lambda} \sum_{k \in \N_0}{\frac{(\lambda C)^{k}}{k!}} =  e^{-\lambda} e^{\lambda C} = e^{-\lambda (1-C)}.$
	By computing the transformations forward in sequence for each program instruction (see Prog.~\ref{prog:telephone-PGF}), we obtain the \emph{non-normalized} semantics:
	
	\medskip
	\begin{minipage}{1\linewidth}
		\[
			\sem{P}(G) \eeq \tfrac{(4860 + 8e^4W)}{105e^6} C^5 + (1-\tfrac{4860 + 8e^4}{105e^6})X_\lightning~.
		\]
	\end{minipage}
	\medskip
	
	\noindent
	Normalizing this yields
	
	\medskip
	\begin{minipage}{1\linewidth}
		\[
			\normalize\left(\sem{P}(G)\right) \eeq \frac{(1215 e^{-4}+2W)C^5}{2+1215 e^{-4}}~.\qedT\ \ \ \ 
		\]
	 \end{minipage}
	\par}%
\end{example}

\bigskip
Notably, the semantics in \Cref{tab:semanics} coincides with an operationally modeled semantics using \emph{countably infinite Markov chains} \cite{DBLP:journals/toplas/OlmedoGJKKM18} -- which in turn, for universally almost-surely terminating programs\footnote{Programs that terminate with probability 1 on all inputs; see \cref{ssec:invariants}.} is equivalent to the interpretation of Microsoft's probabilistic programming language R2 \cite{DBLP:conf/aaai/NoriHRS14}.
A Markov chain describing the semantics of a \cpgcl program consists of three ingredients: (1) the state space $\mathcal{S}$, (2) the initial state $\angles{P, \sigma}$, and (3) a transition matrix $\mathcal{P}\colon \mathcal{S} \times \mathcal{S}$. The states are pairs of the form $\angles{P, \sigma}$. Here, $P$ denotes the program left to be executed (with $\downarrow$ indicating the terminated program) and $\sigma$ the current state valuation. We use the dedicated state $\angles{\lightning}$ for denoting that some observe violations have occurred during the run of a program. The detailed construction of the Markov chain $\mathcal{R}_\sigma \sem{P}$ from a \cpgcl program $P$ with initial state valuation $\sigma$ is given in \Cref{apx:semantics}.
Regarding the equivalence between the two semantics, we are interested in the reachability probability of eventually reaching state $\langle\downarrow, \sigma\rangle$ conditioned to never visiting the \codify{observe}-violation state $\langle\lightning\rangle$.


\begin{theorem}[Equivalence of Semantics]%
	\label{thm:operational_equivalence}%
	For every \cpgcl program $P$, let $\mathcal{R}_\sigma\sem{P}$ be the Markov chain of\, $P$ starting with state valuation $\sigma \in\N^k$. Then, for any $\sigma' \in \N^k$,
	\begin{align}\label{eq:sem-equiv}
	\textup{Pr}^{\mathcal{R}_\sigma\sem{P}}\left(\diamondsuit\langle\downarrow,\sigma'\rangle\mid\neg\diamondsuit\angles{\lightning}\right) \!\eeq \coef{\sigma'}{\normalize\left(\sem{P}(\bvec{X}^\sigma)\right)}~,
	\end{align}%
	where the left term denotes the probability of eventually reaching the terminating state $\langle\downarrow,\sigma'\rangle$ in $\mathcal{R}_\sigma\sem{P}$ conditioned on avoiding the \codify{observe}-failure state $\angles{\lightning}$.
\end{theorem}

The coincidence captured in \cref{eq:sem-equiv} ensures the adequateness of our ePGF semantics for \cpgcl programs, which includes the case of \emph{undefined semantics}, i.e., the conditional probability (LHS) is not defined if and only if the normalized semantics (RHS) is undefined. Again, for \pgcl programs \emph{without conditioning},  the \emph{conditioned} semantic model is equivalent to that of \cite{LOPSTR} and thereby \cite{Kozen,DBLP:series/mcs/McIverM05}, since an \codify{observe}-free program never induces the violation term $\coef{\lightning}{F} X_\lightning$ and hence, the $\normalize$ operator has no effect.

	\section{Exact Bayesian Inference with Loops}
\label{sec:loops}

Loops significantly complicate inferring posterior distributions of probabilistic programs.
Computing the exact least fixed point of the characteristic function $\Phi_{B,P}$ (see \Cref{tab:semanics}) is in general highly intractable, and other techniques like \emph{invariant}-based reasoning are used.
Given the loop $\WHILEDO{B}{P}$, we call an ePGF transformer $I\colon \epgf \to \epgf$ an \emph{invariant} if $\Phi_{B,P}(I) = I$, i.e., it remains unchanged when pushed through one loop iteration.

Effectively, reasoning about loops is reduced to \emph{two challenges}:
\begin{enumerate*}
	\item finding an invariant candidate $I$, and 
	\item verifying that $I$ is indeed a valid invariant, i.e., deciding whether $\Phi_{B,P}(I) = I$.
\end{enumerate*}
Since the semantics of a program is also of type $\epgf \to \epgf$, we can describe such an invariant by means of a program.
To facilitate reasoning about such loop invariant programs, we consider a restricted set of \cpgcl programs, called \credip.
We further extend the program semantics to second-order ePGFs (eSOPs) to enable reasoning about multiple input distributions simultaneously.
We develop an eSOP-based equivalence checking technique for \credip programs to reason about loop invariants in a \emph{non-normalized} semantics.
This technique also enables invariant synthesis by solving equation systems yielding parameter values satisfying the invariant condition $\Phi_{B, P}(I) = I$.

\subsection{Program Equivalence}

Checking whether a loop-free program $I$ is an invariant of $\WHILEDO{B}{P}$ amounts to checking whether $\Phi_{B, P}(\sem{I}) = \sem{I}$.
Phrased in terms of generating functions, this reads
\begin{align}
	\label{eqn:eq_order_check}
	\forall G \in \epgf.~\;  \forall \sigma \in \N^k \cup \{\lightning\}. \quad \coef{\sigma}{\Phi_{B,P}(\sem{I})(G)} \eeq \coef{\sigma}{\sem{I}(G)}~.
\end{align}%
Namely, we need to check the equivalence of two loop-free programs.
As program equivalence is undecidable in general, we introduce a syntactic fragment of \cpgcl called \credip (conditional rectangular discrete probabilistic programs) for which equivalence of loop-free programs is decidable.

\begin{table}[t]
	\caption{Syntax (left) and the \emph{non-normalized} semantics (right) of \credip programs.}
	\label{tab:redip}
	\renewcommand{\arraystretch}{1.2}
	\centering
	\begin{tabular}{ll}
		$P$ & $\sem{P}(G)$\\
		\hline\hline
		$\assign{x}{n}$ & $G[X_\lightning /0, X/1] \cdot X^n+(G-G[X_\lightning /0])$\\
		$\decr{x}$ & $(G - G[X/0]) · X^{-1} + G[X/0]$\\
		$\incrasgn{x}{\iid{D}{y}}$ & $G[X_\lightning /0, Y/Y\sem{D}[T/X]]+(G-G[X_\lightning /0])$\\
		\hline
		$\ITE{\progvar{x} < n}{P_1}{P_2}$& $\sem{P_1}(G_{\progvar{x}<n}) + \sem{P_2}(G - G_{\progvar{x}<n}),$ where\\
		& $G_{\progvar{x}<n} = \sum_{i = 0}^{n-1} \tfrac{1}{i!}(\partial_X^i G[X_\lightning /0])[X/0]\cdot X^i$\\
		$\COMPOSE{P_1}{P_2}$& $\sem{P_2}(\sem{P_1}(G))$\\
		$\WHILEDO{\progvar{x} < n}{P_1}$& $(\lfp\, \Phi_{\progvar{x}<n, P_1} )(G),$ where\\
		& $\Phi_{\progvar{x}<n,P_1} (\psi) = \lambda F.~ (F\!-\!F_{\progvar{x}<n}) + \psi(\sem{P_1}(F_{\progvar{x}<n}))$ \\
		\hline
		$\observe{\FALSE}$& $G[\bvec{X}/\bvec{1}, X_\lightning /1]\cdot X_\lightning$
	\end{tabular}
	\renewcommand{\arraystretch}{1}
\end{table}

\paragraph{The \credip language.}  \Cref{tab:redip} describes the syntax and semantics of \credip. This fragment contains multiple statements to update the values of program variables.
	Intuitively, the updates are performed by extracting the parts of the ePGF that are affected by the update through substitution operations.
	For example, $\assign{x}{n}$ drops the observation-violation term and marginalizes w.r.t.\ $X$ (thus effectively setting $\progvar{x}$ to $0$ temporarily) and then performs a shift by $n$ in $X$.
	Finally, the unaffected part of the ePGF is added back to complete the transformation.

A prominent difference to \pgcl is the statement $\incrasgn{\progvar{x}}{\iid{D}{y}}$.
Intuitively, it can be interpreted as a bounded loop, namely $\codify{loop}(y)\{ \incrasgn{\progvar{x}}{\codify{sample}(D)}\}$ where the number of iterations is given by program variable $\progvar{y}$.
More specifically, $\incrasgn{\progvar{x}}{\iid{D}{y}}$ combines a series of operations: First independently sample $y$ many random variables from distribution $D$ and second, sum up the sampled values and increment $x$ by that amount.
For example, the program $P \coloneqq \assign{\progvar{y}}{10}; \assign{\progvar{x}}{0}; \incrasgn{\progvar{x}}{\iid{\bernoulli{\sfrac 1 2}}{\progvar{y}}}$ describes a binomial distribution in $X$ with parameters $n=10$ and $p=\sfrac 1 2$, i.e. $\sem{P} = Y^{10}\cdot (\sfrac 1 2 + \sfrac 1 2 X)^{10}$.

Moreover, we emphasize that Boolean guards in \credip can only be of the form $\progvar{x} < n$ where $n \in \N$ is a constant.
We denote by $G_{x<n}$ the PGF $G$ restricted to its terms with low enough order satisfying the guard $x<n$. The required elements of the PGF are collected by constructing the $i$-th formal derivative (for every $0 \leq i < n$) w.r.t.\ $X$ and extracting the constant monomials (in $X$), i.e. the coefficients of monomial $X^i$ in $G$.
By nesting of $\IFSYMBOL$-statements, axis-aligned hyper-rectangles can be identified, i.e., in this way we can express conjunction, disjunction and negation of guards.
The latter enables us to only consider $\codify{observe}(\FALSE)$ statements in our syntax, as we can reconstruct the full \enquote{rectangular} expressiveness for \codify{observe} statements.

%
%

The key feature of the \credip language is that its loop-free fragment preserves rational closed-form ePGF representations; see \cref{tab:redip} and \cite{CAV22}.
Hence, we can effectively compute the semantics of a loop-free \credip program given \emph{one} closed-form representation of the input distribution.
However, in order to decide program equivalence per \cref{eqn:eq_order_check}, we need to compute the semantics of \emph{infinitely or even uncountably many} input distributions.
\citet{CAV22} solved this issue by introducing second-order PGFs;
intuitively, these are FPS whose coefficients themselves are PGFs.
We extend this idea for programs with conditioning:

\begin{definition}[Second-Order ePGF]
	\label{def:sop}
	Let $\bvec{U} = (U_1, \ldots, U_k)$ be a tuple of formal indeterminates, that are pairwise distinct from $\bvec{X} = (X_1,\ldots,X_k)$ and $X_\lightning$ of \efps.
	A \emph{second-order} ePGF is a generating function of the form
	\[
     	G = \sum_{\sigma \in \N^k} G_\sigma U^\sigma = \sum_{\sigma \in \N^k} (\constrain{G_\sigma}{\TRUE} + \coef{\lightning}{G_\sigma} X_{\lightning} ) U^\sigma 
	       = \sum_{\sigma \in \N^k} \constrain{G_\sigma}{\TRUE} U^\sigma ~+~ \sum_{\sigma \in \N^k}\coef{\lightning}{G_\sigma} X_{\lightning}U^\sigma,
	\]
	where $G_\sigma \in \epgf$. We denote the set of second-order ePGFs by \esop.
\end{definition}
\noindent%
	{%
	\makeatletter
	\let\par\@@par
	\par\parshape0
	\everypar{}
	\begin{wrapfigure}{r}{0.36\linewidth}
		\vspace{-1.7em}
		\begin{minipage}{\linewidth}
			\centering
			\begin{align*}
				& \annotate{\,X^1 \cdot U^1 + X^2 \cdot U^2 + X^3 \cdot U^3}                                                             \\
				& \incrasgn{\progvar{x}}{\iid{\bernoulli{\sfrac 1 2}}{\progvar{x}}}                                                    \\
				& \annotate{\,\frac{1}{2} (X^1 + X^2) \cdot U^1 \ +} \\
				& \quad \annocolor{\,\frac{1}{4} (X^2 + 2 X^3 + X^4) \cdot U^2 \ +}\\
				& \quad \annocolor{\,\frac{1}{8} ( X^3 + 3 X^4 + 3 X^5 + X^6) \cdot U^3}                            \\
				& \OBSERVE{\progvar{x} < 3}                                                                                                \\
				& \annotate{\,\frac{1}{2} (X^1 + X^2) \cdot U^1 \ +}\\
				& \quad \annocolor{\,\frac{1}{4} (X^2 + 3 X_\lightning) \cdot U^2 \ +}\\
				&\quad \annocolor{\,X_\lightning \cdot U^3}
			\end{align*}
			\captionof{program}{A \credip program annotated with eSOP semantics.}
			\label{prog:credip_esop_annotated}
		\end{minipage}
	\end{wrapfigure}
\noindent
An eSOP hence represents, in a single formal power series, \emph{multiple} ePGFs as coefficients $G_\sigma$ of different monomials $\bvec{U}^\sigma$.
Intuitively one can interpret $\bvec{U}$ as eFPS formal indeterminates of additional program variables which do not occur in the program and whose sole purpose is to remember the actual program variables' initial values.
We can naturally extend the denotational semantics described in \cref{tab:redip} to \esop, as demonstrated by the following example.
	\par}%

\begin{example}[eSOP Semantics of \credip Program]
	Con-
	{%
		\makeatletter
		\let\par\@@par
		\par\parshape0
		\everypar{}
		\begin{wrapfigure}{r}{0.36\linewidth}
			\vspace{-1.7em}
			\begin{minipage}{\linewidth}
				\centering
				\begin{align*}
					& \phantom{\annotate{\,X^1 \cdot U^1 + X^2 \cdot U^2 + X^3 \cdot U^3} }                                                            \\
					& \phantom{\incrasgn{\progvar{x}}{\iid{\bernoulli{\sfrac 1 2}}{\progvar{x}}}}                                                    \\
					& \phantom{\annotate{\,\frac{1}{2} (X^1 + X^2) \cdot U^1 \ +} }\\
					& \phantom{\quad \annocolor{\,\frac{1}{4} (X^2 + 2 X^3 + X^4) \cdot U^2 \ +}}\\
					& \phantom{\quad \annocolor{\,\frac{1}{8} ( X^3 + 3 X^4 + 3 X^5 + X^6) \cdot U^3}       }                     \\
					& \phantom{\OBSERVE{\progvar{x} < 3}                                                                                         }       \\
					& \phantom{\annotate{\,\frac{1}{2} (X^1 + X^2) \cdot U^1 \ +}}\\
					& \phantom{\quad \annocolor{\,\frac{1}{4} (X^2 + 3 X_\lightning) \cdot U^2 \ +}}\\
					&\phantom{\quad \annocolor{\,X_\lightning \cdot U^3}}
				\end{align*}
			\end{minipage}
		\end{wrapfigure}
		\noindent
			sider the \credip program $P$ in Prog.~\ref{prog:credip_esop_annotated} together with the \esop input generating function $G = 1 X^1 \cdot U^1 + 1 X^2 \cdot U^2 + 1 X^3 \cdot U^3$, identifying indeterminate $X$ and meta-indeterminate $U$ for program variable $\progvar{x}$.
			This eSOP represents three Dirac distributions, i.e., $1 X^1, 1 X^2$, and $1 X^3$, where the purpose of $U$ is to remember the initial value of $\progvar{x}$.
			We now examine the computation of $\sem{P}(G)$ step-by-step, starting with the increment operation which only affects the indeterminate $X$ of the involved program variable $\progvar{x}$ and does not affect $U$.
			To 
		\par}%
	\noindent
	this end, we substitute $(\frac 1 2 + \frac 1 2 X) \cdot X$ for $X$, since $G$ contains no initial observation-violation term.
	Afterwards, to aggregate the states that violate the observation, the semantics also substitutes $1$ for indeterminate $X$ (and $X_\lightning$) and leaves the meta-indeterminates unchanged.
	As a result, we obtain
	\(
	\sem{P}(G) = \sfrac{1}{2} (X^1 + X^2) \cdot U^1 + \sfrac{1}{4} (X^2 + 3 X_\lightning) \cdot U^2 + X_\lightning \cdot U^3,
	\)
	and have computed all posterior distributions for initial state valuations $\progvar{x}=1, \progvar{x}=2, \progvar{x}=3$ in one shot.
	For instance, when starting with initial distribution $1X$ the posterior distribution is $\sfrac{1}{2} (X^1 + X^2)$ as indicated by the coefficient of $U^1$.
	Finally, we note that the meta-indeterminates just ``pass through'' the
	\esop semantics functional, i.e., it can be seen as the point-wise lifting of the \epgf semantics.
		\qedT
\end{example}

\begin{theorem}[eSOP Semantics]
	\label{thm:sop_semantics}
	Let\, $P$ be a loop-free \credip program.
	Let\, $G = \sum_{\sigma \in \N^k} G_\sigma \bvec{U}^\sigma \in \esop$.
	The \esop semantics $\sem{P}\colon \esop \to \esop$ of $P$ can be computed by
	\[
	\sem{P}(G) \eeq \sum\nolimits_{\sigma \in \N^k} \sem{P}(G_\sigma) \cdot \bvec{U}^\sigma~.
	\]
\end{theorem}


Since PGF semantics is an instance of the general framework of Kozen's measure transformer semantics \cite{Kozen,LOPSTR}, the posterior distribution of a \credip program $P$ is uniquely determined by its semantics on all possible Dirac distributions.
One can thus construct an eSOP from $P$ that represents all possible point-mass distributions for the program variables:
\begin{definition}[Equivalence-Witness eSOP]
	\label{def:testinput}
	Let $\hat{G}$ be an eSOP defined as
	\[
	\hat{G} \ddefeq \underbrace{\left(1-X_1U_1\right)^{-1}\cdots\left(1-X_kU_k\right)^{-1}}_{\text{rational closed form}} \eeq \sum\nolimits_{\sigma \in \N^k} \bvec{X}^\sigma\bvec{U}^\sigma \eeq 1 + (1\bvec{X})\bvec{U}+ (1\bvec{X}^2) \bvec{U}^2 + \cdots ~,
	\]%
	where the meta-indeterminates $\bvec{U}$ serve the purpose of \enquote{remembering} the initial state valuations.
\end{definition}

For the purpose of deciding program equivalence, $\hat{G}$ is particularly useful, since it represents Dirac distributions for all potential initial state valuations, with the exception of any \codify{observe}-violation probabilities.
This is, however, not a problem, as such observation-violation terms can be immediately removed from the equivalence check (by \Cref{lem:errpassthru}).
As a consequence, we can use $\hat{G}$ to characterize program equivalence of loop-free \credip programs using \esop.
This is expressed by the following lemma.
\begin{lemma}[\esop Characterization]
	\label{lem:sopequiv}
	Let $P_1$ and $P_2$ be loop-free \credip programs with $\text{Vars}(P_i) \subseteq \{\progvar{x}_1,\ldots, \progvar{x}_k\}$ for $i \in \{1, 2\}$. Further, consider a vector $\bvec{U} = (U_1,\ldots, U_k)$ of meta-indeterminates.
	Then, 
	\[
		\forall G \in \textup{\epgf}. \ \ \sem{P_1}(G)~=~\sem{P_2}(G) \qquad\textnormal{iff}\qquad \sem{P_1}(\hat{G})~=~\sem{P_2}(\hat{G}).
	\]
\end{lemma}

\noindent As we can compute $\sem{P}(\hat{G})$ for loop-free $P \in \credip$, the following consequence is immediate.
\begin{corollary}[Decidability of Equivalence]
	\label{cor:loopfree_decidability}
	Let $P_1, P_2$ be two loop-free \credip programs. Then,
	\[
		\forall G \in \epgf.\ \ \sem{P_1}(G) \eeq \sem{P_2}(G)\quad\text{is decidable.}
	\]
\end{corollary}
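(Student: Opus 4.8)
The plan is to reduce the universally quantified equivalence to a single rational-function identity and then invoke decidability of equality of rational functions. By \Cref{lem:sopequiv}, $\forall G \in \efps.\ \sem{P}(G) = \sem{Q}(G)$ holds if and only if $\sem{P}(\hat{G}) = \sem{Q}(\hat{G})$ for the fixed eSOP $\hat{G} = (1-X_1U_1)^{-1}\cdots(1-X_kU_k)^{-1} \in \R[[\bvec{X},\bvec{U}]]$, which is a rational closed form. So it suffices to show (i) that $\sem{P}(\hat{G})$ and $\sem{Q}(\hat{G})$ can be \emph{effectively computed} as rational closed forms, and (ii) that equality of two such closed forms is decidable.

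For (i), I would argue by structural induction on a loop-free \credip program $R$ that, given a rational closed-form input, the transformer $\sem{R}$ of \Cref{tab:redip} returns a rational closed-form output computable by finitely many ring operations, formal derivatives, and projections — exactly the manipulations a computer algebra system supports. Going through the cases: $\assign{x}{n}$ multiplies a rational function by $X^n$ after a projection $X/1$; $\decr{x}$ multiplies the positive part $G - G[X/0]$ by $X^{-1}$, which is again rational since that part is divisible by $X$ by construction; $\incrasgn{x}{\iid{D}{y}}$ substitutes $Y \mapsto Y\cdot\sem{D}[T/X]$ with $\sem{D}$ rational by assumption on \credip, and composition of rational functions is rational; the $\ITE{x<n}{P_1}{P_2}$ case splits $G$ via $G_{x<n} = \sum_{i<n}\tfrac{1}{i!}(\partial_X^i G[X_\lightning/0])[X/0]\cdot X^i$, a finite sum of formal derivatives of a rational function followed by projections, hence rational; sequential composition composes two such effective rational transformers; and $\observe{\FALSE}$ performs the projection $\bvec{X}/\bvec{1}$ followed by multiplication with $X_\lightning$. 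All projections occurring here are well-defined, because $\sem{R}$ maps eSOPs to eSOPs (\Cref{thm:sop_semantics}) and, as noted after \Cref{fig:cheatSheet}, projection is well-defined on (sub-)PGFs; in particular only the finitely many indeterminates that actually occur in $R$ get set to $1$. Thus $F_P \defeq \sem{P}(\hat{G})$ and $F_Q \defeq \sem{Q}(\hat{G})$ are concretely computable elements of the field of rational functions $\R(\bvec{X}, X_\lightning, \bvec{U})$ — with rational coefficients when the choice probabilities are rational, and with symbolic parameters otherwise.

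For (ii), deciding $F_P = F_Q$ is routine: bring both to a common denominator and test the polynomial identity $\mathrm{num}(F_P)\cdot\mathrm{den}(F_Q) = \mathrm{num}(F_Q)\cdot\mathrm{den}(F_P)$ by comparing coefficients — equivalently, compute $F_P - F_Q$ and check whether it is the zero power series — which is decidable over a computable coefficient field. Combining (i) and (ii) with \Cref{lem:sopequiv} gives the claim.

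The only place genuine care is needed is step (i): one must check that every clause of \Cref{tab:redip} stays within the rational closed forms \emph{and} that each substitution/projection avoids spurious poles — e.g., that $G - G[X/0]$ is genuinely divisible by $X$ for the decrement, and that the $\bvec{X}/\bvec{1}$ projection in $\observe{\FALSE}$ does not meet a singularity. Both follow from the fact that the transformer preserves the (second-order) ePGF structure, for which these projections are guaranteed well-defined, and the discipline of substituting only the indeterminates occurring in $R$ is precisely what makes this induction go through.
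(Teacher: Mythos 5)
Your proposal is correct and follows essentially the same route as the paper: reduce via the eSOP characterization lemma to checking $\sem{P}(\hat{G}) = \sem{Q}(\hat{G})$, observe that loop-free \credip transformers effectively preserve rational closed forms, and decide equality of the resulting rational functions by cross-multiplying and comparing polynomial coefficients. The only difference is that you spell out the closed-form-preservation induction case by case, where the paper delegates it to the semantics table and a citation to prior work.
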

\begin{proof}
	By utilizing \Cref{lem:sopequiv}, we can rephrase the problem of determining program equivalence through the \esop characterization $\sem{P_1}(\hat{G}) = \sem{P_2}(\hat{G})$.
	It is worth noting that $\hat{G}$ represents a \emph{rational closed-form} eSOP $\hat{G} = \frac{1}{1 - X_1 U_1} \frac{1}{1 - X_2 U_2} \cdots \frac{1}{1 - X_k U_k} \in\R[[\bvec{X},\bvec{U}]].$
	For our purposes, we can disregard the portion of $\hat{G}$ that describes the initial observe violation behavior, as it immediately cancels out (see \cref{lem:errpassthru}).
	As $\hat{G}$ is in rational closed form, both $\sem{P_1}(\hat{G})$ and $\sem{P_2}(\hat{G})$ must also possess a rational closed form since loop-free \credip semantics preserve closed forms; see \cref{tab:redip} and \cite{CAV22}.
	Additionally, the effective computation of $\sem{P_1}(\hat{G}) = F_1/H_1$ and $\sem{P_2}(\hat{G}) = F_2/H_2$ is possible because both $P_1$ and $P_2$ are loop-free programs.

	In $\R[[\bvec{X}, X_\lightning, \bvec{U}]]$, the question of whether two formal power series represented as rational closed forms, namely $F_1/H_1$ and $F_2/H_2$, are equal can be decided:
	\begin{align*}
		\frac{F_1}{H_1} ~=~ \frac{F_2}{H_2} \qquad & \iff\qquad F_1 H_2 ~=~ F_2 H_1,
	\end{align*}
	since the latter equation concerns the equivalence of two polynomials in $\R[\bvec{X}, X_\lightning,\bvec{U}]$.
	Therefore, we can compute these two polynomials and verify whether their (finite number of) non-zero coefficients coincide.
	If they do, then $P_1$ and $P_2$ are equivalent (i.e., $\sem{P_1} = \sem{P_2}$), whereas if they do not, they are not equivalent.
	In the case of non-equivalence, we can generate a Dirac distribution that produces two distinct outcomes.
	This is achieved by taking the difference $F_1 H_2 - F_2 H_1$ and computing the first non-zero coefficient in $\R[\bvec{X}, X_\lightning\!]$.
	Then, extracting the exponent of the monomial describes an initial state valuation $\sigma$, with $\sem{P_1}(\sigma) \neq  \sem{P_2}(\sigma)$.
\end{proof}

\begin{remark}
	The proof of \cref{cor:loopfree_decidability} (on decidability of equivalence) relies on the fact that the \esop transformer $\sem{P}(\cdot)$ preserves rational closed-form eSOPs. \credip is a non-trivial fragment of \cpgcl for which we can show the preservation of rational closed forms for loop-free programs; but it is not necessarily the largest class of programs that features such a property. Investigating a more expressive fragment with decidability of equivalence is subject to future work.
%
\qedT
\end{remark}

\subsection{Invariant-Based Reasoning with Conditioning}
\label{ssec:invariants}

\credip is a fragment of \cpgcl for which the equivalence of loop-free programs is decidable.
We now exploit this result to reason about loops in \credip programs.
The key idea is to use loop-free \credip programs as potential invariant candidates.
Recall the two main challenges of invariant-based reasoning: first, find an invariant candidate, and second, verify that it is indeed an invariant, i.e., $\Phi_{B, P}(I) = I$.
In the remainder of this section, we focus on \emph{verifying} invariant candidates given in the form of \credip programs, while deferring finding invariants to \Cref{sec:synthesis}.

We first introduce the notion of lossless ePGF transformers to capture program termination:

\begin{definition}[Lossless ePGF Transformers]
	\label{def:lossless}
	An ePGF transformer $H\colon \epgf \to \epgf$ is \emph{lossless} for $F\in \epgf$ if

	\vspace*{-6mm}
	\[
		\abs{H(F)} + \coef{\lightning}{H(F)} \eeq \abs{F} + \coef{\lightning}{F}~.
	\]
	$H$ is \emph{universally lossless} if it is lossless for all $F$ in $\epgf$.
\end{definition}%
\noindent
Intuitively, a lossless ePGF transformer is a mapping that does not leak any probability mass.
Since the semantics of a program $P$ is an ePGF transformer, $\sem{P}$ being (universally) lossless coincides with $P$ being (universally) almost-surely terminating, abbreviated as (U)AST \cite{DBLP:conf/mfcs/Saheb-Djahromi78,DBLP:conf/rta/BournezG05}.
%
%
Given $L = \WHILEDO{B}{P}$, we can approximate its least fixed point $\lfp~\Phi_{B, P}$ leveraging domain theory, in particular, Park's lemma, namely, $\Phi_{B,P}(I) \sqsubseteq I$ implies $\sem{L} \sqsubseteq I$ \cite{park1969fixpoint}.
It enables reasoning about \codify{while}-loops in terms of over-approximations and -- in case a program is \UAST -- also about program equivalence.

\begin{theorem}[Loop Invariants]
	\label{thm:loop_invs}
	Given $L = \WHILEDO{B}{P}$ and a universally lossless ePGF transformer $I\colon \epgf \to \epgf$. We have
	\begin{enumerate}
		\item If \,$\Phi_{B,P}(I) \sqsubseteq I$, then $\normalize(\sem{L}(F)) \preceq \normalize(I(F))$ whenever $\normalize(I(F))$ is defined.
		\item If \,$L$ is \UAST, then $I$ is an invariant of $L$ if and only if
		      \[\sem{L} \eeq I \qquad\textnormal{and}\qquad \normalize(\sem{L}(F)) \eeq \normalize(I(F))~.\]
	\end{enumerate}
\end{theorem}
\begin{proof}

	For (1), we first prove that the normalization function is monotonic, whenever it is defined.
	Let $F,G \in \epgf$ such that $\normalize(F), \normalize(G)$ are defined. We have:
	\begingroup
	\allowdisplaybreaks
	\begin{align*}
		F \preceq G & \impliesqq \coef{\lightning}{F} \leq \coef{\lightning}{G} \quad \text{and} \quad \sum_{\sigma \in \N^k}\coef{\sigma}{F}\bvec{X}^\sigma \preceq \sum_{\sigma \in \N^k} \coef{\sigma}{G}\bvec{X}^\sigma                         \\
		            & \impliesqq 1-\coef{\lightning}{F} \geq 1- \coef{\lightning}{G} \quad \text{and} \quad \sum_{\sigma \in \N^k}\coef{\sigma}{F}\bvec{X}^\sigma \preceq \sum_{\sigma \in \N^k} \coef{\sigma}{G}\bvec{X}^\sigma                    \\
		            & \impliesqq \frac{1}{1-\coef{\lightning}{F}} \leq \frac{1}{1-\coef{\lightning}{G}} \quad \text{and} \quad \sum_{\sigma \in \N^k}\coef{\sigma}{F}\bvec{X}^\sigma \preceq \sum_{\sigma \in \N^k} \coef{\sigma}{G}\bvec{X}^\sigma \\
		            & \impliesqq \frac{1}{1-\coef{\lightning}{F}}  \cdot \sum_{\sigma \in \N^k}\coef{\sigma}{F}\bvec{X}^\sigma \quad\preceq\quad \frac{1}{1-\coef{\lightning}{G}} \cdot \sum_{\sigma \in \N^k} \coef{\sigma}{G}\bvec{X}^\sigma      \\
		            & \impliesqq \normalize(F) ~\preceq~ \normalize(G)~.
	\end{align*}%
	\endgroup
	It follows that $\normalize(\sem{\WHILEDO{B}{P}}(F)) \preceq \normalize(I(F))$, due to Park's lemma.

	For (2), first assume that $\sem{L} = I$ and $\normalize(\sem{L}(F)) = \normalize(I(F))$.
	As $I = \sem{L} = \lfp\, \Phi_{B,P}$, $I$ is trivially identified as an invariant.
	For the other direction, assume that $I$ is an invariant (i.e., a fixed point).
	Thus, $I$ must be at least $\lfp\ \Phi_{B,P} = \sem{\WHILEDO{B}{P}}$. Moreover, because $\WHILEDO{B}{P}$ is \UAST, it follows that
	\[
		\abs{\sem{\WHILEDO{B}{P}}(F)} + \coef{\lightning}{\sem{\WHILEDO{B}{P}}(F)} \eeq \abs{F}  + \coef{\lightning}{F} \eeq \abs{I(F)} + \coef{\lightning}{I(F)} \quad \text{for all } F \in \epgf~.
	\]
	The second equality arises from $I$ being universally lossless.
	Combining these results yields
	\begin{align*}
		\forall F \in \epgf.~ & \left(\sem{\WHILEDO{B}{P}}(F) ~\preceq~ I(F)\right.                                                                                          \\
		                      & \left.\text{and} ~\abs{\sem{\WHILEDO{B}{P}}(F)} + \coef{\lightning}{\sem{\WHILEDO{B}{P}}(F)} ~=~ \abs{I(F)} + \coef{\lightning}{I(F)}\right) \\
		                      & \!\implies  \forall F \in \epgf. ~ \sem{\WHILEDO{B}{P}}(F) = I(F)
		\,\Longleftrightarrow ~ \sem{\WHILEDO{B}{P}} = I~.
	\end{align*}
	Then, $\normalize(\sem{\WHILEDO{B}{P}}(F)) = \normalize(I(F))$ follows for all $F \in \epgf$.
\end{proof}

\begin{program}[t]
	\centering
	\begin{minipage}[b]{.515\textwidth}
		\begin{align*}
			& \WHILE{\progvar{y} = 1}\\
			& \quad \pchoice{\assign{y}{0}}{\sfrac{1}{2}}{\assign{y}{1}}\,\fatsemi\\
			& \quad \assign{\progvar{x}}{\progvar{x} + 1}\,\fatsemi\\
			& \quad \observe{\progvar{x} < 3}~\}
		\end{align*}
		\caption{A truncated geometric distribution generator.}
		\label{prog:trunc_geo}
	\end{minipage}
	\hfil
	\begin{minipage}[b]{.475\textwidth}
		\begin{align*}
			& \IF{\progvar{y} = 1}\\
			& \quad \incrasgn{\progvar{x}} { \iid{\geometric{1/2}} {y}	}\,\fatsemi\\
			& \quad \assign{\progvar{y}}{0}\,\fatsemi\\
			& \quad \observe{\progvar{x} < 3}~\} 
		\end{align*}
		\caption{A loop-free \credip invariant of Prog.\ \ref{prog:trunc_geo}.}
		\label{prog:trunc_geo_2}
	\end{minipage}
\end{program}

Combining the results from this section, we can state the decidability of checking invariant validity for loop-free \credip candidates.
\begin{theorem}
	Let $L=\WHILEDO{B}{P} \in \credip$ be UAST with loop-free body $P$ and $I$ be a loop-free \credip program. It is decidable whether $\sem{L} = \sem{I}$.
\end{theorem}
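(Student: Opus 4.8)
The plan is to reduce the decision problem ``is $\sem{L} = \sem{I}$?'' to an equivalence check between two \emph{loop-free} \credip programs, so that \cref{cor:loopfree_decidability} applies. First I would observe that $\sem{L} = \sem{I}$ holds if and only if $\sem{I}$ is a fixed point of $\Phi_{B,P}$. Indeed, on the one hand $\sem{L} = \lfp\,\Phi_{B,P}$ is in particular \emph{a} fixed point, so $\sem{L} = \sem{I}$ immediately gives $\Phi_{B,P}(\sem{I}) = \Phi_{B,P}(\sem{L}) = \sem{L} = \sem{I}$; on the other hand, since $L$ is \UAST, \cref{thm:loop_invs}\,(3) tells us that $\Phi_{B,P}(\sem{I}) = \sem{I}$ already implies $\sem{L} = \sem{I}$. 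Hence it suffices to decide the single identity $\Phi_{B,P}(\sem{I}) = \sem{I}$.

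Next I would unfold $\Phi_{B,P}$ from \cref{tab:semanics} at $f = \sem{I}$ and recognize the resulting eFPS-transformer as the denotation of a concrete loop-free program: for every $G \in \efps$,
\begin{align*}
\Phi_{B,P}(\sem{I})(G) ~&=~ \coef{\lightning}{G}\,X_\lightning + \constrain{G}{\neg B} + \sem{I}\bigl(\sem{P}(\constrain{G}{B})\bigr)\\
&=~ \sem{\ITE{B}{\COMPOSE{P}{I}}{\pskip}}(G),
\end{align*}
where the second equality just uses the semantic clauses for conditional branching, sequential composition and $\pskip$; this is exactly the unfolding identity recorded after \cref{tab:semanics}, with $\WHILEDO{B}{P}$ replaced by the candidate $I$. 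Since $P$ and $I$ are loop-free \credip programs and $B$ is a \credip (rectangular) guard, $\ITE{B}{\COMPOSE{P}{I}}{\pskip}$ is again a loop-free \credip program with variables in $\{\progvar{x}_1,\dots,\progvar{x}_k\}$. Therefore $\sem{L} = \sem{I}$ holds iff $\sem{\ITE{B}{\COMPOSE{P}{I}}{\pskip}} = \sem{I}$, and the latter is decidable by \cref{cor:loopfree_decidability}; this closes the argument.

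This theorem is essentially an assembly of machinery already in place, so I do not expect a hard technical obstacle. The point that needs care is the \credip-bookkeeping: one must check that wrapping $\COMPOSE{P}{I}$ inside an if-then-else with a $\pskip$ else-branch keeps the program in the loop-free \credip fragment on which \cref{cor:loopfree_decidability} is stated (this is where the if-then-else clause of \credip in \cref{tab:redip}, which reconstructs rectangular guards, does the work). The genuinely load-bearing hypothesis is \UAST: it is precisely what upgrades the one-sided invariant relation $\Phi_{B,P}(\sem{I}) \sqsubseteq \sem{I}$ of \cref{thm:loop_invs}\,(1) to the equality needed here, and without it $\sem{L} = \sem{I}$ would not be reducible to the decidable fixed-point test.
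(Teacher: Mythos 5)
Your proposal is correct and follows essentially the same route as the paper, whose proof is just the one-line remark that the theorem is an immediate consequence of \cref{thm:loop_invs} and \cref{cor:loopfree_decidability}; you spell out exactly the intended argument, including the two directions of the reduction to the fixed-point test ($\sem{L}=\lfp\,\Phi_{B,P}$ is itself a fixed point, and \UAST{} upgrades the converse via \cref{thm:loop_invs}\,(3)) and the identification of $\Phi_{B,P}(\sem{I})$ with the loop-free program $\ITE{B}{\COMPOSE{P}{I}}{\pskip}$, which is precisely the program the paper uses in its worked example. No gaps.
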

\begin{proof}
	The correctness is an immediate consequence of \Cref{thm:loop_invs} and \Cref{cor:loopfree_decidability}.
\end{proof}

We demonstrate our invariant-based reasoning technique by \Cref{ex:geom}.

\begin{example}[Geometric Distribution Generator]
	\label{ex:geom}
	Prog.\ \ref{prog:trunc_geo} describes an iterative algorithm that repeatedly flips a fair coin -- while counting the number of trials -- until seeing heads, and observes that the number of trials is less than 3.
	Assume we want to compute the posterior distribution for input $1\cdot Y^1X^0$ (i.e. $y=1$ and $x=0$).
	We first evaluate $\lfp~\Phi_{B,P}$.
	Using \Cref{thm:loop_invs}\,(2), we perform an \emph{equivalence check} on the invariant in Prog. \ref{prog:trunc_geo_2}.
	As Prog.\ \ref{prog:trunc_geo} and \ref{prog:trunc_geo_2} are equivalent, we substitute the loop-free program for the \codify{while}-loop and continue.
	The resulting posterior distribution for input $Y$ is $\sem{P}(Y) = \frac{4}{7} + \frac{2}{7}X + \frac{1}{7}X^2$.
	Since Prog.\ \ref{prog:trunc_geo} is UAST, this is its \emph{precise posterior distribution}.
	The step-by-step computation of the equivalence check can be found in \Cref{apx:prodigy}.
	\qedT
\end{example}
\noindent To summarize, \emph{reasoning about program equivalence using eSOPs enables exact Bayesian inference for \credip programs containing loops}.
We remark that \emph{nested loops} can be treated in a compositional manner: We first provide a loop-free invariant for the inner loop, prove its correctness (i.e., equivalence), and then replace the inner loop by its invariant and repeat the procedure for the outer loop.
This feature of compositional reasoning is a key benefit of reusing the loop-free fragment of cReDiP as a specification language to describe invariants.

\subsection{Equivalence of Normalized Semantics}
Our previous notion of equivalence $\sem{L} = \sem{I}$ describes the equivalence of the \emph{non-normalized} semantics, i.e., the \WHILESYMBOL-loop and the loop-free invariant generate exactly the same distributions and \codify{observe}-violation probabilities, which immediately entails also the equivalence of the normalized semantics, i.e., $\normalize(\sem{L}) = \normalize(\sem{I})$, but not necessarily the reverse. In practice, however, it is interesting to have a \emph{weaker} notion of equivalence which addresses only the normalized semantics, regardless of observation violations (as programmers may use different observation strategies to construct programs yielding the same output distribution). This weaker notion reads as
\begin{equation}\label{eqn:condequiv}
	P \sim Q \quad \text{iff} \quad \forall G \in \pgf.~ \normalize(\sem{P}(G)) = \normalize(\sem{Q}(G))~.
\end{equation}%
We aim to capture such equivalence again using eSOPs.
First, we lift the operator \normalize to eSOPs:
\begin{definition}[Conditioning on \esop]
	Let $G\in \esop$.
	The function 
	\[
	\cond\colon \esop \to \sop\,, \qquad G \mapsto \sum\nolimits_{\sigma \in \N^k} \normalize(G_\sigma)\bvec{U}^\sigma~.
	\]
	is called the \emph{conditioning function}.
\end{definition}
\noindent For simplicity, we assume that $\forall \sigma \in \N^k. ~G_\sigma \neq X_\lightning$ as otherwise $\normalize$ is not defined. 
Note that $\cond$ often \emph{cannot} be evaluated in a closed-form eSOP as there may be \emph{infinitely many} ePGF coefficients of the (non-normalized) eSOP that have different observation-violation probabilities. However, we present a sufficient condition under which $\cond$ can be evaluated on closed-form eSOPs:
\noindent\begin{proposition}
	Let $F_1, F_2 \in \epgf$, with $p \coloneqq \coef{\lightning}{F_1} = \coef{\lightning}{F_2}$.
	Then,
	\[
	\cond(F_1) + \cond(F_2) = \frac{\constrain{F_1}{\TRUE} + \constrain{F_2}{\TRUE}}{1-p} = \cond(F_1 + F_2)~.
	\]
\end{proposition}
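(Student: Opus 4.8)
The plan is to prove the proposition by directly unfolding the definition of the conditioning operator and then reducing to elementary algebra of \efps, the only non-cosmetic ingredient being the additivity of the filtering map $\constrain{\cdot}{\TRUE}$. Since $F_1, F_2 \in \epgf$ carry no meta-indeterminates $\bvec{U}$, applying $\cond$ to them coincides with applying $\normalize$, so the statement is essentially that $\normalize$ acts additively on eFPS that agree on their observation-violation coefficient.

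I would proceed in two steps. For the leftmost equality: by \Cref{def:normalization}, together with the hypothesis $\coef{\lightning}{F_1} = \coef{\lightning}{F_2} = p$ and the standing assumption that $\cond$ is defined here (i.e.\ $p < 1$), we have $\cond(F_i) = \constrain{F_i}{\TRUE}/(1-\coef{\lightning}{F_i}) = \constrain{F_i}{\TRUE}/(1-p)$ for $i \in \{1,2\}$; adding the two results and pulling the common scalar $\tfrac{1}{1-p}$ out of the coefficientwise sum (legitimate by the definitions of $+$ and scalar multiplication on \efps) gives the middle expression. For the rightmost equality: $\constrain{\cdot}{\TRUE}$ is linear, i.e.\ $\constrain{F_1 + F_2}{\TRUE} = \constrain{F_1}{\TRUE} + \constrain{F_2}{\TRUE}$, which is immediate from $\constrain{F}{\TRUE} = \sum_{\sigma} \coef{\sigma}{F}\bvec{X}^\sigma$ and the coefficientwise definition of addition; combined with the fact that the violation coefficient relevant to $\cond(F_1 + F_2)$ is again $p$, normalization divides by $1-p$ and yields $\constrain{F_1 + F_2}{\TRUE}/(1-p)$, which equals the middle expression.

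The one point that needs care — and the only real obstacle — is this last claim about the observation-violation term: one must be precise about the sense in which \enquote{$F_1 + F_2$} is formed so that its $X_\lightning$-coefficient as seen by $\cond$ is $p$ rather than $2p$ (which is exactly where the hypothesis $\coef{\lightning}{F_1} = \coef{\lightning}{F_2}$ is essential), and one must record the definedness assumption $p < 1$ without which $\cond$, hence the statement, is vacuous. Once those are pinned down, everything else is the routine arithmetic of \efps and the identity $a/(1-p) + b/(1-p) = (a+b)/(1-p)$.
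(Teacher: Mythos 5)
Your treatment of the first equality is fine: for $F_i \in \epgf$ the operator $\cond$ degenerates to $\normalize$, \Cref{def:normalization} gives $\cond(F_i) = \constrain{F_i}{\TRUE}/(1-p)$ (recording $p<1$), and pulling out the common scalar yields the middle expression. The paper offers no proof of this proposition at all, so there is nothing to compare against on that side.

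The gap is in the rightmost equality, and you have correctly located it but not closed it. Under the paper's coefficient-wise definition of addition on \efps, $\coef{\lightning}{F_1+F_2} = \coef{\lightning}{F_1}+\coef{\lightning}{F_2} = 2p$, so $\normalize(F_1+F_2) = \constrain{F_1+F_2}{\TRUE}/(1-2p)$, which does \emph{not} equal $\bigl(\constrain{F_1}{\TRUE}+\constrain{F_2}{\TRUE}\bigr)/(1-p)$ unless $p=0$. Concretely, for $F_1 = F_2 = \tfrac14 X_\lightning + \tfrac14 X$ one gets $\cond(F_1)+\cond(F_2) = \tfrac23 X$ but $\normalize(F_1+F_2) = X$; and for $p \geq \tfrac12$ the right-hand side is not even defined. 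So the "fact that the violation coefficient relevant to $\cond(F_1+F_2)$ is again $p$", which you invoke, is false for the literal sum in \efps, and a proof cannot simply "pin it down" --- it must reinterpret the statement. The reading that makes the proposition true (and the one actually used in \Cref{cor:partitioning}) is that the two summands sit at \emph{distinct} meta-monomials of an eSOP, i.e.\ one conditions $F_1\bvec{U}^{\sigma_1} + F_2\bvec{U}^{\sigma_2}$ with $\sigma_1 \neq \sigma_2$, so that $\cond$ normalizes each $\bvec{U}$-coefficient separately by its own (here common) factor $1-p$; only then does the common denominator factor out of the group sum. Your proof should either adopt that reading explicitly or restrict to $p=0$; as written, the final step asserts an identity that the paper's own definitions contradict.
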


\noindent Intuitively, addition distributes over $\cond$, i.e., $\cond$ behaves linearly. Generalizing this concept to finitely many equal \codify{observe}-violation properties we get the following.

\begin{corollary}[Partitioning]\label{cor:partitioning}
	Let $S$ be a finite partitioning of $\;\N^k = S_1 \uplus \cdots \uplus S_m$ with $\coef{\lightning}{G_\sigma} = \coef{\lightning}{G_{\sigma'}}$, for all $\sigma, \sigma' \in S_i, ~ 1 \leq i \leq m$. Then:
	\[
	G \eeq \sum\nolimits_{i = 1}^{m} \sum\nolimits_{\sigma \in S_i} (\coef{\lightning}{S_i} X_\lightning + \constrain{G_\sigma}{\TRUE}) \bvec{U}^\sigma~,
	\]
	where $\coef{\lightning}{S_i}$ denotes the observation-violation probability in $S_i$.
	For such $G$ we have:
	\[
	\cond(G) \eeq \sum\nolimits_{i=1}^{m} 
	\frac{
		\sum_{\sigma \in S_i}
		\constrain{G_\sigma}{\TRUE}
		\bvec{U}^\sigma
	}{
		1-\coef{\lightning}{S_i}
	}~.
	\]
\end{corollary}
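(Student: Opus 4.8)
The plan is to read the statement off directly from the definition of the conditioning function $\cond$ on $\esop$, using the hypothesis that the observation-violation probability is constant on each block $S_i$ of the partition, so that the normalization denominator $1-\coef{\lightning}{S_i}$ is a single scalar that can be factored out of the corresponding inner sum. The preceding proposition already captures this phenomenon for two summands; here we only need its coefficient-wise analogue for the (possibly infinite) $\bvec{U}$-indexed sum over each $S_i$, together with a finite regrouping over $i = 1,\dots,m$.

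First I would justify the restated form of $G$. By \Cref{def:efps}, every coefficient $G_\sigma \in \epgf$ splits as $G_\sigma = \coef{\lightning}{G_\sigma}X_\lightning + \constrain{G_\sigma}{\TRUE}$, hence $G = \sum_{\sigma \in \N^k}\bigl(\coef{\lightning}{G_\sigma}X_\lightning + \constrain{G_\sigma}{\TRUE}\bigr)\bvec{U}^\sigma$. Since $\N^k = S_1 \uplus \cdots \uplus S_m$ is a finite partition and $+$ in $\esop$ is coefficient-wise, this sum can be regrouped as $\sum_{i=1}^m \sum_{\sigma \in S_i}(\cdots)\bvec{U}^\sigma$, and by hypothesis $\coef{\lightning}{G_\sigma} = \coef{\lightning}{S_i}$ for all $\sigma \in S_i$; substituting yields exactly the first display.

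Then I would compute $\cond(G)$. By definition $\cond(G) = \sum_{\sigma \in \N^k}\normalize(G_\sigma)\bvec{U}^\sigma = \sum_{i=1}^m \sum_{\sigma \in S_i}\normalize(G_\sigma)\bvec{U}^\sigma$ after the same finite regrouping. For $\sigma \in S_i$, \Cref{def:normalization} together with the standing assumption that $\normalize(G_\sigma)$ is defined (equivalently $\coef{\lightning}{S_i} < 1$) gives $\normalize(G_\sigma) = \constrain{G_\sigma}{\TRUE}/(1-\coef{\lightning}{G_\sigma}) = \constrain{G_\sigma}{\TRUE}/(1-\coef{\lightning}{S_i})$. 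Because scalar multiplication in $\esop$ acts coefficient-wise, the fixed scalar $1/(1-\coef{\lightning}{S_i})$ factors out of the inner sum, so $\sum_{\sigma \in S_i}\normalize(G_\sigma)\bvec{U}^\sigma = \bigl(\sum_{\sigma \in S_i}\constrain{G_\sigma}{\TRUE}\bvec{U}^\sigma\bigr)/(1-\coef{\lightning}{S_i})$; summing over $i = 1,\dots,m$ is then the claimed formula for $\cond(G)$.

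The main obstacle — really the only point requiring genuine care — is checking that the blockwise regrouping and the factoring-out of $1/(1-\coef{\lightning}{S_i})$ are legitimate operations in $\esop$: the former because addition is coefficient-wise and there are only finitely many blocks, the latter because scalar multiplication is coefficient-wise, which is precisely the infinite-sum analogue of the additivity established in the preceding proposition (that statement handles the two-summand case). Everything else — the decomposition of each $G_\sigma$ and the value of $\normalize$ on it — is immediate from the definitions.
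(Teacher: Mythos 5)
Your proposal is correct and follows the same route the paper intends: the paper gives no explicit proof, presenting the corollary as the direct finite generalization of the preceding two-summand proposition, and your argument---decomposing each $G_\sigma$, regrouping over the finite partition, applying the definitions of $\cond$ and $\normalize$ coefficient-wise, and factoring the constant scalar $1/(1-\coef{\lightning}{S_i})$ out of each block's formal sum---is exactly what that generalization amounts to. Your explicit check that the blockwise regrouping and scalar factoring are legitimate coefficient-wise operations on formal power series is the only point of substance, and you handle it correctly.
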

Unfortunately, requiring a finite partitioning is quite restrictive.
Finite partitioning is impossible already for some loop-free programs, an
%
\noindent  example is provided in Prog.\ \ref{prog:loopfree}. Given an initial distribution for variable $\progvar{y}$, the program computes the sum of $\progvar{y}$-many independent and identically distributed Bernoulli variables with success probability $\sfrac{1}{2}$. This is equivalent to sampling from a binomial distribution with $y$ trials and probability $\sfrac{1}{2}$. Finally, it marginalizes the distribution by assigning $\progvar{y}$ to zero and conditions on the
event that $\progvar{x}$ is less than 1, resulting in $\sum_{i=0}^{\infty} \frac{\left(2^{-i} + (1-2^{-i})X_\lightning\right)V^i}{(1-U)}$.
We can deduce that for any initial state valuation $(x,y)$ we obtain a \emph{different} \codify{observe} violation probability $(1-2^{-y})$, hence we cannot finitely partition the state space into equal violation probability classes.

Another challenge when considering the equivalence of normalized distributions is:
Evaluating $\cond$ on (closed-form) eSOPs yields that $\cond(\sem{P}(\hat{G})) = \cond(\sem{Q}(\hat{G}))$.
This implies $\forall \sigma \in \N^k.~ \normalize(\sem{P}(\bvec{X}^\sigma)) = \normalize(\sem{Q}(\bvec{X}^\sigma))$, i.e., equivalence on point-mass distributions.
However, we do not necessarily have the precise equivalence as per \Cref{eqn:condequiv}, because the $\normalize$ operator used to define $\cond$ is a non-linear function\footnote{For the non-normalized semantics, general equivalence $\sem{P} = \sem{Q}$ follows from the linearity of the transformer.} and thus the point-mass distributions cannot be combined in a sensible way.
However, in many use cases we are only interested in the behavior of a specific initial state valuation where such a result on point-mass equivalence can still be useful.


	\section{Finding Invariants using Parameter Synthesis}
\label{sec:synthesis}

\begin{program}[t]
	\centering
	\vfill
	\begin{minipage}[t]{0.47\linewidth} 
		\begin{align*}
		& \annotate{(1-XU)^{-1}(1-YV)^{-1}} \\
		& \ASSIGN{\progvar{x}}{0}\fatsemi \\
		& \annotate{(1-U)^{-1}(1-YV)^{-1}} \\
		& \incrasgn{\progvar{x}}{\iid{\bernoulli{\sfrac{1}{2}}}{\progvar{y}}}\fatsemi \\
		%
		%
		& \annotate{2(1-U)^{-1}(2-(1+X)YV)^{-1}} \\
		& \ASSIGN{\progvar{y}}{0}\fatsemi \\
		& \annotate{2(1-U)^{-1}(2-(1+X)V)^{-1}} \\
		& \observe{\progvar{x}<1}\fatsemi \\
		& \annotate{ \frac{2 (1-X_\lightning)}{(1-U)(2-V)} + \frac{X_\lightning}{(1-U)(1-V)}} \\
		& \annotate{\sum_{i=0}^{\infty} \frac{\left(2^{-i} + (1-2^{-i})X_\lightning\right)V^i}{(1-U)}}
		\end{align*}
		\captionof{program}{Program with infinitely many observe violation probabilities.}
		\label{prog:loopfree}
	\end{minipage}
	\hspace*{.6cm}
	\begin{minipage}[t]{0.47\textwidth}
		\vspace{-1mm}
		\begin{align*}
		& \WHILE{\progvar{n} > 0} \\
		& \quad \pchoice{
			\assign{\progvar{n}}{\progvar{n}-1}
		} {\sfrac{q}{3}} {
			\assign{\progvar{c}}{\progvar{c}+1}
		}\\
		&\}
		\end{align*}
		\caption{$n$-geometric generator with success probability $\sfrac{q}{3}$ for $0 \leq q \leq 3$.}
		\label{prog:n_geom_param}
		\vspace{1.35cm}
		\begin{align*}
		& \pcomment{sums n geometric(p) samples }\\
		& \incrasgn{\progvar{c}}{\iid{\geometric{p}}{\progvar{n}}}\fatsemi \\
		& \pcomment{on termination n is zero}\\
		& \assign{\progvar{n}}{0}
		\end{align*}
		\caption{$n$-geometric invariant with parameter $p$.}
		\label{prog:n_geom_param_inv}
	\end{minipage}
\end{program}

In contrast to the previous section which aims at \emph{validating a given invariant}, in this section, we address the problem of \emph{finding} such invariants.
For related problems, e.g., finding invariants in terms of weakest preexpectations, there exist sound and complete synthesis algorithms for \emph{subclasses} of loops and properties that can be verified by piecewise linear templates \cite{DBLP:conf/tacas/BatzCJKKM23}.
We adopt the idea of template-based invariant synthesis and leverage the power of eSOPs to achieve decidability results for a subclass of invariant candidates.
Our templates are described by parametric loop-free cReDiP programs, e.g., $I_p = \PCHOICE{\ASSIGN{\progvar{x}}{1}}{p}{\ASSIGN{\progvar{x}}{0}}$ which models a Bernoulli distribution with symbolic parameter $p$.
We believe that %
\begin{enumerate*}
	\item using programs as templates is (in particular in the probabilistic case) intuitively easier than using first-order logic as typically used to express invariants, and
	\item finding suitable templates can be encoded as a program synthesis problem whose hardness may be precisely quantified.
\end{enumerate*} 
	Recall the invariant synthesis problem: Given a \WHILESYMBOL-loop $L = \WHILEDO{B}{P}$, find a loop-free \credip program $I$ such that $\Phi_{B,P}(\sem{I}) = \sem{I}$.
	Sometimes, the general shape of an invariant template $T_\bvec{p}$ (with a vector $\bvec{p}$ of parameters) is derivable from $L$, but finding a valid parameter valuation may be involved.
%
We illustrate the idea by \cref{ex:params}.

\begin{example}[$n$-Geometric Parameter Synthesis]
	\label{ex:params}
	Prog.\ \ref{prog:n_geom_param} (with loop body $P$) is a variant of Prog.\ \ref{prog:odd_geo}, where instead of requiring one success (setting $\progvar{h}=0$), we need $\progvar{n}$ successes to terminate.
	Furthermore, the individual success probability is $\frac q 3$, where $q$ is a symbolic parameter.
	It seems natural that this program encodes the $n$-fold geometric distribution\footnote{Sometimes also called negative binomial distribution.} with individual success probability $\frac{q}{3}$.
	This suggests to formulate the invariant template $Q_p$ given in Prog.\ \ref{prog:n_geom_param_inv},  where $\progvar{c}$ is a sum of $\progvar{n}$ geometric distributions with an unknown parameter $p$.
	Using \Cref{thm:loop_invs}, we can derive the equivalence of Prog.\ \ref{prog:n_geom_param} and Prog.\ \ref{prog:n_geom_param_inv} and obtain an equation in $p$ and $q$:

	\begin{align*}
		\Phi_{B,P}(\sem{Q_p})(\hat{G}) & =-\frac{(-3+qCU+3C-3pC-qU+3pU-3pCU)}{3(-1+CV)(-1+C-pC+pU)} \\
		\sem{Q_p}(\hat{G})             & =-\frac{(-1+C-pC)}{(-1+CV)(-1+C-pC+pU)}
	\end{align*}
	\[
		\text{Then} \qquad \Phi_{B,P}(\sem{Q_p})(\hat{G}) = \sem{Q_p}(\hat{G}) \qquad\text{iff}\qquad p = \frac{q}{3}~.
	\]

	\noindent The formal variable $C$ corresponds to program variable $\progvar{c}$, while $U$ and $V$ are meta-indeterminates corresponding to the variables $\progvar{n}$ and $\progvar{c}$.
	This result tells us, that for $p=\frac q 3$ our parametrized invariant program is an invariant of Prog.~\ref{prog:n_geom_param}.
	\qedT
\end{example}

This approach works in general as the following theorem describes:

\begin{theorem}[Decidability of Parameter Synthesis]
	Let $W$ be a \credip \WHILESYMBOL\ loop and $I_{\bvec{p}}$ be a parametrized loop-free \credip program.
	The problem whether there exist parameter values $\bvec{\rho}$ such that the instantiated template $I_\bvec{\rho}$ is an invariant, i.e.,
	\[
		\exists\, \bvec{p} \in \R^l. \quad \sem{W} \eeq \sem{I_{\bvec{p}}}
		\qquad \text{is decidable.}
	\]
\end{theorem}
\begin{proof}
	The proof is a variant of \cref{cor:loopfree_decidability}. Full details are provided in \cref{apx:synthesis}.
\end{proof}

\noindent Note that in this formulation, parameters may depend on other parameters, but are always \emph{independent} of all program variables and second-order indeterminates. Unfortunately, not every parametric invariant can be expressed by a loop-free \credip program as illustrated by the following example.



\begin{example}[Hypergeometric Invariant]
	Prog.\ \ref{prog:cond_and} encodes a biased 2-dimensional bounded random walk.
	In each turn, it decrements one of the variables with equal probability $\sfrac{1}{2}$ until either%
	{\makeatletter
		\let\par\@@par
		\par\parshape0
		\everypar{}
		\begin{wrapfigure}{r}{0.37\textwidth}
			\vspace*{-5mm}
			\begin{minipage}{1\linewidth} 
				\begin{align*}
					& \WHILE{\progvar{n} > 0 \wedge \progvar{m} > 0}                                                          \\
					& \quad \PCHOICE{\ASSIGN{\progvar{m}}{\progvar{m}-1}}{\sfrac{1}{2}}{\ASSIGN{\progvar{n}}{\progvar{n} -1}} \\
					& \}
				\end{align*}
			\end{minipage}
			\captionof{program}{Dependent negative binomial variables.}
			\label{prog:cond_and}
			\vspace*{-4mm}
		\end{wrapfigure}
	\noindent 
	the value of $\progvar{m}$ or $\progvar{n}$ arrives at 0.
	For any fixed program state valuation \mbox{$(0,0) \neq (\progvar{m},\progvar{n}) \in \N^2$}, the number of loop iterations is bounded by $\progvar{n}+\progvar{m}-1$. We are interested in the exact posterior distribution for arbitrary input distributions.
	Due to its finite nature for any particular input distribution with finite support, we can analyze this program automatically using \toolname{Prodigy} by unfolding the loop $\progvar{m}+ \progvar{n}-1$ times.
	For instance, the resulting distribution for an initial Dirac distribution describing the state valuation $(a,b)$, is
	$\sem{P}(M^a N^b) = \sum_{i=1}^a \frac{M^i}{2^{a+b-i}} \cdot \binom{a+b-i-1}{b-1} + \sum_{i=1}^b \frac{N^i}{2^{a+b-i}} \cdot \binom{a+b-i-1}{a-1}$.
	Using the simplification function in Mathematica \cite{Mathematica}, we derive the closed form,
	\[
		I(a,b) = 2^{1 - a - b} M \binom{-2 + a + b}{ -1 + b} {}_2F_1(1, 1 - a, 2 - a - b, 2 M) +
		2^{1 - a - b} N \binom{-2 + a + b}{-1 + a} {}_2F_1(1, 1 - b, 2 - a - b, 2 N).
	\]
	Here ${}_2F_1$ denotes the hypergeometric function\footnote{More about this closed form and algorithms to compute closed forms alike can be found in \cite{petkovsek1996b}.}.
	It shows that the distribution is in some sense linked to the hypergeometric distribution, indicated by the ${}_2F_1$ terms.
	Even though that function is quite complex, taking derivatives in $M$ or $N$ respectively is straightforward, i.e., $\frac{\partial}{\partial_x} {}_2F_1(p_1, p_2, p_3; x) = \frac{p_1 p_2}{c} {}_2F_1(p_1+1, p_2+1, p_3+1; M)$.
	Thus, extracting many properties of interest can still be computed exactly using the closed-form expression.
	It is unknown (to us) whether some loop-free \credip invariant program generates this closed-form distribution.
	However, the GF semantics enables us to prove that the precise semantics of Prog.\ \ref{prog:cond_and} is captured by checking $\forall a, b \in \N.~ (a,b) \neq (0,0) \implies I(a,b) = \Phi_{B, P}(I)(a,b)$, combined with the fact that it universally certainly terminates.
		\par}%
\end{example}

	\section{Empirical Evaluation of \toolname{Prodigy}}
\label{sec:prodigy}
We have implemented our approach in Python as an extension to \toolname{Prodigy}\footnote{\ifanonymous Repository link hidden due to double-blind reviews. \else \url{https://github.com/LKlinke/Prodigy}\fi}\ifanonymous.\else \cite{CAV22} -- Probability Distributions via GeneratingfunctionologY.\fi~The current implementation consists of about 6,000 LOC.
The two new features are the implementation of the observe semantics and normalization, as well as a parameter-synthesis approach for finding suitable parameters of distributions to satisfy the invariant condition.

\subsection{Implementation of \toolname{Prodigy}}

\begin{figure}[t]
	\centering
	\def\radius{.6mm} 
	\resizebox{.72\linewidth}{!}{
		\begin{tikzpicture}[
			box/.style args = {#1/#2}{draw, align=flush center, rounded corners, text width=#1, minimum height=#2},
			box/.default = 1.4cm/10mm,
			unframedbox/.style args = {#1/#2}{align=flush center, text width=#1, minimum height=#2},
			unframedbox/.default = 1.4cm/10mm
			]
			\tikzstyle{sarrow} = [draw, ->,>=stealth'],
			\tikzstyle{darrow} = [draw, <->,>=stealth']
			
			\begin{scope}
				\node[box=3.62cm/8mm] (engine) at (0,0) {exact inference engine};
				\node[box=3.62cm/8mm,below=6mm of engine] (interface) {distribution interface};
				\node[box=1cm/6mm, below left=4mm and -1.225cm of interface] (ginac) {\ginac};
				\node[box=1cm/6mm, below right=4mm and -1.225cm of interface] (sympy) {\sympy};
				\node[box=1.5cm/12mm, above left=6mm and -1.725cm of engine] (parser) {parser};
				\node[box=1.6cm/12mm, above right=6mm and -.6cm of engine] (eqcheck) {equivalence\\checker};
				\node[unframedbox=2.2cm/10mm, left=6mm of interface]  (prior) {prior dist.~$G$};
				\node[unframedbox=2.48cm/10mm]  (problem) at  (parser -| prior) {~\\[1mm]\cpgcl~program $P$\\[-1mm]$+$\\[-1mm]queries\\[-1mm]$+$\\[-1mm]invariant $I$};
				\node[unframedbox=2.6cm/12mm, right=16mm of engine] (output) {post.\ dist.~$\sem{P}(G)$\\[-1mm]$+$\\[-1mm]answer to queries};
				\node[unframedbox=1.4cm/6mm] (invalidinv) at  (eqcheck -| output) {$\sem{P} \neq \sem{I}$\\[-1mm]$+$\\[-1mm]counterexample};


				\path[sarrow] (problem) -- (parser);
				\path[sarrow] (parser) -- (eqcheck);
				\path[sarrow] (eqcheck) -- (invalidinv) node [above,pos=0.5] {\xmark};
				
				\path[sarrow] (prior) -- (interface);
				\path[darrow] (ginac.north) -- (ginac.north |- interface.south);
				\path[darrow] (sympy.north) -- (sympy.north |- interface.south);
				\path[darrow] (engine) -- (interface);
				
				\path[darrow,name path=arrow 1] (eqcheck.south)++(.45,0) |- (interface.east);
				\path[sarrow] (parser.south) -- (parser.south |- engine.north);
				\path[sarrow,stealth'-] (engine.east)++(0,.2) -| (eqcheck.south);
				\node[] (cmark) at (1.97,.65) {\cmark};
				
				\path[name path=arrow 2] (engine) -- (output);
				
				\path [name intersections={of = arrow 1 and arrow 2}];
				\coordinate (S)  at (intersection-1);
				
				\path[name path=circle,thin] (S) circle(\radius);
				
				\path [name intersections={of = circle and arrow 2}];
				\coordinate (I2)  at (intersection-2);
				\coordinate (I1)  at (intersection-1);
				
				\draw (engine) -- (I1);
				\draw[sarrow] (I2) -- (output);
				
				\tkzDrawArc[-,color=black,shorten >=0pt](S,I2)(I1);
				
			\end{scope}
			
			\begin{pgfonlayer}{background}
				\filldraw [line width=4mm,join=round,black!6]
				(eqcheck.north  -| output.east)  rectangle (sympy.south  -| problem.west);
				
				\filldraw [line width=4mm,join=round,blue!15]
				(eqcheck.north  -| eqcheck.east)  rectangle (ginac.south  -| ginac.west);
				\node[] (prodigy)  at (2.65,-2.85) {\toolname{Prodigy}};
			\end{pgfonlayer}
		\end{tikzpicture}
		}
	\caption{A sketch of the \toolname{Prodigy} workflow.}
	\label{fig:prodigy}
\end{figure}

\toolname{Prodigy} implements exact inference for \cpgcl programs;
its high-level structure is depicted in \Cref{fig:prodigy}.
Given a \cpgcl program $P$ (optionally with queries to the output distribution, e.g., expected values, tail bounds and moments) together with a prior distribution $G$, \toolname{Prodigy} parses the program, performs PGF-based distribution transformations (via the inference engine), and finally outputs the posterior distribution $\sem{P}(G)$ (plus answers to the queries, if any). For the distribution transformation, \toolname{Prodigy} implements an internal interface acting as an abstract datatype for probability distributions in the form of formal power series.
Such an abstraction allows for an easy integration of alternative distribution representations (not necessarily related to generating functions) and various computer algebra systems (CAS) in the backend.
\toolname{Prodigy} currently supports \sympy \cite{SymPy} and \ginac \cite{DBLP:journals/jsc/BauerFK02,ginac}.
When (UAST) loops $L=\WHILEDO{B}{P'}$ are encountered, \toolname{Prodigy} asks for a user-provided invariant $I$ and then performs the equivalence check such that it can either infer the output distribution or conclude that $\sem{L} \neq \sem{I}$ while providing counterexamples $\sigma$ such that $\Phi_{B,P'}(\sem{I})(\sigma) \neq \sem{I}(\sigma)$.
In the absence of an invariant, \toolname{Prodigy} is capable of computing under-approximations of the posterior distribution by unfolding the loop up to a specified accuracy or number of loop unrollings.

\subsection{Benchmarks}

We collected a set of 37 benchmarks, 16 of them related to inferring distributions for loopy programs.
This set consists of examples provided by $\lambda$-PSI \cite{DBLP:conf/pldi/GehrSV20}, \textsc{Genfer} \cite{zaiser2023exact}, and \toolname{Prodigy}.
All experiments were evaluated on MacOS Sonoma 14.0 with a 2,4 GHz Quad-Core Intel Core i5 and 16GB RAM. 
For each benchmark, we run \toolname{Prodigy} with both CAS backends, i.e., \sympy and \ginac.
For loop-free benchmarks, \toolname{Prodigy} is compared against $\lambda$-PSI\footnote{We used the commit \texttt{9db68ba9581b7a1211f1514e44e7927af24bd398}.} and \textsc{Genfer}\footnote{We used the commit \texttt{5911de13f16bc3c28703f1631c5c4847f9ebac9a}.} -- the two closest tools (among those in \Cref{sec:related_work}). As \toolname{Prodigy} is an exact inference engine, all tools are run using \emph{exact arithmetic}.
The initial prior distribution is $1$ which means all variables are initialized to $0$ with probability $1$ and no \codify{observe}-violations have occurred.
All timings are averaged over 20 iterations per benchmark and we measured the time used for performing inference (computing the posterior distribution).
The experiments aim to answer questions in terms of 
\begin{enumerate*}
	\item \emph{Effectiveness:} Can \toolname{Prodigy} effectively do exact inference on the selected benchmarks, including equivalence checking and invariant synthesis for programs with loops?
	\item \emph{Efficiency:} How does \toolname{Prodigy} compare to the most related tools? How do the CAS backends \sympy and \ginac compare to each other? 
\end{enumerate*}

\subsection{Experimental Results}
\paragraph{General observations.}
\Cref{tab:benchmarks,tab:benchmarks_loop} summarize our experimental results.
Our approach is capable of computing posterior distributions for a variety of programs in less than 0.1 seconds.
For loop-free benchmarks, \emph{exact} Bayesian inference based on generating functions (\textsc{Genfer}, \toolname{Prodigy}) performs better than $\lambda$-PSI on \emph{discrete} probabilistic programs with \textsc{Genfer} being the fastest in most instances.
Regarding the timings for \toolname{Prodigy} only, the \ginac backend is generally about two orders of magnitude faster.
\toolname{Prodigy} is the only tool that is able to deal with unbounded loopy programs.

\begin{table}[t]
	\caption{Benchmarks of loop-free programs; timings are in seconds.
	}
	\label{tab:benchmarks}
	\centering
	\begin{tabular}{lcclllll}
		\toprule
		\multirow{2}{*}{Program}  & \multirow{2}{*}{$\infty$}& \multirow{2}{*}{$p$}& \multicolumn{2}{c}{\toolname{Prodigy}} & \multicolumn{2}{c}{\textsc{$\lambda$PSI}} & \multirow{2}{*}{\textsc{Genfer}}\\ 
		\cmidrule(lr){4-5}\cmidrule(lr){6-7}
		&&& \sympy & \ginac & symbolic & dp&\\
		\midrule
		\texttt{burgler\_alarm} & \phantom{\xmark} & \phantom{\xmark} & 1.988 & 0.012 & 0.055& 0.008 & \textbf{0.002}\\
		\texttt{caesar} & \phantom{\xmark}& \bmark &  8.377 & \textbf{0.025} & 1.152 & 0.051 & ---\\
		\texttt{digitRecognition} & \phantom{\xmark} & \phantom{\bmark} & Err.\footnote{Exceeding \sympy internal limits for parsing.} & 34.685 & 96.283 & 2.818 & \textbf{0.137} \\
		\texttt{dnd\_handicap} & \phantom{\xmark}& \phantom{\xmark} & 7.760 & 0.032& 0.094 & 0.039& \textbf{0.006}\\
		\texttt{evidence1}& \phantom{\xmark}& \phantom{\xmark} & 0.348 & 0.002& 0.011 & 0.002 & \textbf{<0.001}\\
		\texttt{evidence2}& \phantom{\xmark}& \phantom{\xmark} &0.413 & 0.003& 0.014 & 0.002 & \textbf{0.001}\\	
		\texttt{function}& \phantom{\xmark}& \phantom{\xmark} & 0.338 & 0.002& 0.001 &\textbf{<0.001}& 0.003\\
		\texttt{fuzzy\_or}& \phantom{\xmark}& \phantom{\xmark} & 67.048 & 0.227& 8.779 & 4.797& \textbf{0.025}\\
		\texttt{grass}& \phantom{\xmark}& \phantom{\xmark} & 6.706 & 0.021& 0.481 & 0.089 & \textbf{0.006}\\
		\texttt{infer\_geom\_mix}& \bmark& \phantom{\xmark} & 13.723 & 0.031 & 0.199 & \textbf{0.003}& 0.139\\
		\texttt{lin\_regression\_unbiased}& \phantom{\xmark}& \phantom{\xmark} & 6.700 &\textbf{0.014}& 0.056 & 0.016 & 0.918 \\
		\texttt{lucky\_throw}&\phantom{\xmark}& \phantom{\xmark} &Err.\footnote{Reached maximum recursion limit} &1.560& TO & 1.565&\textbf{0.455}\\
		\texttt{max}& \phantom{\xmark}&\phantom{\xmark} &0.618 & 0.005& 0.020 & 0.003 & \textbf{0.001}\\
		\texttt{monty\_hall}& \phantom{\xmark}&\phantom{\xmark} &2.927 & 0.033 &0.063 & \textbf{0.004} & 0.006\\
		\texttt{monty\_hall\_nested}& \phantom{\xmark}&\phantom{\xmark} & 15.694 &0.140&0.525 & 0.017 & \textbf{0.025}\\
		\texttt{murder\_mystery}& \phantom{\xmark}&\bmark &0.615 &0.004& 0.020 & \textbf{0.003}& ---\\
		\texttt{pi}& \phantom{\xmark}&\phantom{\xmark} &90.931 & \textbf{0.094} & TO & 0.103& ---\\
		\texttt{piranha}& \phantom{\xmark}&\phantom{\xmark} &0.379 & 0.003 & 0.011 & 0.002& \textbf{<0.001}\\
		\texttt{telephone\_operator}& \bmark& \phantom{\xmark} & 1.249 & 0.006& $0.058^\ast$ & Err.\footnote{The \texttt{-{}-dp} strategy produces $p(x,d)=0$ which is an incorrect result.}& 0.006\\
		\texttt{telephone\_operator\_param}& \bmark& \bmark &5.880& \textbf{0.017} & $0.108^\ast$ & 0.007 & --- \\
		\texttt{twocoins}& \phantom{\xmark}&\phantom{\xmark} &0.493 & 0.004&0.011 & 0.002 & \textbf{<0.001}\\
		\bottomrule
	\end{tabular}
\end{table}

\paragraph{Results for loop-free programs.}
Whereas our focus is on programs featuring unbounded loops, we compared \toolname{Prodigy} to $\lambda$-PSI and \textsc{Genfer} for loop-free benchmarks.
 \Cref{tab:benchmarks} lists the results.
The column Program lists the benchmarks.
The next column ($\infty$) marks the occurrence of samplings from infinite-support distributions in the benchmark.
Column $p$ indicates the presence of symbolic parameters.
Finally, columns \sympy, \ginac, symbolic, dp and \textsc{genfer} list run-times in seconds for the individual backends of \toolname{Prodigy}, \textsc{$\lambda$-PSI}, and the tool \textsc{genfer} respectively.
Here, dp represents the dynamic programming backend of \textsc{$\lambda$-PSI} invoked by using the option \texttt{{-}{-}dp}, and \textsc{Genfer} using exact arithmetic (\texttt{-{}-rational}).
The timing in boldface marks the fastest variant.
The acronym TO stands for time-out, i.e., did not terminate within the time limit of 90 seconds.
Entries consisting of \enquote{---} indicate the lack of support for this benchmark instance.
Timings marked with $^\ast$ refer to results by $\lambda$-PSI which contain integral expressions that we like to avoid, however $\lambda$-PSI is still able to compute all moments exactly.

Our experiments show that \textsc{Genfer} can be up to two orders of magnitude faster.
We emphasize that PSI and Genfer are symbolic engines tailored to solving loop-free inference tasks.
Despite this, it turns out that we oftentimes are on par.
For the \texttt{digitRecognition} example (the most prominent outlier), the speedup of \textsc{Genfer} mostly originates from an optimization in computing the observe-violation probabilities.
For loop-free programs, where termination is inherent by design, the necessity to precisely track \codify{observe}-violation probabilities is avoided.
Consequently, the observation-violation probability can be computed as the \enquote{missing} probability mass in the final distribution.
While this methodology is effective in loop-free scenarios, it does not apply to loopy programs and hence was not implemented in \toolname{Prodigy}.

\citet{Zaiser23} see automatic differentiation as the key ingredient enabling the fast results of \textsc{Genfer}.
Automatic differentiation in the sense of computing $n$-th derivatives at specific points is done by both \textsc{Genfer} and \toolname{Prodigy}.
Whereas \citet{Zaiser23} employ a custom implementation, we rely on well-established implementations from \sympy and \ginac.
In fact, the actual differentiation implementation can be exchanged freely.
\toolname{Prodigy}'s support for loops and parameter synthesis seamlessly integrate with any differentiation method while maintaining the functionality and capitalizing on potential speed enhancements.
Moreover, \textsc{Genfer} is unable to deal with non-linear observations as in the \texttt{pi} benchmark.
The same holds for instances with symbolic parameters. \toolname{Prodigy} outperforms the symbolic engine of $\lambda$-PSI on almost every instance whilst \toolname{Prodigy} has a comparable performance to the dynamic programming strategy of $\lambda$-PSI.

\renewcommand{\arraystretch}{1.05}
\begin{table}[t]
	
	\caption{Exact inference results for loopy probabilistic programs (those with parameter synthesis are marked by \texttt{\_param}); timings are given in seconds.}
	\label{tab:benchmarks_loop}
	\centering
	\begin{tabular}{lllll}
		\toprule
		Program & \sympy & & \ginac &\\
		\midrule
		\texttt{dep\_bern} & 13.354 &&\textbf{0.457}&\\
		\texttt{endless\_conditioning} & 1.148 & &\textbf{0.012}&\\
		\texttt{geometric} & 3.757 & &\textbf{0.031} &\\
		\texttt{ky\_die} & 21.562 & &\textbf{0.209} &\\ 
		\texttt{n\_geometric} & 3.050 & &\textbf{0.038}&\\
		\texttt{random\_walk} & 3.439 & &\textbf{0.047}&\\
		\texttt{trivial\_iid} &6.444 & &\textbf{0.075}&\\
		\texttt{bit\_flip\_conditioning} & 31.030 && \textbf{0.322}&\\
		\midrule
		\texttt{dueling\_cowboys\_param} & 6.147&$\text{for any}~p,q$&\textbf{0.065}&$\text{for any}~p,q$\\
		\texttt{geometric\_param} & 4.888&$p=\frac{1}{3}$&\textbf{0.262}&$p=\frac{1}{3}$ \\
		\texttt{ky\_die\_param} & 36.619&$p=\frac 2 3 , q=\frac 1 2$ & \textbf{1.298}&$p=\frac 2 3 , q=\frac 1 2$ \\
		\texttt{negative\_binomial\_param} &2.814&$\text{for any}~p$& \textbf{0.047}&$\text{for any}~p$\\
		\texttt{n\_geometric\_param} & 5.365&$p=\frac{q}{3}$ & \textbf{0.133}&$p = \frac{q}{3}$\\
		\texttt{random\_walk\_param} & 5.114&$p=\frac{1}{2}$& \textbf{0.274}&$p=\frac{1}{2}$\\
		\texttt{bit\_flip\_cond\_param} & 58.599&$p=\frac{13}{28} , q = \frac 3 7, r= \frac 2 7$ & \textbf{0.887}&$p=\frac{13}{28} , q = \frac 3 7, r= \frac 2 7$\\
		\texttt{brp\_obs\_param} & TO && \textbf{77.732}&$p=10^{-10}$\\
		\bottomrule
	\end{tabular}
\end{table}

\paragraph{Results for loopy programs.}
\Cref{tab:benchmarks_loop} depicts the empirical results for loopy programs.
The column Program lists the benchmarks.
The columns \sympy and \ginac report their run-times in seconds when used as backend of \toolname{Prodigy}.
The timing in boldface marks the fastest variant.
As these benchmarks all include loops, they are not supported by $\lambda$-PSI and \textsc{Genfer}.

Recall that reasoning about loops involves an equivalence check against a user-specified invariant program.
Finding the right invariant (if it exists in the loop-free \credip fragment) is intricate.
We support the user in discovering such invariants by allowing symbolic parameters for distributions, e.g., one can write $\geometric{p}$ where $p$ is a symbolic parameter.
 For benchmarks subject to parameter synthesis, we also provide the anticipated parameter constraints (or values) inferred automatically by \toolname{Prodigy}. 
Whenever this is the case, we point out that for the \ginac timings, discharging the resulting equation systems is achieved using \sympy solvers, which is due to the missing functionality of \ginac to solve these equation systems. Overall, \ginac is faster than \sympy by about two orders of magnitude, as is similar to the loop-free benchmarks.

It is also worth noting that \toolname{Prodigy} is potentially applicable to practical randomized algorithms beyond toy programs like random walks. These applications include loop-free benchmarks such as \texttt{digitRecognition} for recognizing written digits based on observed data samples, as well as the unbounded loopy program modeling the bounded retransmission protocol (\texttt{brp\_obs\_param}):
\begin{example}[Bounded Retransmission Protocol]\label{ex:brp}
	Prog.~\ref{prog:brp} describes a conditioned variant of the bounded retransmission protocol (BRP) \cite{DBLP:conf/papm/DArgenioJJL01,DBLP:conf/tacas/BatzCJKKM23} which attempts to transmit $\progvar{s}$ packets over a lossy channel, where each individual packet gets lost with probability 1\%. The transmission is considered successful, if \emph{none} of the packets needs more than 4 retransmissions.
	Additionally, we observe that all but the last 9 packets are received successfully without any additional resends.
	\cref{prog:mc_brp} illustrates the protocol as a Markov chain.
	Notice that the number of packets to be sent is parametrized by the (possibly infinite-support) initial distribution of $\progvar{s}$ and $\progvar{f}$ -- modeling an \emph{infinite family} of finite-state Markov chains -- and hence renders techniques like probabilistic model checking \cite{DBLP:conf/lics/Katoen16} infeasible.

Provided with a suitable invariant (cf.~\cref{apx:prodigy}) with parameter $p$ in the probabilities, \toolname{Prodigy} infers that, with $p = 10^{-10}$, Prog.~\ref{prog:brp} is equivalent to this invariant, thereby yielding the exact output distribution (for any initial distribution of $\progvar{s}$ with rational closed form) in the form of a PGF.
From this PGF, we can derive, e.g., with input $\progvar{s} \sim \geometric{\sfrac{1}{2}}$, the \emph{transmission-failure probability} of BRP, i.e., the probability that Prog.~\ref{prog:brp} terminates with $f > 4$ is around $9.9789 \times 10^{-11}$ (see \cref{apx:prodigy}).

From a syntactic point of view, the BRP may seem intricate.
Yet semantically, it represents the structure of the original program's underlying Markov chain (\cref{prog:mc_brp}) in a straightforward manner.
For all but the last 9 packets, no transmission attempt is allowed to fail.
If starting with at most 9 packets to send in total, the initial state might already indicate some failed attempts for the first packet to transmit.
In this case, the first packet sent has less than 5 retries to successfully complete the transmission.
Afterwards, for each of the remaining packets, transmission either fails with some probability $p$ or is successful and continues with the next packet.
\end{example}

\begin{program}[t]
	\centering
	\begin{minipage}[b]{.37\textwidth}
		\begin{align*}
		&\WHILE{\progvar{s} > 0 \wedge \progvar{f} \leq 4} \\
		& \quad \pcomment{packet loss}\\
		& \quad	\{ \OBSERVE{\progvar{s} \leq 9}\fatsemi \ASSIGN{\progvar{f}}{\progvar{f} + 1}\}\\
		& \quad [\sfrac{1}{100}]\\
		& \quad \pcomment{packet received}\\
		&\quad \{\ASSIGN{\progvar{f}}{0}\fatsemi \ASSIGN{\progvar{s}}{\progvar{s}-1}\}\\
		&\}
		\end{align*}
		\caption{A conditioned variant of BRP.}
		\label{prog:brp}
	\end{minipage}
	\hfill
	\begin{minipage}[b]{.62\textwidth}
		\resizebox{\textwidth}{!}{
		\begin{tikzpicture}[every initial by arrow/.style={>=stealth'}]
		\node (s0) [state] {$(0,0)$};
		\node (s1) [state, right=of s0] {$(1,0)$}; 
		\node (s2) [state, right=of s1] {$(2,0)$};
		\node (s3) [state, draw=none, right=of s2] {$\ldots$};
		\node (s4) [state, right=of s3] {$(9,0)$};
		
		\node (s1f1) [state, above=of s1] {$(1,1)$};
		\node (s2f1) [state, above=of s2] {$(2,1)$};
		\node (s4f1) [state, above=of s4] {$(9,1)$};
		
		\node (s1f2) [state, draw=none, above=of s1f1] {$\vdots$};
		\node (s1f5) [state, above=of s1f2] {$(1,5)$};
		
		\node (s2f2) [state, draw=none, above=of s2f1] {$\vdots$};
		\node (s2f5) [state, above=of s2f2] {$(2,5)$};
		
		\node (s4f2) [state, draw=none, above=of s4f1] {$\vdots$};
		\node (s4f5) [state, above=of s4f2] {$(9,5)$};
		
		\node (s6) [state, right=of s4, scale=0.9] {$(10,0)$};
		\node (s7) [state, draw=none, right=of s6] {$\ldots$};
		\node (s8) [state, right=of s7, scale=0.8] {$(s\!-\!1, 0)$};
		\node (s9) [initial right, initial text=,state, above=of s8] {$(s,f)$};
		
		\draw[->,>=stealth'] (s0) edge[loop left] node {$1$} (s0);
		\draw[->,>=stealth'] (s1) edge node[above] {$99\%$} (s0); 
		\draw[->,>=stealth'] (s2) edge node[above] {$99\%$} (s1);
		\draw[->,>=stealth'] (s3) edge node[above] {$99\%$} (s2);
		\draw[->,>=stealth'] (s4) edge node[above] {$99\%$} (s3);
		\draw[->,>=stealth'] (s1f1) edge node[above] {$99\%$} (s0);
		\draw[->,>=stealth'] (s2f1) edge node[above] {$99\%$} (s1);
		\draw[->,>=stealth'] (s4f1) edge node[above] {$99\%$} (s3);
		\draw[->,>=stealth'] (s1f2) edge[dashed, in=60, out=210] (s0);
		\draw[->,>=stealth'] (s2f2) edge[dashed, in=60, out=210] (s1);
		\draw[->,>=stealth'] (s4f2) edge[dashed, in=60, out=210] (s3);
		\draw[->,>=stealth'] (s1f5) edge[loop above] node {$1$} (s1f5);
		\draw[->,>=stealth'] (s2f5) edge[loop above] node {$1$} (s2f5);
		\draw[->,>=stealth'] (s4f5) edge[loop above] node {$1$} (s4f5);
		\draw[->,>=stealth'] (s1) edge node[left] {$1\%$}  (s1f1);
		\draw[->,>=stealth'] (s2) edge node[left] {$1\%$}  (s2f1);
		\draw[->,>=stealth'] (s4) edge node[left] {$1\%$}  (s4f1);
		\draw[->,>=stealth'] (s1f1) edge node[left] {$1\%$}  (s1f2);
		\draw[->,>=stealth'] (s2f1) edge node[left] {$1\%$}  (s2f2);
		\draw[->,>=stealth'] (s4f1) edge node[left] {$1\%$}  (s4f2);
		\draw[->,>=stealth'] (s1f2) edge node[left] {$1\%$}  (s1f5);
		\draw[->,>=stealth'] (s2f2) edge node[left] {$1\%$}  (s2f5);
		\draw[->,>=stealth'] (s4f2) edge node[left] {$1\%$}  (s4f5);

		\draw[->,>=stealth'] (s9) edge[loop above] node {$1\%$} (s9);
		\draw[->,>=stealth'] (s9) edge[bend left] node[right] {$99\%$}  (s8);
		\draw[->,>=stealth'] (s8) edge[bend left] node[left] {$1\%$} (s9);
		\draw[->,>=stealth'] (s8) edge node[below] {$99\%$} (s7);
		\draw[->,>=stealth'] (s7) edge node[below] {$99\%$} (s6);
		\draw[->,>=stealth'] (s7) edge [dashed,bend left, in=135] (s9);
		\draw[->,>=stealth'] (s6) edge[bend left, in=125] node[above]{$1\%$} (s9);
		\draw[->,>=stealth'] (s6) edge node[below] {$99\%$} (s4);

		\end{tikzpicture}
		}
		\captionof{figure}{The Markov chain illustrating Prog.~\ref{prog:brp}.}
		\label{prog:mc_brp}
	\end{minipage}
\end{program}

	\section{Limitations of Exact Inference using eFPS}
\label{sec:bottlenecks}


We discuss some limitations of the presented inference approach considering guard evaluations, non-rational probabilities and scalability.
Prog.\ \ref{prog:collatz} models a variant of the famous Collatz algorithm~\cite{andrei1998collatz}.
The Collatz conjecture states that for all positive integers $m$ there exists $n \in \N$ such that for the Collatz function $C(m) \coloneqq \sfrac{n}{2}$ for $n \equiv (0 \text{ mod } 2)$ and $3n+1$ otherwise; the $n$-th fold iteration of the function is $C^n(m) = 1$.
We have adapted the program syntax slightly and make use of the \codify{loop} statement to represent the $n$-fold repetition of a code block.
The program basically behaves as the usual Collatz function with the only exception that in the case where a number is divisible by two, we have a small chance not dividing $\progvar{x}$ by 2 but instead executing the \codify{else} branch. Note that the instruction $\progvar{x} \equiv_2 0 (\text{mod}~2 )$ still preserves rational closed forms as we can compute its semantics by $\frac{F(\bvec{X}) + F(-\bvec{X})}{2}$.
When analyzing the run-times of our tool on this program we observe surprising results: for $(n=1)$ we obtain a result in 0.010631 seconds; $(n=2)$ is computed in 0.049891 seconds and for $(n=3)$ it suddenly increases to 88.689832 seconds.
We think that this phenomenon arises from the fact that evaluating expressions like $\progvar{x} \equiv 0 ~(\textup{mod}~ 2)$ repeatedly, gets increasingly difficult as it is implemented in \toolname{Prodigy} by means of arithmetic progressions.

Another challenge is guard evaluation, i.e., filtering out the corresponding terms of a formal power series such as $\constrain{F}{B}$ for \IFSYMBOL-statements.
In case we are interested in the relation between two variables (like $\progvar{x} = \progvar{y}$) when both have marginal distributions with infinite support, \toolname{Prodigy} cannot compute the result. As an approximation heuristic it computes under-approximations of the \emph{exact} posterior distribution. Note that if either $x$ or $y$ has a finite-support marginal distribution, the posterior is computed by enumeration.
An interesting example why one cannot even strive for such a potential closed-form operation preserving rational closed forms is Prog.\ \ref{prog:non_algebraic}.
For this program, its variable $\progvar{r}$ evaluates to 1 with \emph{non-rational}, not even algebraic probability $\sfrac{1}{\pi}$ after termination \cite{DBLP:conf/soda/FlajoletPS11} -- thus beyond \credip capabilities.
An interesting open question is to determine what syntactic restrictions \emph{exactly capture rational} closed forms.

As a final observation we emphasize that \toolname{Prodigy}'s performance is proportional to the size of constants in the programs.
Assume, e.g., a guard $\progvar{x} > n$, where $n$ is a constant. For larger $n$, the closed-form operation of computing the $n$-th formal derivative takes an increasing amount of time.

\begin{program}[t]
	\begin{minipage}[b]{.49\linewidth}
		\begin{align*}
			&\ASSIGN{\progvar{x}}{\geometric{\sfrac{1}{4}}}\fatsemi\\
			&\ASSIGN{\progvar{y}}{\geometric{\sfrac{1}{4}}}\fatsemi\\
			&\ASSIGN{\progvar{t}}{\progvar{x} + \progvar{y}}\fatsemi\\
			&\PCHOICE{ \ASSIGN{ \progvar{t} }{ \progvar{t}+1 } }{ \sfrac{5}{9} }{\pskip}\fatsemi\\
			&\ASSIGN{\progvar{r}}{1}\fatsemi\\
			&\codify{loop}(3)\{\\
			& \quad \ASSIGN{\progvar{s}}{\iid{\bernoulli{\sfrac{1}{2}}}{2\progvar{t}}}\fatsemi\\
			& \quad \IF{\progvar{s} \neq \progvar{t}} \ASSIGN{\progvar{r}}{0}\}\\
			&\}
		\end{align*}
		\caption{Non-algebraic probabilities.}
		\label{prog:non_algebraic}
	\end{minipage}
	\hfill
	\begin{minipage}[b]{.49\linewidth}
		\begin{align*}	
			&\ASSIGN{\progvar{x}}{\geometric{\sfrac{1}{2}}}\fatsemi\\
			&\codify{loop}(n)\{\\
			& \quad \IF{x \equiv 0 ~(\textup{mod}~ 2)}\\
			& \quad \quad \left\{\, {\ASSIGN{\progvar{x}}{3*\progvar{x} + 1}} \,\right\}\mathrel{\left[\,\sfrac{1}{10}\,\right]}\\
			&\quad \quad\quad\left\{\, {\ASSIGN{\progvar{x}}{\sfrac{1}{2} * \progvar{x}}} \,\right\}\\
			&\quad \ELSE\\
			& \quad \quad \ASSIGN{\progvar{x}}{3 * \progvar{x} + 1}\\
			& \quad \}\\
			& \}
		\end{align*}
		\caption{Probabilistic Collatz program.}
		\label{prog:collatz}
	\end{minipage}
\end{program}


	\section{Related Work}
\label{sec:related_work}

We review a non-exhaustive list of related work in probabilistic inference, ranging from invariant-based verification techniques 
to inference techniques based on sampling and symbolic methods.

\paragraph{\bf Invariant-based verification.}
As a means to avoid intractable fixed point computations, the correctness of loopy probabilistic programs can often be established by inferring specific (inductive) bounds on expectations, called \emph{quantitative loop invariants} \cite{DBLP:series/mcs/McIverM05}. There are a variety of results on synthesizing quantitative invariants, including (semi-)automated techniques based on \emph{martingales} \cite{DBLP:conf/cav/BartheEFH16,DBLP:conf/cav/ChakarovS13,DBLP:conf/sas/ChakarovS14,DBLP:conf/popl/ChatterjeeNZ17,chatterjeeFOPP20,DBLP:journals/toplas/TakisakaOUH21}, \emph{recurrence solving} \cite{DBLP:conf/atva/BartocciKS19,DBLP:conf/tacas/BartocciKS20}, 
\emph{invariant learning} \cite{DBLP:conf/cav/BaoTPHR22},  
and \emph{constraint solving} \cite{DBLP:conf/sas/KatoenMMM10,DBLP:conf/qest/GretzKM13,DBLP:conf/atva/FengZJZX17,DBLP:conf/cav/ChenHWZ15}, particularly via \emph{satisfiability modulo theories} (SMT) \cite{DBLP:conf/tacas/BatzCJKKM23,DBLP:conf/cav/CKKMS20,DBLP:conf/cav/BatzJKKMS20}.

Alternative state-of-the-art verification approaches include \emph{bounded model checking} \cite{DBLP:conf/atva/0001DKKW16} for verifying probabilistic programs with nondeterminism and conditioning
as well as various forms of \emph{value iteration} \cite{DBLP:conf/cav/Baier0L0W17,DBLP:conf/cav/QuatmannK18,DBLP:conf/cav/HartmannsK20} for determining reachability probabilities in finite Markov models.

\paragraph{\bf Sampling-based inference.}
Most existing probabilistic programming languages implement \emph{sampling}-based inference algorithms rooted in the principles of Monte Carlo \cite{doi:10.1080/01621459.1949.10483310}, thereby yielding numerical approximations of the exact results, see, e.g., \cite{gram2021extending}. Such languages include Anglican \cite{DBLP:conf/aistats/WoodMM14}, BLOG \cite{DBLP:conf/ijcai/MilchMRSOK05}, BUGS \cite{Spiegelhalter1995BUGSB}, Infer.NET \cite{InferNET18}, R2 \cite{DBLP:conf/aaai/NoriHRS14}, Stan \cite{stan2022}, etc. In contrast, we are concerned with inference techniques that produce \emph{exact} results.

\paragraph{\bf Symbolic inference.}
In response to the aforementioned challenges \ref{issue:loop} and \ref{issue:support} in exact probabilistic inference, \citet{LOPSTR} proposed a program semantics based on \emph{probability generating functions}. This PGF-based semantics allows for exact quantitative reasoning for, e.g., deciding probabilistic equivalence \cite{CAV22} and proving non-almost-sure termination \cite{LOPSTR} for certain probabilistic programs \emph{without conditioning}.

Extensions of PGF-based approaches to programs with conditioning have been initiated in \cite{Klinkenberg23,zaiser2023exact}; the latter suggested the use of automatic differentiation in the evaluation of PGFs, but the paper addresses \emph{loop-free programs only}.
Combining conditioning and possibly non-terminating behaviors (introduced through loops) substantially complicates the computation of final probability distributions and normalization constants.
Another difference is that Zaiser et al.\ provide truncated posterior distributions together with the first four centralized moments.
We, in contrast, develop a symbolic representation of the full posterior distribution.

As an alternative to PGFs, many probabilistic systems employ \emph{probability density function} (PDF) representations of distributions, e.g., ($\lambda$)\textsc{PSI} \cite{DBLP:conf/cav/GehrMV16,DBLP:conf/pldi/GehrSV20}, \textsc{AQUA} \cite{DBLP:conf/atva/HuangDM21} and \textsc{Hakaru} \cite{DBLP:conf/flops/NarayananCRSZ16}, as well as the density compiler in \cite{DBLP:conf/popl/BhatAVG12,DBLP:journals/lmcs/BhatBGR17}. These systems are dedicated to inference for programs encoding joint (discrete-)continuous distributions with conditioning. Reasoning about the underlying PDF representations, however, amounts to resolving complex integral expressions in order to answer inference queries.
Furthermore, ($\lambda$)\textsc{PSI} admits \emph{only bounded looping behaviors}. \textsc{Dice} \cite{DBLP:journals/pacmpl/HoltzenBM20} employs weighted model counting to enable potentially scalable exact inference for discrete probabilistic programs, yet is also confined to statically bounded loops.
\citet{DBLP:conf/lics/SteinS21} proposed a denotational semantics based on Markov categories for continuous probabilistic programs with exact conditioning and bounded looping behaviors. A similar direction is taken by \citet{DBLP:conf/esop/BichselGV18}. They investigate the connections between \codify{observe}-violations, non-termination, and errors raised by, e.g., division by zero; their semantics is based on Markov kernels.
A recently proposed language \textsc{PERPL} \cite{DBLP:journals/pacmpl/ChiangMS23} compiles probabilistic programs with unbounded recursion into systems of polynomial equations and solves them directly for least fixed points using numerical methods.
A related approach by \citet{DBLP:journals/corr/StuhlmuellerG12} uses dynamic programming techniques transforming probabilistic programs with unbounded recursion into factored sum-product networks, i.e., a particular way of representing an equation system. However, this technique cannot handle infinite-support distributions.
The tool \textsc{Mora} \cite{DBLP:conf/ictac/BartocciKS20,DBLP:conf/tacas/BartocciKS20} supports exact inference for various types of Bayesian networks, but relies on a restricted form of intermediate representation known as prob-solvable loops, whose behaviors can be expressed by a system of C-finite recurrences admitting closed-form solutions.

Finally, we refer interested readers to \cite{DBLP:conf/nips/WinnerS16,DBLP:conf/icml/WinnerSS17,DBLP:conf/icml/SheldonWS18} for a related line of research from the machine learning community, which exploits PGF-based exact inference -- not for probabilistic programs -- but for dedicated types of graphical models with latent count variables.
	\section{Conclusion}
\label{sec:conclusion}
We have presented an exact Bayesian inference approach for probabilistic programs with (possibly unbounded) loops and conditioning.
The core of this approach is a denotational semantics that symbolically encodes distributions as probability generating functions.
We showed how our PGF-based exact inference facilitates (semi-)automated inference, equivalence checking, and invariant synthesis of probabilistic programs.
Our implementation \ifanonymous\else in \toolname{Prodigy}\fi shows promise: It can do exact inference for  various infinite-state loopy programs and exhibits comparable performance to state-of-the-art exact inference tools over loop-free benchmarks.

The possibility to incorporate symbolic parameters in GF representations can enable the application
of well-established optimization methods, e.g., maximum-likelihood estimations and parameter fitting,
to probabilistic inference. Characterizing the family of programs and invariants which admit a potentially complete \esop-based synthesis approach would be of particular interest.
Additionally, future research directions include extending exact inference to continuous distributions by utilizing characteristic functions as the continuous counterpart to PGFs. Furthermore, there is an intriguing connection to be explored between quantitative reasoning about loops and the positivity problem of recurrence sequences \cite{DBLP:conf/icalp/OuaknineW14}, which is induced by loop unfolding.

%
%
%
%

	\ifanonymous
	\else
		\begin{acks}
			%
			Lutz Klinkenberg and Joost-Pieter Katoen are supported by ERC AdG Grant 787914; Darion Haase is supported by the DFG RTG 2236 UnRAVeL; Mingshuai Chen is supported by the ZJNSF Major Program under grant No.~LD24F020013 and by the ZJU Education Foundation's Qizhen Talent program. The authors would like to thank the anonymous reviewers for their constructive feedback on this article and Leo Mommers for his assistance in producing the benchmark results and his work on part of the implementation.
		\end{acks}
	\fi

	\section*{Data-Availability Statement}
	The software that supports~\cref{sec:prodigy} is available on Zenodo~{\ifanonymous(link hidden for anonymity)\else\cite{klinkenberg_2024_10782412}\fi}.

	\bibliography{references}


\begin{thebibliography}{96}


\ifx \showCODEN    \undefined \def \showCODEN     #1{\unskip}     \fi
\ifx \showDOI      \undefined \def \showDOI       #1{#1}\fi
\ifx \showISBNx    \undefined \def \showISBNx     #1{\unskip}     \fi
\ifx \showISBNxiii \undefined \def \showISBNxiii  #1{\unskip}     \fi
\ifx \showISSN     \undefined \def \showISSN      #1{\unskip}     \fi
\ifx \showLCCN     \undefined \def \showLCCN      #1{\unskip}     \fi
\ifx \shownote     \undefined \def \shownote      #1{#1}          \fi
\ifx \showarticletitle \undefined \def \showarticletitle #1{#1}   \fi
\ifx \showURL      \undefined \def \showURL       {\relax}        \fi
\providecommand\bibfield[2]{#2}
\providecommand\bibinfo[2]{#2}
\providecommand\natexlab[1]{#1}
\providecommand\showeprint[2][]{arXiv:#2}

\bibitem[Abramsky and Jung(1994)]%
        {Abramsky94domaintheory}
\bibfield{author}{\bibinfo{person}{Samson Abramsky} {and} \bibinfo{person}{Achim Jung}.} \bibinfo{year}{1994}\natexlab{}.
\newblock \showarticletitle{Domain Theory}.
\newblock In \bibinfo{booktitle}{\emph{Handbook of Logic in Computer Science, vol.\ 3: Semantic Structures}}. \bibinfo{publisher}{Clarendon Press}.
\newblock


\bibitem[Ackerman et~al\mbox{.}(2019)]%
        {Ackermann2019}
\bibfield{author}{\bibinfo{person}{Nathanael~L. Ackerman}, \bibinfo{person}{Cameron~E. Freer}, {and} \bibinfo{person}{Daniel~M. Roy}.} \bibinfo{year}{2019}\natexlab{}.
\newblock \showarticletitle{On the Computability of Conditional Probability}.
\newblock \bibinfo{journal}{\emph{J. ACM}} \bibinfo{volume}{66}, \bibinfo{number}{3} (\bibinfo{year}{2019}).
\newblock


\bibitem[Andrei and Masalagiu(1998)]%
        {andrei1998collatz}
\bibfield{author}{\bibinfo{person}{{\c{S}}tefan Andrei} {and} \bibinfo{person}{Cristian Masalagiu}.} \bibinfo{year}{1998}\natexlab{}.
\newblock \showarticletitle{About the Collatz conjecture}.
\newblock \bibinfo{journal}{\emph{Acta Informatica}} \bibinfo{volume}{35}, \bibinfo{number}{2} (\bibinfo{year}{1998}), \bibinfo{pages}{167--179}.
\newblock


\bibitem[Baier et~al\mbox{.}(2017)]%
        {DBLP:conf/cav/Baier0L0W17}
\bibfield{author}{\bibinfo{person}{Christel Baier}, \bibinfo{person}{Joachim Klein}, \bibinfo{person}{Linda Leuschner}, \bibinfo{person}{David Parker}, {and} \bibinfo{person}{Sascha Wunderlich}.} \bibinfo{year}{2017}\natexlab{}.
\newblock \showarticletitle{Ensuring the Reliability of Your Model Checker: {I}nterval Iteration for {M}arkov Decision Processes}. In \bibinfo{booktitle}{\emph{{CAV} {(2)}}} \emph{(\bibinfo{series}{{LNCS}}, Vol.~\bibinfo{volume}{10426})}. \bibinfo{publisher}{Springer}, \bibinfo{pages}{160--180}.
\newblock


\bibitem[Bao et~al\mbox{.}(2022)]%
        {DBLP:conf/cav/BaoTPHR22}
\bibfield{author}{\bibinfo{person}{Jialu Bao}, \bibinfo{person}{Nitesh Trivedi}, \bibinfo{person}{Drashti Pathak}, \bibinfo{person}{Justin Hsu}, {and} \bibinfo{person}{Subhajit Roy}.} \bibinfo{year}{2022}\natexlab{}.
\newblock \showarticletitle{Data-Driven Invariant Learning for Probabilistic Programs}. In \bibinfo{booktitle}{\emph{{CAV (1)}}} \emph{(\bibinfo{series}{LNCS}, Vol.~\bibinfo{volume}{13371})}. \bibinfo{publisher}{Springer}, \bibinfo{pages}{33--54}.
\newblock


\bibitem[Barthe et~al\mbox{.}(2016)]%
        {DBLP:conf/cav/BartheEFH16}
\bibfield{author}{\bibinfo{person}{Gilles Barthe}, \bibinfo{person}{Thomas Espitau}, \bibinfo{person}{Luis Mar{\'{\i}}a~Ferrer Fioriti}, {and} \bibinfo{person}{Justin Hsu}.} \bibinfo{year}{2016}\natexlab{}.
\newblock \showarticletitle{Synthesizing Probabilistic Invariants via {D}oob's Decomposition}. In \bibinfo{booktitle}{\emph{{CAV} {(1)}}} \emph{(\bibinfo{series}{{LNCS}}, Vol.~\bibinfo{volume}{9779})}. \bibinfo{publisher}{Springer}, \bibinfo{pages}{43--61}.
\newblock


\bibitem[Barthe et~al\mbox{.}(2020)]%
        {barthe_katoen_silva_2020}
\bibfield{editor}{\bibinfo{person}{Gilles Barthe}, \bibinfo{person}{Joost{-}Pieter Katoen}, {and} \bibinfo{person}{Alexandra Silva}} (Eds.). \bibinfo{year}{2020}\natexlab{}.
\newblock \bibinfo{booktitle}{\emph{Foundations of Probabilistic Programming}}.
\newblock \bibinfo{publisher}{Cambridge University Press}.
\newblock


\bibitem[Bartocci et~al\mbox{.}(2019)]%
        {DBLP:conf/atva/BartocciKS19}
\bibfield{author}{\bibinfo{person}{Ezio Bartocci}, \bibinfo{person}{Laura Kov{\'{a}}cs}, {and} \bibinfo{person}{Miroslav Stankovic}.} \bibinfo{year}{2019}\natexlab{}.
\newblock \showarticletitle{Automatic Generation of Moment-Based Invariants for Prob-Solvable Loops}. In \bibinfo{booktitle}{\emph{{ATVA}}} \emph{(\bibinfo{series}{{LNCS}}, Vol.~\bibinfo{volume}{11781})}. \bibinfo{publisher}{Springer}, \bibinfo{pages}{255--276}.
\newblock


\bibitem[Bartocci et~al\mbox{.}(2020a)]%
        {DBLP:conf/ictac/BartocciKS20}
\bibfield{author}{\bibinfo{person}{Ezio Bartocci}, \bibinfo{person}{Laura Kov{\'{a}}cs}, {and} \bibinfo{person}{Miroslav Stankovic}.} \bibinfo{year}{2020}\natexlab{a}.
\newblock \showarticletitle{Analysis of {B}ayesian Networks via Prob-Solvable Loops}. In \bibinfo{booktitle}{\emph{{ICTAC}}} \emph{(\bibinfo{series}{LNCS}, Vol.~\bibinfo{volume}{12545})}. \bibinfo{publisher}{Springer}, \bibinfo{pages}{221--241}.
\newblock


\bibitem[Bartocci et~al\mbox{.}(2020b)]%
        {DBLP:conf/tacas/BartocciKS20}
\bibfield{author}{\bibinfo{person}{Ezio Bartocci}, \bibinfo{person}{Laura Kov{\'{a}}cs}, {and} \bibinfo{person}{Miroslav Stankovic}.} \bibinfo{year}{2020}\natexlab{b}.
\newblock \showarticletitle{\textsc{Mora} - {A}utomatic Generation of Moment-Based Invariants}. In \bibinfo{booktitle}{\emph{{TACAS} {(1)}}} \emph{(\bibinfo{series}{LNCS}, Vol.~\bibinfo{volume}{12078})}. \bibinfo{publisher}{Springer}, \bibinfo{pages}{492--498}.
\newblock


\bibitem[Batz et~al\mbox{.}(2023)]%
        {DBLP:conf/tacas/BatzCJKKM23}
\bibfield{author}{\bibinfo{person}{Kevin Batz}, \bibinfo{person}{Mingshuai Chen}, \bibinfo{person}{Sebastian Junges}, \bibinfo{person}{Benjamin~Lucien Kaminski}, \bibinfo{person}{Joost{-}Pieter Katoen}, {and} \bibinfo{person}{Christoph Matheja}.} \bibinfo{year}{2023}\natexlab{}.
\newblock \showarticletitle{Probabilistic Program Verification via Inductive Synthesis of Inductive Invariants}. In \bibinfo{booktitle}{\emph{{TACAS} {(2)}}} \emph{(\bibinfo{series}{LNCS}, Vol.~\bibinfo{volume}{13994})}. \bibinfo{publisher}{Springer}, \bibinfo{pages}{410--429}.
\newblock


\bibitem[Batz et~al\mbox{.}(2021)]%
        {DBLP:conf/cav/CKKMS20}
\bibfield{author}{\bibinfo{person}{Kevin Batz}, \bibinfo{person}{Mingshuai Chen}, \bibinfo{person}{Benjamin~Lucien Kaminski}, \bibinfo{person}{Joost{-}Pieter Katoen}, \bibinfo{person}{Christoph Matheja}, {and} \bibinfo{person}{Philipp Schr{\"{o}}er}.} \bibinfo{year}{2021}\natexlab{}.
\newblock \showarticletitle{Latticed $k$-Induction with an Application to Probabilistic Programs}. In \bibinfo{booktitle}{\emph{{CAV (2)}}} \emph{(\bibinfo{series}{LNCS}, Vol.~\bibinfo{volume}{12760})}. \bibinfo{publisher}{Springer}, \bibinfo{pages}{524--549}.
\newblock


\bibitem[Batz et~al\mbox{.}(2020)]%
        {DBLP:conf/cav/BatzJKKMS20}
\bibfield{author}{\bibinfo{person}{Kevin Batz}, \bibinfo{person}{Sebastian Junges}, \bibinfo{person}{Benjamin~Lucien Kaminski}, \bibinfo{person}{Joost{-}Pieter Katoen}, \bibinfo{person}{Christoph Matheja}, {and} \bibinfo{person}{Philipp Schr{\"{o}}er}.} \bibinfo{year}{2020}\natexlab{}.
\newblock \showarticletitle{PrIC3: {P}roperty Directed Reachability for {MDP}s}. In \bibinfo{booktitle}{\emph{{CAV} {(2)}}} \emph{(\bibinfo{series}{{LNCS}}, Vol.~\bibinfo{volume}{12225})}. \bibinfo{publisher}{Springer}, \bibinfo{pages}{512--538}.
\newblock


\bibitem[Bauer et~al\mbox{.}(2002)]%
        {DBLP:journals/jsc/BauerFK02}
\bibfield{author}{\bibinfo{person}{Christian Bauer}, \bibinfo{person}{Alexander Frink}, {and} \bibinfo{person}{Richard Kreckel}.} \bibinfo{year}{2002}\natexlab{}.
\newblock \showarticletitle{Introduction to the {GiNaC} Framework for Symbolic Computation within the {C++} Programming Language}.
\newblock \bibinfo{journal}{\emph{J.\ Symb.\ Comput.}} \bibinfo{volume}{33}, \bibinfo{number}{1} (\bibinfo{year}{2002}), \bibinfo{pages}{1--12}.
\newblock


\bibitem[Bhat et~al\mbox{.}(2012)]%
        {DBLP:conf/popl/BhatAVG12}
\bibfield{author}{\bibinfo{person}{Sooraj Bhat}, \bibinfo{person}{Ashish Agarwal}, \bibinfo{person}{Richard~W. Vuduc}, {and} \bibinfo{person}{Alexander~G. Gray}.} \bibinfo{year}{2012}\natexlab{}.
\newblock \showarticletitle{A Type Theory for Probability Density Functions}. In \bibinfo{booktitle}{\emph{{POPL}}}. \bibinfo{publisher}{{ACM}}, \bibinfo{pages}{545--556}.
\newblock


\bibitem[Bhat et~al\mbox{.}(2017)]%
        {DBLP:journals/lmcs/BhatBGR17}
\bibfield{author}{\bibinfo{person}{Sooraj Bhat}, \bibinfo{person}{Johannes Borgstr{\"{o}}m}, \bibinfo{person}{Andrew~D. Gordon}, {and} \bibinfo{person}{Claudio~V. Russo}.} \bibinfo{year}{2017}\natexlab{}.
\newblock \showarticletitle{Deriving Probability Density Functions from Probabilistic Functional Programs}.
\newblock \bibinfo{journal}{\emph{Log.\ Methods Comput.\ Sci.}} \bibinfo{volume}{13}, \bibinfo{number}{2} (\bibinfo{year}{2017}).
\newblock


\bibitem[Bichsel et~al\mbox{.}(2018)]%
        {DBLP:conf/esop/BichselGV18}
\bibfield{author}{\bibinfo{person}{Benjamin Bichsel}, \bibinfo{person}{Timon Gehr}, {and} \bibinfo{person}{Martin~T. Vechev}.} \bibinfo{year}{2018}\natexlab{}.
\newblock \showarticletitle{Fine-Grained Semantics for Probabilistic Programs}. In \bibinfo{booktitle}{\emph{{ESOP}}} \emph{(\bibinfo{series}{LNCS}, Vol.~\bibinfo{volume}{10801})}. \bibinfo{publisher}{Springer}, \bibinfo{pages}{145--185}.
\newblock


\bibitem[Bournez and Garnier(2005)]%
        {DBLP:conf/rta/BournezG05}
\bibfield{author}{\bibinfo{person}{Olivier Bournez} {and} \bibinfo{person}{Florent Garnier}.} \bibinfo{year}{2005}\natexlab{}.
\newblock \showarticletitle{Proving Positive Almost-Sure Termination}. In \bibinfo{booktitle}{\emph{{RTA}}} \emph{(\bibinfo{series}{LNCS}, Vol.~\bibinfo{volume}{3467})}. \bibinfo{publisher}{Springer}, \bibinfo{pages}{323--337}.
\newblock


\bibitem[Carbin et~al\mbox{.}(2016)]%
        {DBLP:journals/cacm/CarbinMR16}
\bibfield{author}{\bibinfo{person}{Michael Carbin}, \bibinfo{person}{Sasa Misailovic}, {and} \bibinfo{person}{Martin~C. Rinard}.} \bibinfo{year}{2016}\natexlab{}.
\newblock \showarticletitle{Verifying quantitative reliability for programs that execute on unreliable hardware}.
\newblock \bibinfo{journal}{\emph{Commun.\ {ACM}}} \bibinfo{volume}{59}, \bibinfo{number}{8} (\bibinfo{year}{2016}), \bibinfo{pages}{83--91}.
\newblock


\bibitem[Caviness and Johnson(2012)]%
        {caviness2012quantifier}
\bibfield{author}{\bibinfo{person}{Bob~F Caviness} {and} \bibinfo{person}{Jeremy~R Johnson}.} \bibinfo{year}{2012}\natexlab{}.
\newblock \bibinfo{booktitle}{\emph{Quantifier {E}limination and {C}ylindrical {A}lgebraic {D}ecomposition}}.
\newblock \bibinfo{publisher}{Springer Science \& Business Media}.
\newblock


\bibitem[{\v{C}}e{\v{s}}ka et~al\mbox{.}(2019)]%
        {DBLP:conf/birthday/CeskaD0JK19}
\bibfield{author}{\bibinfo{person}{Milan {\v{C}}e{\v{s}}ka}, \bibinfo{person}{Christian Dehnert}, \bibinfo{person}{Nils Jansen}, \bibinfo{person}{Sebastian Junges}, {and} \bibinfo{person}{Joost{-}Pieter Katoen}.} \bibinfo{year}{2019}\natexlab{}.
\newblock \showarticletitle{Model Repair Revamped -- On the Automated Synthesis of Markov Chains}. In \bibinfo{booktitle}{\emph{From Reactive Systems to Cyber-Physical Systems}} \emph{(\bibinfo{series}{{LNCS}}, Vol.~\bibinfo{volume}{11500})}. \bibinfo{publisher}{Springer}, \bibinfo{pages}{107--125}.
\newblock


\bibitem[Chakarov and Sankaranarayanan(2013)]%
        {DBLP:conf/cav/ChakarovS13}
\bibfield{author}{\bibinfo{person}{Aleksandar Chakarov} {and} \bibinfo{person}{Sriram Sankaranarayanan}.} \bibinfo{year}{2013}\natexlab{}.
\newblock \showarticletitle{Probabilistic Program Analysis with Martingales}. In \bibinfo{booktitle}{\emph{{CAV}}} \emph{(\bibinfo{series}{LNCS}, Vol.~\bibinfo{volume}{8044})}. \bibinfo{publisher}{Springer}, \bibinfo{pages}{511--526}.
\newblock


\bibitem[Chakarov and Sankaranarayanan(2014)]%
        {DBLP:conf/sas/ChakarovS14}
\bibfield{author}{\bibinfo{person}{Aleksandar Chakarov} {and} \bibinfo{person}{Sriram Sankaranarayanan}.} \bibinfo{year}{2014}\natexlab{}.
\newblock \showarticletitle{Expectation Invariants for Probabilistic Program Loops as Fixed Points}. In \bibinfo{booktitle}{\emph{{SAS}}} \emph{(\bibinfo{series}{{LNCS}}, Vol.~\bibinfo{volume}{8723})}. \bibinfo{publisher}{Springer}, \bibinfo{pages}{85--100}.
\newblock


\bibitem[Chatterjee et~al\mbox{.}(2016)]%
        {DBLP:conf/cav/ChatterjeeFG16}
\bibfield{author}{\bibinfo{person}{Krishnendu Chatterjee}, \bibinfo{person}{Hongfei Fu}, {and} \bibinfo{person}{Amir~Kafshdar Goharshady}.} \bibinfo{year}{2016}\natexlab{}.
\newblock \showarticletitle{Termination Analysis of Probabilistic Programs Through {P}ositivstellensatz's}. In \bibinfo{booktitle}{\emph{{CAV (1)}}} \emph{(\bibinfo{series}{LNCS}, Vol.~\bibinfo{volume}{9779})}. \bibinfo{publisher}{Springer}, \bibinfo{pages}{3--22}.
\newblock


\bibitem[Chatterjee et~al\mbox{.}(2020)]%
        {chatterjeeFOPP20}
\bibfield{author}{\bibinfo{person}{Krishnendu Chatterjee}, \bibinfo{person}{Hongfei Fu}, {and} \bibinfo{person}{Petr Novotn\'{y}}.} \bibinfo{year}{2020}\natexlab{}.
\newblock \showarticletitle{Termination Analysis of Probabilistic Programs with Martingales}.
\newblock In \bibinfo{booktitle}{\emph{Foundations of Probabilistic Programming}}, \bibfield{editor}{\bibinfo{person}{Gilles Barthe}, \bibinfo{person}{Joost-Pieter Katoen}, {and} \bibinfo{person}{Alexandra Silva}} (Eds.). \bibinfo{publisher}{Cambridge University Press}, \bibinfo{pages}{221--258}.
\newblock


\bibitem[Chatterjee et~al\mbox{.}(2017)]%
        {DBLP:conf/popl/ChatterjeeNZ17}
\bibfield{author}{\bibinfo{person}{Krishnendu Chatterjee}, \bibinfo{person}{Petr Novotn{\'{y}}}, {and} \bibinfo{person}{Dorde Zikelic}.} \bibinfo{year}{2017}\natexlab{}.
\newblock \showarticletitle{Stochastic Invariants for Probabilistic Termination}. In \bibinfo{booktitle}{\emph{{POPL}}}. \bibinfo{publisher}{{ACM}}, \bibinfo{pages}{145--160}.
\newblock


\bibitem[Chen et~al\mbox{.}(2022a)]%
        {CAV22}
\bibfield{author}{\bibinfo{person}{Mingshuai Chen}, \bibinfo{person}{Joost{-}Pieter Katoen}, \bibinfo{person}{Lutz Klinkenberg}, {and} \bibinfo{person}{Tobias Winkler}.} \bibinfo{year}{2022}\natexlab{a}.
\newblock \showarticletitle{Does a Program Yield the Right Distribution? {V}erifying Probabilistic Programs via Generating Functions}. In \bibinfo{booktitle}{\emph{{CAV} {(1)}}} \emph{(\bibinfo{series}{LNCS}, Vol.~\bibinfo{volume}{13371})}. \bibinfo{publisher}{Springer}, \bibinfo{pages}{79--101}.
\newblock


\bibitem[Chen et~al\mbox{.}(2022b)]%
        {DBLP:journals/corr/abs-2205-01449}
\bibfield{author}{\bibinfo{person}{Mingshuai Chen}, \bibinfo{person}{Joost{-}Pieter Katoen}, \bibinfo{person}{Lutz Klinkenberg}, {and} \bibinfo{person}{Tobias Winkler}.} \bibinfo{year}{2022}\natexlab{b}.
\newblock \showarticletitle{Does a Program Yield the Right Distribution? {V}erifying Probabilistic Programs via Generating Functions}.
\newblock \bibinfo{journal}{\emph{CoRR}}  \bibinfo{volume}{abs/2205.01449} (\bibinfo{year}{2022}).
\newblock


\bibitem[Chen et~al\mbox{.}(2015)]%
        {DBLP:conf/cav/ChenHWZ15}
\bibfield{author}{\bibinfo{person}{Yu{-}Fang Chen}, \bibinfo{person}{Chih{-}Duo Hong}, \bibinfo{person}{Bow{-}Yaw Wang}, {and} \bibinfo{person}{Lijun Zhang}.} \bibinfo{year}{2015}\natexlab{}.
\newblock \showarticletitle{Counterexample-Guided Polynomial Loop Invariant Generation by {L}agrange Interpolation}. In \bibinfo{booktitle}{\emph{{CAV} {(1)}}} \emph{(\bibinfo{series}{LNCS}, Vol.~\bibinfo{volume}{9206})}. \bibinfo{publisher}{Springer}, \bibinfo{pages}{658--674}.
\newblock


\bibitem[Chiang et~al\mbox{.}(2023)]%
        {DBLP:journals/pacmpl/ChiangMS23}
\bibfield{author}{\bibinfo{person}{David Chiang}, \bibinfo{person}{Colin McDonald}, {and} \bibinfo{person}{Chung{-}chieh Shan}.} \bibinfo{year}{2023}\natexlab{}.
\newblock \showarticletitle{Exact Recursive Probabilistic Programming}.
\newblock \bibinfo{journal}{\emph{Proc. {ACM} Program. Lang.}} \bibinfo{volume}{7}, \bibinfo{number}{{OOPSLA1}} (\bibinfo{year}{2023}), \bibinfo{pages}{665--695}.
\newblock


\bibitem[Cooper(1990)]%
        {DBLP:journals/ai/Cooper90}
\bibfield{author}{\bibinfo{person}{Gregory~F. Cooper}.} \bibinfo{year}{1990}\natexlab{}.
\newblock \showarticletitle{The Computational Complexity of Probabilistic Inference Using {B}ayesian Belief Networks}.
\newblock \bibinfo{journal}{\emph{Artif. Intell.}} \bibinfo{volume}{42}, \bibinfo{number}{2-3} (\bibinfo{year}{1990}), \bibinfo{pages}{393--405}.
\newblock


\bibitem[Dahlqvist et~al\mbox{.}(2020)]%
        {dahlqvist_silva_kozen_2020}
\bibfield{author}{\bibinfo{person}{Fredrik Dahlqvist}, \bibinfo{person}{Alexandra Silva}, {and} \bibinfo{person}{Dexter Kozen}.} \bibinfo{year}{2020}\natexlab{}.
\newblock \showarticletitle{Semantics of Probabilistic Programming: {A} Gentle Introduction}.
\newblock In \bibinfo{booktitle}{\emph{Foundations of Probabilistic Programming}}, \bibfield{editor}{\bibinfo{person}{Gilles Barthe}, \bibinfo{person}{Joost-Pieter Katoen}, {and} \bibinfo{person}{Alexandra Silva}} (Eds.). \bibinfo{publisher}{Cambridge University Press}, \bibinfo{pages}{1--42}.
\newblock


\bibitem[D'Argenio et~al\mbox{.}(2001)]%
        {DBLP:conf/papm/DArgenioJJL01}
\bibfield{author}{\bibinfo{person}{Pedro~R. D'Argenio}, \bibinfo{person}{Bertrand Jeannet}, \bibinfo{person}{Henrik~Ejersbo Jensen}, {and} \bibinfo{person}{Kim~Guldstrand Larsen}.} \bibinfo{year}{2001}\natexlab{}.
\newblock \showarticletitle{Reachability Analysis of Probabilistic Systems by Successive Refinements}. In \bibinfo{booktitle}{\emph{{PAPM-PROBMIV}}} \emph{(\bibinfo{series}{Lecture Notes in Computer Science}, Vol.~\bibinfo{volume}{2165})}. \bibinfo{publisher}{Springer}, \bibinfo{pages}{39--56}.
\newblock


\bibitem[Dubhashi and Panconesi(2009)]%
        {dubhashi2009concentration}
\bibfield{author}{\bibinfo{person}{Devdatt~P Dubhashi} {and} \bibinfo{person}{Alessandro Panconesi}.} \bibinfo{year}{2009}\natexlab{}.
\newblock \bibinfo{booktitle}{\emph{Concentration of {M}easure for the {A}nalysis of {R}andomized {A}lgorithms}}.
\newblock \bibinfo{publisher}{Cambridge University Press}.
\newblock


\bibitem[Feng et~al\mbox{.}(2023)]%
        {FENG-OOPSLA2023}
\bibfield{author}{\bibinfo{person}{Shenghua Feng}, \bibinfo{person}{Mingshuai Chen}, \bibinfo{person}{Han Su}, \bibinfo{person}{Benjamin~Lucien Kaminski}, \bibinfo{person}{Joost{-}Pieter Katoen}, {and} \bibinfo{person}{Naijun Zhan}.} \bibinfo{year}{2023}\natexlab{}.
\newblock \showarticletitle{Lower Bounds for Possibly Divergent Probabilistic Programs}.
\newblock \bibinfo{journal}{\emph{Proc. {ACM} Program. Lang.}} \bibinfo{volume}{7}, \bibinfo{number}{{OOPSLA1}} (\bibinfo{year}{2023}), \bibinfo{pages}{696--726}.
\newblock


\bibitem[Feng et~al\mbox{.}(2017)]%
        {DBLP:conf/atva/FengZJZX17}
\bibfield{author}{\bibinfo{person}{Yijun Feng}, \bibinfo{person}{Lijun Zhang}, \bibinfo{person}{David~N. Jansen}, \bibinfo{person}{Naijun Zhan}, {and} \bibinfo{person}{Bican Xia}.} \bibinfo{year}{2017}\natexlab{}.
\newblock \showarticletitle{Finding Polynomial Loop Invariants for Probabilistic Programs}. In \bibinfo{booktitle}{\emph{{ATVA}}} \emph{(\bibinfo{series}{{LNCS}}, Vol.~\bibinfo{volume}{10482})}. \bibinfo{publisher}{Springer}, \bibinfo{pages}{400--416}.
\newblock


\bibitem[Flajolet et~al\mbox{.}(2011)]%
        {DBLP:conf/soda/FlajoletPS11}
\bibfield{author}{\bibinfo{person}{Philippe Flajolet}, \bibinfo{person}{Maryse Pelletier}, {and} \bibinfo{person}{Mich{\`{e}}le Soria}.} \bibinfo{year}{2011}\natexlab{}.
\newblock \showarticletitle{On {B}uffon Machines and Numbers}. In \bibinfo{booktitle}{\emph{{SODA}}}. \bibinfo{publisher}{{SIAM}}, \bibinfo{pages}{172--183}.
\newblock


\bibitem[Flajolet and Sedgewick(2009)]%
        {DBLP:books/daglib/0023751}
\bibfield{author}{\bibinfo{person}{Philippe Flajolet} {and} \bibinfo{person}{Robert Sedgewick}.} \bibinfo{year}{2009}\natexlab{}.
\newblock \bibinfo{booktitle}{\emph{Analytic Combinatorics}}.
\newblock \bibinfo{publisher}{Cambridge University Press}.
\newblock


\bibitem[Fremont et~al\mbox{.}(2022)]%
        {scenic-mlj22}
\bibfield{author}{\bibinfo{person}{Daniel~J. Fremont}, \bibinfo{person}{Edward Kim}, \bibinfo{person}{Tommaso Dreossi}, \bibinfo{person}{Shromona Ghosh}, \bibinfo{person}{Xiangyu Yue}, \bibinfo{person}{Alberto~L. Sangiovanni-Vincentelli}, {and} \bibinfo{person}{Sanjit~A. Seshia}.} \bibinfo{year}{2022}\natexlab{}.
\newblock \showarticletitle{Scenic: {A} Language for Scenario Specification and Data Generation}.
\newblock \bibinfo{journal}{\emph{Machine Learning Journal}} (\bibinfo{year}{2022}).
\newblock


\bibitem[Gehr et~al\mbox{.}(2016)]%
        {DBLP:conf/cav/GehrMV16}
\bibfield{author}{\bibinfo{person}{Timon Gehr}, \bibinfo{person}{Sasa Misailovic}, {and} \bibinfo{person}{Martin~T. Vechev}.} \bibinfo{year}{2016}\natexlab{}.
\newblock \showarticletitle{{PSI}: {E}xact Symbolic Inference for Probabilistic Programs}. In \bibinfo{booktitle}{\emph{{CAV} {(1)}}} \emph{(\bibinfo{series}{LNCS}, Vol.~\bibinfo{volume}{9779})}. \bibinfo{publisher}{Springer}, \bibinfo{pages}{62--83}.
\newblock


\bibitem[Gehr et~al\mbox{.}(2020)]%
        {DBLP:conf/pldi/GehrSV20}
\bibfield{author}{\bibinfo{person}{Timon Gehr}, \bibinfo{person}{Samuel Steffen}, {and} \bibinfo{person}{Martin~T. Vechev}.} \bibinfo{year}{2020}\natexlab{}.
\newblock \showarticletitle{{\(\lambda\)}{PSI}: {E}xact Inference for Higher-Order Probabilistic Programs}. In \bibinfo{booktitle}{\emph{{PLDI}}}. \bibinfo{publisher}{{ACM}}, \bibinfo{pages}{883--897}.
\newblock


\bibitem[Gordon et~al\mbox{.}(2014)]%
        {ACM:conf/fose/Gordon14}
\bibfield{author}{\bibinfo{person}{Andrew~D. Gordon}, \bibinfo{person}{Thomas~A. Henzinger}, \bibinfo{person}{Aditya~V. Nori}, {and} \bibinfo{person}{Sriram~K. Rajamani}.} \bibinfo{year}{2014}\natexlab{}.
\newblock \showarticletitle{Probabilistic Programming}. In \bibinfo{booktitle}{\emph{{FOSE}}}. \bibinfo{publisher}{{ACM}}, \bibinfo{pages}{167--181}.
\newblock


\bibitem[Gram-Hansen(2021)]%
        {gram2021extending}
\bibfield{author}{\bibinfo{person}{Bradley Gram-Hansen}.} \bibinfo{year}{2021}\natexlab{}.
\newblock \emph{\bibinfo{title}{Extending probabilistic programming systems and applying them to real-world simulators}}.
\newblock \bibinfo{thesistype}{Ph.\,D. Dissertation}. \bibinfo{school}{University of Oxford}.
\newblock


\bibitem[Gretz et~al\mbox{.}(2013)]%
        {DBLP:conf/qest/GretzKM13}
\bibfield{author}{\bibinfo{person}{Friedrich Gretz}, \bibinfo{person}{Joost{-}Pieter Katoen}, {and} \bibinfo{person}{Annabelle McIver}.} \bibinfo{year}{2013}\natexlab{}.
\newblock \showarticletitle{\textsc{Prinsys} - {O}n a Quest for Probabilistic Loop Invariants}. In \bibinfo{booktitle}{\emph{{QEST}}} \emph{(\bibinfo{series}{{LNCS}}, Vol.~\bibinfo{volume}{8054})}. \bibinfo{publisher}{Springer}, \bibinfo{pages}{193--208}.
\newblock


\bibitem[Hark et~al\mbox{.}(2020)]%
        {DBLP:journals/pacmpl/HarkKGK20}
\bibfield{author}{\bibinfo{person}{Marcel Hark}, \bibinfo{person}{Benjamin~Lucien Kaminski}, \bibinfo{person}{J{\"{u}}rgen Giesl}, {and} \bibinfo{person}{Joost{-}Pieter Katoen}.} \bibinfo{year}{2020}\natexlab{}.
\newblock \showarticletitle{Aiming low is harder: {I}nduction for lower bounds in probabilistic program verification}.
\newblock \bibinfo{journal}{\emph{Proc.\ {ACM} Program.\ Lang.}} \bibinfo{volume}{4}, \bibinfo{number}{{POPL}} (\bibinfo{year}{2020}), \bibinfo{pages}{37:1--37:28}.
\newblock


\bibitem[Hartmanns and Kaminski(2020)]%
        {DBLP:conf/cav/HartmannsK20}
\bibfield{author}{\bibinfo{person}{Arnd Hartmanns} {and} \bibinfo{person}{Benjamin~Lucien Kaminski}.} \bibinfo{year}{2020}\natexlab{}.
\newblock \showarticletitle{Optimistic Value Iteration}. In \bibinfo{booktitle}{\emph{{CAV} {(2)}}} \emph{(\bibinfo{series}{{LNCS}}, Vol.~\bibinfo{volume}{12225})}. \bibinfo{publisher}{Springer}, \bibinfo{pages}{488--511}.
\newblock


\bibitem[Holtzen et~al\mbox{.}(2020)]%
        {DBLP:journals/pacmpl/HoltzenBM20}
\bibfield{author}{\bibinfo{person}{Steven Holtzen}, \bibinfo{person}{Guy~Van den Broeck}, {and} \bibinfo{person}{Todd~D. Millstein}.} \bibinfo{year}{2020}\natexlab{}.
\newblock \showarticletitle{Scaling Exact Inference for Discrete Probabilistic Programs}.
\newblock \bibinfo{journal}{\emph{Proc.\ {ACM} Program.\ Lang.}} \bibinfo{volume}{4}, \bibinfo{number}{{OOPSLA}} (\bibinfo{year}{2020}), \bibinfo{pages}{140:1--140:31}.
\newblock


\bibitem[Huang et~al\mbox{.}(2021)]%
        {DBLP:conf/atva/HuangDM21}
\bibfield{author}{\bibinfo{person}{Zixin Huang}, \bibinfo{person}{Saikat Dutta}, {and} \bibinfo{person}{Sasa Misailovic}.} \bibinfo{year}{2021}\natexlab{}.
\newblock \showarticletitle{{AQUA}: {A}utomated Quantized Inference for Probabilistic Programs}. In \bibinfo{booktitle}{\emph{{ATVA}}} \emph{(\bibinfo{series}{LNCS}, Vol.~\bibinfo{volume}{12971})}. \bibinfo{publisher}{Springer}, \bibinfo{pages}{229--246}.
\newblock


\bibitem[Inc.(2023)]%
        {Mathematica}
\bibfield{author}{\bibinfo{person}{Wolfram~Research{,} Inc.}} \bibinfo{year}{2023}\natexlab{}.
\newblock \bibinfo{title}{Mathematica, {V}ersion 13.3}.
\newblock
\newblock
\urldef\tempurl%
\url{https://www.wolfram.com/mathematica}
\showURL{%
\tempurl}
\newblock
\shownote{Champaign, IL, 2023}.


\bibitem[Jacobs(2021)]%
        {DBLP:journals/pacmpl/Jacobs21}
\bibfield{author}{\bibinfo{person}{Jules Jacobs}.} \bibinfo{year}{2021}\natexlab{}.
\newblock \showarticletitle{Paradoxes of probabilistic programming: And how to condition on events of measure zero with infinitesimal probabilities}.
\newblock \bibinfo{journal}{\emph{Proc. {ACM} Program. Lang.}} \bibinfo{volume}{5}, \bibinfo{number}{{POPL}} (\bibinfo{year}{2021}), \bibinfo{pages}{1--26}.
\newblock


\bibitem[Jansen et~al\mbox{.}(2016)]%
        {DBLP:conf/atva/0001DKKW16}
\bibfield{author}{\bibinfo{person}{Nils Jansen}, \bibinfo{person}{Christian Dehnert}, \bibinfo{person}{Benjamin~Lucien Kaminski}, \bibinfo{person}{Joost{-}Pieter Katoen}, {and} \bibinfo{person}{Lukas Westhofen}.} \bibinfo{year}{2016}\natexlab{}.
\newblock \showarticletitle{Bounded Model Checking for Probabilistic Programs}. In \bibinfo{booktitle}{\emph{{ATVA}}} \emph{(\bibinfo{series}{{LNCS}}, Vol.~\bibinfo{volume}{9938})}. \bibinfo{pages}{68--85}.
\newblock


\bibitem[Johnson et~al\mbox{.}(2005)]%
        {johnson2005univariate}
\bibfield{author}{\bibinfo{person}{Norman~L Johnson}, \bibinfo{person}{Adrienne~W Kemp}, {and} \bibinfo{person}{Samuel Kotz}.} \bibinfo{year}{2005}\natexlab{}.
\newblock \bibinfo{booktitle}{\emph{Univariate {D}iscrete {D}istributions}}. Vol.~\bibinfo{volume}{444}.
\newblock \bibinfo{publisher}{John Wiley \& Sons}.
\newblock


\bibitem[Kaminski(2019)]%
        {kaminski2019advanced}
\bibfield{author}{\bibinfo{person}{Benjamin~Lucien Kaminski}.} \bibinfo{year}{2019}\natexlab{}.
\newblock \emph{\bibinfo{title}{Advanced weakest precondition calculi for probabilistic programs}}.
\newblock \bibinfo{thesistype}{Ph.\,D. Dissertation}. \bibinfo{school}{RWTH Aachen University}.
\newblock


\bibitem[Kaminski et~al\mbox{.}(2019)]%
        {DBLP:journals/acta/KaminskiKM19}
\bibfield{author}{\bibinfo{person}{Benjamin~Lucien Kaminski}, \bibinfo{person}{Joost{-}Pieter Katoen}, {and} \bibinfo{person}{Christoph Matheja}.} \bibinfo{year}{2019}\natexlab{}.
\newblock \showarticletitle{On the Hardness of Analyzing Probabilistic Programs}.
\newblock \bibinfo{journal}{\emph{Acta Inform.}} \bibinfo{volume}{56}, \bibinfo{number}{3} (\bibinfo{year}{2019}), \bibinfo{pages}{255--285}.
\newblock


\bibitem[Kaminski et~al\mbox{.}(2018)]%
        {DBLP:journals/jacm/KaminskiKMO18}
\bibfield{author}{\bibinfo{person}{Benjamin~Lucien Kaminski}, \bibinfo{person}{Joost{-}Pieter Katoen}, \bibinfo{person}{Christoph Matheja}, {and} \bibinfo{person}{Federico Olmedo}.} \bibinfo{year}{2018}\natexlab{}.
\newblock \showarticletitle{Weakest Precondition Reasoning for Expected Runtimes of Randomized Algorithms}.
\newblock \bibinfo{journal}{\emph{J.\ {ACM}}} \bibinfo{volume}{65}, \bibinfo{number}{5} (\bibinfo{year}{2018}), \bibinfo{pages}{30:1--30:68}.
\newblock


\bibitem[Katoen(2016)]%
        {DBLP:conf/lics/Katoen16}
\bibfield{author}{\bibinfo{person}{Joost{-}Pieter Katoen}.} \bibinfo{year}{2016}\natexlab{}.
\newblock \showarticletitle{The Probabilistic Model Checking Landscape}. In \bibinfo{booktitle}{\emph{{LICS}}}. \bibinfo{publisher}{{ACM}}, \bibinfo{pages}{31--45}.
\newblock


\bibitem[Katoen et~al\mbox{.}(2010)]%
        {DBLP:conf/sas/KatoenMMM10}
\bibfield{author}{\bibinfo{person}{Joost{-}Pieter Katoen}, \bibinfo{person}{Annabelle McIver}, \bibinfo{person}{Larissa Meinicke}, {and} \bibinfo{person}{Carroll~C. Morgan}.} \bibinfo{year}{2010}\natexlab{}.
\newblock \showarticletitle{Linear-Invariant Generation for Probabilistic Programs: {A}utomated Support for Proof-Based Methods}. In \bibinfo{booktitle}{\emph{{SAS}}} \emph{(\bibinfo{series}{{LNCS}}, Vol.~\bibinfo{volume}{6337})}. \bibinfo{publisher}{Springer}, \bibinfo{pages}{390--406}.
\newblock


\bibitem[Klinkenberg et~al\mbox{.}(2020)]%
        {LOPSTR}
\bibfield{author}{\bibinfo{person}{Lutz Klinkenberg}, \bibinfo{person}{Kevin Batz}, \bibinfo{person}{Benjamin~Lucien Kaminski}, \bibinfo{person}{Joost{-}Pieter Katoen}, \bibinfo{person}{Joshua Moerman}, {and} \bibinfo{person}{Tobias Winkler}.} \bibinfo{year}{2020}\natexlab{}.
\newblock \showarticletitle{Generating Functions for Probabilistic Programs}. In \bibinfo{booktitle}{\emph{{LOPSTR}}} \emph{(\bibinfo{series}{LNCS}, Vol.~\bibinfo{volume}{12561})}. \bibinfo{publisher}{Springer}, \bibinfo{pages}{231--248}.
\newblock


\bibitem[Klinkenberg et~al\mbox{.}(2024)]%
        {klinkenberg_2024_10782412}
\bibfield{author}{\bibinfo{person}{Lutz Klinkenberg}, \bibinfo{person}{Christian Blumenthal}, \bibinfo{person}{Mingshuai Chen}, \bibinfo{person}{Darion Haase}, {and} \bibinfo{person}{Joost-Pieter Katoen}.} \bibinfo{year}{2024}\natexlab{}.
\newblock \bibinfo{booktitle}{\emph{{Exact Bayesian Inference for Loopy Probabilistic Programs using Generating Functions -- Artifact}}}.
\newblock
\urldef\tempurl%
\url{https://doi.org/10.5281/zenodo.10782412}
\showDOI{\tempurl}


\bibitem[Klinkenberg et~al\mbox{.}(2023)]%
        {Klinkenberg23}
\bibfield{author}{\bibinfo{person}{Lutz Klinkenberg}, \bibinfo{person}{Mingshuai Chen}, \bibinfo{person}{Joost{-}Pieter Katoen}, {and} \bibinfo{person}{Tobias Winkler}.} \bibinfo{year}{2023}\natexlab{}.
\newblock \showarticletitle{Exact Probabilistic Inference Using Generating Functions}.
\newblock \bibinfo{journal}{\emph{CoRR}}  \bibinfo{volume}{abs/2302.00513} (\bibinfo{year}{2023}).
\newblock


\bibitem[Kozen(1981)]%
        {Kozen}
\bibfield{author}{\bibinfo{person}{Dexter Kozen}.} \bibinfo{year}{1981}\natexlab{}.
\newblock \showarticletitle{Semantics of Probabilistic Programs}.
\newblock \bibinfo{journal}{\emph{J.\ Comput.\ Syst.\ Sci.}} \bibinfo{volume}{22}, \bibinfo{number}{3} (\bibinfo{year}{1981}), \bibinfo{pages}{328--350}.
\newblock


\bibitem[Kwisthout(2009)]%
        {Kwisthout2009Computational}
\bibfield{author}{\bibinfo{person}{Johan Henri~Petrus Kwisthout}.} \bibinfo{year}{2009}\natexlab{}.
\newblock \emph{\bibinfo{title}{The computational complexity of probabilistic networks}}.
\newblock \bibinfo{thesistype}{Ph.\,D. Dissertation}. \bibinfo{school}{Utrecht University}.
\newblock


\bibitem[Lassez et~al\mbox{.}(1982)]%
        {DBLP:journals/ipl/LassezNS82}
\bibfield{author}{\bibinfo{person}{Jean-Louis Lassez}, \bibinfo{person}{V.~L. Nguyen}, {and} \bibinfo{person}{Liz Sonenberg}.} \bibinfo{year}{1982}\natexlab{}.
\newblock \showarticletitle{Fixed Point Theorems and Semantics: {A} Folk Tale}.
\newblock \bibinfo{journal}{\emph{Inf.\ Process.\ Lett.}} \bibinfo{volume}{14}, \bibinfo{number}{3} (\bibinfo{year}{1982}), \bibinfo{pages}{112--116}.
\newblock


\bibitem[Littman et~al\mbox{.}(1998)]%
        {DBLP:journals/jair/LittmanGM98}
\bibfield{author}{\bibinfo{person}{Michael~L. Littman}, \bibinfo{person}{Judy Goldsmith}, {and} \bibinfo{person}{Martin Mundhenk}.} \bibinfo{year}{1998}\natexlab{}.
\newblock \showarticletitle{The Computational Complexity of Probabilistic Planning}.
\newblock \bibinfo{journal}{\emph{J. Artif. Intell. Res.}}  \bibinfo{volume}{9} (\bibinfo{year}{1998}), \bibinfo{pages}{1--36}.
\newblock


\bibitem[McIver and Morgan(2005)]%
        {DBLP:series/mcs/McIverM05}
\bibfield{author}{\bibinfo{person}{Annabelle McIver} {and} \bibinfo{person}{Carroll Morgan}.} \bibinfo{year}{2005}\natexlab{}.
\newblock \bibinfo{booktitle}{\emph{Abstraction, Refinement and Proof for Probabilistic Systems}}.
\newblock \bibinfo{publisher}{Springer}.
\newblock


\bibitem[Metropolis and Ulam(1949)]%
        {doi:10.1080/01621459.1949.10483310}
\bibfield{author}{\bibinfo{person}{Nicholas Metropolis} {and} \bibinfo{person}{Stanis\l{}aw Ulam}.} \bibinfo{year}{1949}\natexlab{}.
\newblock \showarticletitle{The {M}onte {C}arlo Method}.
\newblock \bibinfo{journal}{\emph{J.\ Am.\ Stat.\ Assoc.}} \bibinfo{volume}{44}, \bibinfo{number}{247} (\bibinfo{year}{1949}), \bibinfo{pages}{335--341}.
\newblock


\bibitem[Meurer et~al\mbox{.}(2017)]%
        {SymPy}
\bibfield{author}{\bibinfo{person}{Aaron Meurer} {et~al\mbox{.}}} \bibinfo{year}{2017}\natexlab{}.
\newblock \showarticletitle{SymPy: {S}ymbolic computing in {P}ython}.
\newblock \bibinfo{journal}{\emph{PeerJ Comput.\ Sci.}}  \bibinfo{volume}{3} (\bibinfo{year}{2017}), \bibinfo{pages}{e103}.
\newblock


\bibitem[Milch et~al\mbox{.}(2005)]%
        {DBLP:conf/ijcai/MilchMRSOK05}
\bibfield{author}{\bibinfo{person}{Brian Milch}, \bibinfo{person}{Bhaskara Marthi}, \bibinfo{person}{Stuart Russell}, \bibinfo{person}{David~A. Sontag}, \bibinfo{person}{Daniel~L. Ong}, {and} \bibinfo{person}{Andrey Kolobov}.} \bibinfo{year}{2005}\natexlab{}.
\newblock \showarticletitle{{BLOG}: {P}robabilistic Models with Unknown Objects}. In \bibinfo{booktitle}{\emph{{IJCAI}}}. \bibinfo{pages}{1352--1359}.
\newblock


\bibitem[Minka et~al\mbox{.}(2018)]%
        {InferNET18}
\bibfield{author}{\bibinfo{person}{Tom Minka}, \bibinfo{person}{John~M. Winn}, \bibinfo{person}{John~P. Guiver}, \bibinfo{person}{Yordan Zaykov}, \bibinfo{person}{Dany Fabian}, {and} \bibinfo{person}{John Bronskill}.} \bibinfo{year}{2018}\natexlab{}.
\newblock \bibinfo{title}{{Infer.NET 0.3}}.
\newblock
\newblock
\urldef\tempurl%
\url{http://dotnet.github.io/infer}
\showURL{%
\tempurl}
\newblock
\shownote{Microsoft Research Cambridge}.


\bibitem[Mitzenmacher and Upfal(2005)]%
        {DBLP:books/daglib/0012859}
\bibfield{author}{\bibinfo{person}{Michael Mitzenmacher} {and} \bibinfo{person}{Eli Upfal}.} \bibinfo{year}{2005}\natexlab{}.
\newblock \bibinfo{booktitle}{\emph{Probability and Computing: {R}andomized Algorithms and Probabilistic Analysis}}.
\newblock \bibinfo{publisher}{Cambridge University Press}.
\newblock


\bibitem[Moosbrugger et~al\mbox{.}(2022)]%
        {DBLP:journals/pacmpl/MoosbruggerSBK22}
\bibfield{author}{\bibinfo{person}{Marcel Moosbrugger}, \bibinfo{person}{Miroslav Stankovic}, \bibinfo{person}{Ezio Bartocci}, {and} \bibinfo{person}{Laura Kov{\'{a}}cs}.} \bibinfo{year}{2022}\natexlab{}.
\newblock \showarticletitle{This is the moment for probabilistic loops}.
\newblock \bibinfo{journal}{\emph{Proc. {ACM} Program. Lang.}} \bibinfo{volume}{6}, \bibinfo{number}{{OOPSLA2}} (\bibinfo{year}{2022}), \bibinfo{pages}{1497--1525}.
\newblock


\bibitem[Narayanan et~al\mbox{.}(2016)]%
        {DBLP:conf/flops/NarayananCRSZ16}
\bibfield{author}{\bibinfo{person}{Praveen Narayanan}, \bibinfo{person}{Jacques Carette}, \bibinfo{person}{Wren Romano}, \bibinfo{person}{Chung{-}chieh Shan}, {and} \bibinfo{person}{Robert Zinkov}.} \bibinfo{year}{2016}\natexlab{}.
\newblock \showarticletitle{Probabilistic Inference by Program Transformation in {H}akaru (System Description)}. In \bibinfo{booktitle}{\emph{{FLOPS}}} \emph{(\bibinfo{series}{LNCS}, Vol.~\bibinfo{volume}{9613})}. \bibinfo{publisher}{Springer}, \bibinfo{pages}{62--79}.
\newblock


\bibitem[Nori et~al\mbox{.}(2014)]%
        {DBLP:conf/aaai/NoriHRS14}
\bibfield{author}{\bibinfo{person}{Aditya~V. Nori}, \bibinfo{person}{Chung{-}Kil Hur}, \bibinfo{person}{Sriram~K. Rajamani}, {and} \bibinfo{person}{Selva Samuel}.} \bibinfo{year}{2014}\natexlab{}.
\newblock \showarticletitle{{R2}: {A}n Efficient {MCMC} Sampler for Probabilistic Programs}. In \bibinfo{booktitle}{\emph{{AAAI}}}. \bibinfo{publisher}{{AAAI} Press}, \bibinfo{pages}{2476--2482}.
\newblock


\bibitem[Olmedo et~al\mbox{.}(2018)]%
        {DBLP:journals/toplas/OlmedoGJKKM18}
\bibfield{author}{\bibinfo{person}{Federico Olmedo}, \bibinfo{person}{Friedrich Gretz}, \bibinfo{person}{Nils Jansen}, \bibinfo{person}{Benjamin~Lucien Kaminski}, \bibinfo{person}{Joost{-}Pieter Katoen}, {and} \bibinfo{person}{Annabelle McIver}.} \bibinfo{year}{2018}\natexlab{}.
\newblock \showarticletitle{Conditioning in Probabilistic Programming}.
\newblock \bibinfo{journal}{\emph{{ACM} Trans. Program. Lang. Syst.}} \bibinfo{volume}{40}, \bibinfo{number}{1} (\bibinfo{year}{2018}), \bibinfo{pages}{4:1--4:50}.
\newblock


\bibitem[Ouaknine and Worrell(2014)]%
        {DBLP:conf/icalp/OuaknineW14}
\bibfield{author}{\bibinfo{person}{Jo{\"{e}}l Ouaknine} {and} \bibinfo{person}{James Worrell}.} \bibinfo{year}{2014}\natexlab{}.
\newblock \showarticletitle{On the Positivity Problem for Simple Linear Recurrence Sequences}. In \bibinfo{booktitle}{\emph{{ICALP} {(2)}}} \emph{(\bibinfo{series}{LNCS}, Vol.~\bibinfo{volume}{8573})}. \bibinfo{publisher}{Springer}, \bibinfo{pages}{318--329}.
\newblock


\bibitem[Park(1969)]%
        {park1969fixpoint}
\bibfield{author}{\bibinfo{person}{David Park}.} \bibinfo{year}{1969}\natexlab{}.
\newblock \showarticletitle{Fixpoint Induction and Proofs of Program Properties}.
\newblock \bibinfo{journal}{\emph{Machine intelligence}}  \bibinfo{volume}{5} (\bibinfo{year}{1969}).
\newblock


\bibitem[Petkovsek et~al\mbox{.}(1996)]%
        {petkovsek1996b}
\bibfield{author}{\bibinfo{person}{Marko Petkovsek}, \bibinfo{person}{Herbert~S Wilf}, {and} \bibinfo{person}{Doron Zeilberger}.} \bibinfo{year}{1996}\natexlab{}.
\newblock \bibinfo{booktitle}{\emph{A = B}}.
\newblock \bibinfo{publisher}{CRC Press}.
\newblock
\showISBNx{9781439864500}


\bibitem[Quatmann and Katoen(2018)]%
        {DBLP:conf/cav/QuatmannK18}
\bibfield{author}{\bibinfo{person}{Tim Quatmann} {and} \bibinfo{person}{Joost{-}Pieter Katoen}.} \bibinfo{year}{2018}\natexlab{}.
\newblock \showarticletitle{Sound Value Iteration}. In \bibinfo{booktitle}{\emph{{CAV} {(1)}}} \emph{(\bibinfo{series}{{LNCS}}, Vol.~\bibinfo{volume}{10981})}. \bibinfo{publisher}{Springer}, \bibinfo{pages}{643--661}.
\newblock


\bibitem[Roth(1996)]%
        {ROTH1996273}
\bibfield{author}{\bibinfo{person}{Dan Roth}.} \bibinfo{year}{1996}\natexlab{}.
\newblock \showarticletitle{On the Hardness of Approximate Reasoning}.
\newblock \bibinfo{journal}{\emph{Artif. Intell.}} \bibinfo{volume}{82}, \bibinfo{number}{1} (\bibinfo{year}{1996}), \bibinfo{pages}{273--302}.
\newblock


\bibitem[Saheb{-}Djahromi(1978)]%
        {DBLP:conf/mfcs/Saheb-Djahromi78}
\bibfield{author}{\bibinfo{person}{Nasser Saheb{-}Djahromi}.} \bibinfo{year}{1978}\natexlab{}.
\newblock \showarticletitle{Probabilistic {LCF}}. In \bibinfo{booktitle}{\emph{{MFCS}}} \emph{(\bibinfo{series}{LNCS}, Vol.~\bibinfo{volume}{64})}. \bibinfo{publisher}{Springer}, \bibinfo{pages}{442--451}.
\newblock


\bibitem[Sheldon et~al\mbox{.}(2018)]%
        {DBLP:conf/icml/SheldonWS18}
\bibfield{author}{\bibinfo{person}{Daniel Sheldon}, \bibinfo{person}{Kevin Winner}, {and} \bibinfo{person}{Debora Sujono}.} \bibinfo{year}{2018}\natexlab{}.
\newblock \showarticletitle{Learning in Integer Latent Variable Models with Nested Automatic Differentiation}. In \bibinfo{booktitle}{\emph{{ICML}}} \emph{(\bibinfo{series}{PMLR}, Vol.~\bibinfo{volume}{80})}. \bibinfo{publisher}{{PMLR}}, \bibinfo{pages}{4622--4630}.
\newblock


\bibitem[Spiegelhalter et~al\mbox{.}(1995)]%
        {Spiegelhalter1995BUGSB}
\bibfield{author}{\bibinfo{person}{David~J. Spiegelhalter}, \bibinfo{person}{Andrew Thomas}, \bibinfo{person}{Nicola~G. Best}, {and} \bibinfo{person}{Walter~R. Gilks}.} \bibinfo{year}{1995}\natexlab{}.
\newblock \bibinfo{booktitle}{\emph{{BUGS}: {B}ayesian Inference Using {G}ibbs Sampling, Version 0.50}}.
\newblock


\bibitem[{Stan Development Team}(2022)]%
        {stan2022}
\bibfield{author}{\bibinfo{person}{{Stan Development Team}}.} \bibinfo{year}{2022}\natexlab{}.
\newblock \bibinfo{booktitle}{\emph{Stan Modeling Language Users Guide and Reference Manual, Version 2.31}}.
\newblock


\bibitem[Stein and Staton(2021)]%
        {DBLP:conf/lics/SteinS21}
\bibfield{author}{\bibinfo{person}{Dario Stein} {and} \bibinfo{person}{Sam Staton}.} \bibinfo{year}{2021}\natexlab{}.
\newblock \showarticletitle{Compositional Semantics for Probabilistic Programs with Exact Conditioning}. In \bibinfo{booktitle}{\emph{{LICS}}}. \bibinfo{publisher}{{IEEE}}, \bibinfo{pages}{1--13}.
\newblock


\bibitem[Stuhlm{\"{u}}ller and Goodman(2012)]%
        {DBLP:journals/corr/StuhlmuellerG12}
\bibfield{author}{\bibinfo{person}{Andreas Stuhlm{\"{u}}ller} {and} \bibinfo{person}{Noah~D. Goodman}.} \bibinfo{year}{2012}\natexlab{}.
\newblock \showarticletitle{A Dynamic Programming Algorithm for Inference in Recursive Probabilistic Programs}.
\newblock \bibinfo{journal}{\emph{CoRR}}  \bibinfo{volume}{abs/1206.3555} (\bibinfo{year}{2012}).
\newblock


\bibitem[Takisaka et~al\mbox{.}(2021)]%
        {DBLP:journals/toplas/TakisakaOUH21}
\bibfield{author}{\bibinfo{person}{Toru Takisaka}, \bibinfo{person}{Yuichiro Oyabu}, \bibinfo{person}{Natsuki Urabe}, {and} \bibinfo{person}{Ichiro Hasuo}.} \bibinfo{year}{2021}\natexlab{}.
\newblock \showarticletitle{Ranking and Repulsing Supermartingales for Reachability in Randomized Programs}.
\newblock \bibinfo{journal}{\emph{{ACM} Trans. Program. Lang. Syst.}} \bibinfo{volume}{43}, \bibinfo{number}{2} (\bibinfo{year}{2021}), \bibinfo{pages}{5:1--5:46}.
\newblock


\bibitem[van~de Meent et~al\mbox{.}(2018)]%
        {DBLP:journals/corr/abs-1809-10756}
\bibfield{author}{\bibinfo{person}{Jan{-}Willem van~de Meent}, \bibinfo{person}{Brooks Paige}, \bibinfo{person}{Hongseok Yang}, {and} \bibinfo{person}{Frank Wood}.} \bibinfo{year}{2018}\natexlab{}.
\newblock \showarticletitle{An Introduction to Probabilistic Programming}.
\newblock \bibinfo{journal}{\emph{CoRR}}  \bibinfo{volume}{abs/1809.10756} (\bibinfo{year}{2018}).
\newblock


\bibitem[Vollinga(2006)]%
        {ginac}
\bibfield{author}{\bibinfo{person}{Jens Vollinga}.} \bibinfo{year}{2006}\natexlab{}.
\newblock \showarticletitle{{G}i{N}a{C}--{S}ymbolic {C}omputation with {C}++}.
\newblock \bibinfo{journal}{\emph{Nucl.\ Instrum.\ Methods Phys.\ Res.}} \bibinfo{volume}{559}, \bibinfo{number}{1} (\bibinfo{year}{2006}), \bibinfo{pages}{282–284}.
\newblock


\bibitem[Wang et~al\mbox{.}(2021a)]%
        {DBLP:conf/pldi/Wang0R21}
\bibfield{author}{\bibinfo{person}{Di Wang}, \bibinfo{person}{Jan Hoffmann}, {and} \bibinfo{person}{Thomas~W.\ Reps}.} \bibinfo{year}{2021}\natexlab{a}.
\newblock \showarticletitle{Central moment analysis for cost accumulators in probabilistic programs}. In \bibinfo{booktitle}{\emph{{PLDI}}}. \bibinfo{publisher}{{ACM}}, \bibinfo{pages}{559--573}.
\newblock


\bibitem[Wang et~al\mbox{.}(2021b)]%
        {DBLP:conf/pldi/WangS0CG21}
\bibfield{author}{\bibinfo{person}{Jinyi Wang}, \bibinfo{person}{Yican Sun}, \bibinfo{person}{Hongfei Fu}, \bibinfo{person}{Krishnendu Chatterjee}, {and} \bibinfo{person}{Amir~Kafshdar Goharshady}.} \bibinfo{year}{2021}\natexlab{b}.
\newblock \showarticletitle{Quantitative analysis of assertion violations in probabilistic programs}. In \bibinfo{booktitle}{\emph{{PLDI}}}. \bibinfo{publisher}{{ACM}}, \bibinfo{pages}{1171--1186}.
\newblock


\bibitem[Wilf(2005)]%
        {generatingfunctionology}
\bibfield{author}{\bibinfo{person}{Herbert~S Wilf}.} \bibinfo{year}{2005}\natexlab{}.
\newblock \bibinfo{booktitle}{\emph{Generatingfunctionology}}.
\newblock \bibinfo{publisher}{CRC press}.
\newblock


\bibitem[Winner and Sheldon(2016)]%
        {DBLP:conf/nips/WinnerS16}
\bibfield{author}{\bibinfo{person}{Kevin Winner} {and} \bibinfo{person}{Daniel Sheldon}.} \bibinfo{year}{2016}\natexlab{}.
\newblock \showarticletitle{Probabilistic Inference with Generating Functions for {P}oisson Latent Variable Models}. In \bibinfo{booktitle}{\emph{{NIPS}}}. \bibinfo{pages}{2640--2648}.
\newblock


\bibitem[Winner et~al\mbox{.}(2017)]%
        {DBLP:conf/icml/WinnerSS17}
\bibfield{author}{\bibinfo{person}{Kevin Winner}, \bibinfo{person}{Debora Sujono}, {and} \bibinfo{person}{Daniel Sheldon}.} \bibinfo{year}{2017}\natexlab{}.
\newblock \showarticletitle{Exact Inference for Integer Latent-Variable Models}. In \bibinfo{booktitle}{\emph{{ICML}}} \emph{(\bibinfo{series}{PMLR}, Vol.~\bibinfo{volume}{70})}. \bibinfo{publisher}{{PMLR}}, \bibinfo{pages}{3761--3770}.
\newblock


\bibitem[Wood et~al\mbox{.}(2014)]%
        {DBLP:conf/aistats/WoodMM14}
\bibfield{author}{\bibinfo{person}{Frank~D. Wood}, \bibinfo{person}{Jan{-}Willem van~de Meent}, {and} \bibinfo{person}{Vikash Mansinghka}.} \bibinfo{year}{2014}\natexlab{}.
\newblock \showarticletitle{A New Approach to Probabilistic Programming Inference}. In \bibinfo{booktitle}{\emph{{AISTATS}}}, Vol.~\bibinfo{volume}{33}. \bibinfo{publisher}{JMLR.org}, \bibinfo{pages}{1024--1032}.
\newblock


\bibitem[Zaiser et~al\mbox{.}(2023)]%
        {zaiser2023exact}
\bibfield{author}{\bibinfo{person}{Fabian Zaiser}, \bibinfo{person}{Andrzej~S. Murawski}, {and} \bibinfo{person}{C.-H.~Luke Ong}.} \bibinfo{year}{2023}\natexlab{}.
\newblock \showarticletitle{Exact Bayesian Inference on Discrete Models via Probability Generating Functions: A Probabilistic Programming Approach}. In \bibinfo{booktitle}{\emph{NeurIPS}}.
\newblock
\newblock
\shownote{To appear}.


\bibitem[Zaiser and Ong(2023)]%
        {Zaiser23}
\bibfield{author}{\bibinfo{person}{Fabian Zaiser} {and} \bibinfo{person}{C.-H.~Luke Ong}.} \bibinfo{year}{2023}\natexlab{}.
\newblock \bibinfo{booktitle}{\emph{Exact Inference for Discrete Probabilistic Programs via Generating Functions}}.
\newblock
\urldef\tempurl%
\url{https://popl23.sigplan.org/details/lafi-2023-papers/10/Exact-Inference-for-Discrete-Probabilistic-Programs-via-Generating-Functions}
\showURL{%
\tempurl}


\end{thebibliography}

	\newpage
	\appendix
\section*{Appendix}

\section{Domain Theory}
\label{apx:domaintheory}

\textbf{Notation.}
The set of natural numbers, including 0 is denoted by $\N$.
$\R_{\geq 0}$ denotes the set of non-negative real numbers.
For any sets $D$ and $D'$, we write $(D \to D')$ as the set of functions $\{f\colon D \to D'\}$.
We write vectors in bold-face notations like $\bvec{X}$ for $(X_1, \ldots, X_k)$ and $\bvec{1} = (1,\ldots, 1)$ where the dimension is clear from the context.
We sometimes use Lambda calculus notations describing anonymous functions, e.g. we write $\lambda x.~ x^2$ for a function that maps $x \mapsto x^2$.
Multivariate partial derivatives are compactly denoted by $\partial_x^i f \coloneqq \frac{\partial^i f}{\partial x^i}$.

\begin{definition}[Partial Order]
	\label{apx:partial_order}
	A \emph{partial order} $(D,\sqsubseteq)$ is a set $D$ along with a binary relation $\sqsubseteq ~~ \subseteq (D \times D)$ fulfilling the following properties:
	\begin{enumerate}
		\item Reflexivity: $\forall d \in D.~ d \sqsubseteq d$.
		\item Antisymmetry: $\forall d, d' \in D.~ d\sqsubseteq d' \wedge d' \sqsubseteq d \implies d = d'$.
		\item Transitivity: $\forall d, d', d'' \in D.~ d\sqsubseteq d' \wedge d' \sqsubseteq d'' \implies d\sqsubseteq d''$. 
	\end{enumerate}
\end{definition}

\begin{definition}[$\omega$-Complete Partial Order]
	\label{a[x:complete_lattice]}
	An \emph{$\omega$-complete partial order} (\emph{$\omega$-cpo}) is a partial order $(D, \sqsubseteq)$ such that
	\begin{itemize}
		\item there is a least element $\bot \in D$, and
		\item for all ascending $\omega$-chains, i.e., every set $S = \{s_n \mid n \in \N\} \subseteq D$ such that $s_0 \sqsubseteq s_1 \sqsubseteq s_2 \sqsubseteq \ldots, $ $S$ has a \emph{supremum} denoted by $\sup S \in D$ (sometimes also $\bigsqcup S$).
	\end{itemize}
	An element of the domain $D$ is called an \emph{upper bound} of $S$ if and only if $\forall s \in S.~ s \sqsubseteq d$.
	Further, $d$ is the \emph{least} upper bound of $S$ if and only if $d \sqsubseteq d'$ for every upper bound $d'$ of $S$.
\end{definition}

\begin{definition}[Monotonic Function]
	\label{apx:mono_func}
	Let $(D, \sqsubseteq)$ and $(D', \sqsubseteq')$ be partial orders.
	A function $f \colon D \to D'$ is \emph{monotonic} if and only if:
	\begin{align*}
		\forall d, d' \in D.~ d\sqsubseteq d' \impliesqq f(d) \sqsubseteq' f(d')~. 
	\end{align*}
\end{definition}

\begin{definition}[Continuous Functions]
	\label{apx:cont_fun}
	Let $(D, \sqsubseteq)$ and $(D', \sqsubseteq')$ be $\omega$-cpos.
	A function $f \colon D \to D'$ is \emph{Scott-continuous} if and only if for every $\omega$-chain $S$, it holds that:
	\begin{align*}
		\sup \{f(s) \mid s \in S\} \quad =\quad  f(\sup S)~. 
	\end{align*}
\end{definition}

\begin{lemma}[Continuous Functions are Monotone]
	\label{apx:cont_are_mono}
	Let $(D, \sqsubseteq)$ and $(D', \sqsubseteq')$ be $\omega$-cpos, and $f\colon D \to D'$ be a continuous function.
	Then $f$ is monotonic.
\end{lemma}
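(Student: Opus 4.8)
The plan is the classical one: turn the two-element comparison $d \sqsubseteq d'$ into an $\omega$-chain whose supremum is $d'$, and then apply Scott-continuity. Concretely, given $d, d' \in D$ with $d \sqsubseteq d'$, I would define the sequence $s_0 \defeq d$ and $s_n \defeq d'$ for all $n \geq 1$, and set $S \defeq \{s_n \mid n \in \N\} = \{d, d'\}$. First I would check that $S$ is indeed an $\omega$-chain in the sense of \Cref{apx:cont_fun}: the inequality $s_0 \sqsubseteq s_1$ is exactly the hypothesis $d \sqsubseteq d'$, and $s_n \sqsubseteq s_{n+1}$ for $n \geq 1$ holds by reflexivity of $\sqsubseteq$ (\Cref{apx:partial_order}).

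Next I would identify $\sup S = d'$. Since $d \sqsubseteq d'$ and $d' \sqsubseteq d'$, the element $d'$ is an upper bound of $S$; and if $u$ is any upper bound of $S$ then in particular $d' = s_1 \sqsubseteq u$, so $d'$ is the least upper bound. (Existence of the supremum is in any case guaranteed because $D$ is a complete lattice.) By Scott-continuity of $f$ applied to this chain,
\[
	f(d') ~=~ f(\sup S) ~=~ \sup\{f(s) \mid s \in S\} ~=~ \sup\{f(d),\, f(d')\}~.
\]
Since $f(d)$ is a member of the set $\{f(d), f(d')\}$, it is below its supremum, i.e.\ $f(d) \sqsubseteq' \sup\{f(d),f(d')\} = f(d')$. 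As $d \sqsubseteq d'$ was arbitrary, this establishes monotonicity of $f$ per \Cref{apx:mono_func}.

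There is essentially no hard step here; the only points requiring a little care are (i) making sure the two-point set genuinely satisfies the $\omega$-chain condition — which is why the constant tail $s_1 = s_2 = \cdots = d'$ together with reflexivity is used rather than a strictly increasing sequence — and (ii) noting that $\sup\{f(d), f(d')\}$ exists and dominates $f(d)$ purely by definition of supremum, without needing any further property of $f$. If one prefers to avoid even invoking completeness of $D'$ for this finite set, one can equivalently observe directly that $f(d') = \sup\{f(d),f(d')\}$ forces $f(d)\sqsubseteq' f(d')$, which is all that is needed.
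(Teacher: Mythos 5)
Your proposal is correct and follows essentially the same route as the paper's proof: both apply Scott-continuity to the two-element chain $\{d, d'\}$, identify its supremum as $d'$, and conclude $f(d) \sqsubseteq' \sup\{f(d), f(d')\} = f(d')$. Your version merely spells out the $\omega$-chain packaging (constant tail plus reflexivity) that the paper leaves implicit.
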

\begin{proof}
	Let $d, d' \in D$ such that $d \sqsubseteq d'$.
	\begin{align*}
		& d \sqsubseteq d'\\
		\implies& \sup \{ d, d'\} = d'\\
		\implies& \sup \{f(d), f(d')\} = f(\sup\{d, d'\}) = f(d') \tag{Scott-Cont.\ of $f$}\\
		\implies& f(d) \sqsubseteq' \sup \{f(d), f(d')\} = f(d')\tag*{\qedhere}
	\end{align*}
\end{proof}

\begin{lemma}[Lifting of Partial Orders]
	\label{apx:order_liftings}
	Let $(D', \preceq')$ be a partial order and let $\sqsubseteq'$ be a point-wise lifting of $\preceq'$, i.e., for an arbitrary domain $D$ and any $f,g \in (D \to D')$, $f \sqsubseteq' g$ if and only if $\forall d \in D. ~ f(d) \preceq' g(d)$.
	Then, $(D \to D', \sqsubseteq')$ is a partial order. 
\end{lemma}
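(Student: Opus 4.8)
The plan is to verify the three defining properties of a partial order from \Cref{apx:partial_order} directly, in each case reducing the statement about functions to the corresponding property of $\preceq'$ via the pointwise definition of $\sqsubseteq'$. So the proof proceeds in three short steps: reflexivity, antisymmetry, transitivity.

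First, for reflexivity: given any $f \in (D \to D')$, I would unfold $f \sqsubseteq' f$ to the claim $\forall d \in D.\ f(d) \preceq' f(d)$, which is immediate from reflexivity of $\preceq'$ applied at each $d$. For transitivity: given $f \sqsubseteq' g$ and $g \sqsubseteq' h$, unfold both hypotheses to $\forall d.\ f(d) \preceq' g(d)$ and $\forall d.\ g(d) \preceq' h(d)$; then for each fixed $d$, transitivity of $\preceq'$ gives $f(d) \preceq' h(d)$, and since $d$ was arbitrary this is exactly $f \sqsubseteq' h$.

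For antisymmetry: given $f \sqsubseteq' g$ and $g \sqsubseteq' f$, the pointwise definition yields $f(d) \preceq' g(d)$ and $g(d) \preceq' f(d)$ for every $d \in D$; antisymmetry of $\preceq'$ then forces $f(d) = g(d)$ for every $d$, and I conclude $f = g$ by function extensionality (two functions that agree on all arguments are equal).

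I do not expect a genuine obstacle here; the argument is entirely routine. The only point worth flagging explicitly is that the antisymmetry step relies on function extensionality to pass from pointwise equality of $f$ and $g$ to $f = g$, which is standard in this setting. Everything else is a direct translation of the three partial-order axioms for $\preceq'$ through the pointwise lifting.
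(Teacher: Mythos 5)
Your proposal is correct and follows essentially the same route as the paper's proof: both verify reflexivity, transitivity, and antisymmetry by unfolding the pointwise definition and invoking the corresponding axiom of $\preceq'$ at each $d \in D$. The paper's antisymmetry step likewise passes from $\forall d.\ f(d) = g(d)$ to $f = g$, so your explicit mention of function extensionality only makes visible what the paper leaves implicit.
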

\begin{proof}
	Let $f,g,h \in (D \to D')$.
	We need to show that $\sqsubseteq$ is a partial order, i.e., it is reflexive, antisymmetric and transitive.
	\begin{align*}
	\mathbf{Reflexivity:}&& & \forall d \in D. ~ f(d) \preceq' f(d) \tag{refl. of $\prec'$}\\
	&&\implies& f\sqsubseteq' f\\
	\mathbf{Transitivity:}&& &f \sqsubseteq' h ~\text{and}~ h \sqsubseteq' g\\
	&& \implies& \forall d \in D. ~ f(d) \preceq' h(d) ~\text{and}~ h(d) \preceq' g(d)\\
	&& \implies& \forall d \in D.~ f(d) \preceq' g(d) \tag{trans. of $\preceq'$}\\
	&& \implies& f \sqsubseteq g'\\
	\mathbf{Antisymmetry:}&& & f \sqsubseteq' g ~\text{and}~ g\sqsubseteq' g\\
	&& \implies& \forall d \in D.~ f(d) \preceq' g(d) ~ \text{and} ~ g(d) \preceq' f(d) \\
	&& \implies& \forall d \in D.~ f(d) = g(d) \tag{antisym. of $\preceq'$}\\
	&& \implies& f = g\tag*{\qedhere}
	\end{align*}
\end{proof}

\begin{lemma}[Point-Wise Lifting of $\omega$-CPOs]
	\label{apx:lattice_liftings}
	Let $(D', \preceq')$ be a $\omega$-cpo and $\sqsubseteq'$ be a point-wise lifting of $\preceq'$, i.e. for an arbitrary domain $D$ and any $f, g \in (D \to D')$, let 
	\[
		f \sqsubseteq' g \qquad\textnormal{iff}\qquad \forall f \in D.~ f(d) \preceq' g(d)~.
	\]
	Then $(D \to D', \sqsubseteq')$ is an $\omega$-cpo.
\end{lemma}
\begin{proof}
	We claim that every $\omega$-chain $S = \{f_i \mid i \in \N\} \subseteq (D \to D')$ has a least upper bound given by
	\[
		\sup S = \lambda d. \sup S_d, \quad \text{where}\quad S_d \coloneq \{f(d) \mid f \in S \} \subseteq D'~.
	\]
	For every $d \in D, S_d$ is again an $\omega$-chain, because $f_0(d) \preceq' f_1(d) \preceq' \ldots$ by the point-wise definition of $\sqsubseteq'$.
	First we show that $\sup S$ is an upper bound, as for every $f \in S$
	\begin{align*}
		& \forall d \in D. \quad f(d) \preceq' \sup S_d = (\sup S)(d)\\
		\implies & f \sqsubseteq' \sup S
	\end{align*}
	Second, $\sup S$ is the \emph{least} upper bound.
	Therefore, let $\hat{f}$ be an upper bound of $S$.
	\begin{align*}
	& \forall f \in S. \quad f \sqsubseteq' \hat{f}\\
	\implies & \forall f \in S. \quad \forall d \in D.~ f(d) \preceq' \hat{f}(d)\\
	\implies & \forall d \in D. \quad (\sup S)(d) = \sup S_d \preceq' \hat{f}(d)\\
	\implies & \sup S \sqsubseteq' \hat{f} \tag*{\qedhere}
	\end{align*}
\end{proof}

\begin{theorem}[Fixed Point Theorems~\textnormal{\cite{DBLP:journals/ipl/LassezNS82,Abramsky94domaintheory}}]
	\label{apx:fixedpoints}
	Let $f\colon D \to D$ be a continuous function on an $\omega$-cpo $(D, \sqsubseteq)$.
	Then $f$ possesses a least fixed point denoted \lfp $f$, which is given by:
	\[\lfp\ f = \sup \{f^n(\bot) \mid n \in \N\}~,\qquad \text{where}\]
	$f^n$ denotes the $n$-fold application of $f$, and $\bot = \sup \emptyset$ is the least element of $D$.
\end{theorem}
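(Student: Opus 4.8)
The plan is to establish the two characterizations of $\lfp\ f$ in turn, using only that $(D,\sqsubseteq)$ has a least element $\bot = \sup\emptyset$, that $D$ is closed under arbitrary suprema (hence also under arbitrary infima), and that a continuous $f$ is monotone (\Cref{apx:cont_are_mono}).

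For part~(1), I would first check that $(f^n(\bot))_{n \in \N}$ is an $\omega$-chain: we have $\bot \sqsubseteq f(\bot)$ since $\bot$ is least, and applying monotonicity of $f$ inductively gives $f^n(\bot) \sqsubseteq f^{n+1}(\bot)$ for every $n$. Set $d^\ast \defeq \sup\{f^n(\bot) \mid n \in \N\}$, which exists because $D$ is a complete lattice. Scott-continuity of $f$ applied to this chain yields $f(d^\ast) = \sup\{f(f^n(\bot)) \mid n \in \N\} = \sup\{f^{n+1}(\bot) \mid n \in \N\}$; since $f^0(\bot) = \bot \sqsubseteq f^1(\bot)$, adjoining the $n=0$ term does not change this supremum, so $f(d^\ast) = d^\ast$ and $d^\ast$ is a fixed point. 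It is the least one: for any fixed point $d$ we have $\bot \sqsubseteq d$, and monotonicity plus induction give $f^n(\bot) \sqsubseteq f^n(d) = d$ for all $n$, so $d$ is an upper bound of the chain and hence $d^\ast \sqsubseteq d$.

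For part~(2), let $\mathrm{Pre} \defeq \{d \in D \mid f(d) \sqsubseteq d\}$ denote the set of pre-fixed points; $\inf \mathrm{Pre}$ exists since complete lattices have all infima. Every fixed point of $f$ lies in $\mathrm{Pre}$, so $d^\ast = \lfp\ f \in \mathrm{Pre}$, which immediately gives $\inf \mathrm{Pre} \sqsubseteq d^\ast$. For the reverse inequality I would show that $d^\ast$ is a lower bound of $\mathrm{Pre}$: if $f(d) \sqsubseteq d$, then $\bot \sqsubseteq d$ and, by induction using $f^{n+1}(\bot) \sqsubseteq f(d) \sqsubseteq d$ at the step, $f^n(\bot) \sqsubseteq d$ for all $n$; taking the supremum over $n$ gives $d^\ast \sqsubseteq d$. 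Combining the two inequalities yields $\lfp\ f = \inf \mathrm{Pre}$, which is in fact attained, i.e.\ a minimum.

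There is no essential difficulty here; this is the standard Kleene/Park argument. The only two points that require a moment of care are (i) verifying that $(f^n(\bot))_n$ is an $\omega$-chain before one is entitled to invoke Scott-continuity — this is precisely where monotonicity of $f$ enters — and (ii) observing that the supremum of an increasing chain is unaffected by dropping or re-adding its least element, so that $\sup\{f^{n+1}(\bot)\} = \sup\{f^n(\bot)\}$.
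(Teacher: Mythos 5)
Your proof is correct: it is the standard Kleene-iteration argument for part~(1) and the Park/Knaster--Tarski argument for part~(2), and you rightly flag the two points that need care (establishing that $(f^n(\bot))_{n\in\N}$ is an $\omega$-chain before invoking Scott-continuity, and that re-adjoining the bottom element leaves the supremum unchanged). The paper itself gives no proof of this theorem --- it is stated as a known result and cited to \cite{DBLP:journals/ipl/LassezNS82,Abramsky94domaintheory} --- and your argument is exactly the one found in those references.
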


\section{Semantics Using EFPS}
\label{apx:semantics}

\begin{corollary}[Partial Orders over \epgf]
	\label{apx:efps_po}
	$(\epgf, \preceq)$ as well as the point-wise lifting on functions $(\epgf \to \epgf, \sqsubseteq)$ are partial orders.
\end{corollary}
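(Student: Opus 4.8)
The plan is to obtain both statements as instances of the general lifting lemma for partial orders (\Cref{apx:order_liftings}), applied twice, after first checking that the underlying order on coefficients is a partial order.

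First I would observe that an element $F \in \efps$ is completely determined by its coefficient map $\coef{\cdot}{F}\colon \N^k \cup \{\lightning\} \to \R_{\geq 0}^\infty$, and conversely that every such map arises from a (unique) eFPS. Hence $\efps$ is in bijection with the function space $(\N^k \cup \{\lightning\}) \to \R_{\geq 0}^\infty$, and under this identification the order $\preceq$ from the definition of orders over $\efps$ --- namely $F \preceq G$ iff $\coef{\sigma}{F} \leq \coef{\sigma}{G}$ for all $\sigma \in \N^k \cup \{\lightning\}$ --- is \emph{precisely} the point-wise lifting of the order $\leq$ on $\R_{\geq 0}^\infty$ along the index set $\N^k \cup \{\lightning\}$. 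Since $\leq$ on $\R_{\geq 0}^\infty$ is a total order (using the convention $\infty \geq r$ for all $r$ fixed in the notation paragraph), it is in particular a partial order, so \Cref{apx:order_liftings} with domain $\N^k \cup \{\lightning\}$ and target $(\R_{\geq 0}^\infty, \leq)$ yields that $(\efps, \preceq)$ is a partial order.

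For the second part I would invoke \Cref{apx:order_liftings} once more, this time taking the partial order $(\efps, \preceq)$ just established as the target and $\efps$ itself as the (arbitrary) domain. The relation $\sqsubseteq$ on $\efps \to \efps$, defined by $\phi \sqsubseteq \psi$ iff $\phi(F) \preceq \psi(F)$ for all $F \in \efps$, is by construction the point-wise lifting of $\preceq$, so the lemma immediately gives that $(\efps \to \efps, \sqsubseteq)$ is a partial order.

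The proof is essentially bookkeeping; the only point that warrants care is the first step --- confirming that $\preceq$ genuinely \emph{is} the point-wise lifting of a partial order. This rests on (a) identifying an eFPS with its coefficient map so that the lemma applies verbatim, and (b) noting that $\leq$ on the extended half-line $\R_{\geq 0}^\infty$ is a bona fide partial order. Neither requires computation, so I do not anticipate a substantive obstacle.
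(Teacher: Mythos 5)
Your proposal is correct and matches the paper's own argument: both identify an eFPS with its coefficient map, recognize $\preceq$ as the point-wise lifting of $\leq$ on $\R_{\geq 0}^\infty$, and apply \Cref{apx:order_liftings} twice (once for $\efps$ and once for $\efps \to \efps$). No gaps.
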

\begin{proof}
	Consider the coefficient function $\coef{\cdot}{F} \in (\N^k \cup \{\lightning\} \to \R_{\geq 0})$ which uniquely determines the ePGF $F$.
	$\abs{F}$ for PGF is bounded by 1, hence in fact $\coef{\cdot}{F} \in (\N^k \cup \{\lightning\} \to [0,1]_\R)$.
	We think of the order $\preceq$ as acting on the domain $(\N^k \cup \{\lightning\} \to [0,1]_\R)$.
	Thus, $\preceq$ can be interpreted as the point-wise lifting of the (total) order $\leq$ on $[0,1]_\R$, i.e., $(\epgf, \preceq)$ is a partial order by applying \Cref{apx:order_liftings}.
	Since $\sqsubseteq$ is a point-wise lifting of $\preceq$, we can argue analogously for $(\epgf \to \epgf, \sqsubseteq)$.
\end{proof}

\begin{corollary}[$\omega$-CPOs over \epgf]
	\label{apx:efps_lattice}
	Both partial orders $(\epgf, \preceq)$ and $(\epgf \to \epgf, \sqsubseteq)$ are $\omega$-cpos.
\end{corollary}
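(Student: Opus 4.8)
The plan is to reduce the claim to the point-wise lifting lemma for complete lattices, exactly as \Cref{apx:efps_po} reduced the partial-order claim to the lifting lemma for partial orders. First I would record that $(\R_{\geq 0}^\infty, \leq)$ is itself a complete lattice: every subset $S \subseteq \R_{\geq 0}^\infty$ has a least upper bound, namely its ordinary supremum in $[0,\infty]$, with the conventions $\sup\emptyset = 0$ and $\sup S = \infty$ whenever $S$ is unbounded above; reflexivity, antisymmetry and transitivity of $\leq$ are immediate, and $\infty$ serves as the top element.

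Next, I would recall that an eFPS $F$ is completely determined by its coefficient map $\coef{\cdot}{F} \in (\N^k \cup \{\lightning\} \to \R_{\geq 0}^\infty)$, and that under this identification the order $\preceq$ on $\efps$ is precisely the point-wise lifting of $\leq$ along the index set $\N^k \cup \{\lightning\}$ (this is exactly the viewpoint already taken in the proof of \Cref{apx:efps_po}). Applying \Cref{apx:lattice_liftings} with $D' = \R_{\geq 0}^\infty$, $\preceq' = {\leq}$, and indexing domain $D = \N^k \cup \{\lightning\}$ then yields that $(\efps, \preceq)$ is a complete lattice, the supremum of a family $\{F_j\}_j$ being given coefficient-wise by $\coef{\sigma}{\sup_j F_j} = \sup_j \coef{\sigma}{F_j}$ for every $\sigma \in \N^k \cup \{\lightning\}$. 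No side condition is needed: since coefficients are allowed to take the value $\infty$, the coefficient-wise supremum of eFPSs is again a well-defined eFPS.

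For the second part, I would invoke \Cref{apx:lattice_liftings} once more, now with $D' = \efps$ and $\preceq' = {\preceq}$ (a complete lattice by the previous paragraph) and indexing domain $D = \efps$: by definition $\phi \sqsubseteq \psi$ iff $\phi(F) \preceq \psi(F)$ for all $F \in \efps$, so $\sqsubseteq$ is the point-wise lifting of $\preceq$, and hence $(\efps \to \efps, \sqsubseteq)$ is a complete lattice. This completes the proof. If desired one could additionally name the bottom element $\bot = \sup\emptyset$, i.e.\ the all-zero series, so as to connect directly with the fixed-point theorem \Cref{apx:fixedpoints} used later for the loop semantics.

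I do not expect a genuine obstacle here — the argument is a mechanical instantiation of an already-proved lemma. The only point worth a word of care is that the coefficient codomain is $\R_{\geq 0}^\infty$ rather than $\R_{\geq 0}$: this is precisely what makes suprema total and keeps the supremum inside $\efps$; had coefficients been required to be finite, completeness would fail.
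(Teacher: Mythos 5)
Your proposal is correct and follows essentially the same route as the paper: it identifies $\preceq$ as the point-wise lifting of $\leq$ on $\R_{\geq 0}^\infty$ and $\sqsubseteq$ as the point-wise lifting of $\preceq$, then applies \Cref{apx:lattice_liftings} twice. Your additional remarks (that $(\R_{\geq 0}^\infty,\leq)$ is itself complete, and that allowing $\infty$ as a coefficient is what keeps suprema inside \efps) merely make explicit what the paper leaves implicit.
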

\begin{proof}
	Analogously to the proof of \Cref{apx:efps_po} we note that $\preceq$ is a point-wise lifting of $\leq$ on $[0,1]_\R$, and $\sqsubseteq$ is a point-wise lifting on $\preceq$. 
	Therefore applying \Cref{apx:lattice_liftings} twice yields the claimed result. 
\end{proof}


\begin{lemma}[Continuity of $\Phi_{B,P}$]
	Let $P$ be a cpGCL program and let $B$ be a Boolean guard. The characteristic functional $\Phi_{B,P}$ is continuous on the domain $\left(\epgf\rightarrow\epgf, \sqsubseteq\right)$.
\end{lemma}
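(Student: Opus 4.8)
The plan is to unfold the definition of $\Phi_{B,P}$ and reduce the claim to a coefficient-wise statement about the extended non-negative reals. Recall from \Cref{tab:semanics} that $\Phi_{B,P}(f) = \lambda G.~ \coef{\lightning}{G}X_\lightning + \constrain{G}{\neg B} + f\bigl(\sem{P}(\constrain{G}{B})\bigr)$, where $\sem{P}\colon\efps\to\efps$ is the (total) unconditioned transformer of the loop body, already available by structural induction. First I would fix an $\omega$-chain $f_0 \ssqsubseteq f_1 \ssqsubseteq \cdots$ in $(\efps \to \efps)$; by \Cref{apx:efps_lattice} its supremum $f^\ast \defeq \sup_{n} f_n$ exists, and by the proof of \Cref{apx:lattice_liftings} it is computed pointwise, i.e.\ $f^\ast(H) = \sup_{n} f_n(H)$ for every $H \in \efps$. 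Since equality of functions and suprema in $(\efps \to \efps, \sqsubseteq)$ are again pointwise, it suffices to fix an arbitrary $G \in \efps$ and show $\Phi_{B,P}(f^\ast)(G) = \sup_{n}\bigl(\Phi_{B,P}(f_n)(G)\bigr)$.

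Next I would exploit that the summand $\coef{\lightning}{G}X_\lightning + \constrain{G}{\neg B}$ is independent of $f$, and that the argument $H_G \defeq \sem{P}(\constrain{G}{B})$ fed into $f$ is likewise independent of $f$ — only the outermost application of $f$ moves along the chain. Hence
\begin{align*}
  \Phi_{B,P}(f^\ast)(G) &= \coef{\lightning}{G}X_\lightning + \constrain{G}{\neg B} + f^\ast(H_G)\\
  &= \coef{\lightning}{G}X_\lightning + \constrain{G}{\neg B} + \sup_{n} f_n(H_G)~,
\end{align*}
using the pointwise description of $f^\ast$. It then remains to commute the fixed eFPS summand past the supremum, i.e.\ to prove $C + \sup_{n} a_n = \sup_{n}(C + a_n)$ for $C \defeq \coef{\lightning}{G}X_\lightning + \constrain{G}{\neg B} \in \efps$ and $a_n \defeq f_n(H_G)$.

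For this last step I would argue coefficient-wise. Addition on $\efps$ and the order $\preceq$ are both defined coefficient-wise, so (again via the proof of \Cref{apx:lattice_liftings}, now with carrier $\N^k\cup\{\lightning\}$ and codomain $\R_{\geq 0}^\infty$) one has $\coef{\tau}{\sup_{n} a_n} = \sup_{n}\coef{\tau}{a_n}$ for every $\tau \in \N^k \cup \{\lightning\}$, and the identity reduces to the elementary fact that $c + \sup_{n} r_n = \sup_{n}(c + r_n)$ in $\R_{\geq 0}^\infty$ for any $c$ and any non-decreasing sequence $(r_n)_n$, monotonicity of $(\coef{\tau}{a_n})_n$ being inherited from $f_0 \ssqsubseteq f_1 \ssqsubseteq \cdots$. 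Combining, $\Phi_{B,P}(f^\ast)(G) = \sup_{n}(C + a_n) = \sup_{n}\Phi_{B,P}(f_n)(G)$, establishing Scott-continuity; monotonicity of $\Phi_{B,P}$ then follows for free by \Cref{apx:cont_are_mono}. I expect the only genuinely delicate point to be the bookkeeping with $\R_{\geq 0}^\infty$-valued coefficients — in particular checking that the coefficient-wise supremum commutes with the fixed additive term even when $\infty$ occurs — since everything else is a direct unfolding of the definitions in \Cref{tab:semanics} together with the pointwise lifting lemmas.
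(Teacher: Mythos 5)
Your proof is correct and follows essentially the same route as the paper's: unfold $\Phi_{B,P}$, use that suprema in $(\efps\to\efps,\sqsubseteq)$ are computed pointwise, and commute the $f$-independent summand $\coef{\lightning}{G}X_\lightning + \constrain{G}{\neg B}$ past the supremum. The only difference is that you justify that last exchange coefficient-wise in $\R_{\geq 0}^\infty$, whereas the paper simply asserts it (\enquote{include constants in $\sup$}); your extra care there is welcome but not a different argument.
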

\begin{proof}
	\begin{align*}
	    \Phi_{B,P}(\sup S) &= \Phi_{B,P}(\lambda F.~\sup \left\{\psi(F) \mid \psi\in S\right\}) \\
	&= \lambda F.~ \coef{\lightning}{F}\cdot X_\lightning + \angles{F}_{\neg B} \\
	&\qquad + (\lambda F.~  \sup\,\left\{\psi(F) \mid \psi\in S\right\})(\sem{P}(\angles{F}_B)) \tag{Def. $\Phi_{B,P}$}\\
	&= \lambda F.~ \coef{\lightning}{F}\cdot X_\lightning + \angles{F}_{\neg B} + \sup\,\left\{\psi(\sem{P}(\angles{F}_B)) \mid \psi\in S\right\} \tag{Evaluate inner $\lambda$-function}\\
	&= \lambda F.~ \sup\,\left\{\coef{\lightning}{F}\cdot X_\lightning + \angles{F}_{\neg B} + \psi(\sem{P}(\angles{F}_B)) \mid \psi\in S\right\} \tag{Include constants in $\sup$}\\
	&= \sup\,\left\{\lambda F.~ \coef{\lightning}{F}\cdot X_\lightning + \angles{F}_{\neg B} + \psi(\sem{P}(\angles{F}_B)) \mid \psi\in S\right\} \tag{$\sup$ defined point-wise}\\
	&= \sup\,\left\{\Phi_{B,P}(\psi) \mid \psi \in S\right\} \tag{Def. $\Phi_{B,P}$}
\end{align*}
\end{proof}

\begin{lemma}[Continuity of Auxiliary Functions]\label{lem:contaux}
	For all $\sigma\in\N^k \cup \{\lightning\}$ and Boolean guards $B$, the following functions are continuous:
	\begin{enumerate}
		\item the coefficient function $\coef{\sigma}{}$ 
		\item the restriction $\angles{\cdot}_B$ 
		\item the mass $\abs{\cdot}$ 
	\end{enumerate}
\end{lemma}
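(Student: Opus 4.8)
The plan is to reduce all three statements to the single structural fact, supplied by \Cref{apx:efps_po,apx:efps_lattice} together with the point-wise lifting lemmas \Cref{apx:order_liftings,apx:lattice_liftings}, that suprema of $\omega$-chains in $(\efps,\preceq)$ are formed \emph{coordinate-wise}: for an $\omega$-chain $S = \{F_n \mid n\in\N\}\subseteq\efps$ and every index $\tau\in\N^k\cup\{\lightning\}$,
\[
\coef{\tau}{\sup S} \eeq \sup\,\{\coef{\tau}{F_n}\mid n\in\N\}\quad\text{in }(\R_{\geq 0}^\infty,\leq).
\]
Each of the three auxiliary functions is assembled from the coordinate projections $\coef{\tau}{\cdot}$, and $(\R_{\geq 0}^\infty,\leq)$ is itself a complete lattice, so I expect to establish Scott-continuity by a coefficient chase.

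\textbf{Ad (1) and (2).} The coefficient function $\coef{\sigma}{\cdot}\colon\efps\to\R_{\geq 0}^\infty$ is by definition the projection onto the $\sigma$-coordinate of the coefficient array, so the displayed identity \emph{is} the continuity statement $\coef{\sigma}{\sup S}=\sup\{\coef{\sigma}{F}\mid F\in S\}$ and nothing more is needed. For the filtering $\constrain{\cdot}{B}\colon\efps\to\efps$ I would note that, coordinate-wise, $\coef{\tau}{\constrain{F}{B}}$ equals $\coef{\tau}{F}$ when $\tau\in\N^k$ and $\tau\models B$, and equals $0$ otherwise. Fixing $S$ and $\tau$: in the first case $\coef{\tau}{\constrain{\sup S}{B}}=\coef{\tau}{\sup S}=\sup\{\coef{\tau}{F}\mid F\in S\}=\sup\{\coef{\tau}{\constrain{F}{B}}\mid F\in S\}$ by (1); in the second case both sides are $0=\sup\{0\}$. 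Since this holds for every $\tau$ and suprema in $\efps$ are coordinate-wise, $\constrain{\sup S}{B}=\sup\{\constrain{F}{B}\mid F\in S\}$, i.e.\ $\constrain{\cdot}{B}$ is continuous.

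\textbf{Ad (3).} Since $\abs{F}=\sum_{\sigma\in\N^k}\coef{\sigma}{F}$ is an \emph{infinite} sum, proving $\abs{\sup S}=\sup_n\abs{F_n}$ for an $\omega$-chain $S=\{F_n\}$ comes down to interchanging the supremum over $n$ with the sum over $\sigma$:
\[
\abs{\sup S}\eeq\sum_{\sigma\in\N^k}\coef{\sigma}{\sup S}\eeq\sum_{\sigma\in\N^k}\sup_{n}\coef{\sigma}{F_n},\qquad \sup_{n}\abs{F_n}\eeq\sup_{n}\sum_{\sigma\in\N^k}\coef{\sigma}{F_n},
\]
where both chains use (1) and the definition of $\abs{\cdot}$. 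So it remains to show $\sum_\sigma\sup_n a_{n,\sigma}=\sup_n\sum_\sigma a_{n,\sigma}$ for $a_{n,\sigma}\coloneqq\coef{\sigma}{F_n}\in\R_{\geq 0}^\infty$, where $a_{n,\sigma}\leq a_{m,\sigma}$ whenever $n\leq m$. This is the only non-routine step; I would settle it by the standard monotone-convergence argument for non-negative extended reals: for every finite $E\subseteq\N^k$ the finite interchange gives $\sum_{\sigma\in E}\sup_n a_{n,\sigma}=\sup_n\sum_{\sigma\in E}a_{n,\sigma}\leq\sup_n\abs{F_n}$, and taking the supremum over all finite $E$ yields $\sum_\sigma\sup_n a_{n,\sigma}\leq\sup_n\abs{F_n}$; conversely $\abs{F_n}=\sum_\sigma a_{n,\sigma}\leq\sum_\sigma\sup_m a_{m,\sigma}$ for every $n$, hence $\sup_n\abs{F_n}\leq\sum_\sigma\sup_n a_{n,\sigma}$. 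The main obstacle is therefore exactly this $\sup$/$\sum$ exchange; once it is in place, (1) and (2) follow essentially by definition-unwinding, and monotonicity of all three maps follows for free from continuity via \Cref{apx:cont_are_mono}.
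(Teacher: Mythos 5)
Your proposal is correct and follows essentially the same route as the paper: parts (1) and (2) are dispatched by the coefficient-wise definition of suprema on \efps, and part (3) reduces to interchanging $\sup$ and $\sum$ for monotone families in $\R_{\geq 0}^\infty$, which the paper settles by citing the Monotone Convergence Theorem and you settle by the standard finite-subset argument proving that same interchange. Nothing is missing; you have merely inlined the justification the paper leaves as a citation.
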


\begin{proof}
	1 and 2 follow directly from the coefficient-wise definition of $\sup$ on $\epgf$.
	For 3, let $S = \left\{F_i \mid i\in\N\right\}\subseteq \epgf$ be an $\omega$-chain with $F_0 \preceq F_1 \preceq F_2 \preceq \dots$. Then:
		\begin{align*}
		\abs{\sup S} &= \sum_{\sigma\in\N^k} \coef{\sigma}{\sup S} \\
		&= \sum_{\sigma\in\N^k} \sup\left\{\coef{\sigma}{F_i} \mid i\in \N\right\} \\
		&= \sup \left\{\sum_{\sigma\in\N^k} \coef{\sigma}{F_i} ~\big|~ i\in\N\right\} \tag{Monotone Convergence Theorem}\\
		&= \sup \left\{\abs{F_i} \mid i\in\N\right\}
		\end{align*}
\end{proof}

\begin{theorem}[Continuity of $\sem{\cdot}$] \label{thm:contsem}
	For every cpGCL program $P$, $\sem{P}$ is continuous on the domain $\left(\epgf\rightarrow\epgf\right)$.
\end{theorem}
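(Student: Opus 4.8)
The plan is to proceed by structural induction on the program $P$, showing that each clause in the definition of $\sem{\cdot}$ (Table~\ref{tab:semanics}) preserves Scott-continuity on $(\efps \to \efps, \sqsubseteq)$. The base cases are $\pskip$, $\assign{\progvar{x}_i}{E}$, and $\observe{B}$. For $\pskip$ continuity is trivial (it is the identity). For the assignment, the transformer re-indexes coefficients and sums them; since $\sup$ on $\efps$ is defined coefficient-wise and the reindexing $\sigma \mapsto (\sigma_1,\ldots,\eval{\sigma}{E},\ldots,\sigma_k)$ is fixed, continuity reduces to the fact that a (possibly infinite) coefficient-wise sum commutes with suprema of $\omega$-chains — which is exactly the Monotone Convergence Theorem argument already used in \Cref{lem:contaux}\,(3). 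For $\observe{B}$, the transformer is built from $\constrain{\cdot}{B}$, $\constrain{\cdot}{\neg B}$, $\abs{\cdot}$, and coefficient extraction, all of which are continuous by \Cref{lem:contaux}; one then notes that finite sums and the scalar map $(\alpha,F)\mapsto \alpha X_\lightning + F$ preserve continuity.

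For the inductive step I would handle the composite constructs. Sequential composition $\COMPOSE{P_1}{P_2}$ is $\sem{P_2}\circ\sem{P_1}$, and the composition of continuous functions is continuous. Probabilistic choice $\pchoice{P_1}{p}{P_2}$ is the map $G \mapsto p\cdot\sem{P_1}(G) + (1-p)\cdot\sem{P_2}(G)$; here I would invoke the induction hypothesis for $\sem{P_1},\sem{P_2}$ together with the fact that scalar multiplication by a constant in $[0,1]$ and coefficient-wise addition of two continuous functions are continuous (again coefficient-wise, since $\sup$ distributes over finite sums of $\omega$-chains). Conditional branching $\ITE{B}{P_1}{P_2}$ is treated the same way, composing $\sem{P_i}$ with the continuous restriction maps $\constrain{\cdot}{B}$, $\constrain{\cdot}{\neg B}$ and adding the passed-through $\coef{\lightning}{G}X_\lightning$ term.

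The loop case $\WHILEDO{B}{P_1}$ is where the real content lies: here $\sem{\WHILEDO{B}{P_1}} = \lfp\,\Phi_{B,P_1}$, and I would argue as follows. By the induction hypothesis $\sem{P_1}$ is continuous, hence (by the already-proved Continuity of $\Phi_{B,P}$ lemma, whose proof only uses continuity of $\sem{P_1}$, of $\constrain{\cdot}{\neg B}$, $\constrain{\cdot}{B}$, and of coefficient extraction) the functional $\Phi_{B,P_1}$ is continuous on $(\efps\to\efps,\sqsubseteq)$. By \Cref{apx:fixedpoints}(1), $\lfp\,\Phi_{B,P_1} = \sup\{\Phi_{B,P_1}^n(\bot)\mid n\in\N\}$, i.e. it is the supremum of an $\omega$-chain of continuous functions; since $(\efps\to\efps,\sqsubseteq)$ is a complete lattice with point-wise suprema (\Cref{apx:efps_lattice}), the pointwise supremum $\sup_n \Phi_{B,P_1}^n(\bot)$ is again a well-defined element of $\efps\to\efps$, and one checks directly that a pointwise supremum of an $\omega$-chain of Scott-continuous functions is Scott-continuous. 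This last fact is the one genuinely delicate point: it requires exchanging two suprema, $\sup_n \sup_m (\Phi_{B,P_1}^n(\bot))(s_m) = \sup_m \sup_n (\Phi_{B,P_1}^n(\bot))(s_m)$ for an $\omega$-chain $(s_m)_m$, which holds because suprema of a doubly-indexed monotone family over $\R_{\geq 0}^\infty$ commute. This establishes continuity of $\sem{\WHILEDO{B}{P_1}}$ and closes the induction.

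The main obstacle I anticipate is not any single construct but the bookkeeping in the loop case: one must be careful that the pointwise-supremum-of-continuous-functions-is-continuous lemma is applicable, i.e. that the relevant family is an $\omega$-chain (which follows from monotonicity of $\Phi_{B,P_1}$, itself a consequence of continuity by \Cref{apx:cont_are_mono}) and that the interchange of suprema is justified in the extended non-negative reals. Everything else reduces to the coefficient-wise structure of $\sup$ on $\efps$ plus the auxiliary continuity facts in \Cref{lem:contaux}, so those steps I would present compactly rather than in full detail.
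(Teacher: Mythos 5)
Your proposal is correct and follows essentially the same route as the paper: structural induction on $P$, the auxiliary continuity facts of \Cref{lem:contaux} for the base cases, composition/finite sums for the composite constructs, and for loops the Kleene chain $\sup_n \Phi_{B,P_1}^n(\bot)$ with an interchange of suprema over $\R_{\geq 0}^\infty$. The only point you leave implicit that the paper spells out is the inner induction showing that each $\Phi_{B,P_1}^n(\bot)$ is itself continuous as a map $\efps\to\efps$ (using the outer induction hypothesis on $\sem{P_1}$); this is needed before the sup-swap, since continuity of the functional $\Phi_{B,P_1}$ on $(\efps\to\efps,\sqsubseteq)$ does not by itself yield continuity of its iterates applied to $\bot$.
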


\begin{proof}
	Let $S\subseteq \epgf$ be an $\omega$-cpo. The proof proceeds by induction over the structure of $P$:
	
	\paragraph{Case $P = \pskip$:}
	\begin{align*}
	\sem{P}(\sup S) &= \sup S = \sup \left\{F \mid F\in S\right\} = \sup \left\{\sem{P}(F) \mid F\in S\right\}
	\end{align*}
	
	\paragraph{Case $P = \ASSIGN{\progvar{x}_i}{E}$:}
	\begin{align*}
	\sem{P}(\sup S) &= \sem{P}\left(\coef{\lightning}{\sup S} \cdot X_\lightning + \sum_{\sigma\in\N^k} \coef{\sigma}{\sup S} \cdot\mathbf{X}^\sigma\right) \\
	&= \coef{\lightning}{\sup S}\cdot X_\lightning + \sum_{\sigma\in\N^k} \coef{\sigma}{\sup S} \cdot X_1^{\sigma_1}\cdots X_i^{\eval{\sigma}{E}}\cdots X_k^{\sigma_k} \\
	&= \sup_{F \in S} \left\{\coef{\lightning}{F}\cdot X_\lightning + \sum_{\sigma\in\N^k} \coef{\sigma}{F} \cdot X_1^{\sigma_1}\cdots X_i^{\eval{\sigma}{E}}\cdots X_k^{\sigma_k} \right\} \\
	&= \sup_{F \in S} \left\{\sem{P}(F) \right\}
	\end{align*}
	
	\paragraph{Case $P = \observe{B}$:}
	\begin{align*}
	\sem{P}(\sup S) &= \left(\coef{\lightning}{\sup S} + \abs{\angles{\sup S}_{\neg B}}\right) \cdot X_\lightning + \angles{\sup S}_B \\
	&= \left(\coef{\lightning}{\sup S} + \sup \left\{\abs{\angles{F}_{\neg B}}\mid F\in S\right\}\right) \cdot X_\lightning \\
	&\qquad + \sup \left\{\angles{F}_B \mid F\in S\right\} \tag{Cont. of $\abs{\cdot}$, $\angles{\cdot}_B$} \\
	&= \sup \left\{\left(\coef{\lightning}{F} + \abs{\angles{F}_{\neg B}}\right)\cdot X_\lightning + \angles{F}_B\right\} \\
	&= \sup \left\{\sem{P}(F) \mid F\in S\right\}
	\end{align*}
	
	\paragraph{Case $P = \PCHOICE{P_1}{p}{P_2}$:}
	\begin{align*}
	\sem{P}(\sup S) &= p \cdot \sem{P_1}(\sup S) + (1-p) \cdot \sem{P_2}(\sup S) \\
	&= p \cdot \sup \left\{\sem{P_1}(F) \mid F\in S\right\} \\
	&\qquad + (1-p) \cdot \sup \left\{\sem{P_2}(F) \mid F\in S\right\} \tag{I.H. on $P_1$ and $P_2$} \\
	&= \sup \left\{p \cdot \sem{P_1}(F) + (1-p) \cdot \sem{P_2}(F) \mid F\in S\right\} \\
	&= \sup \left\{\sem{P}(F) \mid F\in S\right\}
	\end{align*}
	
	\paragraph{Case $P = \ITE{B}{P_1}{P_2}$:}
	\begin{align*}
	\sem{P}(\sup S) &= \coef{\lightning}{\sup S}\cdot X_\lightning + \sem{P_1}(\angles{\sup S}_P) + \sem{P_2}(\angles{\sup S}_{\neg B}) \\
	&= \coef{\lightning}{\sup S}\cdot X_\lightning + \sup \left\{\sem{P_1}(\angles{F}_P) \mid F\in S\right\}\\
	&\qquad + \sup \left\{\sem{P_2}(\angles{F}_{\neg B}) \mid F\in S\right\} \tag{I.H. on $P_1$ and $P_2$} \\
	&= \sup \left\{\coef{\lightning}{F}\cdot X_\lightning + \sem{P_1}(\angles{F}_P) + \sem{P_2}(\angles{F}_{\neg B} \mid F\in S\right\} \\
	&= \sup \left\{\sem{P}(F) \mid F\in S\right\}
	\end{align*}
	
	\paragraph{Case $P = P_1\fatsemi P_2$:}
	\begin{align*}
	\sem{P}(\sup S) &= \sem{P_2}\left(\sem{P_1}(\sup S)\right) \\
	&= \sem{P_2}\left(\sup \left\{\sem{P_1}(F)\mid F\in S\right\}\right) \tag{I.H. on $P_1$} \\
	&= \sup \left\{\sem{P_2}(\sem{P_1}(F)) \mid F\in S\right\} \tag{I.H. on $P_2$} \\
	&= \sup \left\{\sem{P}(F) \mid F\in S\right\}
	\end{align*}
	
	\paragraph{Case $P = \WHILEDO{B}{P_1}$:}
	In this case, we use that for all $n\in\N$, $\Phi_{B,P_1}^n(\bot)$ is continuous, which we prove by induction:
	\paragraph{Base case:} $n = 0$.
	\begin{align*}
	\Phi_{B,P_1}^0(\bot)(\sup S) &= 0 = \sup \left\{\Phi_{B,P_1}^0(\bot)(F) \mid F\in S\right\}
	\end{align*}
	
	\paragraph{Induction step:}
	\begin{align*}
	\Phi_{B,P_1}^{n+1}(\bot)(\sup S) &= \Phi_{B,P_1}\left(\Phi_{B,P_1}^n(\bot)\right)(\sup S) \\
	&= \coef{\lightning}{\sup S} + \angles{\sup S}_{\neg B} \\
	&\qquad + \Phi_{B,P_1}^n(\bot)(\sem{P_1}(\angles{\sup S}_B)) \tag{Def. $\Phi_{B,P_1}$} \\
	&= \coef{\lightning}{\sup S} + \sup \left\{\angles{F}_{\neg B} \mid F\in S\right\} \\
	&\qquad + \Phi_{B,P_1}^n(\bot)\left(\sup \left\{\sem{P_1}(\angles{F}_B) \mid F\in S\right\}\right) \tag{Cont. of $\angles{\cdot}_B$, outer I.H. on $P_1$} \\
	&= \coef{\lightning}{\sup S} + \sup \left\{\angles{F}_{\neg B} \mid F\in S\right\} \\
	&\quad + \sup \left\{\Phi_{B,P_1}^n(\bot)\left(\sem{P_1}(\angles{F}_B)\right) \mid F\in S\right\} \tag{Inner I.H.} \\
	&= \sup_{F \in S} \left\{\coef{\lightning}{F} + \angles{F}_{\neg B} + \Phi_{B,P_1}^n(\bot)\left(\sem{P_1}(\angles{F}_B)\right)\right\}\\
	&= \sup \left\{\Phi_{B,P_1}^{n+1}(\bot)(F) \mid F\in S\right\}
	\end{align*}
	
	With this, it follows:
	
	\begin{align*}
	\sem{P}(\sup S) &= \left(\sup_{n\in \N} \Phi_{B,P_1}^n(\bot) \right)(\sup S) \\
	&= \sup \left\{\Phi_{B,P_1}^n (\bot)(\sup S) \mid n\in\N\right\} \\
	&= \sup \left\{\sup \left\{\Phi_{B,P_1}^n(\bot)(F) \mid F\in S\right\}\mid n\in\N\right\} \tag{Cont. of $\Phi_{B,P_1}^n(\bot)$} \\
	&= \sup \left\{\sup \left\{\Phi_{B,P_1}^n(\bot)(F) \mid n\in\N\right\}\mid F\in S\right\} \tag{swap suprema} \\
	&= \sup \left\{\sup \left\{\Phi_{B,P_1}^n(\bot) \mid n\in\N\right\}(F) \mid F\in S\right\} \\
	&= \sup \left\{\sem{P}(F)\mid F\in S\right\}
	\end{align*}
\end{proof}

\begin{lemma}[Linearity of Auxiliary Functions]\label{lem:linaux}
	For all $\sigma\in\N^k \cup \{\lightning\}$, $a \in\R_{\geq 0}$, $F,G\in\efps$ and Boolean guards $B$, the following functions are linear:
	\begin{enumerate}
		\item The coefficient function $\coef{\sigma}{}$, i.e.\ $\coef{\sigma}{a F + G} = a\cdot\coef{\sigma}{F} + \coef{\sigma}{G}$. \label{lem:lincoeff}
		\item The restriction $\angles{\cdot}_B$, i.e.\ $\angles{a F + G}_B = a \cdot\angles{F}_B + \angles{G}_B$.\label{lem:linrestr}
		\item The mass $\abs{\cdot}$, i.e.\ $\abs{a F + G} = a\cdot\abs{F} + \abs{G}$.
	\end{enumerate}
\end{lemma}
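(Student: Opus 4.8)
The plan is to reduce all three parts to the coefficient-wise definitions of addition and scalar multiplication on \efps, using throughout that every coefficient lies in $\R_{\geq 0}^\infty$ so that the (possibly infinite) sums below may be split and rescaled term-by-term without convergence caveats.

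First I would dispatch part~(1): for any $\sigma \in \N^k \cup \{\lightning\}$ the identity $\coef{\sigma}{\alpha F + G} = \alpha\cdot\coef{\sigma}{F} + \coef{\sigma}{G}$ falls straight out of the definitions of $\alpha\cdot F$ and $F+G$, both of which are stipulated coefficient-wise; the case $\sigma = \lightning$ is handled exactly like the case $\sigma \in \N^k$. Beyond unfolding the definitions, the only thing invoked is that the extended arithmetic on $\R_{\geq 0}^\infty$ already satisfies $\alpha(a+b) = \alpha a + \alpha b$ and $(\alpha\beta)a = \alpha(\beta a)$ under the standard conventions.

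Next, for part~(2) I would expand $\angles{\alpha F + G}_B = \sum_{\sigma \models B}\coef{\sigma}{\alpha F + G}\,\bvec{X}^\sigma$, apply part~(1) under the sum to rewrite each summand as $(\alpha\,\coef{\sigma}{F} + \coef{\sigma}{G})\,\bvec{X}^\sigma$, and then split the sum over $\{\sigma \in \N^k \mid \sigma \models B\}$ and pull out $\alpha$, arriving at $\alpha\,\angles{F}_B + \angles{G}_B$; the $\lightning$-term contributes nothing here since filtering discards it. Part~(3) is then the special case $B = \TRUE$ with the monomials evaluated at $\bvec{1}$, i.e.\ $\abs{\alpha F + G} = \sum_{\sigma \in \N^k}\coef{\sigma}{\alpha F + G} = \alpha\sum_{\sigma \in \N^k}\coef{\sigma}{F} + \sum_{\sigma \in \N^k}\coef{\sigma}{G} = \alpha\,\abs{F} + \abs{G}$, again using that $\abs{\cdot}$ ignores the observation-violation coefficient.

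The argument is entirely routine; the only point requiring any care -- the \enquote{main obstacle}, such as it is -- is that both the scalar $\alpha$ and the coefficients may equal $\infty$, so one must respect the fixed conventions for $\R_{\geq 0}^\infty$ and invoke, as in the proof of \cref{lem:contaux}, that a sum of non-negative extended reals is unconditionally defined and hence may be reordered and rescaled term-by-term.
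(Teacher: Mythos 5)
Your proposal is correct and follows essentially the same route as the paper: part (1) by unfolding the coefficient-wise definitions of scalar multiplication and addition, part (2) by applying (1) under the filtered sum and splitting, and part (3) by reducing to the sum of coefficients over $\N^k$ (the paper phrases this as linearity of the substitution $F \mapsto F(\bvec{1})$, which is the same computation given $\abs{F} \defeq \sum_{\sigma\in\N^k}\coef{\sigma}{F}$). Your explicit remark about unconditional reordering of sums in $\R_{\geq 0}^\infty$ is a point the paper leaves implicit but is correctly handled.
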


\begin{proof}
	
	\begin{enumerate}
		\item follows from coefficient-wise addition and scalar multiplication on eFPS:
		\begin{align}
		a F + G &= a\cdot\left(\coef{\lightning}{F}\cdot X_\lightning + \sum_{\sigma\in\N^k}\coef{\sigma}{F}\cdot\mathbf{X}^\sigma\right) \\
		&\qquad + \left(\coef{\lightning}{G}\cdot X_\lightning + \sum_{\sigma\in\N^k}\coef{\sigma}{G}\cdot\mathbf{X}^\sigma\right) \nonumber\\
		&= \left(a\cdot\coef{\lightning}{F} + \coef{\lightning}{G}\right)\cdot X_\lightning + \sum_{\sigma\in\N^k}\left(a\cdot\coef{\sigma}{F} + \coef{\sigma}{G}\right)\cdot\mathbf{X}^\sigma \label{eq:coeff1}
		\intertext{By \cref{def:efps}:}
		a F+G &= \coef{\lightning}{a F + G}\cdot X_\lightning + \sum_{\sigma\in\N^k}\coef{\sigma}{a F+G}\cdot\mathbf{X}^\sigma \label{eq:coeff2}
		\end{align}
		
		Comparing coefficients of \cref{eq:coeff1} and \cref{eq:coeff2} yields $\coef{\sigma}{a F + G} = a\cdot \coef{\sigma}{F} + \coef{\sigma}{G}$ for all $\sigma\in\N^k \cup \{\lightning\}$.
		
		\item Using the result of \ref{lem:lincoeff}:
		\begin{align*}
		\angles{a F + G}_B &= \sum_{\sigma\models B} \coef{\sigma}{a F+G}\cdot\mathbf{X}^\sigma \\
		&= \sum_{\sigma\models B} \left(a\cdot\coef{\sigma}{F} + \coef{\sigma}{G}\right)\cdot\mathbf{X}^\sigma \\
		&= a\cdot\sum_{\sigma\models B}\coef{\sigma}{F}\cdot\mathbf{X}^\sigma + \sum_{\sigma\models B}\coef{\sigma}{G}\cdot\mathbf{X}^\sigma \\
		&= a\cdot\angles{F}_B + \angles{G}_B
		\end{align*}
		
		\item follows directly from the linearity of the coefficient function \ref{lem:lincoeff}:
		\begin{align*}
					\abs{a F + G} &= \sum_{\sigma \in \N^k}{\coef{\sigma}{a F + G}} \\
						&= \sum_{\sigma \in \N^k}\left(a\cdot\coef{\sigma}{F} + \coef{\sigma}{G}\right) \\
						&= a \cdot \sum_{\sigma \in \N^k}{\coef{\sigma}{F}} + \sum_{\sigma \in \N^k}{\coef{\sigma}{G}} \\
						&= a\cdot\abs{F} + \abs{G}
		\end{align*}
	\end{enumerate}
\end{proof}

\begin{lemma}[$\Phi_{B,P}$ Preserves Linearity]\label{lem:linpreserve}
	Let $\psi\colon\epgf\rightarrow\epgf$ be a linear function, i.e., for all $a \in \R, F, G \in \epgf$, $aF + G \in \epgf$ implies $\psi(aF+G) = a\cdot\psi(F)+\psi(G)$.
	If $\sem{P}$ is linear then $\Phi_{B,P}(\psi)$ is linear as well.
\end{lemma}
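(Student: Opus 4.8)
The plan is to unfold the definition of $\Phi_{B,P}(\psi)$ from \Cref{tab:semanics}, namely $\Phi_{B,P}(\psi) = \lambda G.~\coef{\lightning}{G}X_\lightning + \constrain{G}{\neg B} + \psi(\sem{P}(\constrain{G}{B}))$, and then push a linear combination $\alpha F + G$ through it one summand at a time, invoking in turn: linearity of the coefficient extractor $\coef{\lightning}{\cdot}$ and of the filtering operators $\constrain{\cdot}{B}, \constrain{\cdot}{\neg B}$ (\Cref{lem:linaux}\,(1) and (2)), then linearity of $\sem{P}$ (the hypothesis), and finally linearity of $\psi$ (the hypothesis). Since addition and scalar multiplication on $\efps$ are coefficient-wise, the three resulting summands can be regrouped freely to recover $\alpha\cdot\Phi_{B,P}(\psi)(F) + \Phi_{B,P}(\psi)(G)$.

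Concretely, fix $\alpha \in \R_{\geq 0}^\infty$ and $F,G \in \efps$. Starting from $\Phi_{B,P}(\psi)(\alpha F + G) = \coef{\lightning}{\alpha F + G}X_\lightning + \constrain{\alpha F + G}{\neg B} + \psi(\sem{P}(\constrain{\alpha F + G}{B}))$, I first apply \Cref{lem:linaux}\,(1) to get $\coef{\lightning}{\alpha F + G}X_\lightning = (\alpha\coef{\lightning}{F} + \coef{\lightning}{G})X_\lightning$, and \Cref{lem:linaux}\,(2) to get $\constrain{\alpha F + G}{\neg B} = \alpha\constrain{F}{\neg B} + \constrain{G}{\neg B}$ and $\constrain{\alpha F + G}{B} = \alpha\constrain{F}{B} + \constrain{G}{B}$. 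Feeding the last identity into $\sem{P}$ and using its assumed linearity gives $\sem{P}(\constrain{\alpha F + G}{B}) = \alpha\sem{P}(\constrain{F}{B}) + \sem{P}(\constrain{G}{B})$; applying $\psi$ and its assumed linearity gives $\psi(\sem{P}(\constrain{\alpha F + G}{B})) = \alpha\,\psi(\sem{P}(\constrain{F}{B})) + \psi(\sem{P}(\constrain{G}{B}))$. Collecting the $\alpha$-scaled terms and the unscaled terms separately, and using the coefficient-wise definition of $+$ and scalar $\cdot$ on $\efps$ once more, the right-hand side equals $\alpha\big(\coef{\lightning}{F}X_\lightning + \constrain{F}{\neg B} + \psi(\sem{P}(\constrain{F}{B}))\big) + \big(\coef{\lightning}{G}X_\lightning + \constrain{G}{\neg B} + \psi(\sem{P}(\constrain{G}{B}))\big) = \alpha\cdot\Phi_{B,P}(\psi)(F) + \Phi_{B,P}(\psi)(G)$, which is the claim.

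There is essentially no real obstacle: the argument is a routine chain of substitutions. The only points requiring a moment's care are (i) that the scalar $\alpha$ ranges over $\R_{\geq 0}^\infty$, which is already accommodated by \Cref{lem:linaux}, and (ii) ensuring the observation-violation indeterminate $X_\lightning$ and the program-variable part of the series are handled simultaneously without mixing — this is automatic, since all operators involved ($\coef{\lightning}{\cdot}$, $\constrain{\cdot}{B}$, $\sem{P}$, $\psi$) map $\efps$ to $\efps$ and the ambient addition is coefficient-wise, so no ill-defined monomial such as $X_\lightning^2$ can arise.
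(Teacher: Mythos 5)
Your proof is correct and takes essentially the same route as the paper's own: unfold the characteristic functional, apply the linearity of $\coef{\lightning}{\cdot}$ and of the filtering operator $\constrain{\cdot}{B}$ from \Cref{lem:linaux}, then invoke the assumed linearity of $\sem{P}$ and of $\psi$ in turn, and regroup the coefficient-wise sums. There is nothing to add or correct.
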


\begin{proof}
	\begin{align*}
	\;& \Phi_{B,P}(\psi)(a F + G) \\
	=\;& \left(\lambda F.~ \coef{\lightning}{F} X_\lightning + \angles{F}_{\neg B} + \psi(\sem{P}(\angles{F}_B))\right)(a F + G) \\
	=\;& \coef{\lightning}{a F + G} X_\lightning + \angles{a F + G}_{\neg B} + \psi(\sem{P}(\angles{a F+G}_B)) \\
	=\;& (a\coef{\lightning}{F} + \coef{\lightning}{G})\cdot X_\lightning + a\angles{F}_{\neg B} + \angles{G}_{\neg B} + \psi(\sem{P}(a\angles{F}_B + \angles{G}_B)) \tag{Lin. of $\angles{\cdot}_B$ (\cref{lem:linaux})} \\
	=\;& (a\coef{\lightning}{F} + \coef{\lightning}{G})\cdot X_\lightning + a\angles{F}_{\neg B} + \angles{G}_{\neg B} \\
	&\qquad + \psi(a\sem{P}(\angles{F}_B) + \sem{P}(\angles{G}_B)) \tag{Lin. of $\sem{P}$} \\
	=\;&  (a\coef{\lightning}{F} + \coef{\lightning}{G})\cdot X_\lightning + a\angles{F}_{\neg B} + \angles{G}_{\neg B} \\
	&\qquad+ a\cdot\psi(\sem{P}(\angles{F}_B))  + \psi(\sem{P}(\angles{G}_B)) \tag{Lin. of $\psi$} \\
	=\;& a \cdot \left(\coef{\lightning}{F} X_\lightning + \angles{F}_{\neg B} + \psi(\sem{P}(\angles{F}_B))\right) \\
	&\qquad + \left(\coef{\lightning}{G} X_\lightning + \angles{G}_{\neg B} + \psi(\sem{P}(\angles{G}_B))\right) \\
	=\;& a \cdot \Phi_{B,P}(\psi)(F) + \Phi_{B,P}(\psi)(G)
	\end{align*}
\end{proof}

\begin{corollary}\label{cor:linphin}
	If $\sem{P}$ is linear, then $\Phi_{B,P}^n(\bot)$ is linear for all $n\in\N$, i.e.
	\[\Phi_{B,P}^n(\bot)(a F+G) = a\cdot\Phi_{B,P}^n(\bot)(F) + \Phi_{B,P}^n(\bot)(G).\]
\end{corollary}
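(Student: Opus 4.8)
The plan is a routine induction on $n$, with the inductive step handed entirely to \Cref{lem:linpreserve}. First I would unfold what $\Phi_{B,P}^n(\bot)$ means: $\bot$ is the least element of the complete lattice $(\efps \to \efps, \sqsubseteq)$ (cf.\ \Cref{apx:efps_lattice}), namely the constant function mapping every $F \in \efps$ to the zero eFPS, and $\Phi_{B,P}^n$ denotes the $n$-fold application of the characteristic functional.

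\emph{Base case $n = 0$.} Here $\Phi_{B,P}^0(\bot) = \bot$, so for all $\alpha \in \R_{\geq 0}^\infty$ and $F, G \in \efps$ we have $\bot(\alpha F + G) = \bvec{0} = \alpha \cdot \bvec{0} + \bvec{0} = \alpha \cdot \bot(F) + \bot(G)$, i.e.\ $\bot$ is linear. (This is the only place where one has to look at the object concretely; it is immediate.)

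\emph{Inductive step.} Suppose $\Phi_{B,P}^n(\bot)$ is linear. Then $\Phi_{B,P}^{n+1}(\bot) = \Phi_{B,P}\bigl(\Phi_{B,P}^n(\bot)\bigr)$, and applying \Cref{lem:linpreserve} with $\psi \coloneqq \Phi_{B,P}^n(\bot)$ — whose hypotheses are exactly that $\psi$ is linear (the induction hypothesis) and that $\sem{P}$ is linear (the standing assumption of the corollary) — yields that $\Phi_{B,P}^{n+1}(\bot)$ is linear. By induction the claim holds for all $n \in \N$.

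I do not expect any real obstacle: the substance is already packaged in \Cref{lem:linpreserve}, which in turn rests on the linearity of the auxiliary operations $\coef{\sigma}{\cdot}$, $\angles{\cdot}_B$, and $\abs{\cdot}$ established in \Cref{lem:linaux}. The only point worth stating explicitly is the identification of $\Phi_{B,P}^0(\bot)$ with the constant zero transformer so that the base case is visibly linear; everything after that is mechanical. (This corollary is the stepping stone one then uses, together with continuity of $\sem{P}$ from \Cref{thm:contsem}, to conclude that $\sem{\WHILEDO{B}{P}} = \sup_{n} \Phi_{B,P}^n(\bot)$ is itself linear whenever $\sem{P}$ is.)
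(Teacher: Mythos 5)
Your proof is correct and follows exactly the paper's argument: induction on $n$, with the base case $\Phi_{B,P}^0(\bot)=\bot$ being the constant zero transformer (hence trivially linear) and the inductive step discharged by \Cref{lem:linpreserve} applied to $\psi=\Phi_{B,P}^n(\bot)$ under the standing assumption that $\sem{P}$ is linear. Nothing is missing.
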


\begin{proof}
	By induction:
	
	\paragraph{Base case:} $n = 0$.
	$\Phi_{B,P}^0(\bot) = \bot$ is linear, as $\bot(a F+G) = 0 = a\cdot\bot(F) + \bot(G)$.
	
	\paragraph{Induction step:}
	By the induction hypothesis $\Phi_{B,P}^n(\bot)$ is a linear function. Therefore, $\Phi_{B,P}^{n+1}(\bot) = \Phi_{B,P}(\Phi_{B,P}^n(\bot))$ is also linear by \cref{lem:linpreserve}.
\end{proof}

\begin{theorem}[Linearity of $\sem{\cdot}$] \label{thm:linsem}
	The semantics transformer $\sem{\cdot}$ is linear, i.e.\ for any cpGCL program P
	\[\sem{P}(a F + G) = a \cdot \sem{P}(F) + \sem{P}(G).\]
\end{theorem}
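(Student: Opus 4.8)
The plan is to proceed by structural induction on $P$, reducing each case to the linearity lemmas for the auxiliary operations already established (\cref{lem:linaux}: the coefficient functions $\coef{\sigma}{\cdot}$, the restrictions $\angles{\cdot}_B$, and the mass $\abs{\cdot}$ are all linear), and handling the loop via the fixed-point characterization together with $\Phi^n_{B,P}$-preservation of linearity (\cref{lem:linpreserve,cor:linphin}). Throughout we fix $\alpha\in\R_{\geq 0}^\infty$ and $F,G\in\efps$, and use that addition and scalar multiplication on $\efps$ are defined coefficient-wise, so linearity can be verified coefficient by coefficient in $\R_{\geq 0}^\infty$.

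First I would dispatch the straight-line cases. For $P=\pskip$ linearity is immediate. For $P=\ASSIGN{\progvar{x}_i}{E}$ the transformer merely relabels the exponent of $X_i$ according to $\eval{\sigma}{E}$ while keeping each coefficient $\coef{\sigma}{G}$ and $\coef{\lightning}{G}$ intact; since the coefficient extraction is linear (\cref{lem:linaux}\,\ref{lem:lincoeff}), so is the assignment. For $P=\observe{B}$ one writes $\sem{P}(H) = (\coef{\lightning}{H}+\abs{\angles{H}_{\neg B}})X_\lightning + \angles{H}_B$ and invokes linearity of $\coef{\lightning}{\cdot}$, $\abs{\cdot}$, and $\angles{\cdot}_B$. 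For $P = \PCHOICE{P_1}{p}{P_2}$ the result follows from the induction hypotheses on $P_1,P_2$ since a $p$-weighted sum of linear maps is linear; for $P=\ITE{B}{P_1}{P_2}$ likewise, additionally using linearity of $\angles{\cdot}_B$ (and of $\coef{\lightning}{\cdot}$ for the passed-through error term). For $P_1\fatsemi P_2$, linearity is closure of linear maps under composition, using both induction hypotheses.

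The interesting case is $P=\WHILEDO{B}{P_1}$. Here I would use $\sem{P} = \lfp\,\Phi_{B,P_1} = \sup_{n\in\N}\Phi^n_{B,P_1}(\bot)$ from \cref{apx:fixedpoints} (and the continuity of $\sem{\cdot}$ from \cref{thm:contsem} to justify this form). By the induction hypothesis $\sem{P_1}$ is linear, hence by \cref{cor:linphin} each $\Phi^n_{B,P_1}(\bot)$ is linear; thus $\Phi^n_{B,P_1}(\bot)(\alpha F + G) = \alpha\cdot\Phi^n_{B,P_1}(\bot)(F) + \Phi^n_{B,P_1}(\bot)(G)$ for every $n$. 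It then remains to push the supremum through the affine combination:
\[
\sem{P}(\alpha F + G) = \sup_{n}\bigl(\alpha\cdot\Phi^n_{B,P_1}(\bot)(F) + \Phi^n_{B,P_1}(\bot)(G)\bigr) = \alpha\cdot\sup_{n}\Phi^n_{B,P_1}(\bot)(F) + \sup_{n}\Phi^n_{B,P_1}(\bot)(G).
\]
This is where the main obstacle lies: one must argue that scalar multiplication and coefficient-wise addition commute with suprema of $\omega$-chains. Since $(\Phi^n_{B,P_1}(\bot))_n$ is an ascending chain (monotonicity of $\Phi_{B,P_1}$), the chains $(\Phi^n_{B,P_1}(\bot)(F))_n$ and $(\Phi^n_{B,P_1}(\bot)(G))_n$ are ascending in $(\efps,\preceq)$, and on $\efps$ suprema are taken coefficient-wise in $\R_{\geq 0}^\infty$; for two monotone sequences in $\R_{\geq 0}^\infty$ one has $\sup_n(a_n + b_n) = \sup_n a_n + \sup_n b_n$ and $\sup_n(\alpha a_n) = \alpha\sup_n a_n$ (the Monotone Convergence argument already used in \cref{lem:contaux}; the conventions on $\R_{\geq 0}^\infty$ cover the case $\alpha=\infty$). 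Applying this coefficientwise gives the displayed identity, and the two suprema on the right are by definition $\sem{P}(F)$ and $\sem{P}(G)$, completing the induction.
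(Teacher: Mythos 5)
Your proposal is correct and follows essentially the same route as the paper's proof: structural induction on $P$, discharging the straight-line cases via the linearity of the auxiliary operations (\cref{lem:linaux}) and the loop case via the Kleene characterization $\sem{P}=\sup_n\Phi^n_{B,P_1}(\bot)$ together with \cref{cor:linphin}. If anything, you justify the final exchange of the supremum with the affine combination more explicitly than the paper does, which simply asserts that step.
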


\begin{proof}
	By induction over the structure of $P$:
	
	\paragraph{Case $P = \pskip$:}
	\begin{align*}
	\sem{P}(a F+G) &= a F + G \\
	&= a \sem{P}(F) + \sem{P}(G)
	\end{align*}
	
	\paragraph{Case $P = \ASSIGN{\progvar{x}_i}{E}$:}
	\begin{align*}
	\;& \sem{P}(a F+G)\\
	&= \sem{P}\big((a \coef{\lightning}{F} + \coef{\lightning}{G})X_\lightning \\
	&\qquad + \sum_{\sigma\in\N^k} (a\coef{\sigma}{F} + \coef{\sigma}{G})\mathbf{X}^\sigma\big) \\
	&= (a \coef{\lightning}{F} + \coef{\lightning}{G})X_\lightning \\
	&\qquad+ \sum_{\sigma\in\N^k} (a\coef{\sigma}{F} + \coef{\sigma}{G})X_1^{\sigma_1}\cdots X_i^{\eval{\sigma}{E}}\cdots X_k^{\sigma_k} \\
	&= a\cdot\left(\coef{\lightning}{F} X_\lightning + \sum_{\sigma\in\N^k} \coef{\sigma}{F} X_1^{\sigma_1}\cdots X_i^{\eval{\sigma}{E}}\cdots X_k^{\sigma_k} \right) \\
	&\qquad + \left(\coef{\lightning}{G} X_\lightning + \sum_{\sigma\in\N^k} \coef{\sigma}{G} X_1^{\sigma_1}\cdots X_i^{\eval{\sigma}{E}}\cdots X_k^{\sigma_k}\right) \\
	&= a\cdot\sem{P}(F) + \sem{P}(G)
	\end{align*}
	
	\paragraph{Case $P = \observe{B}$:}
	\begin{align*}
	\;& \sem{P}(a F+G)\\
	=\;& (a\coef{\lightning}{F} + \coef{\lightning}{G} + \abs{\angles{a F + G}_{\neg B}})X_\lightning + \angles{a F + G}_B \\
	=\;& (a\coef{\lightning}{F} + \coef{\lightning}{G} + a\abs{\angles{F}_{\neg B}} + \abs{\angles{G}_{\neg B}})X_\lightning \\
	&\qquad + a\angles{F}_B + \angles{G} \tag{Lin. of $\angles{\cdot}_B$ (\cref{lem:linaux})}\\
	=\;& a\left((\coef{\lightning}{F} + \abs{\angles{F}_{\neg B}})X_\lightning + \angles{F}_B\right) \\
	& + \left((\coef{\lightning}{G} + \abs{\angles{G}_{\neg B}})X_\lightning + \angles{G}_B\right) \\
	=\;& a \sem{P}(F) + \sem{P}(G)
	\end{align*}
	
	\paragraph{Case $P = \PCHOICE{P_1}{p}{P_2}$:}
	\begin{align*}
	\;& \sem{P}(a F+G)\\
	=\;& p\cdot \sem{P_1}(a F + G) + (1-p) \cdot \sem{P_2}(a F + G)\\
	=\;& p \cdot (a\sem{P_1}(F) + \sem{P_1}(G)) + (1-p) \cdot (a\sem{P_2}(F) + \sem{P_2}(G)) \tag{I.H.} \\
	=\;& a \left(p \cdot\sem{P_1}(F) + (1-p) \cdot \sem{P_2}(F)\right) \\
	&\qquad + \left(p \cdot\sem{P_1}(G) + (1-p) \cdot \sem{P_2}(G)\right) \\
	=\;& a \sem{P}(F) + \sem{P}(G)
	\end{align*}
	
	\paragraph{Case $P = \ITE{B}{P_1}{P_2}$:}
	\begin{align*}
	\;& \sem{P}(a F+G)\\
	=\;& (a\coef{\lightning}{F} + \coef{\lightning}{G}) X_\lightning + \sem{P_1}(\angles{a F + G}_B) + \sem{P_2}(\angles{a F + G}_{\neg B})\\
	=\;& (a\coef{\lightning}{F} + \coef{\lightning}{G}) X_\lightning + \sem{P_1}(a\angles{F}_B + \angles{G}_B) \\
	&\qquad + \sem{P_2}(a\angles{F}_{\neg B} + \angles{G}_{\neg B}) \tag{Lin. of $\angles{\cdot}_B$ (\cref{lem:linaux})}\\
	=\;&  (a\coef{\lightning}{F} + \coef{\lightning}{G}) X_\lightning + a\sem{P_1}(\angles{F}_B) + \sem{P_1}(\angles{G}_B) \\
	& \qquad + a\sem{P_2}(\angles{F}_{\neg B}) + \sem{P_2}(\angles{G}_{\neg B}) \tag{I.H.} \\
	=\;& a\left(\coef{\lightning}{F} X_\lightning + \sem{P_1}(\angles{F}_B) + \sem{P_2}(\angles{F}_{\neg B}\right)  \\
	&\qquad + \left(\coef{\lightning}{G} X_\lightning + \sem{P_1}(\angles{G}_B) + \sem{P_2}(\angles{G}_{\neg B}\right) \\
	=\;& a \sem{P}(F) + \sem{P}(G)
	\end{align*}
	
	\paragraph{Case $P = P_1\fatsemi P_2$:}
	\begin{align*}
	\;& \sem{P}(a F+G)\\
	=\;& \sem{P_2}(\sem{P_1}(a F + G)) \\
	=\;& \sem{P_2}(a \sem{P_1}(F) + \sem{P_1}(G)) \tag{I.H.}\\
	=\;& a \sem{P_2}(\sem{P_1}(F)) + \sem{P_2}(\sem{P_1}(G)) \tag{I.H.}\\
	=\;& a \sem{P}(F) + \sem{P}(G)
	\end{align*}
	
	\paragraph{Case $P = \WHILEDO{B}{P_1}$:}
	\begin{align*}
	\;& \sem{P}(a F+G)\\
	=\;& (\lfp\; \Phi_{B,P_1})(a F + G) \\
	=\;& \left(\sup \left\{\Phi_{B,P_1}^n(\bot)\mid n\in\N\right\}\right)(a F + G) \\
	=\;& \sup \left\{\Phi_{B,P_1}^n(\bot)(a F + G) \mid n\in\N\right\}\\
	=\;& \sup \left\{a\cdot\Phi_{B,P_1}^n(\bot)(F) + \Phi_{B,P_1}^n(\bot)(G) \mid n\in\N\right\} \tag{\cref{cor:linphin}, $\sem{P_1}$ lin. by I.H.} \\
	=\;& a \cdot \sup \left\{\Phi_{B,P_1}^n(\bot)(F) \mid n\in\N\right\} + \sup \left\{\Phi_{B,P_1}^n(\bot)(G) \mid n\in\N\right\}\\
	=\;& a \cdot \left(\sup \left\{\Phi_{B,P_1}^n(\bot)\mid n\in\N\right\}\right)(F)  + \left(\sup \left\{\Phi_{B,P_1}^n(\bot) \mid n\in\N\right\}\right)(G) \\
	=\;& a \cdot (\lfp\; \Phi_{B,P_1})(F) + (\lfp\; \Phi_{B,P_1})(G) \\
	=\;& a \sem{P}(F) + \sem{P}(G)
	\end{align*}
\end{proof}

\begin{lemma}[Error Term Pass-Through]
	For every program $P$ and every $F\in\epgf$, the error term $\coef{\lightning}{F} X_\lightning$ passes through the transformer unaffected, i.e.
	\[\sem{P}(F) = \sem{P}\left(\sum_{\sigma\in\N^k} \coef{\sigma}{F} \mathbf{X}^\sigma\right) + \coef{\lightning}{F} X_\lightning.\]
\end{lemma}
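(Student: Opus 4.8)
The plan is to prove the Error Term Pass-Through lemma by structural induction on the program $P$, mirroring the inductive proofs of continuity (\Cref{thm:contsem}) and linearity (\Cref{thm:linsem}) already carried out in the appendix. Write $F = F' + \coef{\lightning}{F} X_\lightning$ where $F' = \sum_{\sigma\in\N^k} \coef{\sigma}{F}\mathbf{X}^\sigma$ is the "error-free part" of $F$; the goal is to show $\sem{P}(F) = \sem{P}(F') + \coef{\lightning}{F} X_\lightning$ for every $P$. The key structural observation, which the informal discussion after \Cref{tab:semanics} already flags ("the observe violation term is just passed through all instructions but $\observe{B}$"), is that no transformer rule ever reads, rescales, or redistributes $\coef{\lightning}{\cdot}$ except the $\observe{B}$ rule, and even there the incoming $\coef{\lightning}{G}$ appears only as an additive summand $\bigl(\coef{\lightning}{G} + \abs{\constrain{G}{\neg B}}\bigr)X_\lightning$.

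First I would dispatch the base cases. For $P = \pskip$ it is immediate. For $P = \ASSIGN{\progvar{x}_i}{E}$, the rule only touches the $\mathbf{X}^\sigma$ monomials and leaves $\coef{\lightning}{G}X_\lightning$ verbatim, so the claim is a one-line reading of the rule. For $P = \observe{B}$: since $\constrain{\cdot}{B}$ annihilates the $X_\lightning$ term by definition of filtering, we have $\constrain{F}{B} = \constrain{F'}{B}$ and $\abs{\constrain{F}{\neg B}} = \abs{\constrain{F'}{\neg B}}$, hence $\sem{\observe{B}}(F) = \bigl(\coef{\lightning}{F} + \coef{\lightning}{F'} + \abs{\constrain{F'}{\neg B}}\bigr)X_\lightning + \constrain{F'}{B}$, and since $\coef{\lightning}{F'} = 0$ this equals $\sem{\observe{B}}(F') + \coef{\lightning}{F}X_\lightning$. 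Then come the inductive cases: sequential composition follows by applying the hypothesis to $P_1$ and then to $P_2$ (the $\coef{\lightning}{}$ term produced after $P_1$ is itself passed through $P_2$ by the hypothesis); probabilistic choice follows from the hypothesis on $P_1$ and $P_2$ together with $pa + (1-p)a = a$ to recombine the two copies of $\coef{\lightning}{F}X_\lightning$; and conditional branching follows because $\constrain{F}{B} = \constrain{F'}{B}$, $\constrain{F}{\neg B} = \constrain{F'}{\neg B}$, so the explicit $\coef{\lightning}{F}X_\lightning$ summand in the $\ITE{}{}{}$ rule is the only source of the error term.

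The remaining case, and what I expect to be the main obstacle, is $P = \WHILEDO{B}{P_1}$, whose semantics is $\lfp\,\Phi_{B,P_1}$. The plan is to first prove, by an inner induction on $n$, that each finite iterate $\Phi_{B,P_1}^n(\bot)$ satisfies the pass-through property, using the outer induction hypothesis on $P_1$ and the already-established pass-through behaviour of $\constrain{\cdot}{\neg B}$ and the $\coef{\lightning}{\cdot}X_\lightning$ summand appearing in the definition of $\Phi_{B,P_1}$ (exactly as in the $\ITE{}{}{}$ case); the recombination of the two error summands again uses that $\coef{\lightning}{\Phi_{B,P_1}^n(\bot)(\cdot)}$ is itself handled by the inner hypothesis. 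Then, since $\sem{\WHILEDO{B}{P_1}}(F) = \sup_n \Phi_{B,P_1}^n(\bot)(F)$ and the coefficient function $\coef{\lightning}{\cdot}$ together with suprema are computed coefficient-wise (\Cref{lem:contaux}), the supremum of the pass-through identities for the iterates yields the identity for the loop — the constant term $\coef{\lightning}{F}X_\lightning$ simply factors out of the supremum since it does not depend on $n$. The only real care needed is to make the bookkeeping of "the error term generated mid-computation is again passed through the rest" precise, which is why the inner induction on $\Phi_{B,P_1}^n(\bot)$ (rather than a direct appeal to the fixed point) is the cleanest route.
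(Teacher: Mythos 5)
Your proof is correct in substance but takes a different route from the paper's. The paper first establishes linearity of $\sem{\cdot}$ as a standalone theorem (\cref{thm:linsem}, itself a full structural induction), writes $\sem{P}(F) = \sem{P}\bigl(\sum_{\sigma}\coef{\sigma}{F}\mathbf{X}^\sigma\bigr) + \coef{\lightning}{F}\cdot\sem{P}(X_\lightning)$, and thereby reduces the lemma to the single identity $\sem{P}(X_\lightning) = X_\lightning$, which it then proves by structural induction with a fixed, very simple input. You instead run one structural induction directly on the two-sided statement $\sem{P}(F' + \coef{\lightning}{F}X_\lightning) = \sem{P}(F') + \coef{\lightning}{F}X_\lightning$ quantified over all $F$. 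The case analyses end up nearly identical: both hinge on the facts that $\constrain{\cdot}{B}$ annihilates the $X_\lightning$ term and that $\coef{\lightning}{\cdot}$ appears in the transformer rules only as an additive summand. What the paper's factoring buys is that the inductive statement concerns one concrete input, which makes sequential composition a one-liner ($\sem{P_2}(\sem{P_1}(X_\lightning)) = \sem{P_2}(X_\lightning) = X_\lightning$); in your version that case needs two applications of the induction hypothesis on $P_2$ --- once to $\sem{P_1}(F)$ and once to $\sem{P_1}(F')$ --- to cancel the error mass that $P_1$ itself may generate. You do signal this bookkeeping, so the step is accounted for. Your version is more self-contained in that it does not presuppose \cref{thm:linsem}; the paper's is shorter given that linearity is needed elsewhere anyway.

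One small wrinkle in your loop case: the inner claim that \emph{each} finite iterate $\Phi_{B,P_1}^n(\bot)$ satisfies the pass-through property is false at $n=0$, since $\bot(F)=0$ for every $F$ whereas the claimed right-hand side is $\coef{\lightning}{F}X_\lightning$. It does hold for all $n\geq 1$, because $\constrain{F}{B}$ and $\constrain{F}{\neg B}$ are insensitive to $\coef{\lightning}{F}$, so the argument fed to $\Phi_{B,P_1}^{n}(\bot)$ is identical for $F$ and $F'$, and the only surviving occurrence of $\coef{\lightning}{F}$ is the explicit additive summand in the definition of $\Phi_{B,P_1}$. Since the iterates form an increasing chain, the supremum over $n\geq 0$ equals the supremum over $n\geq 1$ and your conclusion survives; you should simply start the inner induction at $n=1$. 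The paper meets the same boundary case and disposes of it by computing $\sup\{0,X_\lightning\}=X_\lightning$.
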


\begin{proof} 
	By linearity of $\sem{P}$, we get:
	\begin{align*}
	\sem{P}(F) &= \sem{P}\left(\sum_{\sigma\in\N^k} \coef{\sigma}{F} \mathbf{X}^\sigma + \coef{\lightning}{F} X_\lightning\right)\\
	&= \sem{P}\left(\sum_{\sigma\in\N^k} \coef{\sigma}{F} \mathbf{X}^\sigma\right) + \coef{\lightning}{F} \cdot \sem{P}(X_\lightning)
	\end{align*}
	
	It therefore remains to be shown that $\sem{P}(X_\lightning) = X_\lightning$ by induction over the structure of $P$:
	
	\paragraph{Case $P = \pskip$:}
	\begin{align*}
	\sem{P}(X_\lightning) &= X_\lightning
	\end{align*}
	
	\paragraph{Case $P = \ASSIGN{\progvar{x}_i}{E}$:}
	\begin{align*}
	\sem{P}(X_\lightning) &= X_\lightning + \sum_{\sigma\in\N^k} 0\cdot X_1^{\sigma_1}\cdots X_i^{\eval{\sigma}{E}}\cdots X_k^{\sigma_k} \\
	&= X_\lightning
	\end{align*}
	
	\paragraph{Case $P = \observe{B}$:}
	\begin{align*}
	\sem{P}(X_\lightning) &= \left(1 + \abs{\angles{X_\lightning}_{\neg B}}\right)X_\lightning + \angles{X_\lightning}_B \\
	&= X_\lightning
	\end{align*}
	
	\paragraph{Case $P = \PCHOICE{P_1}{p}{P_2}$:}
	\begin{align*}
	\sem{P}(X_\lightning) &= p\cdot \sem{P_1}(X_\lightning) + (1-p)\cdot \sem{P_2}(X_\lightning) \\
	&= p\cdot X_\lightning + (1-p)\cdot X_\lightning \tag{I.H. on $P_1$ and $P_2$}\\
	&= X_\lightning
	\end{align*}
	
	\paragraph{Case $P = \ITE{B}{P_1}{P_2}$:}
	\begin{align*}
	\sem{P}(F) &= X_\lightning + \sem{P_1}(\angles{X_\lightning}_B) + \sem{P_2}(\angles{X_\lightning}_{\neg B}) \\
	&= X_\lightning + \sem{P_1}(0) + \sem{P_2}(0) \\
	&= X_\lightning
	\end{align*}
	
	\paragraph{Case $P = P_1\fatsemi P_2$:}
	\begin{align*}
	\sem{P}(X_\lightning) &= \sem{P_2}(\sem{P_1}(X_\lightning)) \\
	&= \sem{P_2}(X_\lightning) \tag{I.H. on $P_1$}\\
	&= X_\lightning \tag{I.H. on $P_2$}
	\end{align*}
	
	\paragraph{Case $P = \WHILEDO{B}{P_1}$:}\ 
	
	\noindent We show that $\forall n\in\N: \Phi_{B,P_1}^{n+1}(\bot)(X_\lightning) = X_\lightning$:
	
	\begin{align*}
	\Phi_{B,P_1}^{n+1}(\bot)(X_\lightning) &= \Phi_{B,P_1}(\Phi_{B,P_1}^{n}(\bot))(X_\lightning) \\
	&= X_\lightning + \angles{X_\lightning}_{\neg B} + \Phi_{B,P_1}^n(\bot)(\sem{P_1}(\angles{X_\lightning}_B)) \tag{def. $\Phi_{B,P_1}$}\\
	&= X_\lightning + \Phi_{B,P_1}^n(\bot)(0) \\
	&= X_\lightning \tag{$\Phi_{B,P_1}^n(\bot)(0) \leq \sem{\WHILEDO{B}{P_1}}(0) = 0$}
	\end{align*}
	
	From this, it follows:
	
	\begin{align*}
	\sem{P}(X_\lightning) &= \sup\,\left\{\Phi_{B,P_1}^n(\bot)(X_\lightning) \mid n \in \N\right\} \\
	&= \sup\,\left\{0, X_\lightning\right\} \tag{$\forall n\in\N: \Phi_{B,P_1}^{n+1}(\bot)(X_\lightning) = X_\lightning$} \\
	&= X_\lightning 
	\end{align*}
	
\end{proof}

\begin{lemma}[Alternative Representation]
	\label{apx:lem:altrepr}
	\begin{align*}
	\sem{\WHILEDO{B}{P}}(G) &= \sum_{i=0}^\infty \left(\coef{\lightning}{\varphi_{B,P}^i(G)}X_\lightning + \angles{\varphi_{B,P}^i(G)}_{\neg B}\right) \\
	& \quad \text{where} \quad \varphi_{B,P}(G) \coloneqq \sem{P}(\angles{G}_B).
	\end{align*}
\end{lemma}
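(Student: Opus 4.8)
The plan is to unfold the least fixed point $\sem{\WHILEDO{B}{P}} = \lfp\,\Phi_{B,P}$ using the Kleene iteration characterization from \Cref{apx:fixedpoints}(1), namely $\lfp\,\Phi_{B,P} = \sup_{n\in\N}\Phi_{B,P}^n(\bot)$, and to show by induction on $n$ that the $n$-th iterate already equals the first $n$ summands of the claimed series. Concretely, I would prove the invariant
\[
\Phi_{B,P}^n(\bot)(G) \eeq \sum_{i=0}^{n-1}\left(\coef{\lightning}{\varphi_{B,P}^i(G)}X_\lightning + \angles{\varphi_{B,P}^i(G)}_{\neg B}\right),
\]
where $\varphi_{B,P}(G) = \sem{P}(\angles{G}_B)$. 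Taking the supremum over $n$ on both sides then yields the lemma, once I observe that the right-hand sides form an $\omega$-chain in $(\efps,\preceq)$ (each successive iterate adds a $\preceq$-nonnegative term) whose supremum is the coefficient-wise infinite sum.

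First I would handle the base case $n = 0$: $\Phi_{B,P}^0(\bot) = \bot = \lambda G.\,0$, and the empty sum is $0$, so this is immediate. For the induction step, I would compute
\[
\Phi_{B,P}^{n+1}(\bot)(G) \eeq \Phi_{B,P}\bigl(\Phi_{B,P}^n(\bot)\bigr)(G) \eeq \coef{\lightning}{G}X_\lightning + \angles{G}_{\neg B} + \Phi_{B,P}^n(\bot)\bigl(\sem{P}(\angles{G}_B)\bigr),
\]
using the definition of $\Phi_{B,P}$ from \Cref{tab:semanics}. Now $\sem{P}(\angles{G}_B) = \varphi_{B,P}(G)$, and applying the induction hypothesis to the argument $\varphi_{B,P}(G)$ gives $\sum_{i=0}^{n-1}\bigl(\coef{\lightning}{\varphi_{B,P}^i(\varphi_{B,P}(G))}X_\lightning + \angles{\varphi_{B,P}^i(\varphi_{B,P}(G))}_{\neg B}\bigr) = \sum_{i=1}^{n}\bigl(\coef{\lightning}{\varphi_{B,P}^i(G)}X_\lightning + \angles{\varphi_{B,P}^i(G)}_{\neg B}\bigr)$ by reindexing $i \mapsto i+1$ and using $\varphi_{B,P}^{i+1} = \varphi_{B,P}^i \circ \varphi_{B,P}$. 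Noting that the $i=0$ term is exactly $\coef{\lightning}{G}X_\lightning + \angles{G}_{\neg B}$ (since $\varphi_{B,P}^0$ is the identity), the two pieces combine into $\sum_{i=0}^{n}\bigl(\coef{\lightning}{\varphi_{B,P}^i(G)}X_\lightning + \angles{\varphi_{B,P}^i(G)}_{\neg B}\bigr)$, closing the induction.

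Finally I would take suprema: by \Cref{thm:contsem}, $\sem{\WHILEDO{B}{P}}(G) = \bigl(\sup_n \Phi_{B,P}^n(\bot)\bigr)(G) = \sup_n \Phi_{B,P}^n(\bot)(G)$, and the partial sums $\sum_{i=0}^{n-1}(\cdots)$ are monotone increasing in $n$ with respect to $\preceq$ (each new term has nonnegative coefficients in $\R_{\geq0}^\infty$), so their supremum in $(\efps,\preceq)$ is the coefficient-wise infinite series $\sum_{i=0}^\infty(\cdots)$, which is precisely the claimed expression. The main obstacle I anticipate is purely bookkeeping: being careful that $\angles{\cdot}_B$ only filters the $\N^k$-indexed coefficients (not $X_\lightning$) so that $\varphi_{B,P}$ is well-defined on $\efps$, and that the reindexing of the iterated composition $\varphi_{B,P}^i$ lines up correctly — there is no deep technical difficulty, since continuity of $\sem{\cdot}$ and of the auxiliary functions is already established in \Cref{thm:contsem} and \Cref{lem:contaux}.
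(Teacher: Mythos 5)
Your proposal is correct and follows essentially the same route as the paper's own proof: a Kleene-iteration induction showing $\Phi_{B,P}^n(\bot)(G) = \sum_{i=0}^{n-1}\bigl(\coef{\lightning}{\varphi_{B,P}^i(G)}X_\lightning + \angles{\varphi_{B,P}^i(G)}_{\neg B}\bigr)$, followed by taking the supremum over $n$. The reindexing step and the identification of the $i=0$ term with $\coef{\lightning}{G}X_\lightning + \angles{G}_{\neg B}$ match the paper's argument exactly.
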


\begin{proof}
First, we show by induction that for all $n\in\N$: \[\Phi_{B,P}^n(\bot)(G) = \sum_{i=0}^{n-1} \left(\coef{\lightning}{\varphi_{B,P}^i(F)}X_\lightning + \angles{\varphi_{B,P}^i(F)}_{\neg B}\right).\]

\paragraph{Base case:} $n = 0$.
\[\Phi_{B,P}^0(\bot)(G) = 0 = \sum_{i=0}^{-1} \left(\coef{\lightning}{\varphi_{B,P}^i(G)}X_\lightning + \angles{\varphi_{B,P}^i(G)}_{\neg B}\right)\]

\paragraph{Induction step:}
\begin{align*}
\Phi_{B,P}^{n+1}(\bot)(G) &= \Phi_{B,P}(\Phi_{B,P}^{n}(\bot))(G) \\
&= \coef{\lightning}{G} X_\lightning + \angles{G}_{\neg B} + \Phi_{B,P}^n(\bot)(\sem{P}(\angles{G}_B)) \tag{Def. $\Phi_{B,P}$}\\
&= \coef{\lightning}{G} X_\lightning + \angles{G}_{\neg B} + \Phi_{B,P}^n(\bot)(\varphi_{B,P}(G)) \tag{Def. $\varphi_{B,P}$} \\
&= \coef{\lightning}{G} X_\lightning + \angles{G}_{\neg B} \\
&\quad + \sum_{i=0}^{n-1} \left(\coef{\lightning}{\varphi_{B,P}^i(\varphi_{B,P}(G))}X_\lightning + \angles{\varphi_{B,P}^i(\varphi_{B,P}(G))}_{\neg B}\right) \tag{I.H.} \\
&= \coef{\lightning}{\varphi_{B,P}^0(G)} X_\lightning + \angles{\varphi_{B,P}^0(G)}_{\neg B} \\
&\quad+ \sum_{i=1}^{n} \tag{$\varphi_{B,P}^0(G)=G$, index shift} \coef{\lightning}{\varphi_{B,P}^i(G)}X_\lightning + \angles{\varphi_{B,P}^i(G)}_{\neg B} \\
&=\sum_{i=0}^{n} \coef{\lightning}{\varphi_{B,P}^i(G)}X_\lightning + \angles{\varphi_{B,P}^i(G)}_{\neg B}
\end{align*}

From this, it follows:
\begin{align*}
\sem{\WHILEDO{B}{P}}(G) &= \left(\sup_{n \in \N}\Phi_{B,P}^n(\bot)\right)(G) \\
&= \sup\left\{\Phi_{B,P}^n(\bot)(G)\mid n\in\N\right\} \\
&= \sup_{n \in \N}\left\{\sum_{i=0}^{n-1} \coef{\lightning}{\varphi_{B,P}^i(G)}X_\lightning + \angles{\varphi_{B,P}^i(G)}_{\neg B}\right\} \\
&= \sum_{i=0}^{\infty} \coef{\lightning}{\varphi_{B,P}^i(G)}X_\lightning + \angles{\varphi_{B,P}^i(G)}_{\neg B}
\end{align*}
\end{proof}


\subsection{Coincidence to Operational Semantics}
We refer to the operational semantics for \cpgcl programs described in \cite{DBLP:journals/toplas/OlmedoGJKKM18}.
We show that the Markov chain $\mathcal{R}_\sigma\sem{P}$ precisely reflects the non-normalized PGF semantics $\sem{P}(\bvec{X}^\sigma)$ for any $p \in \cpgcl$ with initial state valuation $\sigma \in \N^k$.
\Cref{apx:operational} shows that the probabilities $\mcterm{P}$ for all $\sigma'$ and $\mcerr{P}$ arising from the Markov chain correspond to the coefficients of $\sem{P}(\bvec{X}^\sigma)$.
It is further shown that modifying these probabilities to the conditional probabilities $\text{Pr}^{\mathcal{R}_\sigma\sem{P}}(\diamondsuit \langle \downarrow, \sigma' \rangle \mid \neg \diamondsuit \lightning)$ has the same effect as applying the normalization function $\normalize$, thus concluding that the two semantics coincide (cf.~\Cref{apx:operational_equivalence}).

\begin{definition}[Markov Chain Semantics of \cpgcl]
	For any \cpgcl program $P$ and any starting state valuation $\sigma\in\N^k$, the \emph{operational Markov chain} is
	\[\mathcal{R}_\sigma\sem{P} \ddefeq \!\left(\mathcal{S}, \angles{P,\sigma}, \mathcal{P}\right),\]
	where:
	\begin{itemize}
		\item $\angles{P,\sigma}$ is the starting state
		\item the set of states $\mathcal{S}$ is the smallest set such that:
		\begin{itemize}
			\item $\mathcal{S}$ contains the starting state $\angles{P, \sigma}$
			\item if $s\in \mathcal{S}$ and $s$ has an outgoing transition to $s'$ according to \cref{fig:mcrules}, then $s'\in \mathcal{S}$
		\end{itemize}
		\item $\mathcal{P}\colon\mathcal{S}\times\mathcal{S} \rightarrow [0,1]$ is the transition matrix with
		\begin{itemize}
			\item $\mathcal{P}(s, s') = p$ if $s \overset{p}{\rightarrow} s'$ can be derived according to \cref{fig:mcrules}
			\item $\mathcal{P}(s, s') = 0$ otherwise
		\end{itemize}
	\end{itemize}
\end{definition}

\begin{figure}[h]
	\label{apx:mcconstruct}
		\raggedright
		\scriptsize
		
		\AxiomC{\vphantom{$\sigma\models B$}}
		\LeftLabel{(skip)}
		\UnaryInfC{$\angles{\pskip,\sigma} \longrightarrow \angles{\downarrow,\sigma}$}
		\DisplayProof
		\qquad
		\AxiomC{\vphantom{$\sigma\models B$}}
		\LeftLabel{(asgn)}
		\UnaryInfC{$\angles{\ASSIGN{\progvar{x}}{E},\sigma} \longrightarrow \angles{\downarrow, \sigma[\progvar{x}\leftarrow\eval{\sigma}{E}]}$}
		\DisplayProof
		\\ \vspace{2mm} 
		\AxiomC{$\sigma\models B$}
		\LeftLabel{(obs-t)}
		\UnaryInfC{$\angles{\observe{B},\sigma} \longrightarrow \angles{\downarrow, \sigma}$}
		\DisplayProof
		\qquad
		\AxiomC{$\sigma\not\models B$}
		\LeftLabel{(obs-f)}
		\UnaryInfC{$\angles{\observe{B},\sigma} \longrightarrow \angles{\lightning}$}
		\DisplayProof
		\\ \vspace{2mm} 
		\AxiomC{\vphantom{$\sigma\models B$}}
		\LeftLabel{(seq-1)}
		\UnaryInfC{$\angles{\downarrow\fatsemi Q, \sigma} \longrightarrow \angles{Q,\sigma}$}
		\DisplayProof
		\qquad
		\AxiomC{$\angles{P,\sigma} \longrightarrow \angles{\lightning}$}
		\LeftLabel{(seq-2)}
		\UnaryInfC{$\angles{P\fatsemi Q, \sigma} \longrightarrow \angles{\lightning}$}
		\DisplayProof
		\\ \vspace{4mm}
		\AxiomC{$\angles{P,\sigma} \overset{p}{\longrightarrow} \angles{P',\sigma'}$}
		\LeftLabel{(seq-3)}
		\UnaryInfC{$\angles{P\fatsemi Q, \sigma} \overset{p}{\longrightarrow} \angles{P'\fatsemi Q, \sigma'}$}
		\DisplayProof
		\\ \vspace{2mm} 
		\AxiomC{\vphantom{$\sigma\models B$}}
		\LeftLabel{(choice-l)}
		\UnaryInfC{$\angles{\pchoice{P}{p}{Q}, \sigma} \overset{p}{\longrightarrow} \angles{P, \sigma}$}
		\DisplayProof
		\qquad
		\AxiomC{\vphantom{$\sigma\models B$}}
		\LeftLabel{(choice-r)}
		\UnaryInfC{$\angles{\pchoice{P}{p}{Q}, \sigma} \overset{1-p}{\longrightarrow} \angles{Q, \sigma}$}
		\DisplayProof
		\\ \vspace{2mm} 
		\AxiomC{$\sigma\models B$}
		\LeftLabel{(if-t)}
		\UnaryInfC{$\angles{\ITE{B}{P}{Q}, \sigma} \longrightarrow \angles{P,\sigma}$}
		\DisplayProof
		\qquad
		\AxiomC{$\sigma\not\models B$}
		\LeftLabel{(if-f)}
		\UnaryInfC{$\angles{\ITE{B}{P}{Q}, \sigma} \longrightarrow \angles{Q,\sigma}$}
		\DisplayProof
		\\ \vspace{2mm} 
		\AxiomC{$\sigma\models B$}
		\LeftLabel{(while-t)}
		\UnaryInfC{$\angles{\WHILEDO{B}{P}, \sigma} \longrightarrow \angles{P \fatsemi \WHILEDO{B}{P}, \sigma} $}
		\DisplayProof
		\qquad
		\AxiomC{$\sigma\not\models B$}
		\LeftLabel{(while-f)}
		\UnaryInfC{$\angles{\WHILEDO{B}{P}, \sigma} \longrightarrow \angles{\downarrow,\sigma}$}
		\DisplayProof
		\\ \vspace{2mm} 
		\AxiomC{\vphantom{$\sigma\models B$}}
		\LeftLabel{(terminal)}
		\UnaryInfC{$\angles{\downarrow, \sigma} \longrightarrow \angles{\sink}$}
		\DisplayProof
		\qquad
		\AxiomC{\vphantom{$\sigma\models B$}}
		\LeftLabel{(undesired)}
		\UnaryInfC{$\angles{\lightning} \longrightarrow \angles{\sink}$}
		\DisplayProof
		\qquad
		\AxiomC{\vphantom{$\sigma\models B$}}
		\LeftLabel{(sink)}
		\UnaryInfC{$\angles{\sink} \longrightarrow \angles{\sink}$}
		\DisplayProof
		
		\caption{Construction rules for the operational Markov chain. $\sigma[\progvar x \leftarrow \eval{\sigma}{E}]$ denotes the program state valuation $\sigma$ with the value of $\progvar x$ replaced by $\eval{\sigma}{E}$. Whenever a transition has no annotated weight above its arrow, it has a weight of $1$.}
		\label{fig:mcrules}
\end{figure}

\begin{lemma}
	\label{apx:operational}
	For every $P \in \cpgcl$ and every two $\sigma, \sigma' \in \N^k$
	\begin{enumerate}
		\item $\textup{Pr}^{\mathcal{R}_\sigma \sem{P}}(\diamondsuit\langle\downarrow, \sigma'\rangle) ~ = ~ \coef{\sigma'}{\sem{P}(\bvec{X}^\sigma)}$
		\item $\textup{Pr}^{\mathcal{R}_\sigma \sem{P}}(\diamondsuit\lightning) ~ = ~ \coef{\lightning}{\sem{P}(\bvec{X}^\sigma)}$
	\end{enumerate}
\end{lemma}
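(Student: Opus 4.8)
The plan is to prove both statements simultaneously by structural induction on $P$, matching the operational reachability probabilities in $\mathcal{R}_\sigma\sem{P}$ against the coefficients produced by the distribution transformer $\sem{\cdot}$ of \Cref{tab:semanics} evaluated on the Dirac input $\bvec{X}^\sigma$. Throughout I would use the linearity and continuity of $\sem{\cdot}$ (\Cref{thm:linsem,thm:contsem}). For the base cases $P\in\{\pskip,\ \ASSIGN{\progvar{x}}{E},\ \observe{B}\}$ the claim is immediate by inspecting the one-step rules (skip), (asgn), (obs-t), (obs-f) of \Cref{fig:mcrules} against the corresponding rows of \Cref{tab:semanics}: e.g.\ for $\observe{B}$ started in $\sigma$, the chain moves deterministically to $\langle\downarrow,\sigma\rangle$ if $\sigma\models B$ and to $\langle\lightning\rangle$ otherwise, mirroring $\sem{\observe{B}}(\bvec{X}^\sigma) = \constrain{\bvec{X}^\sigma}{B} + \abs{\constrain{\bvec{X}^\sigma}{\neg B}}\,X_\lightning$.

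For the probabilistic-choice case I would split on the first transition, using that $\textup{Pr}^{\mathcal{R}_\sigma\sem{\pchoice{P_1}{p}{P_2}}}(\diamondsuit s) = p\cdot\textup{Pr}^{\mathcal{R}_\sigma\sem{P_1}}(\diamondsuit s) + (1-p)\cdot\textup{Pr}^{\mathcal{R}_\sigma\sem{P_2}}(\diamondsuit s)$ for both $s=\langle\downarrow,\sigma'\rangle$ and $s=\langle\lightning\rangle$, and match this against the linearity of $\sem{\cdot}$; the conditional-branching case is analogous, branching on whether $\sigma\models B$. The sequential-composition case needs a small gluing lemma: in $\mathcal{R}_\sigma\sem{\COMPOSE{P_1}{P_2}}$, every maximal path is a path of $\mathcal{R}_\sigma\sem{P_1}$ (diverging, or hitting $\langle\lightning\rangle$, or reaching some $\langle\downarrow,\sigma''\rangle$) continued via rule (seq-1) by the chain $\mathcal{R}_{\sigma''}\sem{P_2}$. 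Hence $\textup{Pr}^{\mathcal{R}_\sigma\sem{\COMPOSE{P_1}{P_2}}}(\diamondsuit\langle\downarrow,\sigma'\rangle) = \sum_{\sigma''}\textup{Pr}^{\mathcal{R}_\sigma\sem{P_1}}(\diamondsuit\langle\downarrow,\sigma''\rangle)\cdot\textup{Pr}^{\mathcal{R}_{\sigma''}\sem{P_2}}(\diamondsuit\langle\downarrow,\sigma'\rangle)$, and the $\langle\lightning\rangle$-reachability in the composition is the $\langle\lightning\rangle$-reachability of $P_1$ plus the mass that reaches a terminal state of $P_1$ and then fails in $P_2$. Applying the induction hypothesis to $P_1$ and $P_2$, together with the infinitary linearity of $\sem{P_2}$ (\Cref{lem:linearity}), recovers the coefficients of $\sem{P_2}(\sem{P_1}(\bvec{X}^\sigma))$.

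The loop case $P=\WHILEDO{B}{P_1}$ is the crux. On the denotational side, the alternative representation of \Cref{apx:lem:altrepr} gives $\sem{\WHILEDO{B}{P_1}}(\bvec{X}^\sigma) = \sum_{i=0}^\infty\bigl(\coef{\lightning}{\varphi_{B,P_1}^i(\bvec{X}^\sigma)}X_\lightning + \angles{\varphi_{B,P_1}^i(\bvec{X}^\sigma)}_{\neg B}\bigr)$ with $\varphi_{B,P_1}(G) = \sem{P_1}(\constrain{G}{B})$. On the operational side, using the unfolding rules (while-t), (while-f) and (seq-$\ast$), a path reaching $\langle\downarrow,\sigma'\rangle$ performs some finite number $i\ge 0$ of complete body iterations — the state distribution after the $i$-th completed iteration being exactly $\varphi_{B,P_1}^i(\bvec{X}^\sigma)$ by the already-proved compositional induction hypothesis applied to $P_1$ and to the sub-loop — and then exits because $B$ is false; a path reaching $\langle\lightning\rangle$ fails during some iteration. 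I would make this precise via Kleene iteration: an inner induction on $n$ showing that the bounded-horizon reachability probabilities (allowing at most $n$ re-entries of the loop) equal the coefficients of $\Phi_{B,P_1}^n(\bot)(\bvec{X}^\sigma)$, then take suprema on both sides, invoking continuity of $\sem{\cdot}$ (\Cref{thm:contsem}) denotationally and monotone convergence of reachability probabilities operationally.

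The main obstacle is precisely this loop case: keeping the bookkeeping straight between the least-fixed-point unfolding $\Phi_{B,P_1}^n(\bot)$, the operational reachability expressed as an infinite sum over completed iterations, and the absorbing behavior of $\langle\lightning\rangle$ versus genuine divergence (which contributes "missing mass" to neither coefficient). The sequential-composition gluing lemma, though routine, is also load-bearing, since the loop body is re-composed with the loop at each step; getting its interaction with $\langle\lightning\rangle$ exactly right — an observe-failure in an earlier segment of a composition precludes any later behavior — is the delicate point to nail down.
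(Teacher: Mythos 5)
Your proposal is correct and follows essentially the same route as the paper's own proof: a simultaneous structural induction matching one-step Markov-chain rules against the transformer rows for the atomic cases, linearity for probabilistic and conditional branching, a decomposition over intermediate states for sequential composition, and for the loop an inner induction over the number of completed body iterations matched against the summands $\coef{\lightning}{\varphi_{B,P_1}^i(\bvec{X}^\sigma)}X_\lightning + \angles{\varphi_{B,P_1}^i(\bvec{X}^\sigma)}_{\neg B}$ of the alternative representation (\Cref{apx:lem:altrepr}), using linearity of $\coef{\sigma'}{\cdot}$ and $\angles{\cdot}_{\neg B}$ to push the outer induction hypothesis through the first body execution. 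The only notable difference is that you spell out the sequential-composition gluing in more detail than the paper (which merely sketches that case with a diagram), and that detail is exactly what the paper's argument implicitly relies on.
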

\begin{proof}
	We prove the statements (1) and (2) simultaneously by structural induction over a \cpgcl program $P$.
	
\noindent\textbf{Case} $P = \pskip$: In this case, the Markov chain $\mathcal{R}_\sigma \sem{P}$ looks as follows:
	\begin{figure}[h]
		\begin{tikzpicture}[->,>=stealth, shorten >=2pt, line width=0.5pt, node distance=2cm, initial text=$ $]
		\node[initial] (1) {$\langle \pskip, \sigma \rangle$};
		\node[right of=1] (2) {$\langle \downarrow, \sigma \rangle$};
		\node[right of=2] (3) {$\langle \mathit{sink} \rangle$};
		
		\draw (1) edge (2)
				  (2) edge (3)
				  (3) edge[loop right] (3);
		\end{tikzpicture}
	\end{figure}
	Its PGF semantics yields:
	\begin{align*}
		\sem{P}(\bvec{X}^\sigma) ~=~ \bvec{X}^\sigma
	\end{align*}
	
	Thus:
	\begin{align*}
		\mcterm{P} &=
		\begin{cases}
			1,& \text{if}~ \sigma' = \sigma\\
			0,& \text{else}
		\end{cases}
		~=~ \coef{\sigma'}{\sem{P}(\bvec{X}^\sigma)}\\
		\mcerr{P} &= 0 ~=~ \coef{\lightning}{\sem{P}(\bvec{X}^\sigma)}
	\end{align*}\\
	
\noindent\textbf{Case} $P = \progvar{x}_i \coloneq E$:
	\begin{figure}[h]
		\begin{tikzpicture}[->,>=stealth, shorten >=2pt, line width=0.5pt, node distance=3cm, initial text=$ $]
		\node[initial] (1) {$\langle \assign{\progvar{x}_i}{E}, \sigma \rangle$};
		\node[right of=1] (2) {$\langle \downarrow, \sigma[\progvar{x}_i \leftarrow \eval{\sigma}{}] \rangle$};
		\node[right of=2] (3) {$\langle \mathit{sink} \rangle$};
		
		\draw (1) edge (2)
		(2) edge (3)
		(3) edge[loop right] (3);
		\end{tikzpicture}
	\end{figure}

	Its PGF semantics yields:
	\begin{align*}
		\sem{P}(\bvec{X}^\sigma) &= X_1^{\sigma_1}\cdots X_i^{\eval{\sigma}{}}\cdots X_k^{\sigma_k}
	\end{align*}
	
	Thus:
	\begin{align*}
		\mcterm{P} &=
		\begin{cases}
		1,& \text{if}~ \sigma' = \sigma[\progvar{x}_i \leftarrow \eval{\sigma}{}]\\
		0,& \text{else}
		\end{cases}\\
		&=~ \coef{\sigma'}{\sem{P}(\bvec{X}^\sigma)}\\
		&\\
		\mcerr{P} &= 0 ~=~ \coef{\lightning}{\sem{P}(\bvec{X}^\sigma)}
	\end{align*}\\

\noindent\textbf{Case} $P = \observe{B}$:
	We do a case distinction whether $\sigma \models B$.
	
	\emph{Observe passed}:
	\begin{figure}[h]
		\begin{tikzpicture}[->,>=stealth, shorten >=2pt, line width=0.5pt, node distance=2.5cm, initial text=$ $]
		\node[initial] (1) {$\langle \observe{B}, \sigma \rangle$};
		\node[right of=1] (2) {$\langle \downarrow, \sigma \rangle$};
		\node[right of=2] (3) {$\langle \mathit{sink} \rangle$};
		
		\draw (1) edge (2)
		(2) edge (3)
		(3) edge[loop right] (3);
		\end{tikzpicture}
	\end{figure}

	The PGF semantics yields:
	\begin{align*}
		\sem{P}(\bvec{X}^\sigma) &= \constrain{\bvec{X}^\sigma}{B} + \left(\abs{\constrain{\bvec{X}^\sigma}{\neg B} } + \coef{\lightning}{\bvec{X}^\sigma}\right)X_\lightning = \bvec{X}^\sigma
	\end{align*}
	
	Thus:
	\begin{align*}
		\mcterm{P} &=
		\begin{cases}
		1,& \text{if}~ \sigma' = \sigma\\
		0,& \text{else}
		\end{cases}
		=~ \coef{\sigma'}{\sem{P}(\bvec{X}^\sigma)}\\
		\mcerr{P} &= 0 ~=~ \coef{\lightning}{\sem{P}(\bvec{X}^\sigma)}
	\end{align*}

	\emph{Observe failed}:
	\begin{figure}[h]
		\begin{tikzpicture}[->,>=stealth, shorten >=2pt, line width=0.5pt, node distance=2.5cm, initial text=$ $]
		\node[initial] (1) {$\langle \observe{B}, \sigma \rangle$};
		\node[right of=1] (2) {$\langle \lightning \rangle$};
		\node[right of=2] (3) {$\langle \mathit{sink} \rangle$};
		
		\draw (1) edge (2)
		(2) edge (3)
		(3) edge[loop right] (3);
		\end{tikzpicture}
	\end{figure}

	The PGF semantics yields:
	\begin{align*}
		\sem{P}(\bvec{X}^\sigma) &= \constrain{\bvec{X}^\sigma}{B} + \left(\abs{\constrain{\bvec{X}^\sigma}{\neg B} } + \coef{\lightning}{\bvec{X}^\sigma}\right)X_\lightning = X_\lightning
	\end{align*}
	
	Thus:
	\begin{align*}
		\mcterm{P} &= 0 ~=~ \coef{\sigma'}{\sem{P}(\bvec{X}^\sigma)}\\
		\mcerr{P} &= 1 ~=~ \coef{\lightning}{\sem{P}(\bvec{X}^\sigma)}
	\end{align*}\\
	
\noindent\textbf{Case} $P = \pchoice{P_1}{p}{P_2}$:
	\begin{figure}[h]
		\begin{tikzpicture}[->,>=stealth, shorten >=2pt, line width=0.5pt, node distance=2cm, initial text=$ $]
		\node[initial] (1) {$\langle \pchoice{P_1}{p}{P_2}, \sigma \rangle$};
		\node[above right of=1] (2) {$\langle P_1,\sigma \rangle$};
		\node[below right of=1] (3) {$\langle P_2, \sigma \rangle$};
		\node[right of=2] (4) {$\ldots$};
		\node[right of=3] (5) {$\ldots$};
		
		\draw (1) edge node[above left] {$p$} (2)
		(1) edge node[below left] {$1-p$} (3)
		(2) edge[decorate, decoration={snake, post length=1mm}] (4)
		(3) edge[decorate, decoration={snake, post length=1mm}] (5);
		\end{tikzpicture}
	\end{figure}
	
	The PGF semantics yields:
	\begin{align*}
	\sem{P}(\bvec{X}^\sigma) &= p \cdot \sem{P_1}(\bvec{X}^\sigma) + (1-p) \cdot \sem{P_2}(\bvec{X}^\sigma)
	\end{align*}
	
	Thus:
	\begin{align*}
	\mcterm{P} &= p \cdot \mcterm{P_1}  \\
	& \qquad + (1-p) \cdot \mcterm{P_2} \\
	&= p \cdot \coef{\sigma'}{\sem{P_1}(\bvec{X}^\sigma)} + (1-p) \cdot \coef{\sigma'}{\sem{P_2}(\bvec{X}^\sigma)}\tag{by I.H.}\\
	&= \coef{\sigma'}{\sem{P}(\bvec{X}^\sigma)}\\
	&\\
	\mcerr{P} &= p \cdot \mcerr{P_1} + (1-p) \cdot \mcerr{P_2}\\
	&= p\cdot \coef{\lightning}{\sem{P_1}(\bvec{X}^\sigma)} + (1-p) \cdot \coef{\lightning}{\sem{P_2}(\bvec{X}^\sigma)}\tag{by I.H.}\\
	&= \coef{\lightning}{\sem{P}(\bvec{X}^\sigma)}
	\end{align*}\\
	
\noindent\textbf{Case} $P = \ITE{B}{P_1}{P_2}$:
	We do a case distinction on $\sigma \models B$.\\
	
	\emph{Condition is satisfied}:
	\begin{figure}[h]
		\begin{tikzpicture}[->,>=stealth, shorten >=2pt, line width=0.5pt, node distance=3.5cm, initial text=$ $]
		\node[initial] (1) {$\langle \ITE{B}{P_1}{P_2}, \sigma \rangle$};
		\node[right of=1] (2) {$\langle P_1,\sigma \rangle$};
		\node[right of=2, xshift=-1cm] (3) {$\ldots$};
		
		\draw (1) edge (2)
		(2) edge[decorate, decoration={snake, post length=1mm}] (3);
		\end{tikzpicture}
	\end{figure}

	The PGF semantics yields:
	\begin{align*}
	\sem{P}(\bvec{X}^\sigma) &= \sem{P_1}(\constrain{\bvec{X}^\sigma}{B}) + \sem{P_2}(\constrain{\bvec{X}^\sigma}{\neg B}) + \coef{\lightning}{\bvec{X}^\sigma}X_\lightning ~=~ \sem{P_1}(\bvec{X}^\sigma)
	\end{align*}
	
	Thus:
	\begin{align*}
	\mcterm{P} &= \mcterm{P_1}\\
	&= \coef{\sigma'}{\sem{P_1}(\bvec{X}^\sigma)}\tag{by I.H.}\\
	&\\
	\mcerr{P} &= \coef{\lightning}{\sem{P_1}(\bvec{X}^\sigma)}\\
	&=\coef{\lightning}{\sem{P}(\bvec{X}^\sigma)}
	\end{align*}

	\emph{Condition not satisfied}:
	\begin{figure}[h]
		\begin{tikzpicture}[->,>=stealth, shorten >=2pt, line width=0.5pt, node distance=3.5cm, initial text=$ $]
		\node[initial] (1) {$\langle \ITE{B}{P_1}{P_2}, \sigma \rangle$};
		\node[right of=1] (2) {$\langle P_2,\sigma \rangle$};
		\node[right of=2, xshift=-1cm] (3) {$\ldots$};
		
		\draw (1) edge (2)
		(2) edge[decorate, decoration={snake, post length=1mm}] (3);
		\end{tikzpicture}
	\end{figure}

	The PGF semantics yields:
	\begin{align*}
		\sem{P}(\bvec{X}^\sigma) &= \sem{P_1}(\constrain{\bvec{X}^\sigma}{B}) + \sem{P_2}(\constrain{\bvec{X}^\sigma}{\neg B}) + \coef{\lightning}{\bvec{X}^\sigma}X_\lightning ~=~ \sem{P_2}(\bvec{X}^\sigma)
	\end{align*}
	
	Thus:
	\begin{align*}
		\mcterm{P} &= \mcterm{P_2}\\
		&= \coef{\sigma'}{\sem{P_2}(\bvec{X}^\sigma)}\tag{by I.H.}\\
		&\\
		\mcerr{P} &= \coef{\lightning}{\sem{P_2}(\bvec{X}^\sigma)}\\
		&=\coef{\lightning}{\sem{P}(\bvec{X}^\sigma)}
	\end{align*}\\
	
\noindent\textbf{Case} $P = \COMPOSE{P_1}{P_2}$:
	\begin{figure}[h]
		\begin{tikzpicture}[->,>=stealth, shorten >=2pt, line width=0.5pt, node distance=2.3cm, initial text=$ $]
		\node[initial] (1) {$\langle \COMPOSE{P_1}{P_2}, \sigma \rangle$};
		\node[above right of=1] (2) {$\langle \lightning \rangle$};
		\node[right of=1] (3) {$\langle \COMPOSE{\downarrow}{P_2}, \sigma''\rangle$};
		\node[below right of=1] (4) {$\ldots$};
		\node[right of=3] (5) {$\langle P_2, \sigma'' \rangle$};
		\node[right of=5] (6) {$\ldots$};
		
		\draw (1) edge[decorate, decoration={snake, post length=1mm}] (2)
				  (1) edge[decorate, decoration={snake, post length=1mm}] (3)
				  (1) edge[decorate, decoration={snake, post length=1mm}] (4)
				  (3) edge (5)
				  (5) edge[decorate, decoration={snake, post length=2mm}] (6);
		\end{tikzpicture}
	\end{figure}
	
\noindent\textbf{Case} $P = \WHILEDO{B}{P_1}$:\\

	\emph{Condition not fulfilled} $(\sigma \not\models B)$:
	\begin{figure}[h]
		\begin{tikzpicture}[->,>=stealth, shorten >=2pt, line width=0.5pt, node distance=2.2cm, initial text=$ $]
		\node[initial] (1) {$\langle \WHILEDO{B}{P_1}, \sigma \rangle$};
		\node[right of=1, xshift=.5cm] (2) {$\langle \downarrow, \sigma\rangle$};
		\node[right of=2] (3) {$\langle \mathit{sink} \rangle$};
		
		\draw (1) edge (2)
		(2) edge (3)
		(3) edge[loop right] (3);
		\end{tikzpicture}
	\end{figure}
	
	For the PGF semantics, consider the following, for all $n \in \N$:
	\begin{align*}
	\Phi_{B,P_1}^n(\bot)(\bvec{X}^\sigma) &= \coef{\lightning}{\bvec{X}^\sigma} + \constrain{\bvec{X}^\sigma}{\neg B} + \Phi_{B,P_1}^n(\bot)(\sem{P_1}(\constrain{\bvec{X}^\sigma}{B}))\\
	&= \coef{\lightning}{\bvec{X}^\sigma} + \Phi_{B,P_1}^n(\bot)(\sem{P_1}(\constrain{\bvec{X}^\sigma}{B})) \tag{$\sigma \not\models B$}\\
	&= 0 + \bvec{X}^\sigma + 0\\
	\sem{P}(\bvec{X}^\sigma) &= \lfp\; \Phi_{B,P_1}(\bvec{X}^\sigma)\\
	&= \sup_{n \in \N}\left\{\Phi_{B,P_1}^n(\bot)(\bvec{X}^\sigma) \mid n \in \N\right\}\\
	&= \bvec{X}^\sigma
	\end{align*}
	
	Thus:
	\begin{align*}
		\mcterm{P} &= 
		\begin{cases}
			1,& \text{if}~ \sigma' = \sigma\\
			0,& \text{otherwise}
		\end{cases}
		\quad =\quad  \coef{\sigma'}{\sem{P}(\bvec{X}^\sigma)}\\
		\mcerr{P} &= 0 \quad = \coef{\lightning}{\sem{P}(\bvec{X}^\sigma)}
	\end{align*}\\
	
	\noindent \emph{Condition is satisfied} $(\sigma \models B)$:\\
	
	At least one loop iteration is performed.
	In order for the program to terminate in some state valuation $\sigma'$, some (non-zero) number of loop iterations must be performed.
	The termination probability can therefore be partitioned into the following infinite sum of probabilities:
	\begin{align*}
	\mcterm{P} &= \sum_{n=1}^\infty \text{Pr}^{\mathcal{R}_\sigma \sem{P}}(\diamondsuit_{= n}\angles{\downarrow, \sigma'})\\
	\mcerr{P} &= \sum_{n=1}^\infty \text{Pr}^{\mathcal{R}_\sigma \sem{P}}(\diamondsuit_{=n}\angles{\lightning})~,
	\end{align*}
	where $\text{Pr}^{\mathcal{R}_\sigma \sem{P}}(\diamondsuit_{= n}\angles{\downarrow, \sigma'})$ denotes the probability to reach state $\angles{\downarrow, \sigma'}$ after \emph{exactly} $n$ loop iterations and $\text{Pr}^{\mathcal{R}_\sigma \sem{P}}(\diamondsuit_{=n}\angles{\lightning})$ denotes the probability to reach state $\angles{\lightning}$ in exactly $n$ loop iterations.
	
	By \Cref{apx:lem:altrepr}, the PGF semantics can be represented as follows:
	\begin{align*}
	\sem{P}(\mathbf{X}^\sigma) &= \sum_{n=0}^\infty \left(\coef{\lightning}{\varphi_{B,P_1}^n(\mathbf{X}^\sigma)}X_\lightning + \angles{\varphi_{B,P_1}^n(\mathbf{X}^\sigma)}_{\neg B}\right), \quad \text{where}\\
	&\qquad \varphi_{B,P_1}(F) = \sem{P_1}(\angles{F}_B).\\
	\intertext{By the assumption that $\sigma\models B$, the $0$-th term of this series must be $0$, and thus:}
	\sem{P}(\mathbf{X}^\sigma) &= \sum_{n=1}^\infty \left(\coef{\lightning}{\varphi_{B,P_1}^n(\mathbf{X}^\sigma)}X_\lightning + \angles{\varphi_{B,P_1}^n(\mathbf{X}^\sigma)}_{\neg B}\right)
	\end{align*}
	
	We can therefore restate the initial claims of \Cref{apx:operational} as the following (stricter) conditions:
	
	\begin{enumerate}[label=\arabic*.]
		\item For all $n\in\N_{>0}$:
		\[\qquad{\Pr}^{\mathcal{R}_\sigma\sem{P}}(\diamondsuit_{=n}\angles{\downarrow, \sigma'}) = \coef{\sigma'}{\angles{\varphi_{B,P_1}^n(\mathbf{X}^\sigma)}_{\neg B}}\]
		\item For all $n\in\N_{>0}$:
		\[\qquad{\Pr}^{\mathcal{R}_\sigma\sem{P}}(\diamondsuit_{=n}\angles{\lightning}) = \coef{\lightning}{\varphi_{B,P_1}^n(\mathbf{X}^\sigma)}\]
	\end{enumerate}
	
	For both parts, we make use of the following observation, which follows from the linearity of $\varphi$ and the assumption $\sigma \models B$:
	\begin{align}
	\varphi_{B,P_1}^{n+1}(\mathbf{X}^\sigma) &= \varphi_{B,P_1}^n\left(\varphi_{B,P_1}(\mathbf{X}^\sigma)\right) \nonumber \\
	&= \varphi_{B,P_1}^n\left(\sum_{\sigma''\in \N^k}\coef{\sigma''}{\sem{P_1}(\mathbf{X}^\sigma)} \mathbf{X}^{\sigma''}\right) \nonumber \\
	&= \sum_{\sigma''\in \N^k} \coef{\sigma''}{\sem{P_1}(\mathbf{X}^\sigma)} \cdot \varphi_{B,P_1}^n(\mathbf{X}^{\sigma''}) \label{obsphi}
	\end{align}
	
	\begin{enumerate}[label=\arabic*.]
		\item First, note that a loop can never terminate in $\sigma'$ if $\sigma'\models B$. Accordingly, the construction rules of the Markov chain semantics (cf.\ \Cref{fig:mcrules}) contain the rule (while-f) as the only way of reaching a terminating state from a loop, which is only applicable if $\sigma' \not\models B$. We therefore have (for all $n\in\N^k_{>0}$):
		\[{\Pr}^{\mathcal{R}_\sigma\sem{P}}(\diamondsuit_{=n}\angles{\downarrow, \sigma'}) = \coef{\sigma'}{\angles{\varphi_{B,P_1}^n(\mathbf{X}^\sigma)}_{\neg B}} = 0\]
		
		We show the case $\sigma'\not\models B$ by induction:
		
		\paragraph{Base case:} $n = 1$.
		\begin{align*}
		& \quad {\Pr}^{\mathcal{R}_\sigma\sem P}(\diamondsuit_{=1}\angles{\downarrow, \sigma'}) \\
		&= {\Pr}^{\mathcal{R}_\sigma\sem {P_1}}(\diamondsuit\angles{\downarrow,\sigma'}) \\
		&= \coef{\sigma'}{\sem{P_1}(\mathbf{X}^\sigma)} \tag{outer I.H.}\\
		&= \coef{\sigma'}{\angles{\sem{P_1}(\mathbf{X}^\sigma)}_{\neg B}} \tag{$\sigma'\not\models B$} \\
		&= \coef{\sigma'}{\angles{\sem{P_1}(\angles{\mathbf{X}^\sigma}_B)}_{\neg B}} \tag{$\sigma\models B$} \\
		&= \coef{\sigma'}{\angles{\varphi_{B,P_1}(\mathbf{X}^\sigma)}_{\neg B}} \\
		\end{align*}
		
		\paragraph{Induction step:}
		In order for the loop to terminate in $n+1$ iterations, the first execution of the loop body must terminate in some state valuation $\sigma''$, from which the loop then terminates in $n$ iterations, i.e.,
		\begingroup
		\allowdisplaybreaks
		\begin{align*}
		& \quad {\Pr}^{\mathcal{R}_\sigma\sem{P}}(\diamondsuit_{=n+1}\angles{\downarrow, \sigma'}) \\
		&= \sum_{\sigma''\in\N^K} {\Pr}^{\mathcal{R}_\sigma\sem{P_1}}(\diamondsuit\angles{\downarrow,\sigma''} \cdot {\Pr}^{\mathcal{R}_{\sigma''}\sem P}(\diamondsuit_{=n}\angles{\downarrow, \sigma'}) \\
		&= \sum_{\sigma''\in\N^k} \coef{\sigma''}{\sem{P_1}(\mathbf{X}^\sigma)} \cdot {\Pr}^{\mathcal{R}_{\sigma''}\sem{P}}(\diamondsuit_{=n}\angles{\downarrow, \sigma'}) \tag{outer I.H.}\\
		&= \sum_{\sigma''\models B} \coef{\sigma''}{\sem{P_1}(\mathbf{X}^\sigma)}\cdot {\Pr}^{\mathcal{R}_{\sigma''}\sem{P}}(\diamondsuit_{=n}\angles{\downarrow, \sigma'}) \tag{$0$ if $\sigma''\not\models B$} \\
		&= \sum_{\sigma''\models B} \coef{\sigma''}{\sem{P_1}(\mathbf{X}^\sigma)} \cdot \coef{\sigma'}{\angles{\varphi_{B,P_1}^n(\mathbf{X}^{\sigma''})}_{\neg B}} \tag{inner I.H.} \\
		&= \sum_{\sigma''\in\N^k} \coef{\sigma''}{\sem{P_1}(\mathbf{X}^\sigma)} \cdot \coef{\sigma'}{\angles{\varphi_{B,P_1}^n(\mathbf{X}^{\sigma''})}_{\neg B}} \tag{$0$ if $\sigma''\not\models B$} \\
		&= \sum_{\sigma''\in\N^k} \coef{\sigma'}{\angles{\coef{\sigma''}{\sem{P_1}(\mathbf{X}^\sigma)} \cdot \varphi_{B,P_1}^n(\mathbf{X}^{\sigma''})}_{\neg B}} \tag{Lin. of $\coef{\sigma'}{}$ and $\angles{\cdot}_B$} \\
		&= \coef{\sigma'}{\angles{\sum_{\sigma''\in\N^k} \coef{\sigma''}{\sem{P_1}(\mathbf{X}^\sigma)} \cdot \varphi_{B,P_1}^n(\mathbf{X}^{\sigma''})}_{\neg B}} \tag{Lin. of $\coef{\sigma'}{}$ and $\angles{\cdot}_B$} \\
		&= \coef{\sigma'}{\angles{\varphi_{B,P_1}^{n+1}(\mathbf{X}^\sigma)}_{\neg B}} \tag{by \Cref{obsphi}}
		\end{align*}
	\endgroup
		
		\item By induction:
		
		\paragraph{Base case:} $n = 1$.
		\begin{align*}
		& \quad {\Pr}^{\mathcal{R}_\sigma\sem{P}}(\diamondsuit_{=1}\angles{\lightning}) \\
		&= {\Pr}^{\mathcal{R}_\sigma\sem{P_1}}(\diamondsuit\lightning) \\
		&= \coef{\lightning}{\sem{P_1}(\mathbf{X}^\sigma)} \tag{outer I.H.}\\
		&= \coef{\lightning}{\sem{P_1}(\angles{\mathbf{X}^\sigma}_B)} \tag{$\sigma\models B$} \\
		&= \coef{\lightning}{\varphi_{B,P_1}(\mathbf{X}^\sigma)} \\
		\end{align*}
		
		\paragraph{Induction step:}
		In order for the loop to reach $\angles\lightning$ in the $(n+1)$-th iteration, the first execution of the loop body must terminate in some state valuation $\sigma''$, from where $\angles\lightning$ is then reached in the $n$-th iteration, i.e.,
		\begingroup
		\allowdisplaybreaks
		\begin{align*}
		& \quad {\Pr}^{\mathcal{R}_\sigma\sem{P}}(\diamondsuit_{=n+1}\angles\lightning) \\
		&= \sum_{\sigma''\in\N^k} {\Pr}^{\mathcal{R}_\sigma\sem{P_1}}(\diamondsuit\angles{\downarrow,\sigma''} \cdot {\Pr}^{\mathcal{R}_{\sigma''}\sem{P}}(\diamondsuit_{=n}\angles\lightning) \\
		&= \sum_{\sigma''\in\N^k} \coef{\sigma''}{\sem{P_1}(\mathbf{X}^\sigma)} \cdot {\Pr}^{\mathcal{R}_{\sigma''}\sem{P}}(\diamondsuit_{=n}\angles\lightning) \tag{outer I.H.}\\
		&= \sum_{\sigma''\models B} \coef{\sigma''}{\sem{P_1}(\mathbf{X}^\sigma)} \cdot {\Pr}^{\mathcal{R}_{\sigma''}\sem{P}}(\diamondsuit_{=n}\angles\lightning) \tag{$0$ if $\sigma''\not\models B$} \\
		&= \sum_{\sigma''\models B} \coef{\sigma''}{\sem{P_1}(\mathbf{X}^\sigma)} \cdot \coef{\lightning}{\varphi_{B,P_1}^n(\mathbf{X}^{\sigma''})} \tag{inner I.H.} \\
		&= \sum_{\sigma''\in\N^k} \coef{\sigma''}{\sem{P_1}(\mathbf{X}^\sigma)} \cdot \coef{\lightning}{\varphi_{B,P_1}^n(\mathbf{X}^{\sigma''})} \tag{$0$ if $\sigma''\not\models B$} \\
		&= \sum_{\sigma''\in\N^k} \coef{\lightning}{\coef{\sigma''}{\sem{P_1}(\mathbf{X}^\sigma)} \cdot \varphi_{B,P_1}^n(\mathbf{X}^{\sigma''})} \tag{Lin. of $\coef{\lightning}{}$} \\
		&= \coef{\lightning}{\sum_{\sigma''\in\N^k} \coef{\sigma''}{\sem{P_1}(\mathbf{X}^\sigma)} \cdot \varphi_{B,P_1}^n(\mathbf{X}^{\sigma''})}\tag{Lin. of $\coef{\lightning}{}$ and $\angles{\cdot}_B$} \\
		&= \coef{\lightning}{\varphi_{B,P_1}^{n+1}(\mathbf{X}^\sigma)} \tag{by \Cref{obsphi}}
		\end{align*}
		\endgroup
	\end{enumerate}
\end{proof}

\begin{theorem}[Operational Equivalence]
	\label{apx:operational_equivalence}
	For every \cpgcl program $p$ and every $\sigma, \sigma' \in \N^k$
	\begin{align*}
		\text{Pr}^{\mathcal{R}_\sigma}\sem{P}(\diamondsuit\langle\downarrow,\sigma'\rangle\mid\neg\diamondsuit\lightning) \eeq \coef{\sigma'}{\normalize(\sem{P}(\bvec{X}^\sigma))}~.
	\end{align*}
	This includes the case of undefined semantics, i.e., the left-hand side is undefined if and only if the right-hand side is undefined.
\end{theorem}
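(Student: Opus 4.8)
The plan is to reduce the claim directly to Lemma~\ref{apx:operational} (which identifies the unconditioned Markov-chain reachability probabilities with the coefficients of $\sem{P}(\bvec{X}^\sigma)$) together with Definition~\ref{def:normalization}. Concretely, I would first rewrite the conditional reachability probability by the elementary definition $\Pr(A\mid C) = \Pr(A\cap C)/\Pr(C)$, taking $A = \diamondsuit\langle\downarrow,\sigma'\rangle$ and $C = \neg\diamondsuit\langle\lightning\rangle$.

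The structural ingredient, read off from the construction rules in \Cref{fig:mcrules}, is that $\langle\lightning\rangle$ and every terminal state $\langle\downarrow,\tau\rangle$ have outgoing transitions only to the absorbing state $\langle\sink\rangle$. Hence any run that ever visits $\langle\lightning\rangle$ is trapped in $\langle\sink\rangle$ from the next step on and can never reach a state $\langle\downarrow,\sigma'\rangle$, and symmetrically a run reaching $\langle\downarrow,\sigma'\rangle$ never visits $\langle\lightning\rangle$. Therefore, as events, $\diamondsuit\langle\downarrow,\sigma'\rangle \subseteq \neg\diamondsuit\langle\lightning\rangle$, so $\Pr(A\cap C) = \Pr(A) = \Pr^{\mathcal{R}_\sigma\sem{P}}(\diamondsuit\langle\downarrow,\sigma'\rangle)$, while $\Pr(C) = 1 - \Pr^{\mathcal{R}_\sigma\sem{P}}(\diamondsuit\lightning)$ by complementation. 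Applying parts (1) and (2) of Lemma~\ref{apx:operational} then turns the left-hand side into $\coef{\sigma'}{\sem{P}(\bvec{X}^\sigma)}\,/\,\bigl(1 - \coef{\lightning}{\sem{P}(\bvec{X}^\sigma)}\bigr)$ whenever the denominator is nonzero.

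For the right-hand side I would unfold Definition~\ref{def:normalization}, giving $\normalize(\sem{P}(\bvec{X}^\sigma)) = \constrain{\sem{P}(\bvec{X}^\sigma)}{\TRUE}\,/\,\bigl(1 - \coef{\lightning}{\sem{P}(\bvec{X}^\sigma)}\bigr)$ in the defined case. Since filtering by $\TRUE$ only deletes the observation-violation term and leaves each $\coef{\tau}{\cdot}$ with $\tau\in\N^k$ unchanged, and since coefficient extraction commutes with scaling by a constant, we get $\coef{\sigma'}{\normalize(\sem{P}(\bvec{X}^\sigma))} = \coef{\sigma'}{\sem{P}(\bvec{X}^\sigma)}\,/\,\bigl(1 - \coef{\lightning}{\sem{P}(\bvec{X}^\sigma)}\bigr)$, which equals the expression obtained for the left-hand side.

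Finally, I would settle the definedness dichotomy: the conditional probability is undefined precisely when $\Pr(C)=0$, i.e.\ when $\Pr^{\mathcal{R}_\sigma\sem{P}}(\diamondsuit\lightning)=1$, i.e.\ (by Lemma~\ref{apx:operational}(2)) when $\coef{\lightning}{\sem{P}(\bvec{X}^\sigma)}=1$; and because the reachability probabilities of the pairwise disjoint target classes $\{\langle\lightning\rangle\}$ and $\{\langle\downarrow,\tau\rangle\}_{\tau}$ sum to at most $1$, we have $\coef{\lightning}{\sem{P}(\bvec{X}^\sigma)}\le 1$, so this condition is exactly the ``otherwise'' branch of $\normalize$. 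Hence both sides are undefined under the same condition, and equal otherwise. I expect the only mildly delicate points to be the mutual-exclusivity/complement step (a short argument on the Markov-chain structure) and checking that $\sem{P}(\bvec{X}^\sigma)$ is a genuine \epgf\ so that the two undefinedness thresholds coincide; the rest is a one-line substitution from the already-established lemma.
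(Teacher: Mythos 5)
Your proposal is correct and follows essentially the same route as the paper's proof: rewrite the conditional probability as a quotient, note from the chain structure that reaching $\langle\downarrow,\sigma'\rangle$ excludes visiting $\angles{\lightning}$, apply both parts of Lemma~\ref{apx:operational}, and unfold the definition of $\normalize$. If anything, you spell out the definedness dichotomy more explicitly than the paper's displayed chain of equalities does, which is a welcome addition rather than a deviation.
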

\begin{proof}
	\begin{align*}
	\text{Pr}^{\mathcal{R}_\sigma\sem{P}}(\diamondsuit\langle\downarrow,\sigma'\rangle\mid\neg\diamondsuit\lightning) &=
	\frac{\text{Pr}^{\mathcal{R}_\sigma\sem{P}}(\diamondsuit\langle\downarrow,\sigma'\rangle\wedge\neg\diamondsuit\lightning)}{\text{Pr}^{\mathcal{R}_\sigma\sem{P}}(\neg\diamondsuit\lightning)}\\
	&= \frac{\mcterm{P}}{\text{Pr}^{\mathcal{R}_\sigma\sem{P}}(\neg\diamondsuit\lightning)} \tag{reaching $\langle\downarrow,\sigma'\rangle$ implies not reaching $\langle\lightning\rangle$}\\
	&=\frac{\mcterm{P}}{1 - \mcerr{P}}\\
	&=\frac{\coef{\sigma'}{\sem{P}(\bvec{X}^\sigma)}}{1 - \coef{\lightning}{\sem{P}(\bvec{X}^\sigma)}}
	\tag{cf. \Cref{apx:operational}}\\
	&= \coef{\sigma'}{\normalize(\sem{P}(\bvec{X}^\sigma))}
	\end{align*}
\end{proof}

\section{Reasoning about Loops}
\label{apx:loops}

\begin{definition}[Admissible eSOP-transformer]
	\label{def:admissibleeSOPTransformer}
	A function $\psi \colon \esop \to \esop$ is called \emph{admissible} if
	\begin{itemize}
		\item $\psi$ is \emph{continuous} on $\esop$.
		\item $\psi$ is \emph{linear} in the following sense:
		For all $F, G \in \esop$ and $p \in [0,1]$
		\[
		pF + G \in \esop
		\quad\text{implies}\quad
		\psi(pF + G)
		\eeq
		p \psi(F) + \psi(G)
		~.
		\]
		\item $\psi$ is \emph{homogeneous w.r.t.\ meta-indeterminates}, i.e., for all $G \in \esop$ and $\tau \in \N^l$,
		\[
		\psi(G \bvec{U}^\tau)
		\eeq
		\psi(G) \bvec{U}^\tau
		~.
		\]
		\item $\psi$ \emph{preserves ePGF}, i.e., $G \in \epgf$ implies $\psi(G) \in \epgf$.
	\end{itemize}
\end{definition}

\begin{theorem}[\sop Semantics]
	Let\, $P$ be a loop-free \credip program.
	Let\, $G = \sum_{\sigma \in \N^k} G_\sigma \bvec{U}^\sigma \in \esop$.
	The \esop semantics $\sem{P}\colon \esop \to \esop$ of $P$ can be computed by
	\[
	\sem{P}(G) \eeq \sum\nolimits_{\sigma \in \N^k} \sem{P}(G_\sigma) \cdot \bvec{U}^\sigma~.
	\]
\end{theorem}
\begin{proof}[Proof outline]
		Note that every eSOP $G$ can be decomposed into
		\[
			\sum_{\sigma \in \N^k} G_\sigma \bvec{U}^\sigma \eeq \underbrace{\sum_{\sigma \in \N^k} \constrain{G_\sigma}{\textit{true}} \bvec{U}^\sigma}_{\in ~SOP} ~+~ \sum_{\sigma \in \N^k} \coef{\lightning}{G_\sigma}X_\lightning\bvec{U}^\sigma~,
		\]
		by simple \esop arithmetic.
		Analogue to \cref{lem:errpassthru} one can show that the observe-violation probabilities pass through the eSOP semantics unaffected, i.e.,
		\begin{equation}
			\label{eqn:sopeqn}
			\sem{P}(G) \eeq \sem{P}\left(\sum_{\sigma \in \N^k} \constrain{G_\sigma}{\textit{true}} \bvec{U}^\sigma\right) ~+~ \sum_{\sigma \in \N^k} \coef{\lightning}{G_\sigma}X_\lightning\bvec{U}^\sigma~.
		\end{equation}
		Using the latter fact, the proof of \Cref{thm:sop_semantics} proceeds along a similar line of reasoning as in \cite{CAV22} by showing that $\sem{P}$ is an admissible \esop transformer.


		All loop-free cases but \codify{observe} coincide with \redip \cite{CAV22} on the distributions where the observe violation probability is zero which is an immediate consequence of \cref{eqn:sopeqn} and the results in \cite[Appendix F]{CAV22}.
		To complete the proof, we show that the semantics of $\observe{\FALSE}$ is also admissible.
		Recall the $\observe{\FALSE}$ semantics: $G[\bvec{X}/\bvec{1}, X_\lightning /1]\cdot X_\lightning$.
		Note that the $\observe{\FALSE}$ semantics is entirely based on the following elementary transformations, which are admissible (by \cite{CAV22}):
		\begin{itemize}
			\item Multiplication by a constant $G\in \textup{ePGF}\colon \lambda F.~ G \cdot F$
			\item Substitution of $X \in \bvec{X}$ by a constant $G\in \textup{ePGF}\colon \lambda F.~ F[X/G]$.
		\end{itemize}
		Thus, $\sem{\observe{\FALSE}}$ is admissible as a composition of admissible transformations.

		We use the fact that admissible transformers allow for \enquote{infinite linearity} applications (see \cite[Appendix F.5]{CAV22}), to conclude
		\begingroup
		\allowdisplaybreaks
		\begin{align*}
			\sem{P}(G) & = \sum_{\sigma \in \N^k} \sem{P}(G_\sigma \bvec{U}^\sigma) \tag{by infinite linearity}                                                                                                                                                                       \\
			           & = \sum_{\sigma \in \N^k} \sem{P}\left(\sum_{\tau \in \N^k} \coef{\tau}{G_\sigma} \bvec{X}^\tau\bvec{U}^\sigma + \coef{\lightning}{G_\sigma} X_\lightning\bvec{U}^\sigma\right) \tag{by \esop arithmetic}                                                     \\
			           & = \sum_{\sigma \in \N^k} \sum_{\tau \in \N^k} \sem{P}\left(\coef{\tau}{G_\sigma} \bvec{X}^\tau\bvec{U}^\sigma + \coef{\lightning}{G_\sigma} X_\lightning\bvec{U}^\sigma\right)\tag{by infinite linearity}                                                    \\
			           & = \sum_{\sigma \in \N^k} \sum_{\tau \in \N^k} \sem{P}\left(\coef{\tau}{G_\sigma} \bvec{X}^\tau\bvec{U}^\sigma\right) + \sem{P}\left(\coef{\lightning}{G_\sigma} X_\lightning\bvec{U}^\sigma\right) \tag{Lin.\ of \sem{P}}                                    \\
			           & = \sum_{\sigma \in \N^k} \sum_{\tau \in \N^k} \coef{\tau}{G_\sigma}\cdot\sem{P}\left( \bvec{X}^\tau\right)\bvec{U}^\sigma + \coef{\lightning}{G_\sigma} X_\lightning\bvec{U}^\sigma \tag{by admissible \sem{P} and \cref{eqn:sopeqn}}                        \\
			           & = \sum_{\sigma \in \N^k} \sum_{\tau \in \N^k} \sem{P}\left(\coef{\tau}{G_\sigma}\cdot \bvec{X}^\tau\right)\bvec{U}^\sigma + \sem{P}\left(\coef{\lightning}{G_\sigma} X_\lightning\right)\bvec{U}^\sigma\tag{by \cref{thm:linsem} and \cref{lem:errpassthru}} \\
			           & = \sum_{\sigma \in \N^k} \sum_{\tau \in \N^k} \left(\sem{P}\left(\coef{\tau}{G_\sigma}\cdot \bvec{X}^\tau\right) + \sem{P}\left(\coef{\lightning}{G_\sigma} X_\lightning\right)\right)\bvec{U}^\sigma \tag{by \esop arithmetic}                              \\
			           & = \sum_{\sigma \in \N^k} \sum_{\tau \in \N^k} \left(\sem{P}\left(\coef{\tau}{G_\sigma}\cdot \bvec{X}^\tau + \coef{\lightning}{G_\sigma} X_\lightning\right) \right)\bvec{U}^\sigma \tag{\cref{thm:linsem}}                                                   \\
			           & = \sum_{\sigma \in \N^k} \sem{P}\left(G_\sigma\right)\bvec{U}^\sigma \tag{by Def. of $G_\sigma$}
		\end{align*}
		\endgroup
\end{proof}

\begin{lemma}[\esop Characterization]
	Let $P_1$ and $P_2$ be loop-free \credip-programs with $\text{Vars}(P_i) \subseteq \{\progvar{x}_1,\ldots, \progvar{x}_k\}$ for $i \in \{1, 2\}$. Further, consider a vector $\bvec{U} = (U_1,\ldots, U_k)$ of meta-indeterminates, and let $\hat{G}$ be the eSOP
	$(1-X_1U_1)^{-1}\cdots(1-X_kU_k)^{-1} \in \R[[\bvec{X}, \bvec{U}]].$
	Then, 
	\[
	\forall G \in \textup{\epgf}. ~ \sem{P_1}(G)~=~\sem{P_2}(G) \quad\iff\quad \sem{P_1}(\hat{G})~=~\sem{P_2}(\hat{G}).
	\]
\end{lemma}
\begin{proof}
	We observe that 
	$
	\hat{G}
	~=~
	\sum_{\sigma \in \N^k} \bvec{X}^\sigma \bvec{U}^\sigma
	$. Then we have 
	\begingroup
	\allowdisplaybreaks
	\begin{align*}
	\allowdisplaybreaks
	& \sem{P_1}(\hat{G}) = \sem{P_2}(\hat{G}) \\
	\iff\quad & \sem{P_1}(\hat{G}) - \sem{P_2}(\hat{G}) = 0 \\
	\iff\quad & \sem{P_1}(\sum_{\sigma \in \N^k} \bvec{X}^\sigma\bvec{U}^\sigma) - \sem{P_2}(\sum_{\sigma \in \N^k} \bvec{X}^\sigma \bvec{U}^\sigma) = 0 \\
	\iff\quad & \sum_{\sigma \in \N^k} \sem{P_1}(\bvec{X}^\sigma) \bvec{U}^\sigma  - \sum_{\sigma \in \N^k} \sem{P_2}(\bvec{X}^\sigma) \bvec{U}^\sigma = 0 \tag{By \Cref{thm:sop_semantics}} \\
	\iff\quad & \sum_{\sigma \in \N^k} ( \sem{P_1}(\bvec{X}^\sigma) -  \sem{P_2}(\bvec{X}^\sigma)) \bvec{U}^\sigma = 0 \tag{rewriting} \\
	\iff\quad & \forall\sigma \in \N^k \colon \sem{P_1}(\bvec{X}^\sigma) -  \sem{P_2}(\bvec{X}^\sigma) = 0 \tag{By definition of the 0-FPS in $\R[[\bvec{X}, X_\lightning, \bvec{U}]]$}\\
	\iff\quad & \forall\sigma \in \N^k\colon \sem{P_1}(\bvec{X}^\sigma) = \sem{P_2}(\bvec{X}^\sigma) \\
	\iff\quad & \sem{P_1} = \sem{P_2} \tag{by \citet{Kozen} and \cref{lem:errpassthru}}
	\end{align*}
	\endgroup
\end{proof}

\section{Parameter Synthesis}
\label{apx:synthesis}

\begin{theorem}[Decidability of Parameter Synthesis]
	Let $W$ be a \credip \WHILESYMBOL\ loop and $I_{\bvec{p}}$ be a parametrized loop-free \credip program.
	It is decidable whether there exist parameter values $\bvec{\rho}$ such that the instantiated template $I_\bvec{\rho}$ is an invariant, i.e.,
	\[
	\exists \bvec{p} \in \R^l. \quad \sem{W} = \sem{I_{\bvec{p}}}
	~.
	\]
\end{theorem}
\begin{proof}
	Let $W$ and $I_{\bvec{p}}$ be given as described. Also, let $\hat{G} = (1-X_1U_1)^{-1} \cdots (1-X_kU_k)^{-1} \in \esop$ which is a rational closed form.
	\begingroup
	\allowdisplaybreaks
	\begin{align*}
	&\exists \bvec{p} \in \R^l.~ \quad \quad \sem{ I_{\bvec{p}} } = \sem{\Phi_{B,P} ( I_{\bvec{p}} ) }\\
	\Leftrightarrow ~& \exists \bvec{p} \in \R^l.~ \quad \sem{ I_{\bvec{p}} }(\hat{G}) = \sem{\Phi_{B,P} ( I_{\bvec{p}} ) }(\hat{G}) \tag{\cref{lem:sopequiv}}\\
	\Leftrightarrow~&  \exists \bvec{p} \in \R^l.~ \quad \frac{F_\bvec{p}}{H_\bvec{p}} = \frac{\hat{F}_\bvec{p}}{\hat{H}_\bvec{p}} \tag{loop-free \credip preserves rational functions}\\
	\Leftrightarrow~&  \exists \bvec{p} \in \R^l.~ \quad F_\bvec{p} \hat{H}_\bvec{p} = \hat{F}_\bvec{p} H_\bvec{p} \\
	\Leftrightarrow~&  \exists \bvec{p} \in \R^l.~ \quad F_\bvec{p} \hat{H}_\bvec{p}  - \hat{F}_\bvec{p} H_\bvec{p} = 0
	\end{align*}%
	\endgroup
	In the last step, $F_\bvec{p} \hat{H}_\bvec{p}$ and $\hat{F}_\bvec{p} H_\bvec{p}$ are polynomials in $\R[\bvec{p}][\bvec{X}, X_\lightning, \bvec{U}]$ ($\bvec{p}$ can only occur as probabilities in $I_\bvec{p}$).
	Using the results about quantifier elimination in the theory of non-linear real arithmetic (by Cylindrical Algebraic Decomposition \cite{caviness2012quantifier}), we have a decision procedure of $\mathcal{O}\left(2^{2^{\abs{X} + \abs{U} + 1}}\right)$ worst-case complexity to decide whether the formula can be satisfied. 
\end{proof}

\section{Benchmarks and Additional Examples}
\label{apx:prodigy}

\begin{example}[The Invariant for Prog.~\ref{prog:brp}]
	\begingroup
	\allowdisplaybreaks
	\begin{align*}
	&\IF{\progvar{s} > 0 \wedge \progvar{f} < 5}\\
	& \quad \pcomment{no more transmission failures allowed, except for last 9 packets} \\
	& \quad \IF{\progvar{s} \geq 10}\\
	& \quad \quad \IFINLINE{\iid{\bernoulli{\sfrac{1}{100}}}{\progvar{s} - 9} = 0}{\pskip} \ELSESYMBOL~ \{\OBSERVE{\FALSE}\}\fatsemi\\
	& \quad \quad \ASSIGN{\progvar{s}}{9}\fatsemi\\
	& \quad \quad \ASSIGN{\progvar{f}}{0} \}\\
	& \quad \pcomment{$\leq 9$ packets left; state can have failed attempts for first packet} \\
	& \quad \IF{\iid{\bernoulli{\sfrac{99}{100}}}{5-\progvar{f}} > 0} \\
	& \quad \quad \ASSIGN{\progvar{f}}{0}\fatsemi\\
	& \quad \quad \ASSIGN{\progvar{s}}{\progvar{s}-1}\fatsemi\\
	& \quad \quad \pcomment{each remaining packet fails the transmission with probability $p$} \\
	& \quad \quad \IF{\progvar{s} = 8} \\
	& \quad \quad \quad \PCHOICE{\ASSIGN{\progvar{f}}{5}}{p}{\ASSIGN{\progvar{f}}{0}\fatsemi \ASSIGN{\progvar{s}}{\progvar{s}-1}}\\
	& \quad \quad \}\fatsemi\\
	& \quad \quad \IF{\progvar{s} = 7} \\
	& \quad \quad \quad \PCHOICE{\ASSIGN{\progvar{f}}{5}}{p}{\ASSIGN{\progvar{f}}{0}\fatsemi \ASSIGN{\progvar{s}}{\progvar{s}-1}}\\
	& \quad \quad \}\fatsemi\\
	& \quad \quad \IF{\progvar{s} = 6} \\
	& \quad \quad \quad \PCHOICE{\ASSIGN{\progvar{f}}{5}}{p}{\ASSIGN{\progvar{f}}{0}\fatsemi \ASSIGN{\progvar{s}}{\progvar{s}-1}}\\
	& \quad \quad \}\fatsemi\\
	& \quad \quad \IF{\progvar{s} = 5} \\
	& \quad \quad \quad \PCHOICE{\ASSIGN{\progvar{f}}{5}}{p}{\ASSIGN{\progvar{f}}{0}\fatsemi \ASSIGN{\progvar{s}}{\progvar{s}-1}}\\
	& \quad \quad \}\fatsemi\\
	& \quad \quad \IF{\progvar{s} = 4} \\
	& \quad \quad \quad \PCHOICE{\ASSIGN{\progvar{f}}{5}}{p}{\ASSIGN{\progvar{f}}{0}\fatsemi \ASSIGN{\progvar{s}}{\progvar{s}-1}}\\
	& \quad \quad \}\fatsemi\\
	& \quad \quad \IF{\progvar{s} = 3} \\
	& \quad \quad \quad \PCHOICE{\ASSIGN{\progvar{f}}{5}}{p}{\ASSIGN{\progvar{f}}{0}\fatsemi \ASSIGN{\progvar{s}}{\progvar{s}-1}}\\
	& \quad \quad \}\fatsemi\\
	& \quad \quad \IF{\progvar{s} = 2} \\
	& \quad \quad \quad \PCHOICE{\ASSIGN{\progvar{f}}{5}}{p}{\ASSIGN{\progvar{f}}{0}\fatsemi \ASSIGN{\progvar{s}}{\progvar{s}-1}}\\
	& \quad \quad \}\fatsemi\\
	& \quad \quad \IF{\progvar{s} = 1} \\
	& \quad \quad \quad \PCHOICE{\ASSIGN{\progvar{f}}{5}}{p}{\ASSIGN{\progvar{f}}{0}\fatsemi \ASSIGN{\progvar{s}}{\progvar{s}-1}}\\
	& \quad \quad \}\\
	& \quad \ELSE \ASSIGN{\progvar{f}}{5} \} \}
	\end{align*}%
	\endgroup
	The inferred transmission-failure probability is
	\[
		\frac{51601999994933400000469819999961544000002572999999869540000004609999999899900000001}{51711\cdot 10^{88}}~.
	\]
\end{example}

\begin{example}
	This example of two programs $I$ and $J$ shows the step-by-step computation of the invariant and the modified invariant to prove the actual equivalence.
	\label{apx:ex:complete}
	\begingroup
	\allowdisplaybreaks
	\begin{align*}
	I: \qquad & \annotate{(1-XU)^{-1}(1-YV)^{-1}} \\
	& \IF{\progvar{y} = 1}\\
	& \qquad \annotate{(1-XU)^{-1}YV} \\
	& \qquad \incrasgn{\progvar{x}}{\iid{\geometric{\sfrac{1}{2}} + 1}{\progvar{y}}} \fatsemi \\
	& \qquad \annotate{(1-XU)^{-1}X(2-X)^{-1}YV} \\
	& \qquad \ASSIGN{\progvar{y}}{0}\fatsemi \\
	& \qquad \annotate{(1-XU)^{-1}X(2-X)^{-1}V} \\
	& \qquad \observe{\progvar{x} < 3} \\
	& \qquad \annotate{(\sfrac{1}{2}X + \sfrac{1}{4} X^2 + \sfrac{1}{4}X_\lightning)V} \\
	&\qquad \qquad \annocolor{+ \left((\sfrac{1}{2} X^2 + \sfrac{1}{2}X_\lightning)U + X_\lightning U^2(1-U)^{-1}\right)V} \\
	& \} \\
	& \annotate{(1-XU)^{-1}(1-YV)^{-1} - (1-XU)^{-1}YV} \\
	& \qquad\annocolor{+ (\sfrac{1}{2}X + \sfrac{1}{4} X^2 + \sfrac{1}{4}X_\lightning)V} \\
	&\qquad \annocolor{+ \left((\sfrac{1}{2} X^2 + \sfrac{1}{2}X_\lightning)U + X_\lightning U^2(1-U)^{-1}\right)V}
	\end{align*}%
	\endgroup
	
	We want to show that $\sem{J}(\hat{G})$ --- where $J = \IF{\progvar{y}=1}P\fatsemi I\ELSE \pskip\}$ --- yields the same result:
	
	\begingroup
	\allowdisplaybreaks
	\begin{align*}
	J: \qquad & \annotate{(1-XU)^{-1}(1-YV)^{-1}} \\
	& \IF{\progvar{y} = 1} \\
	& \qquad \annotate{(1-XU)^{-1}YV} \\
	& \qquad \PCHOICE{\ASSIGN{\progvar y}{0}}{\sfrac{1}{2}}{\ASSIGN{\progvar y}{1}} \fatsemi \\
	& \qquad \annotate{\sfrac{1}{2}(1-XU)^{-1}(Y+1)V} \\
	& \qquad \ASSIGN{\progvar x}{\progvar x + 1} \fatsemi \\
	& \qquad \annotate{\sfrac{1}{2}X(1-XU)^{-1}(Y+1)V} \\
	& \qquad \observe{\progvar x < 3}\fatsemi \\
	& \qquad \annotate{\left(\sfrac{1}{2}(X + X^2U)(Y+1) + X_\lightning U^2(1-U)^{-1}\right)V} \\
	& \qquad \IF{\progvar{y} = 1}\\
	& \qquad\qquad \annotate{\sfrac{1}{2}(X + X^2U)YV} \\
	& \qquad\qquad \incrasgn{\progvar{x}}{\iid{\geometric{\sfrac{1}{2}} + 1}{\progvar{y}}} \fatsemi \\
	& \qquad\qquad \annotate{\sfrac{1}{2}(X^2+X^3U)(2-X)^{-1}YV} \\
	& \qquad\qquad \ASSIGN{\progvar{y}}{0}\fatsemi \\
	& \qquad\qquad \annotate{\sfrac{1}{2}(X^2+X^3U)(2-X)^{-1}V} \\
	& \qquad\qquad \observe{\progvar{x} < 3} \\
	& \qquad\qquad \annotate{\sfrac{1}{2}\left(\sfrac{1}{2}X^2 + \sfrac{1}{2}X_\lightning + X\lightning U\right)V} \\
	& \qquad \} \\
	& \qquad \annotate{\left((\sfrac{1}{2}X + \sfrac{1}{4}X^2 + \sfrac{1}{4}X_\lightning) + (\sfrac{1}{2}X^2 + \sfrac{1}{2}X_\lightning)U\right)V}\\
	&\qquad \qquad \annocolor{+ X_\lightning U^2 (1-U)^{-1}V} \\
	& \} \\
	& \annotate{(1-XU)^{-1}(1-YV)^{-1} - (1-XU)^{-1}YV} \\
	& \qquad \annocolor{+ \left((\sfrac{1}{2}X + \sfrac{1}{4} X^2 + \sfrac{1}{4}X_\lightning) + (\sfrac{1}{2} X^2 + \sfrac{1}{2}X_\lightning)U\right)V}\\
	&\qquad \annocolor{+ X_\lightning U^2(1-U)^{-1}V}
	\end{align*}%
	\endgroup
\end{example}

\end{document}
\endinput